\newtheorem{thm}{Theorem}
\newtheorem{prop}[thm]{Proposition}
\newtheorem{lem}[thm]{Lemma}
\newtheorem{cor}[thm]{Corollary}
\theoremstyle{definition}
\newtheorem{dfn}[thm]{Definition}
\newtheorem{ex}[thm]{Example}
\newtheorem{obs}[thm]{Observation}
\newtheorem*{rem}{Remark}
\let\dl=\delta
\let\Dl=\Delta
\let\ld=\lambda
\let\Ld=\Lambda
\let\om=\omega
\let\Gm=\Gamma
\let\epsilon=\varepsilon
\def\AA{\mathbb{A}}
\def\Anz{\AA\setminus\{0\}}
\def\CC{\mathbb{C}}
\def\NN{\mathbb{N}}
\def\ZZ{\mathbb{Z}}
\def\cD{\mathcal{D}} 
\def\cE{\mathcal{E}}
\def\cF{\mathcal{F}}
\def\cG{\mathcal{G}}
\def\cI{\mathcal{I}}
\def\cM{\mathcal{M}}
\def\cO{\mathcal{O}}
\def\cP{\mathcal{P}}
\def\cR{\mathcal{R}}
\def\cS{\mathcal{S}}
\def\cT{\mathcal{T}}
\def\cU{\mathcal{U}}
\def\ba{{\bf a}}
\def\bb{{\bf b}}
\def\bc{{\bf c}}
\def\be{{\bf e}}
\def\bg{{\bf g}}
\def\bs{{\bf s}}
\def\bv{{\bf v}}
\def\bw{{\bf w}}
\def\bx{{\bf x}}
\def\by{{\bf y}}
\def\bz{{\bf z}}
\def\NCSP{{\rm \#CSP}}
\def\NEQ{\mathrm{NEQ}}
\def\EQ{\mathrm{EQ}}
\def\ONE{\mathrm{ONE}}
\def\NAND{\mathrm{NAND}}
\def\hol{{\sf Holant}} 
\def\numP{{\sf \#P}} 
\def\HN{{\sf HolantNorm}}
\def\HA{{\sf HolantArg}}
\def\EVEN{\mathrm{EVEN}}
\let\sse=\subseteq
\def\ang#1{\langle #1 \rangle}
\def\vc#1#2{#1 _1\zd #1 _{#2}}
\def\zd{,\ldots,}
\def\smm#1{\left(\begin{smallmatrix} #1 \end{smallmatrix}\right)}
\def\pmm#1{\begin{pmatrix} #1 \end{pmatrix}}
\def\allf{\Upsilon}  
\def\ppsh{pps\textsubscript{h}}
\def\pph{pp\textsubscript{h}}
\def\GL{\operatorname{GL}_2(\AA)}
\def\ari{\operatorname{arity}}
\def\abs#1{\left| #1 \right|}
\def\hc#1{\ang{ #1 }_\mathrm{h}}  
\def\bhc#1#2#3{\ang{ #1 \sqcup #2 }_{\mathrm{h}, #3 }}  
\def\tcl#1{C\left( #1 \right)}  
\def\qc#1{\ang{ #1 }_\mathrm{q}}  
\def\arg{\operatorname{arg}}
\def\Arg{\operatorname{Arg}}
\def\norm#1{\left\Vert #1 \right\Vert}
 \def\calF{\mathcal{F}}
 \def\calG{\mathcal{G}}
\renewcommand{\t}[1]{\ensuremath{^{\otimes #1}}}
\newcommand{\ket}[1]{\left| #1 \right>}
\newcommand{\GHZ}{\mathrm{GHZ}}
\newcommand{\CNOT}{\operatorname{CNOT}}
\def\domain{D}
\setlist[description]{font=\normalfont\itshape}
\title{Holant clones and the approximability of conservative holant problems}
\author{Miriam Backens \and Leslie Ann Goldberg}
\date{6 January 2020}
\begin{document}

\maketitle

\begin{abstract}
We construct a theory of holant clones to capture 
the notion of expressibility in the holant framework. 
Their role is analogous to the role played by functional clones in
the study of weighted counting Constraint Satisfaction Problems.
We explore the landscape of conservative holant clones and 
determine the situations in which 
a set $\mathcal{F}$ of functions is ``universal in the conservative case'',
which means that 
all functions are contained in 
the holant clone generated by $\mathcal{F}$  together with all unary functions.
When $\mathcal{F}$ is not universal in the conservative case, we give concise generating
sets for  the clone.
We demonstrate the usefulness of the holant clone theory by using it to give
a complete complexity-theory classification for the problem of
approximating the solution to conservative holant problems.
We show that approximation is intractable exactly when $\mathcal{F}$ is universal in the conservative case.
\end{abstract}

\section{Introduction}
 
A \emph{computational counting problem}
is typically defined with respect to a  graph.
A \emph{configuration} is  an assignment of
``spins''   to the vertices or edges of this graph.
Each configuration has a \emph{weight}, which is obtained by assigning local weights to small (constant-sized) sub-configurations, and taking the product of these local weights.
The goal is to compute or to  approximate the \emph{partition function},
which is the sum of the weights of all configurations.
Here are three well-known examples of computational counting problems.
\begin{description}
\item[The Ising Model:]  
This statistical-physics model has a parameter~$\lambda$ which is a function of
the temperature and the strength of edge interactions.
Each assignment assigns a spin from $\{-1,+1\}$ to each vertex of the given graph $G=(V,E)$.
Each edge provides a local weight which is $\lambda$ if its endpoints have the same spin, and
$1$ otherwise. Thus, the partition function is
$\sum_{\sigma\colon V \to \{-1,+1\}} \lambda^{m(\sigma)}$, where $m(\sigma)$ is the number of monochromatic
edges in configuration~$\sigma$.
\item[The Ising Model with Local Fields:]
The configurations are the same as those of the Ising model, but, in addition to the local weights already
described, each vertex~$v$ provides a local weight which is some value $\mu_v$ if
the spin of~$v$ is~$+1$, and is $1$~otherwise.
Thus, the partition function is
$\sum_{\sigma\colon V \to \{-1,+1\}} \lambda^{m(\sigma)}\prod_{v\in V \colon \sigma(v)=+1} \mu_v$.
\item[The  Monomer-Dimer Model:]
The model has a parameter~$\gamma$.
Again, there are two spins -- we will call them $0$ and~$1$.
This time, a configuration assigns a spin to each \emph{edge} of the given  graph $G=(V,E)$.
Each vertex~$v$ provides a local weight, which is defined as follows.
\begin{itemize}
\item If no edges adjacent to~$v$ have spin~$1$, it provides weight~$1$.
\item If one edge adjacent to~$v$ has spin~$1$, it provides weight $\gamma^{1/2}$.
\item If more than one edge adjacent to~$v$ has spin~$1$, it provides weight $0$.
\end{itemize}
Suppose, for some subset $M$ of $E$, that a configuration assigns spin~$1$ to edges in~$M$ and
spin~$0$ to edges in $E\setminus M$. The weight of the configuration is 
$\gamma^{|M|}$ if $M$ is a matching of~$G$ and $0$~otherwise. The partition function is therefore 
$\sum_M   \gamma^{|M|}$, where the sum is over all  matchings of~$G$.
\end{description}
 
All three of these examples have their own literature.
However, in this paper we will be most interested in their roles as representatives  of broad classes of 
counting problems. The Ising Model (with or without local fields) 
is an example of a problem in a class of problems called weighted counting Constraint Satisfaction Problems (CSPs).
The Monomer-Dimer model is an example
of a related class of problems called holant problems.
Although this paper will be most concerned with holant problems, it will be useful to define counting CSPs
so that we can explain the research agenda in this area, the techniques, and ultimately, our contributions.

A   counting CSP is parameterised by a finite set~$\cF$ of functions over some domain~$\domain$. 
An input to the problem consists of
a set~$V$  of variables and a set~$C$ of constraints.
Each constraint~$c$ consists of a function $f_c\in\cF$ 
with some arity $\ari(f_c)$
and a   ``scope'',
which is just a tuple of $\ari(f_c)$ variables, so
 $\bv_c\in V^{\ari(f_c)}$. An assignment $\bx:V\to D$ induces a weight $\prod_{c\in C} f_c(\bx|_c)$, where $\bx|_c$ denotes the  restriction of the assignment $\bx$ to the scope of $c$. The relevant computation problem is defined as follows.

\begin{description}[noitemsep]
 \item[Name] $\NCSP(\cF)$
 \item[Instance] A tuple $(V,C)$, where $V$ is a finite set of variables and $C$ is a finite set of constraints over $\cF$.
 \item[Output] The value $\sum_{\bx:V\to\{0,1\}} \prod_{c\in C} f_c(\bx|_c)$.
\end{description}

Given the definition of $\NCSP(\cF)$, it is easy to see that
the Ising model corresponds to the case where $D=\{-1,+1\}$,
$f$ is the binary function over~$D$ which is~$\lambda$ if its arguments are the same and~$1$ otherwise,
and $\calF=\{f\}$.
In the case of the Ising model with local fields, 
$\calF$ has additional unary functions $u_\mu$ 
such that $u_\mu(+1)=\mu$ and $u_\mu(-1)=1$.

An important research direction is to try to determine for which sets $\calF$
the   problem $\NCSP(\calF)$ is tractable.
Building on and generalising
quite a bit of research \cite{DG00, BG05, Gplus10, Cplus13, Dplus07, CaiChen10, Bul13, DR13}, 
the exact-computation version of
this problem was finally completely solved by Cai and Chen~\cite{CaiChenComplex}. 
They gave three conditions for tractability. 
Their theorem applies to any finite set $\calF$ of algebraic complex-valued functions defined on
an arbitrary finite domain~$\domain$. They show that \#CSP$(\calF)$ is solvable in polynomial time
if all three conditions are satisfied, and \#P-hard otherwise.
This type of theorem is called a \emph{dichotomy theorem} 
because of the dichotomy that it provides between polynomial-time solvability and \#P-hardness.
 
Much less is known about the \emph{approximation} version of $\NCSP(\cF)$.
Although a trichotomy is known in the  Boolean relational case, 
where the domain is Boolean and the  functions in $\cF$ are $0$-$1$ valued \cite{reltrichotomy},
the only general classifications that do not restrict the range of functions in $\cF$ in this way are
in the so-called ``conservative'' case,  where 
we consider all finite subsets of the set of functions containing
$\calF$  and   all unary
functions. 
We will be working in the conservative case in this paper.
The conservative case has also been studied in the context of decision and optimisation CSPs \cite{Bul11, KZ13}.
Building on and generalising work in the Boolean case~\cite{bulatov_expressibility_2013},
the following dichotomy theorem was given in~\cite{ApproxCSP}. 
It applies to any  set $\calF$ of algebraic functions  
with non-negative rational values defined
on an arbitrary finite domain~$\domain$ as long as $\calF$ contains all unary functions.
\begin{itemize}
\item If a certain condition is satisfied, then for any finite $\calG \subset \calF$, \#CSP$(\calG)$ 
can be (exactly) solved in polynomial time.
\item Otherwise, there is a finite $\calG \subset \calF$ such that even approximating the output of
\#CSP$(\calG)$ is intractable, in the sense that it is \#BIS-hard.
\end{itemize}
The problem \#BIS is a canonical approximate counting problem which is not believed to have an efficient approximation
algorithm. We will not need it further in this paper, so we do not give the details here -- the interested reader is referred
to~\cite{ApproxCSP}.
The Ising model with local fields is an example where the set $\calF$ is conservative
(since it contains every possible local vertex weight).
Thus, the result of~\cite{ApproxCSP} includes as a special case the fact that approximating the partition function of the Ising Model with Local Fields is \#BIS-hard \cite{FerroIsing}.

The important thing, from our point of view, is the methods that were used to
derive the result of~\cite{ApproxCSP}.
An important idea in the study of decision CSPs was the notion of a \emph{relational clone} \cite{CohenJeavons}.
The idea is that, if you want to understand the complexity of the decision analogue of $\NCSP(\cF)$, 
the right thing to do is to characterise the set of functions that can be ``expressed'' by $\cF$  -- and the right tools
for this (in the decision context) are  relational clones, which are the sets of relations expressible using 
certain kinds of formulas called \emph{pp-formulas}.
Informally, relational clones formalise the notion of ``gadgets'' in reductions.
The first proposal for extending  pp-formulas to the setting of
 (Boolean) weighted counting CSPs is the \emph{T-constructibility} notion of Yamakami~\cite{TomoCSP}.
 In \cite{bulatov_expressibility_2013} it was discovered that a more  close-fitting  formulation for studying 
 Boolean
 weighted counting CSPs 
 in the setting of approximation and approximation-preserving reductions
 is the notion of ``pps$_\omega$-definability''.
 Just as pp-definability is closely related to polynomial-time reductions between decision CSPs,
 the authors of \cite{bulatov_expressibility_2013} showed that pps$_\omega$-definability with its corresponding
 \emph{functional clones} captures approximation-preserving reductions between weighted counting CSPs.
The paper \cite{bulatov_expressibility_2013} explored the space of functional clones in the Boolean case, and used
this to partly classify the complexity of the approximation version of $\NCSP(\cF)$
in the Boolean conservative case. These functional clones were again used in~\cite{ApproxCSP} 
(along with other key ideas from optimisation CSPs) to derive the more 
general classification.

In this paper, we will be concerned primarily with problems such as the Monomer-Dimer model,
which cannot be captured using the weighted counting CSP framework.
As  is well known, it can be captured in the holant framework.
 To see this, let us first consider the problem of expressing the Monomer-Dimer Model in the weighted counting CSP
framework. Clearly, the domain is $D=\{0,1\}$ and the variables are the edges of the graph, since
these are assigned the spins. For each possible vertex degree~$d$,
we would like to include a constraint function $f_d$ which has value~$1$ if all of its arguments
are~$0$, has value $\gamma^{1/2}$ if exactly one of its arguments is~$1$, and has value~$0$ otherwise.
For some degree bound~$\Delta$, let $\cF$ be the collection of $\cF= \{f_1,\ldots,f_\Delta\}$ of these constraint
functions. However, it is easy to see
that the Monomer-Dimer Model on graphs of degree at most~$\Delta$ is
not the same as the problem $\NCSP(\cF)$ because each edge appears in  exactly two constraints in the
Monomer-Dimer Model, but the $\NCSP(\cF)$ problem has no such restriction.
Indeed, the Monomer-Dimer Model on bounded-degree graphs is the same as the $\NCSP(\cF)$ problem with the additional
restriction that every variable appears in exactly two constraints.
The holant framework is equivalent to the counting  CSP framework with this restriction.

Here is a more convenient definition of the holant framework (see~\cite{cai_complexity_2017} for much more
detail and discussion). Since we will work within the Boolean domain, we restrict attention to this domain from now on.
A \emph{signature grid} $\Omega=(G,\cF,\sigma)$ over some set of functions $\cF$ consists of a finite (multi-)graph $G=(V,E)$ and a map $\sigma$ which assigns to each vertex $v\in V$ a function $\sigma(v)=f_v\in\cF$ such that 
the arity of~$f_v$ is equal to the degree of~$v$. One should think of  $G$ as being the graph associated with the counting problem.
The two spins are~$0$ and~$1$. Configurations assign spins to the edges of~$G$.
The  value $\sigma(v)$ gives the function $f_v$ which will
provide the local weight at vertex~$v$.
The map $\sigma$  also determines which edge incident on $v$ corresponds to which input of $f_v$.
Thus, a configuration is an assignment $\bx:E\to\{0,1\}$ of Boolean values to the edges of~$G$
and this induces a weight $w_\bx = \prod_{v\in V} f_v(\bx|_{E(v)})$, where $\bx|_{E(v)}$ is the restriction of $\bx$ to the edges incident on $v$ (ordered according to their correspondence with the inputs of $f_v$).

The partition function, or \emph{holant} of $\Omega$, 
denoted $Z_\Omega$, is the sum of $w_\bx$ over all assignments of values $\bx$ to the edges:
\[
 Z_\Omega = \sum_{\bx:E\to\{0,1\}} \prod_{v\in V} f_v(\bx|_{E(v)}).
\]
In the Monomer-Dimer model, the function $f_v$ 
has value~$1$ if all of its arguments are~$0$, 
value~$\gamma^{1/2}$ if exactly one of its arguments is~$1$ and  
value~$0$ otherwise.

{\bf First contribution:\quad} The first contribution of this paper is to develop
a theory of holant clones. Just as the functional clone theory captures 
expressibility in the weighted CSP framework, the new theory of holant clones
captures expressibility in the holant framework.
As the reader will see, the theory differs substantially from the theory of functional
clones for weighted counting CSP. The difference arises partly because of the 
crucial role of \emph{holographic transformations} in holant problems,  partly because of the
different kind of summation over variables that turns out to be relevant, and partly because
of the key role of the bipartite case. 
In general, the holant clone generated by  a given set of  functions is a subset of the functional clone generated by the same set.
However (see Corollary~\ref{cor:holant-functional}),
as would be expected, if the ternary equality function is contained in a particular holant clone, then the two clones are the same.
Section~\ref{secfour} 
explores the connection between certain polynomial-time reductions between holant problems and 
holant clones.

{\bf Second contribution:\quad} Our second contribution is to explore the landscape of conservative holant clones.
In order to describe this, we need some notation.
Let $\AA$  be the set of algebraic complex numbers.
For each non-negative integer $k$,  let $\allf_k$ be the set of all functions $\{0,1\}^k\to\AA$.
Finally, let $\allf = \bigcup_{k\in\NN} \allf_k$ be
 the set of all algebraic-complex valued functions of Boolean variables.
 For consistency with the literature, we also use
$\cU=\allf_1$ to denote  the set of   unary functions.
We use $\hc{\cF}$ to denote the holant clone generated by a set $\cF$ (this will be defined in Section~\ref{sec:holantclone}).
We say that $\cF$ is \emph{universal in the conservative case} if
$\hc{\cF \cup \cU} = \allf$.
We can now give an informal description of our theorem. 

\begin{thm} [Informal statement of Theorem~\ref{thm:conservative_hc}]
Suppose that $\cF$ is a subset of $\allf$. 
Unless $\cF$ satisfies one of four (explicit) conditions,
it is  universal in the conservative case.
\end{thm}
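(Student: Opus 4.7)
The plan is to prove both directions of the biconditional implicit in the statement: (i) if $\cF$ satisfies any of the four conditions, then $\hc{\cF \cup \cU}$ is a proper subset of $\allf$, and (ii) if $\cF$ satisfies none of them, then $\hc{\cF \cup \cU} = \allf$. For (i), I would argue that each of the four conditions identifies a class $\cC \subseteq \allf$ that (a) is closed under the holant clone operations (tensor contractions along edges, possibly together with the bipartite-case summations and the relevant holographic transformations) and (b) contains all of $\cU$. These closure properties follow from the algebraic structure of the corresponding classes (affine over $\mathbb{F}_2$, tensor-product/product signatures, and their holographic transforms), and each can be checked by direct but routine computation: tensor contractions respect the defining multiplicative/additive structure, and holographic transformation commutes with the clone operations by the definition in Section~\ref{sec:holantclone}.

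For (ii), the unifying strategy is to show that failure of all four conditions forces $\EQ_3 \in \hc{\cF \cup \cU}$, and then to appeal to Corollary~\ref{cor:holant-functional} to identify $\hc{\cF \cup \cU}$ with the functional clone $\qc{\cF \cup \cU}$. Once inside the functional clone, we may invoke the known classification results for Boolean conservative functional clones from \cite{bulatov_expressibility_2013} (and the subsequent refinements used in \cite{ApproxCSP}). For this transfer to work I need to verify that the failure of the four holant conditions implies the failure of the corresponding functional-clone conditions, which should be essentially a bookkeeping exercise since the holant conditions are at least as strong (they involve the smaller clone $\hc{\cdot} \subseteq \qc{\cdot}$).

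The core combinatorial work is the construction of $\EQ_3$. I would proceed in two stages. First, produce a binary equality $\EQ_2$: the failure of the ``degenerate'' or ``product'' condition gives a function $f \in \cF$ that is not a tensor of unaries, so pinning all but two of its variables using functions from $\cU$ yields a binary signature of rank $2$; combining two such binaries (possibly after a suitable holographic change of basis to normalise the eigenvectors) should produce $\EQ_2$ up to diagonal transformations that $\cU$ absorbs. Second, use $\EQ_2$ together with a witness of the failure of the affine/matchgate-type conditions to construct $\EQ_3$. The natural witness is a ternary signature (obtained by pinning) whose support or arithmetic structure over $\mathbb{F}_2$ is incompatible with membership in any of the exceptional classes; tensoring it with copies of $\EQ_2$ and pinning appropriately should collapse it to $\EQ_3$.

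The main obstacle is precisely this second stage: the ``two-regular'' nature of holant gadgets (each variable appears in exactly two functions) is more restrictive than the CSP setting, so standard constructions from the functional-clone literature do not directly transfer, and one must exploit the availability of holographic transformations and the bipartite formulation carefully. I would expect a considerable amount of case analysis based on which specific condition is violated, with the matchgate/affine boundary being the most delicate—signatures satisfying the Holant-compatible parity constraints are the easiest to confuse with affine signatures, so the gadget producing $\EQ_3$ there may need to reduce first to a canonical non-affine ternary via a judicious holographic transformation before pinning.
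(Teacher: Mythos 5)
Your overall architecture for (ii) — first force $\EQ_3 \in \hc{\cF\cup\cU}$, then use Corollary~\ref{cor:holant-functional} to identify the holant clone with a functional clone, and then invoke the conservative functional-clone classification — does not go through, for two independent reasons.

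First, the functional-clone classifications you cite (\cite{bulatov_expressibility_2013}, \cite{ApproxCSP}) apply to \emph{non-negative} real- or rational-valued functions, whereas this paper works over the algebraic complex numbers $\AA$. There is no classification of conservative complex-weighted functional clones that you could transfer to. Second, and more fundamentally, those results are about approximation hardness (\#BIS-hardness), not about the clone being \emph{all} of $\allf$; their ``intractable'' branch does not assert $\ang{\cF\cup\cU}=\allf$. So even if $\EQ_3\in\hc{\cF\cup\cU}$, you would only know that $\hc{\cF\cup\cU}$ equals the functional clone $\ang{\cF\cup\cU}$, not that this functional clone is everything. Getting $\EQ_3$ alone together with unaries certainly does not suffice: $\hc{\EQ_3,\cU}=\ang{\cU}$ is a strict subset of $\allf$.

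Your first stage (``pin to get a rank-$2$ binary, combine to get $\EQ_2$'') and second stage (``pin a ternary witness and tensor with $\EQ_2$ to collapse to $\EQ_3$'') are also substantially underspecified, and the paper's actual route shows why this is not routine. The paper instead obtains a symmetric ternary entangled function via the SLOCC/entanglement classification of three-qubit states (GHZ versus $W$ class, Lemma~\ref{lem:symmetric_ternary} and Lemma~\ref{lem:ghz_from_w}), normalises it to $M\circ\EQ_3$, and then — by a detailed, computer-verified case analysis depending on whether $M$ is orthogonal-triangular, $K_1$-triangular, etc.\ (Lemmas~\ref{lem:binaries_f_non_tractable} and \ref{lem:binaries_f_tractable}) — proves that \emph{all} binary functions $\allf_2$ lie in $\hc{\cU,\cF}$. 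Only after having $\allf_2$ does it recover $\EQ_3$ and $\CNOT$, and the final step uses a genuinely different key ingredient that your plan omits entirely: the quantum-circuit universality theorem (single-qubit plus CNOT gates are universal), packaged here as Lemma~\ref{lem:allf_from_universality}. That quantum universality result is what converts ``all binaries and $\CNOT$'' into ``all of $\allf$''; without it, neither your route nor any direct pinning argument closes the gap.
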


 If $\cF$ is  not universal in the conservative case, then
we also give a concise generating set for $\hc{\cF \cup \cU}$ (see Lemmas~\ref{lem:closure_clone} and~\ref{lem:cE_generators}).
 
{\bf Third contribution:\quad} Our  third contribution is to use our theory of  holant clones
 to completely classify the approximation version of the Boolean conservative holant problem.
 Before describing this contribution, we discuss the relevant literature.
 Given any finite set $\cF\subseteq \allf$, the holant problem is defined as follows.

\begin{description}[noitemsep]
 \item[Name] $\hol(\cF)$
 \item[Instance] A signature grid $\Omega=(G,\cF,\sigma)$.
 \item[Output] $Z_\Omega$.
\end{description}

A complete dichotomy for the exact version of $\hol(\cF)$ is not known, but 
such a dichotomy has been discovered (between tractable and \#P-hard)
in the Boolean conservative case~\cite{cai_dichotomy_2011}. 
Their precise dichotomy is stated as Theorem~\ref{thm:holant-star-dichotomy} of this paper --
it turns out that the exact problem is tractable exactly when $\cF$ fails to be universal in the conservative case.
Note that
 the result of~\cite{cai_dichotomy_2011} has been improved by one of the authors~\cite{Backens},
who has  weakened the ``conservativity'' assumption -- requiring only certain unary ``pinning'' functions to be in~$\calF$.
In a different direction, progress was made by Lin and Wang, who drop the conservativity restriction entirely, though their classification applies only to non-negative real-valued functions \cite{lin_complexity_2017}.
See \cite{cai_complexity_2017}  for other progress which avoids restricting
to the Boolean domain, or to the conservative case, by considering other restrictions such as symmetry.
 
In the (Boolean) holant framework, as in the counting CSP framework,
much less is known about the complexity of \emph{approximating} the partition function.
The main result  that is known is a theorem of Yamakami~\cite[Theorem 3.5]{yamakami_approximation_2012}, 
 which shows that, for every ternary function $f$ in a certain class, which he calls SIG$_1$,
one of the following is true (depending on~$f$):
\begin{itemize}
\item  for any finite $\calF$ containing $f$ and  unary functions, 
$\hol(\cF)$ is tractable, or 
\item there exists a finite set $S$ of  unary functions
such that a certain complex-valued satisfiability problem
reduces in polynomial time to $\hol(S \cup \{f\})$.
\end{itemize}

We use our holant clone theory in order to develop the following  complete
approximation classification for conservative Boolean holant problems.

\begin{thm}[Informal statement of Theorem~\ref{thm:main}]\label{thm2}
 Suppose that $\cF$ is a finite subset of $\allf$. 
 \begin{itemize}
 \item
 If $\cF$ satisfies one of the four conditions of Theorem~\ref{thm:conservative_hc}
 then, by the result of \cite{cai_dichotomy_2011},
  for any finite subset $S\sse\cU$, the problem $\hol(\cF \cup S)$ is  solvable exactly in polynomial time.
 \item Otherwise,
 \cite{cai_dichotomy_2011} shows that 
 there exists a finite subset $S\sse\cU$ such that $\hol(\cF \cup S)$ is \numP-hard. 
 We show that   there is a  finite subset $S\sse\cU$ such that even
 approximating the norm of the output of $\hol(\cF \cup S)$ within a factor of~$1.01$
 is \numP-hard and so is approximating the argument of the output of $\hol(\cF \cup S)$ within an additive $\pm \pi/3$.
 \end{itemize}
\end{thm}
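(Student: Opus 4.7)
The tractability half follows directly from Theorem~\ref{thm:holant-star-dichotomy}, so the task is to prove the approximation-hardness assertion. Suppose $\cF$ is universal in the conservative case, so $\hc{\cF \cup \cU} = \allf$. The plan is to fix a specific ``witness'' function $f^\ast \in \allf$ for which the conservative Holant problem is already known to be hard to approximate in both the norm and the argument sense, to express $f^\ast$ as a gadget over $\cF$ together with finitely many unaries via universality, and then to transport the hardness through this gadget by the reductions developed in Section~\ref{secfour}.

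\textbf{Choosing and realising the witness.} For $f^\ast$ one should pick a small-arity (most naturally ternary) function that already witnesses \numP-hardness in Theorem~\ref{thm:holant-star-dichotomy} and whose exact-hardness proof in~\cite{cai_dichotomy_2011} can be sharpened to an approximation-hardness statement of the correct flavour for some fixed finite family $T^\ast\sse\cU$. This ``base case'' analysis is carried out once, independently of~$\cF$. Universality then yields $f^\ast \in \hc{\cF\cup\cU}$, and since any concrete derivation of $f^\ast$ in the clone uses only finitely many unaries, there exists a finite $S_0\sse\cU$ with $f^\ast \in \hc{\cF\cup S_0}$. Setting $S := S_0 \cup T^\ast$, the clone-to-reduction correspondence of Section~\ref{secfour} lets one replace each occurrence of $f^\ast$ in any instance of $\hol(\{f^\ast\}\cup T^\ast)$ by its gadget, producing an instance of $\hol(\cF \cup S)$ whose holant equals the original holant up to a fixed algebraic scalar $\alpha \in \AA$ that depends only on the gadget.

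\textbf{Transferring the approximation bounds.} Because $\alpha$ is independent of the input instance, a factor-$1.01$ approximation of $|\hol(\cF\cup S)|$ gives a factor-$1.01$ approximation of the original witness-holant magnitude after dividing by $|\alpha|$, and an additive $\pm \pi/3$ approximation of $\arg \hol(\cF\cup S)$ transfers similarly after subtracting $\arg \alpha$. Both corrections are polynomial-time operations on fixed algebraic constants, so the reduction preserves the sharp approximation ratios demanded by the statement, and the \numP-hardness for the witness lifts to \numP-hardness for $\hol(\cF \cup S)$ in the norm and argument senses simultaneously.

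\textbf{Main obstacle.} The principal difficulty is the base case: producing approximation-hardness for the chosen witness $f^\ast$ with the precise bounds $1.01$ in norm and $\pm \pi/3$ in argument. Exact \numP-hardness alone is insufficient; one needs an amplification step -- typically obtained by tensoring $f^\ast$ with itself polynomially many times or attaching suitable unary tails drawn from $T^\ast$ -- so that a canonical \numP-hard decision problem either forces the holant to zero versus a uniformly bounded-below magnitude (for norm hardness) or separates the holant into sectors of the complex plane whose angular gaps exceed $2\pi/3$ (for argument hardness). A secondary, largely bookkeeping obstacle is verifying that the set $S$ can indeed be kept finite and that the scalar $\alpha$ introduced by the gadget is nonzero and algebraically effective, so that the transfer step above is constructive.
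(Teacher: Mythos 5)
Your overall architecture is correct: use Theorem~\ref{thm:conservative_hc} to conclude $\hc{\cF\cup\cU}=\allf$, realise a small fixed family of functions by $\ppsh$-formulas over $\cF$ together with finitely many unaries, collect those unaries into a finite $S$, and transfer hardness via the gadget reductions of Section~\ref{secfour} (Lemma~\ref{lem:preorder_properties}(\ref{it:from_clone}) and Observation~\ref{obs:preorder_reductions}). You also correctly observe that finiteness of $S$ comes from the fact that each derivation in the clone uses only finitely many unary functions. One minor point: in the holant-clone framework the gadget realises the target function \emph{exactly}, not up to a scalar, so the transfer of bounds in your third paragraph does not require any correction by an $\alpha$; the relation $\leq$ already preserves the holant value identically.

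\textbf{The genuine gap.} You correctly identify the base case as the crux and then leave it open. The idea of taking a ternary witness from~\cite{cai_dichotomy_2011} whose \emph{exact} \numP-hardness proof is ``sharpened'' to approximation hardness, or amplified by tensoring, is not carried out and there is no obvious way to make it work: the hardness proof in~\cite{cai_dichotomy_2011} is an exact-counting interpolation argument and does not come with the quantitative norm/argument guarantees required here. The paper avoids this entirely. It does \emph{not} start from a ternary witness at all; instead it imports a pre-existing approximation-hardness theorem for the \emph{independent set polynomial} with a suitably chosen negative algebraic activity $\lambda$ on graphs of maximum degree~$3$~\cite{bezakova_inapproximability_2017}, which directly delivers the constants $1.01$ and $\pi/3$. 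The paper then encodes the independent set polynomial as a (bipartite) holant over the fixed finite family $\{\EQ_1,\EQ_2,\EQ_3,\EQ_4,u_\lambda,\NAND\}$, and it is \emph{this} finite family, not a single ternary function, that is realised from $\cF\cup S$ via universality. Your proposal's base case would need a separate, nontrivial argument to hold up, so as written the proof is incomplete at precisely the step the paper resolves by citation.
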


Our Theorem~\ref{thm2} extends Yamakami's partial classification in two ways: First,
it applies to all conservative sets of functions (instead of to sets of functions
containing a single ternary function from SIG$_1$, along with unary functions).
Second, the hardness that we obtain is \numP-hardness, meaning that
an approximate solution to the holant problem would allow some \#P-hard problem to be solved exactly.
This is in contrast to hardness based on a reduction from a particular satisfiability problem.

 As \cite{bezakova_inapproximability_2017} explains in a different context, the exact number~``$1.01$''  
in the statement of Theorem~\ref{thm2} is not
important.   For any $\epsilon>0$, the theorem may be combined with a standard powering argument
to show that it is \#P-hard to approximate $\abs{Z_\Omega} $
within a factor of $2^{n^{1-\epsilon}}$. The ``$\pi/3$''  can also be improved, though it cannot be improved by
adding an arbitrary constant, since   $\HA(\cF; 2 \pi)$
is a trivial problem for any $\cF$.

\section{Preliminaries}\label{s:preliminaries}

 Recall from the introduction that
 $\AA$  is the set of algebraic complex numbers.
For each non-negative integer $k$, we denote by $\allf_k$ the set of all functions $\{0,1\}^k\to\AA$.
The set of all algebraic-complex valued functions of Boolean variables is $\allf := \bigcup_{k\in\NN} \allf_k$.
We often identify the set of nullary functions, $\allf_0$, with $\AA$.
Given a function $f\in \allf_k$, we use $\ari(f)$ to denote the arity of~$f$.
If $\ari(f)=k$, then we refer to $f$ as a ``$k$-ary function''.

Given any positive integer $n$, let $[n]:=\{1\zd n\}$.
Given any  $k$-ary function~$f$  and any permutation~$\pi:[k]\to[k]$, let
 $f_\pi(\vc{x}{k}):=f(x_{\pi(1)}\zd x_{\pi(k)})$.

\begin{dfn}\label{dfn:tensor_product}
 Suppose that $f$ is a $k$-ary function and $g$ is an $\ell$-ary function with $k,\ell\geq 1$, then the \emph{tensor product} of $f$ and $g$ is the $(k+\ell)$-ary function
 \[
  h(\vc{x}{k+\ell}) = f(\vc{x}{k}) g(x_{k+1}\zd x_{k+\ell}).
 \]
 This definition can be extended straightforwardly to tensor products of more than two functions.
\end{dfn}

A function $h'$ of arity $k\geq 2$ is \emph{decomposable as a tensor product} if there exists some permutation $\pi:[k]\to [k]$ such that $h'_\pi$ is a tensor product.
We say that a function of arity at least 2 is \emph{degenerate} if it is a tensor product of unary functions.
Borrowing terminology from quantum theory, a function of arity at least 2 is called \emph{entangled} if it is not decomposable as a tensor product.

A function is \emph{symmetric} if it depends only on the Hamming weight of its input, i.e.\ its value is invariant under any permutation of the arguments.
If $f$ is a $k$-ary symmetric function, we write $f=[f_0,\vc{f}{k}]$, where $f_j$ is the value that $f$ takes on inputs of Hamming weight $j$.

\begin{obs}\label{obs:symmetric_degenerate_entangled}
 Any symmetric function of arity at least 2 is either degenerate or entangled.
\end{obs}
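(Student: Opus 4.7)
The plan is to prove the contrapositive: if a symmetric $f\in\allf_k$ with $k\ge 2$ is not entangled, then it is degenerate. By definition, $f$ not being entangled means $f$ is decomposable, so there is a permutation $\pi$ of $[k]$ and functions $g,h$ of arities $a,b\ge 1$ with $a+b=k$ such that $f_\pi = g\otimes h$. Because $f$ is symmetric we have $f = f_\pi$, so in fact $f = g\otimes h$ directly, with no need to permute.

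The first step is to show that $g$ and $h$ may be taken to be symmetric. If $h\equiv 0$ then $f\equiv 0$, which is trivially a tensor product of $k$ unary functions (one of them being $0$). Otherwise, fix $\by_0\in\{0,1\}^b$ with $h(\by_0)\ne 0$; then $g(\bx) = f(\bx,\by_0)/h(\by_0)$, and since the symmetry of $f$ forces $f(\bx,\by) = f_{|\bx|+|\by|}$ to depend only on the total Hamming weight, $g(\bx)$ depends only on $|\bx|$. Writing the resulting values as $\tilde g_0,\dots,\tilde g_a$ and, analogously, $\tilde h_0,\dots,\tilde h_b$, the factorisation $f = g\otimes h$ combined with the symmetry of $f$ gives the key identity
\[
 \tilde g_i\,\tilde h_j = f_{i+j}\qquad\text{for all }0\le i\le a,\ 0\le j\le b.
\]

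The second step deduces from this identity, using $a,b\ge 1$, that the sequence $(f_0,\dots,f_k)$ has the monomial form $f_m = c\,\alpha^{k-m}\beta^m$ for some $c,\alpha,\beta\in\AA$. In the generic subcase, where no $\tilde g_i$ or $\tilde h_j$ vanishes, the equalities $\tilde g_0\tilde h_1 = f_1 = \tilde g_1\tilde h_0$ give $\tilde g_1/\tilde g_0 = \tilde h_1/\tilde h_0 =: \lambda$, and iterating the identity gives $\tilde g_i = \tilde g_0\lambda^i$ and $\tilde h_j = \tilde h_0\lambda^j$, so $f_m = \tilde g_0\tilde h_0\,\lambda^m$, which is the required form with $\alpha=1$ and $\beta=\lambda$. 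Once $f_m = c\,\alpha^{k-m}\beta^m$ is established, define $u\in\cU$ by $u(0) = \alpha$ and $u(1) = \beta$; then
\[
 f(\bx) = c\,\alpha^{k-|\bx|}\beta^{|\bx|} = c\prod_{i=1}^k u(x_i),
\]
so $f$ is a scalar multiple of $u^{\otimes k}$, and absorbing $c$ into the first tensor factor exhibits $f$ as a tensor product of $k$ unary functions, i.e.\ $f$ is degenerate.

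The hard part is the non-generic subcase of the second step, where some of the $\tilde g_i$ or $\tilde h_j$ vanish. The strategy is to identify the smallest and largest indices at which $\tilde g$ and $\tilde h$ are nonzero and to use the key identity to rule out ``gaps'' between them, which forces the same geometric pattern on a sub-interval of $\{0,\dots,k\}$. The extreme boundary instances---$f\equiv 0$, $f = [c,0,\dots,0]$, and $f = [0,\dots,0,c]$---correspond to the monomial form with $c = 0$, $\beta = 0$, and $\alpha = 0$ respectively, and are immediately seen to be degenerate.
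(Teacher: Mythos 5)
The approach is sound: reducing to $f=g\otimes h$ with $g,h$ symmetric, deriving the key identity $\tilde g_i\tilde h_j=f_{i+j}$, and deducing the monomial form $f_m=c\,\alpha^{k-m}\beta^m$ (whence degeneracy) is indeed the natural proof, and the generic subcase is handled correctly. (The paper gives no proof for this Observation, so there is nothing to compare the route against.)

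However, there is a genuine gap in the non-generic subcase, which you flag as ``the hard part'' and then only describe at the level of strategy. The description is also subtly misleading: if the conclusion were merely ``a geometric pattern on \emph{some} sub-interval of $\{0,\dots,k\}$,'' that would not imply degeneracy. For example, a symmetric $f$ with $f_m\ne 0$ exactly for $m\in[1,k-1]$ and geometric there is \emph{not} a tensor product of unaries: a symmetric degenerate function $u^{\otimes k}$ (up to scaling, all factors must coincide by symmetry) has support equal to $\emptyset$, $\{\mathbf 0\}$, $\{\mathbf 1\}$, or all of $\{0,1\}^k$. So the argument must rule out all sub-intervals other than $[0,k]$, $\{0\}$, $\{k\}$, $\emptyset$. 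This requires a ``corner'' argument that is not in your sketch: let $p=\min\{i:\tilde g_i\ne 0\}$, $r=\min\{j:\tilde h_j\ne 0\}$ (assuming $g,h\not\equiv 0$). Since $f_{p+r}=\tilde g_p\tilde h_r\ne 0$, every valid pair $(i',j')$ with $i'+j'=p+r$ must satisfy $\tilde g_{i'}\tilde h_{j'}\ne 0$; by minimality of $p$ and $r$ the only such pair is $(p,r)$, and the non-existence of $(p-1,r+1)$ and $(p+1,r-1)$ in the valid box forces $(p=0\text{ or }r=b)$ and $(p=a\text{ or }r=0)$, i.e.\ $(p,r)\in\{(0,0),(a,b)\}$ (using $a,b\ge1$). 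The analogous argument at the maximal indices gives $(q,s)\in\{(a,b),(0,0)\}$. The four combinations then yield exactly: full endpoint support (from which one shows, using $f_{m}=\tilde g_{m-1}\tilde h_1$, that no interior entry of $\tilde g$ or $\tilde h$ can vanish, reducing to your generic case), or $S_g=\{0\},S_h=\{0\}$ (giving $f=[c,0,\dots,0]$), or $S_g=\{a\},S_h=\{b\}$ (giving $f=[0,\dots,0,c]$), or a contradiction. Without this step, the claim that ``ruling out gaps'' lands you in one of the three listed extreme cases is unjustified.
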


A binary function $g$ is sometimes written as a 2 by 2 matrix
\[
 g = \begin{pmatrix} g(0,0) & g(0,1) \\ g(1,0) & g(1,1) \end{pmatrix}.
\]
This matrix is invertible if and only if $g$ is non-degenerate, i.e.\ $g$ is not a tensor product of unary functions.

Denote by $\EQ_k$ the $k$-ary equality function defined as $\EQ_k = [1,0\zd 0,1]$ with $(k-1)$ zeroes in the list.
In other words, $\EQ_1$ is the constant-1 function and for $k>1$, $\EQ_k$ is the function that is 1 if all inputs are equal and 0 otherwise.
The binary disequality function is denoted by $\NEQ$, i.e.\ $\NEQ=[0,1,0]$ and the binary NAND function is $\NAND=[1,1,0]$.
The function $\ONE_k$ is the function that is 0 except on inputs of Hamming weight 1, i.e.\ $\ONE_k = [0,1,0\zd 0]$ with $(k-1)$ zeroes at the end.
Note that this function is sometimes given different names in other papers.

\subsection{Boolean holant problems} 

\label{s:boolean_holant_problems}

Recall from the Introduction that a \emph{signature grid} $\Omega=(G,\cF,\sigma)$ over some set of functions $\cF\sse\allf$ consists of a finite multigraph $G=(V,E)$ and a map $\sigma$ which assigns to each vertex $v\in V$ a function $\sigma(v)=f_v\in\cF$  
such that the arity of~$f_v$ is equal to the degree of~$v$ in~$G$. The map $\sigma$  also determines which edge incident on $v$ corresponds to which input of $f_v$.
 Any assignment $\bx:E\to\{0,1\}$ of Boolean values to the edges induces a weight $w_\bx = \prod_{v\in V} f_v(\bx|_{E(v)})$, where $\bx|_{E(v)}$ is the restriction of $\bx$ to the edges incident on $v$ (ordered according to their correspondence with the inputs of $f_v$).
The holant of $\Omega$, denoted $Z_\Omega$, is  $
 Z_\Omega = \sum_{\bx:E\to\{0,1\}} \prod_{v\in V} f_v(\bx|_{E(v)})$.
The holant problem for a finite set of functions $\cF$ is defined as follows.

\begin{description}[noitemsep]
 \item[Name] $\hol(\cF)$
 \item[Instance] A signature grid $\Omega=(G,\cF,\sigma)$.
 \item[Output] $Z_\Omega$.
\end{description}

\begin{rem}Although we study infinite sets~$\cF$ for the purpose of
defining holant clones,
we restrict the definition of the computational problem $\hol(\cF)$ 
to situations in which the set~$\cF$ is finite.
The reason that we do this is to 
avoid issues about how to represent the functions in the computational input.
This is typical in the CSP literature and also in some of the holant literature.
When infinite sets~$\cF$ are allowed, 
it is common to use the notation $\hol^*(\cF)$ 
to refer to the conservative holant problem; this notation implies that all unary functions are added to the set $\cF$.
Since we have restricted the computational problem to finite sets, 
we instead treat the conservative case by quantifying over finite subsets $S$ of~$\cF$,
as in the  statement of Theorem~\ref{thm2}. 
\end{rem}

In order to analyse the   complexity of  \emph{approximately} solving holant problems, we define two additional  computational problems in the style of \cite{bezakova_inapproximability_2017}.
For any complex number $c\in\CC$, denote by $\Arg(c)$ the principal value of the argument of $c$ in the range $[0,2\pi)$ and let $\arg(c):=\{\Arg(c)+2k\pi\mid k\in\ZZ\}$.
The first of the approximation problems is parameterised by a real number $\kappa \geq 1$,
and the second by $\rho \in [0,2\pi)$.

\begin{description}[noitemsep]
 \item[Name] $\HN(\cF; \kappa)$
 \item[Instance] A signature grid $\Omega=(G,\cF,\sigma)$.
 \item[Output] If $\abs{Z_\Omega}=0$ then the algorithm may output any rational number. Otherwise it must output a rational number $\hat{N}$ such that $\hat{N}/\kappa \leq \abs{Z_\Omega} \leq \kappa\hat{N}$.
\end{description}

\begin{description}[noitemsep]
 \item[Name] $\HA(\cF; \rho)$
 \item[Instance] A signature grid $\Omega=(G,\cF,\sigma)$.
 \item[Output] If $\abs{Z_\Omega}=0$ then the algorithm may output any rational number. Otherwise it must output a rational number $\hat{A}$ such that for some $a\in\arg(Z_\Omega)$, $|\hat{A}-a|\leq\rho$.
\end{description}

Given two  computational problems $A$ and $B$
(in this paper, these will be computational counting problems),
we write $A\leq_{PT} B$ to denote the fact that there exists a polynomial-time Turing reduction from $A$ to $B$.
If $A\leq_{PT} B$ and $B\leq_{PT} A$, we write $A\equiv_{PT} B$ and say that the two problems are inter-reducible.

Suppose that $\vc{\cF}{n}$ are finite sets of functions and that $\vc{g}{m}$ are functions.
Where no confusion is likely to arise, we often use the shorthand $\vc{\cF}{n},\vc{g}{m}$ to mean $\cF_1\cup\ldots\cup\cF_n\cup\{\vc{g}{m}\}$, so, for example,
$ \hol(\vc{\cF}{n},\vc{g}{m}) $ is a shorthand for $ \hol(\cF_1\cup\ldots\cup\cF_n\cup\{\vc{g}{m}\})$.

\subsection{Holographic transformations and holant problems on bipartite graphs}

Holographic transformations are an important technique for analysing holant problems and for proving reductions -- indeed, they are the origin of the name ``holant''.
To define these transformations, the following fact will be useful:
The functions of arity $n$ are in bijection with vectors in $\AA^{2^n}$ by mapping each function $f$ to the vector of its values, $\mathbf{f}$.

Denote by $\otimes$ the Kronecker product of matrices and let $\GL$ be the group of invertible 2 by 2 matrices over $\AA$.

\begin{rem}
 Under the above bijection between functions and vectors, and considering vectors  as single-column matrices, the tensor product of functions (cf.\ Definition~\ref{dfn:tensor_product}) corresponds exactly to the Kronecker product: the function $h$ is the tensor product of functions $f$ and $g$ if and only if $\mathbf{h}=\mathbf{f}\otimes\mathbf{g}$.
\end{rem} 

\begin{dfn}\label{def:holographic_transformation}
 Suppose $M\in\GL$ and suppose $f$ is an $n$-ary function.
 Define $M\t{1}:=M$ and $M\t{k+1}:=M\t{k}\otimes M$ for any positive integer $k$.
 If $n=0$, set $M\circ f := f$, otherwise define $M\circ f$ to be the function whose values correspond to the vector $M\t{n}\mathbf{f}$: this is called a \emph{holographic transformation} of $f$ by $M$.
 For any set of functions $\cF$, let $M\circ\cF:=\{M\circ f\mid f\in\cF\}$.
\end{dfn}

\begin{ex}
 Suppose $M = \left(\begin{smallmatrix}1&i\\1&-i\end{smallmatrix}\right)$ and $f=\EQ_2$.
 Then
 \[
  M\t{2}\mathbf{f} = \left( \begin{pmatrix}1&i\\1&-i\end{pmatrix} \otimes \begin{pmatrix}1&i\\1&-i\end{pmatrix} \right) \begin{pmatrix}1\\0\\0\\1\end{pmatrix} =  \begin{pmatrix}1&i&i&-1\\ 1&-i&i&1\\ 1&i&-i&1\\1&-i&-i&-1\end{pmatrix} \begin{pmatrix}1\\0\\0\\1\end{pmatrix} = \begin{pmatrix}0\\2\\2\\0\end{pmatrix},
 \]
 so $(M\circ f)(x,y) = 2\cdot\NEQ(x,y)$.
\end{ex}

Denote by $\cO$ the set of 2 by 2 orthogonal matrices over $\AA$; these will often be used in holographic transformations.
Let $X=\left(\begin{smallmatrix}0&1\\1&0\end{smallmatrix}\right)$ and let $I=\smm{1&0\\0&1}$.
The following two matrices (which are sometimes given different names in other papers) will also play a special role when it comes to holographic transformations:
\[
 K_1 = \frac{1}{\sqrt{2}}\begin{pmatrix}1&1\\i&-i\end{pmatrix} \qquad\text{and}\qquad K_2 = \frac{1}{\sqrt{2}}\begin{pmatrix}1&1\\-i&i\end{pmatrix}.
\]

\begin{obs}\label{obs:Z_properties}
 The two matrices differ by a bit flip:
 \[
  K_1 X = \frac{1}{\sqrt{2}}\begin{pmatrix}1&1\\i&-i\end{pmatrix} \begin{pmatrix}0&1\\1&0\end{pmatrix} = \frac{1}{\sqrt{2}}\begin{pmatrix}1&1\\-i&i\end{pmatrix} = K_2.
 \]
 Furthermore,
 \[
  K_1^T K_1 = \frac{1}{\sqrt{2}}\begin{pmatrix}1&i\\1&-i\end{pmatrix} \frac{1}{\sqrt{2}}\begin{pmatrix}1&1\\i&-i\end{pmatrix} = \begin{pmatrix}0&1\\1&0\end{pmatrix} = \frac{1}{\sqrt{2}}\begin{pmatrix}1&-i\\1&i\end{pmatrix} \frac{1}{\sqrt{2}}\begin{pmatrix}1&1\\-i&i\end{pmatrix} = K_2^T K_2.
 \]
 so $K_1^{-1}=X K_1^T$ and $K_2^{-1}=X K_2^T$.
\end{obs}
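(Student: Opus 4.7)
The plan is straightforward: verify each identity by explicit computation, since the observation is essentially a collection of $2\times 2$ matrix facts recorded for later use in holographic transformations.

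First I would verify $K_1 X = K_2$. Right-multiplication by $X = \smm{0 & 1 \\ 1 & 0}$ swaps the two columns of $K_1$, and inspecting the definitions of $K_1$ and $K_2$ shows that $K_2$ is obtained from $K_1$ by interchanging its columns, so the identity is immediate.

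Next, I would compute $K_1^T K_1$ entry-wise. Writing $K_1^T = \tfrac{1}{\sqrt{2}}\smm{1 & i \\ 1 & -i}$, the $(1,1)$ entry of $K_1^T K_1$ is $\tfrac{1}{2}(1\cdot 1 + i\cdot i) = 0$, the $(1,2)$ entry is $\tfrac{1}{2}(1\cdot 1 + i\cdot(-i)) = 1$, and the other two entries follow either by the symmetry of the product or by the same short calculation. Thus $K_1^T K_1 = X$. The computation for $K_2^T K_2$ is analogous, and the sign changes in the $i$-entries of $K_2$ cancel in pairs so one again obtains $X$.

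Finally, the inverse identities follow by formal manipulation. From $K_j^T K_j = X$, right-multiplying by $K_j^{-1}$ gives $K_j^T = X K_j^{-1}$, and then left-multiplying by $X$ and using $X^2 = I$ yields $K_j^{-1} = X K_j^T$ for $j \in \{1,2\}$. There is no real obstacle here; the only thing worth noting is that the observation is packaging the fact that $K_1$ and $K_2$ are (up to the permutation $X$) each other's transposes' inverses, which is the property that will make them convenient for converting between $\EQ_2$ and $\NEQ$ under holographic transformation.
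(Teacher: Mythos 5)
Your proposal is correct and matches the paper's own justification, which is simply the displayed matrix computations: right-multiplication by $X$ swaps columns to give $K_1 X = K_2$, the entry-wise products give $K_1^T K_1 = K_2^T K_2 = X$, and the inverse identities follow by the same elementary rearrangement using $X^2 = I$.
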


\begin{dfn}\label{dfn:bipartite_signature_grid}
 Let $\cF,\cG$ be two finite sets of functions and suppose $G=(V,W,E)$ is a bipartite graph with vertex sets $V$ and $W$.
 The \emph{bipartite signature grid} $\Omega=(G,\cF\mid\cG,\sigma)$ is a signature grid in which each vertex in $V$ is assigned a function in $\cF$ and each vertex in $W$ is assigned a function in $\cG$.
\end{dfn}

The corresponding bipartite holant problem is denoted $\hol(\cF\mid\cG)$, and similarly we define $\HN(\cF\mid\cG;\kappa)$ and $\HA(\cF\mid\cG;\rho)$.

\section{Holant gadgets, functional clones, and holant clones}
\label{s:holant_gadgets_clones}

We have noted in the Introduction that
the theory of relational clones and the theory of functional clones have 
been useful for analysing the complexity of decision CSPs and counting CSPs 
\cite{CohenJeavons, bulatov_expressibility_2013, ApproxCSP, bulatov_functional_2017, backens_boolean_2018}.
We adapt these theories to the holant setting as a way of formalising the notion of realisability by gadgets or $\cF$-gates.

A \emph{gadget} over the set of functions $\cF$, or \emph{$\cF$-gate} \cite{cai_complexity_2017} is defined by a signature grid over a finite multigraph with some ``dangling edges''.
Formally, a \emph{multigraph with dangling edges} is a tuple $G=(V,E,E')$ where $V$ are the vertices, $E$ are the internal edges, and $E'$ are the dangling edges.
An internal edge is an ordinary undirected edge, i.e.\ an unordered pair of vertices.
A dangling edge consists of a single vertex, the other endpoint of the edge is undetermined.

The holant corresponding to such a gadget is a function of the assignments to the dangling edges, this is the function realised by the gadget.
To formalise this, suppose $\abs{E'}=k$ and choose some ordering $\vc{e}{k}$ of the dangling edges.
Let $\vc{x}{k}\in\{0,1\}$ be some assignment of Boolean values to the dangling edges and suppose $\by:E\to\{0,1\}$ is an assignment of values to the internal edges.
Define $\by'$ to be the extension of $\by$ to the domain $E\cup E'$ satisfying $\by'(e_j)=x_j$ for all $e_j\in E'$.
Then the gadget $\Gm=(G,\cF,\sigma)$ defines the function $g_\Gm:\{0,1\}^k\to\AA$ given by
\begin{equation}\label{eq:gadget-definition}
 g_\Gm(\vc{x}{k}) = \sum_{\by:E\to\{0,1\}} \prod_{v\in V} g_v(\by'|_{E(v)}),
\end{equation}
where $\by'|_{E(v)}$ is the restriction of $\by'$ to the edges (both internal and dangling) which are incident on the vertex $v$, in order of their correspondence to the arguments of $\sigma(v)=f_v$.
A function $f$ is said to be \emph{realisable} over $\cF$ if there exists a gadget $\Gm$ using only functions from $\cF$ such that $f=g_\Gm$.

To formalise this notion of realisability, we first recap the definition of functional clones (which formalise the equivalent notion of realisability in counting CSPs) and then adapt this definition to the holant setting.

\subsection{Functional clones}

Most of the material in this section is taken from \cite{backens_boolean_2018} and reproduced here for completeness.

Let $\cF\sse\allf$ be a set of functions and $V=\{\vc{v}{n}\}$ a set of variables.
An \emph{atomic formula} $\varphi=g(v_{i_1}\zd v_{i_k})$ consists of a function $g\in\cF$ and a scope $(v_{i_1}\zd v_{i_k})\in V^k$, where $k=\ari(g)$.
The scope may contain repeated variables.
Given an assignment $\bx:V\to\{0,1\}$, the atomic formula $\varphi$ specifies the function\footnote{The definition of the function specified by an atomic formula has been modified slightly so as to be better compatible with the definition of holant clones in the next section. This change has no effect on the definition of functional clones (Definition~\ref{def:functional_clone}), however note the remark after Lemma~\ref{lem:functional-clone-closure}.} $f_\varphi:\{0,1\}^k\to\AA$ given by
\[
 f_\varphi(\bx|_\varphi) = g(x_{i_1}\zd x_{i_k}),
\]
where $\bx|_\varphi$ is the application of $\bx$ to the scope of $\varphi$, and $x_j=\bx(v_j)$ for all $j\in [n]$.

\begin{dfn}\label{def:pps}
 A \emph{primitive product summation formula} (pps-formula) in variables $V$ over $\cF$ has the form
 \[
  \psi = \sum_{v_{n+1},\ldots,v_{n+m}} \prod_{j=1}^s \varphi_j,
 \]
 where $\varphi_j$ are all atomic formulas over $\cF$ in the variables $V'=\{\vc{v}{n+m}\}$.
 The variables in $V$ are called \emph{free variables}, those in $V'\setminus V$ are called \emph{bound variables}.
\end{dfn}

The pps-formula $\psi$ represents a function $f_\psi:\{0,1\}^n\to\AA$ given by
\begin{equation}\label{eq:pps-formula}
 f_\psi(\bx) = \sum_{\by\in\{0,1\}^m} \prod_{j=1}^s f_{\varphi_j}(\bx,\by|_{\varphi_j}),
\end{equation}
where $\bx$ is an assignment $V\to\{0,1\}$, $\by$ is an assignment $V'\setminus V\to\{0,1\}$, and $\bx,\by|_{\varphi_j}$ is the application of those assignments to the variables appearing in the scope of $\varphi_j$.
If $f$ is represented by some pps-formula $\psi$ over $\cF$, it is said to be \emph{pps-definable} over $\cF$.

\begin{dfn}[\cite{bulatov_expressibility_2013}]\label{def:functional_clone}
 The \emph{functional clone} generated by $\cF$ is the set of all functions in $\allf$ that can be represented by a pps-formula over $\cF\cup\{\EQ_2\}$.
 It is denoted by $\ang{\cF}$.
\end{dfn}

There is a another perspective on functional clones \cite{bulatov_functional_2017}.
In the following, let $f\in\allf_k$.
We say a $(k+1)$-ary function $h$ arises from $f$ by \emph{introduction of a fictitious argument} if $h(\vc{x}{k+1})=f(\vc{x}{k})$ for all $\vc{x}{k}\in\{0,1\}$.
Let $g\in\allf_k$, then the \emph{product} of $f$ and $g$ is the function $h$ satisfying $h(\vc{x}{k}) = f(\vc{x}{k}) g(\vc{x}{k})$.
The function $h$ resulting from $f$ by a \emph{permutation of the arguments} $\pi:[k]\to[k]$ is $h(\vc{x}{k})=f(x_{\pi(1)}\zd x_{\pi(k)})$.
Furthermore, a $(k-1)$-ary function $h$ arises from $f$ by \emph{summation} if $h(\vc{x}{k-1})=\sum_{x_k\in\{0,1\}} f(\vc{x}{k})$.

\begin{lem}[{\cite[Section~1.1]{bulatov_functional_2017}}]\label{lem:functional-clone-closure}
 For any $\cF\sse\allf$, $\ang{\cF}$ is the closure of $\cF\cup\{\EQ_2\}$ under introduction of fictitious arguments, product, permutation of arguments, and summation.
\end{lem}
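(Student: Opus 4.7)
The plan is to prove both inclusions separately. Let $\cC$ denote the closure of $\cF \cup \{\EQ_2\}$ under introduction of fictitious arguments, product, permutation of arguments, and summation. I must show that $\cC = \ang{\cF}$.

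For $\cC \subseteq \ang{\cF}$, note that every function $g$ in $\cF \cup \{\EQ_2\}$ of arity $k$ is pps-definable by the formula $\psi = g(v_1, \ldots, v_k)$ with no summation and a single atomic formula. I then check that pps-definability is preserved by each of the four operations through direct syntactic manipulation of pps-formulas: a fictitious argument becomes a free variable that appears in no atomic formula; the product of two pps-formulas (after renaming their bound variables to be disjoint and identifying their free variables) is itself a pps-formula; a permutation of the function's arguments corresponds to relabelling its free variables; and summation corresponds to reclassifying a free variable as bound. Hence the set of pps-definable functions contains $\cF \cup \{\EQ_2\}$ and is closed under all four operations, so it contains $\cC$.

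For $\ang{\cF} \subseteq \cC$, take a pps-formula
\[
\psi = \sum_{v_{n+1}, \ldots, v_{n+m}} \prod_{j=1}^s \varphi_j
\]
with $\varphi_j = g_j(v_{i^j_1}, \ldots, v_{i^j_{k_j}})$ and each $g_j \in \cF \cup \{\EQ_2\}$, and build $f_\psi$ within $\cC$. For each $j$ the plan is to construct an $(n+m)$-ary function $H_j \in \cC$ satisfying $H_j(x_1, \ldots, x_{n+m}) = g_j(x_{i^j_1}, \ldots, x_{i^j_{k_j}})$. When the indices $i^j_1, \ldots, i^j_{k_j}$ are pairwise distinct, $H_j$ arises by introducing $n+m-k_j$ fictitious arguments to $g_j$ and then applying a single permutation that sends position $\ell$ to $i^j_\ell$. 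Once every $H_j$ is built, the product $\prod_j H_j$ lies in $\cC$ by closure under product, and summing out $v_{n+1}, \ldots, v_{n+m}$ one at a time produces $f_\psi$.

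The main obstacle is handling repeated indices within a single atomic formula, i.e., the case $i^j_a = i^j_b$ with $a \neq b$; this is precisely where $\EQ_2$ earns its place in the generating set. To identify two argument positions of $g_j$, I introduce a fresh auxiliary variable $v_\star$ and build an $(n+m+1)$-ary function equal to $g_j$ evaluated with $x_\star$ in position $b$ and $x_{i^j_a}$ in position $a$ (using fictitious arguments and permutation), and in parallel a second $(n+m+1)$-ary function equal to $\EQ_2(x_{i^j_a}, x_\star)$. Taking their product and then summing over $x_\star$ yields an $(n+m)$-ary function in which positions $a$ and $b$ of $g_j$ both take the value $x_{i^j_a}$. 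Iterating this step over all pairs of coinciding indices in $\varphi_j$'s scope yields $H_j$, and beyond this the entire construction is just careful bookkeeping in pps-formula syntax.
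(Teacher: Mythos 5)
Your proof takes essentially the same route as the proof the paper cites from \cite[Section~1.1]{bulatov_functional_2017} --- show that the four operations preserve pps-definability, and conversely decode a pps-formula into the operations, using $\EQ_2$ to handle repeated variables within a scope. Where it diverges is precisely the point the paper's own remark singles out. Because the paper slightly modifies the definition of the function specified by an atomic formula (so that $f_\varphi$ has arity $\ari(g)$ rather than depending on all variables of the pps-formula), the remark explicitly states that the fictitious-argument case of the forward inclusion ``can no longer be represented by simply adding a free variable to the pps-formula''; under the paper's conventions it must instead be represented as $\sum_{y\in\{0,1\}} f(x_1,\ldots,x_k)\,\EQ_2(x_{k+1},y)$, pairing the new free variable with a fresh bound variable via an $\EQ_2$ atomic formula. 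Your proof uses exactly the disallowed trick (``a fictitious argument becomes a free variable that appears in no atomic formula''), so this step should be replaced by the $\EQ_2$ construction from the remark. Once that change is made, the remainder of your argument --- in particular the use of $\EQ_2$ together with a fresh auxiliary bound variable to decouple and then re-identify repeated positions within a single atomic formula's scope, followed by product and repeated summation --- is sound and matches the cited approach.
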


\begin{rem}
 The proof of Lemma~\ref{lem:functional-clone-closure} has to be modified slightly to take into account the different definition of the function defined by an atomic formula.
 The only change is the following: the function arising from $f\in\allf_k$ by introduction of a fictitious argument can no longer be represented by simply adding a free variable to the pps-formula; instead it is now represented as
 \begin{equation}\label{eq:fictitious_argument}
  \sum_{y\in\{0,1\}} f(\vc{x}{k})\EQ_2(x_{k+1},y).
 \end{equation}
 Indeed, if $f$ is represented by a pps-formula over $\cF$ according to our definition, then $f$ is represented by the same pps-formula according to the definition from \cite{bulatov_expressibility_2013}.
 On the other hand, if $f$ is represented by a pps-formula using the original definition, there are two cases:
 \begin{itemize}
  \item If every free variable appears in the scope of some atomic formula, then $f$ is represented by the same pps-formula using our definition.
  \item If there exist some free variables which do not appear in the scope of any atomic formula, then $f$ is represented under our definition by a pps-formula that is modified as in \eqref{eq:fictitious_argument}: for each free variable $v$ that does not occur in the scope of any atomic formula in the original pps-formula, we introduce a new atomic formula $\EQ_2(v,v')$ where $v'$ is a new bound variable that does not occur anywhere else in the pps-formula.
 \end{itemize}
 Hence the two definitions yield the same sets of functions.
\end{rem}

\subsection{Holant clones and their properties}\label{sec:holantclone}

Functional clones play an important role in the analysis of counting CSPs.
We now introduce the notion of \emph{holant clones}, which do the same for holant problems.\footnote{In the CSP context, it was useful to
define both functional clones and also more complicated objects called 
pps$_\omega$-definable functional clones \cite{bulatov_expressibility_2013}.
Similarly, we could 
define two versions of holant clones here, one with a limit operation. We do not do so because our main complexity  theorem, 
Theorem~\ref{thm:main}, goes through even without this refinement.
} 

Define the multiplicity $m_\varphi(v)$ of a variable $v\in V$ in an atomic formula $\varphi$ to be the number of times that $v$ appears in the scope of $\varphi$.
Define the multiplicity $m_\psi(v)$ of $v$ in the pps-formula $\psi$ as $m_\psi(v) := \sum_{j=1}^s m_{\varphi_j}(v)$, where the sum is over all atomic formulas of $\psi$.
We say $\psi$ is a \emph{\ppsh-formula} if $m_\psi(v)=1$ for all $v\in V$ and $m_\psi(v)=2$ for all $v\in V'\setminus V$.
In other words, $\psi$ is a \ppsh-formula if every free variable appears exactly once in the scope of some atomic formula and every bound variable appears exactly twice in the scope of some atomic formulas.

\begin{dfn}
 The \emph{holant clone} generated by $\cF\sse\allf$ is the set of all functions that can be represented by a \ppsh-formula over $\cF$. It is denoted $\hc{\cF}$.
\end{dfn}

In this paper, we have restricted functions to the set $\allf$, where the domain is Boolean.
However, the definition of functional clone and holant clone can be extended naturally 
to functions with other domains. The following lemma relates holant clones to holant gadgets.

\begin{lem}\label{lem:holant_clone_gadget}
 Suppose $\cF\sse\allf$ is a set of functions and $f\in\allf$ is a function.
 Then $f$ can be realised by a holant gadget over $\cF$ if and only if $f\in\hc{\cF}$.
\end{lem}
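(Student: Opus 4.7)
The plan is to establish a direct correspondence between gadgets over $\cF$ and \ppsh-formulas over $\cF$: vertices correspond to atomic formulas, dangling edges correspond to free variables, and internal edges correspond to bound variables. The multiplicity conditions defining \ppsh{} then read off exactly the combinatorics of a multigraph with dangling edges, namely that each internal edge has two endpoints (counted with multiplicity, so loops count twice at the same vertex) while each dangling edge has one endpoint.

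For the forward direction, I would take a gadget $\Gamma = (G, \cF, \sigma)$ with $G = (V, E, E')$ realising $f = g_\Gamma$. Introduce a variable $v_e$ for each $e \in E \cup E'$, declaring $v_e$ free iff $e \in E'$. For each vertex $u \in V$ of degree $k$ with ordered incident edges $e_1 \zd e_k$ (in the order determined by $\sigma$ matching the arguments of $f_u$), introduce the atomic formula $\varphi_u = f_u(v_{e_1} \zd v_{e_k})$. Form $\psi = \sum_{\text{variables for } e \in E} \prod_{u \in V} \varphi_u$. Each dangling edge $e \in E'$ has exactly one endpoint in $V$, so $v_e$ appears in the scope of exactly one $\varphi_u$; each internal edge $e \in E$ has two endpoints (or one vertex counted twice if $e$ is a loop), so $v_e$ appears in the scopes of atomic formulas with total multiplicity two. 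Hence $\psi$ is a \ppsh-formula, and comparing equations (\ref{eq:gadget-definition}) and (\ref{eq:pps-formula}) term-by-term under the bijection $\bx \mapsto \by'|_{E'}$, $\by \mapsto \by|_E$, we get $f_\psi = g_\Gamma$.

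For the reverse direction, I would start from a \ppsh-formula $\psi = \sum_{v_{n+1}\zd v_{n+m}} \prod_{j=1}^{s} \varphi_j$ over $\cF$ representing $f$. Build a multigraph with dangling edges $G = (V, E, E')$ by introducing a vertex $u_j$ for each atomic formula $\varphi_j$, set $\sigma(u_j) = g_j$ where $\varphi_j = g_j(\cdot)$, a dangling edge attached to $u_j$ for each occurrence of a free variable in $\varphi_j$ (this is a single occurrence by the \ppsh{} condition, so free variables biject with dangling edges), and an internal edge for each bound variable (joining the two atomic formulas in whose scopes it appears, or a loop at the unique $u_j$ if both occurrences are in $\varphi_j$). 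The ordering of edges incident at $u_j$ is determined by the argument order in $\varphi_j$. The same equational comparison as above yields $g_\Gamma = f_\psi = f$.

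The only genuine subtlety will be the bookkeeping around multigraphs, specifically parallel edges between the same pair of vertices and loops at a single vertex, which correspond respectively to two bound variables each shared between the same two atomic formulas and to a bound variable both of whose occurrences lie in a single atomic formula. The multiplicity definitions $m_\varphi(v)$ and $m_\psi(v)$ are set up so that this case is handled transparently: a loop at $u_j$ contributes $2$ to $m_{\varphi_j}(v)$, matching the \ppsh{} requirement. With this bookkeeping in place, the two constructions are mutually inverse up to the irrelevant choice of names for variables and vertices, and the equality $g_\Gamma = f_\psi$ is immediate from equations (\ref{eq:gadget-definition}) and (\ref{eq:pps-formula}).
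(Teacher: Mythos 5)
Your proof is correct and follows essentially the same approach as the paper's: both directions rest on the same bijection between gadget vertices and atomic formulas, internal edges and bound variables, and dangling edges and free variables, with the \ppsh{} multiplicity conditions matching the degree count of edges in a multigraph. The extra care you take about loops and parallel edges is a worthwhile elaboration, but the paper's proof (which notes ``where $j$ may be equal to $\ell$'') handles the same point implicitly, so this is a presentational refinement rather than a genuinely different argument.
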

\begin{proof}
 Suppose $f\in\hc{\cF}$ has arity $n$.
 There exists a \ppsh-formula $\psi = \sum_{v_{n+1},\ldots,v_{n+m}} \prod_{j=1}^s \varphi_j$ such that
 \[
  f(\bx) = f_\psi(\bx)
  = \sum_{\by\in\{0,1\}^m} \prod_{j=1}^s g_j(\bx,\by|_{\varphi_j}),
 \]
 where $g_j$ is the function appearing in the atomic formula $\varphi_j$.
 Suppose $V'=\{\vc{v}{n+m}\}$ is the full set of variables of $\psi$ and $V=\{\vc{v}{n}\}$ is the set of free variables of $\psi$.
 
 Let $G=(W,E,E')$ be the multigraph with dangling edges defined as follows:
 \begin{itemize}
  \item $W = [s]$,
  \item for each bound variable $v\in V'\setminus V$ appearing in the scopes of atomic formulas $\varphi_j$ and $\varphi_\ell$ (where $j$ may be equal to $\ell$), there is an internal edge $\{j,\ell\}$ in $E$, and
  \item for each free variable $w\in V$ appearing in the scope of the atomic formula $\varphi_j$, there is a dangling edge $\{j\}$ in $E'$.
 \end{itemize}
 The map from bound variables to internal edges and the map from free variables to dangling edges are well-defined since each bound variable has multiplicity 2 and each free variable has multiplicity 1.
 Let $\Omega=(G,\cF,\sigma)$, where $\sigma:W\to\cF$ is the function that maps $j$ to $g_j$.
 Then $f = g_\Omega$, so $f$ is realisable by a holant gadget.
 
 Conversely, suppose $f$ is realised by some holant gadget $\Omega=(G,\cF,\sigma)$.
 By comparing \eqref{eq:gadget-definition} and \eqref{eq:pps-formula} it is straightforward to see that there exists some pps-formula representing $f$, where the bound variables correspond to the internal edges of the gadget and the free variables correspond to the dangling edges.
 Now, each bound variable of the pps-formula has multiplicity 2 since it corresponds to an internal edge in the gadget, which has two endpoints.
 Similarly, each free variable of the pps-formula has multiplicity 1 since it corresponds to a dangling edge in the gadget, which has only one defined endpoint.
 Therefore the pps-formula is actually a \ppsh-formula, and $f\in\hc{\cF}$.
\end{proof}

As for functional clones, there is an alternative definition of holant clones as the closure of the generating set of functions under certain operations. 
Suppose $f\in\allf_k$ and $g\in\allf_\ell$.
Recall from Section~\ref{s:preliminaries} that the \emph{tensor product} of $f$ and $g$ is the $(k+\ell)$-ary function $h$ satisfying $h(\vc{x}{k+\ell}) = f(\vc{x}{k}) g(x_{k+1}\zd x_{k+\ell})$.
A $(k-2)$-ary function $h$ arises from $f$ by \emph{contraction} if $h(\vc{x}{k-2}) = \sum_{y\in\{0,1\}} f(\vc{x}{k-2},y,y)$.

It will be useful to first consider the operation of taking tensor products in more detail.
We say a pps-formula $\psi$ is a \emph{\pph-formula} if $m_\psi(v)=1$ for all $v\in V$ and $V'=V$: i.e.\ $\psi$ is a \ppsh-formula with no bound variables.
For any $\cF\sse\allf$, let $\tcl{\cF}$ denote the set of all functions that can be represented by a \pph-formula over $\cF$.

\begin{lem}\label{lem:closure_free}
 Suppose $\cF\sse\allf$, then $\tcl{\cF}$ is the closure of $\cF$ under tensor product and permutation of arguments.
\end{lem}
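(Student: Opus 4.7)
The plan is to prove both inclusions separately. Let $\cC$ denote the closure of $\cF$ under tensor product and permutation of arguments.

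For the inclusion $\cC \sse \tcl{\cF}$, I first note that every $g \in \cF$ of arity $k$ is represented by the trivial \pph-formula $\psi = g(v_1, \ldots, v_k)$, so $\cF \sse \tcl{\cF}$. It then suffices to check that $\tcl{\cF}$ is closed under the two operations. For tensor product, given \pph-formulas $\psi_f$ and $\psi_g$ representing $f$ and $g$, I rename variables so that their variable sets are disjoint; the juxtaposition $\psi_f \cdot \psi_g$, with free variables ordered as those of $\psi_f$ followed by those of $\psi_g$, is again a \pph-formula, and by the product form of \eqref{eq:pps-formula} its represented function equals $f \otimes g$. For permutation of arguments, a \pph-formula for $f$ with free variables $v_1, \ldots, v_n$ yields a \pph-formula for $f_\pi$ simply by relabeling so that $v_{\pi(i)}$ plays the role of the $i$-th free variable.

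For the reverse inclusion $\tcl{\cF} \sse \cC$, the key structural observation is that a \pph-formula $\psi = \prod_{j=1}^{s} \varphi_j$ has no bound variables and each free variable appears with multiplicity exactly $1$. Consequently, in each atomic formula $\varphi_j = g_j(v_{i_1^j}, \ldots, v_{i_{k_j}^j})$ the scope variables are pairwise distinct and appear nowhere else in $\psi$, and the indices $i_l^j$ (ranging over all valid pairs $(j,l)$) give a bijection between $\{(j,l)\}$ and $[n]$, where $n$ is the number of free variables of $\psi$. Hence each $f_{\varphi_j}$ is exactly $g_j$, and the product structure together with the ordering of free variables of $\psi$ shows that $f_\psi$ is obtained from the tensor product $g_1 \otimes \cdots \otimes g_s$ by permuting its arguments according to the map $(j, l) \mapsto i_l^j$. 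This exhibits $f_\psi$ as a member of $\cC$.

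I do not expect any genuine obstacle here: the lemma is a direct translation between a syntactic description (\pph-formulas) and an operational one (iterated tensor product and argument permutation). The only point requiring a little care is the argument-ordering bookkeeping, which is precisely why both operations are needed on the right-hand side: tensor product rigidly concatenates argument lists in the order the factors are composed, while the permutation absorbs any discrepancy between this order and the ordering of the free variables of~$\psi$.
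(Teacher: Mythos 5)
Your proposal is correct and follows essentially the same route as the paper's proof: show $\tcl{\cF}$ is closed under tensor product and permutation (so contains $\cC$), then observe that the pairwise-disjoint, multiplicity-one scope structure of a \pph-formula exhibits the represented function as a permutation of a tensor product of the constituent functions (so $\tcl{\cF}\sse\cC$). Your version is slightly more explicit about the bijection between scope positions and free-variable indices and about why $\cF\sse\tcl{\cF}$, but these are presentation details, not a different argument.
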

\begin{proof}
 We first show that $\tcl{\cF}$ is closed under the two operations.
 Suppose $f,g\in\tcl{\cF}$ where $\ari(f)=k$ and $\ari(g)=\ell$.
 Then there exist some \pph-formula $\psi_f$ over $\cF$ in variables $V_f$ which represents $f$, and some \pph-formula $\psi_g$ over $\cF$ in variables $V_g$ which represents $g$, where
 \[
  \psi_f = \prod_{p=1}^{s_f} \varphi_{p} \qquad\text{and}\qquad \psi_g = \prod_{q=1}^{s_g} \theta_{q}.
 \]
 Without loss of generality, assume the sets of variables $V_f$ and $V_g$ are disjoint.
 \begin{itemize}
  \item Tensor product: suppose $h(\vc{x}{k+\ell}) = f(\vc{x}{k}) g(x_{k+1}\zd x_{k+\ell})$.
  Consider the expression
  \[
   \psi_f \psi_g = \left( \prod_{p=1}^{s_f} \varphi_{p} \right) \left( \prod_{q=1}^{s_g} \theta_{q} \right).
  \]
  The right-hand side is a valid \pph-formula with free variables $V_f\cup V_g$; the multiplicities are unchanged since $V_f$ and $V_g$ are disjoint.
  This \pph-formula represents $h$.
  \item Permutation of arguments: suppose $\pi:[k]\to[k]$ is a permutation and $h(\vc{x}{k})=f(x_{\pi(1)}\zd x_{\pi(k)})$.
  Then $\psi_f$ can be transformed into a \pph-formula for $h$ by permuting the variables in the scope of each atomic formula.
 \end{itemize}
 This concludes the proof that $\tcl{\cF}$ is closed under tensor product and permutation of arguments.
 
 To complete the argument, suppose $f\in\tcl{\cF}$ is represented by a \pph-formula $\psi_f = \prod_{j=1}^{s} \varphi_{j}$, where $g_j\in\cF$ is the function in the atomic formula $\varphi_j$.
 As $\psi_f$ is a \pph-formula, the scopes of the different atomic formulas are disjoint and no variable appears more than once.
 Now, for any assignment $\bx:V_f\to\{0,1\}$, denote by $\bx|_{\varphi_j}$ the application of $\bx$ to the scope of $\varphi_j$.
 Then $f(\bx) = \prod_{j=1}^{s} g_j(\bx|_{\varphi_j})$ is decomposable as a tensor product: i.e.\ $f$ arises from $\cF$ by tensor product and permutation of arguments.
 This implies that $\tcl{\cF}$ is contained in the closure of $\cF$ under tensor product and permutation of arguments.
 
 But $\tcl{\cF}$ is closed under these operations and it contains $\cF$, so it must itself be the closure of $\cF$ under tensor product and permutation of arguments.
\end{proof}

\begin{lem}\label{lem:tensor_closure-holant_clone}
 For any set of functions $\cF\sse\allf$, the set $\hc{\cF}$ is the closure of $\tcl{\cF}$ under contraction.
\end{lem}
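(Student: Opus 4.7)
The plan is to establish the set equality by containment in both directions. One direction is conceptually immediate and the other requires an ``uncontraction'' argument that turns each bound variable of a \ppsh-formula into a pair of free variables of a \pph-formula.

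First, I would verify that $\hc{\cF}$ contains $\tcl{\cF}$ and is closed under contraction. The containment $\tcl{\cF} \sse \hc{\cF}$ holds because any \pph-formula is already a \ppsh-formula (it is a \ppsh-formula with no bound variables). For closure under contraction, suppose $f \in \hc{\cF}$ has arity $k \geq 2$ and is represented by a \ppsh-formula $\psi = \sum_{v_{n+1}\zd v_{n+m}} \prod_{j=1}^s \varphi_j$ with free variables $V = \{v_1\zd v_k\}$. The contracted function $h(x_1\zd x_{k-2}) = \sum_y f(x_1\zd x_{k-2}, y, y)$ is represented by the \ppsh-formula obtained from $\psi$ by replacing the two occurrences of $v_{k-1}$ and $v_k$ (each of which occurred exactly once in the scope of some atomic formula) with a single fresh variable $v_{n+m+1}$ and adding this new variable to the summation. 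The new variable has multiplicity $2$ and is now bound, while $v_{k-1}$ and $v_k$ are removed from the list of free variables; the result is still a \ppsh-formula and it represents $h$. By composing contractions with permutations of arguments (which are clearly realised within $\hc{\cF}$ since the formula format is insensitive to the ordering of free variables), $\hc{\cF}$ is closed under contraction of any pair of arguments.

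Next, I would prove the reverse: every function in $\hc{\cF}$ arises from $\tcl{\cF}$ by iterated contractions (and permutations). Given $f \in \hc{\cF}$ with a \ppsh-representation $\psi = \sum_{v_{n+1}\zd v_{n+m}} \prod_{j=1}^s \varphi_j$, I would ``uncontract'' each bound variable. Since every bound variable $v_{n+i}$ appears exactly twice in the atomic formulas, replace its two occurrences with two fresh distinct free variables $u_i, u'_i$, and drop the corresponding summation. The resulting formula $\psi'$ has free variables $\{v_1\zd v_n, u_1, u'_1\zd u_m, u'_m\}$, each of multiplicity $1$, and no bound variables. Thus $\psi'$ is a \pph-formula and so the function $f'$ it represents lies in $\tcl{\cF}$ by Lemma~\ref{lem:closure_free}. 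Comparing \eqref{eq:pps-formula} for $\psi$ and $\psi'$, one sees directly that $f(x_1\zd x_n) = \sum_{y_1\zd y_m \in \{0,1\}} f'(x_1\zd x_n, y_1, y_1\zd y_m, y_m)$, so $f$ is obtained from $f'$ by $m$ successive contractions (after suitable permutations of arguments to bring each uncontracted pair into the last two positions).

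Combining these two steps, $\hc{\cF}$ contains the closure of $\tcl{\cF}$ under contraction, and conversely every element of $\hc{\cF}$ is in that closure, so the two sets coincide. The only mildly technical point is the bookkeeping with variable multiplicities when the same atomic formula contains both occurrences of a bound variable (i.e.\ a self-loop in the gadget picture); this is already permitted by Definition~\ref{def:pps} and the uncontraction step above handles it uniformly by simply splitting the two occurrences into two separate free variables, so no obstacle arises.
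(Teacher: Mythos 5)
Your proof is correct and takes essentially the same approach as the paper: one direction establishes that $\hc{\cF}$ contains $\tcl{\cF}$ and is closed under contraction by merging the last two free variables into a fresh bound variable, and the other direction ``uncontracts'' each bound variable into a pair of fresh free variables to obtain a $\pph$-formula, recovering the original function by iterated contraction. The only blemishes are cosmetic: you use both $n$ and $k$ for $\ari(f)$, and the parenthetical about permutations in the uncontraction step is unnecessary since contracting the pairs from the end inward always leaves the next pair in the final two positions.
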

\begin{proof}
 First we show that every \ppsh-formula arises from some \pph-formula by contraction: consider the \ppsh-formula $\psi_f$ over $\cF$ in variables $V_f'$ with free variables $V_f$, where
 \[
  \psi_f = \sum_{v\in V_f'\setminus V_f} \prod_{p=1}^{s_f} \varphi_{p}.
 \]
 Suppose $V_f = \{\vc{v}{n}\}$ and $V_f' = \{\vc{v}{n+m}\}$, and let $V_g := V_f \cup \{\vc{w}{2m}\}$, where the $w_k$ are new variables.
 Define $\psi_g$ to be the \ppsh-formula in variables $V_g$ that arises from $\psi_f$ by
 \begin{itemize}
  \item for $k\in[m]$, replacing one occurrence of the bound variable $v_{n+k}$  with $w_{2k-1}$, replacing the other occurrence of $v_{n+k}$ with $w_{2k}$, and
  \item dropping the sums.
 \end{itemize}
 Then $\psi_g$ is a \pph-formula; so it represents a function $g\in\tcl{\cF}$.
 Furthermore, $f$ arises from $g$ via $m$ contractions.
 Thus, $\hc{\cF}$ is in the closure of $\tcl{\cF}$ under contraction.

 Additionally, every \pph-formula is a \ppsh-formula, therefore $\tcl{\cF}\sse\hc{\cF}$.
 Thus it only remains to show that $\hc{\cF}$ is closed under contraction.
 Suppose $f\in\hc{\cF}$ where $\ari(f)=k$.
 Then there exist some \ppsh-formula $\psi_f$ over $\cF$ in variables $V_f$ which represents $f$, where
 \[
  \psi_f = \sum_{v\in V_f'\setminus V_f} \prod_{p=1}^{s_f} \varphi_{p}.
 \]
 Let $h(\vc{x}{k-2}) = \sum_{y\in\{0,1\}} f(\vc{x}{k-2},y,y)$. 
  Suppose $V_f = \{\vc{v}{n}\}$ and $V_f' = \{\vc{v}{n+m}\}$, then $V_h = \{\vc{v}{n-2}\}$ and
  \[
   V_h' = (V_f'\cup\{w\})\setminus\{v_{n-1},v_n\} = \{\vc{v}{n-2},v_{n+1}\zd v_{n+m}, w\},
  \]
  where $w\notin V_f'$ is a new variable.
  For each atomic formula $\varphi_p$ in $\psi_f$, let $\varphi_p'$ be the atomic formula that is equal to $\varphi_p$ except that any occurrence of $v_{n-1}$ or $v_n$ in the scope is replaced by $w$.
  Both $v_{n-1}$ and $v_n$ are free variables in $\psi_f$, so they each appear with multiplicity 1 in $\psi_f$.
  Then $w$ appears with multiplicity 2 in
  \[
   \psi = \sum_{v\in V_h'\setminus V_h} \prod_{p=1}^{s_f} \varphi_{p}',
  \]
  and the multiplicities of any variables in $V_f'\cap V_h'$ are unchanged.
  Thus, $\psi$ is a \ppsh-formula which represents $h$.
 Hence $h\in\hc{\cF}$.

 This implies that $\hc{\cF}$ is closed under contraction, therefore $\hc{\cF}$ is the closure of $\tcl{\cF}$ under contraction.
\end{proof}

\begin{prop}\label{prop:holant-clone-closure}
 For any set of functions $\cF\sse\allf$, the set $\hc{\cF}$ is the closure of $\cF$ under tensor product, permutation of arguments, and contraction.
\end{prop}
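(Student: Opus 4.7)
The plan is to deduce this proposition by directly combining Lemmas~\ref{lem:closure_free} and~\ref{lem:tensor_closure-holant_clone}, with one small additional verification. Let $\mathcal{K}$ denote the closure of $\cF$ under tensor product, permutation of arguments, and contraction. I will establish $\hc{\cF} = \mathcal{K}$ by proving both inclusions.

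For the inclusion $\hc{\cF} \subseteq \mathcal{K}$: Lemma~\ref{lem:closure_free} gives $\tcl{\cF} \subseteq \mathcal{K}$, since $\tcl{\cF}$ is generated from $\cF$ using only tensor product and permutation of arguments. Then Lemma~\ref{lem:tensor_closure-holant_clone} gives $\hc{\cF} \subseteq \mathcal{K}$, because every function in $\hc{\cF}$ arises from $\tcl{\cF}$ by a finite sequence of contractions, and $\mathcal{K}$ is closed under contraction by definition.

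For the reverse inclusion $\mathcal{K} \subseteq \hc{\cF}$: first, $\cF \subseteq \hc{\cF}$ trivially, since each $f \in \cF$ of arity $k$ is represented by the one-atom \ppsh-formula $f(v_1,\ldots,v_k)$ with all variables free and no bound variables. It therefore suffices to check that $\hc{\cF}$ itself is closed under each of the three operations. Closure under contraction is the content of Lemma~\ref{lem:tensor_closure-holant_clone}. For closure under tensor product, given $f, g \in \hc{\cF}$ with \ppsh-formulas $\psi_f$ and $\psi_g$, we may assume (by renaming) their variable sets are disjoint, and then forming a single pps-formula whose atomic formulas are those of $\psi_f$ together with those of $\psi_g$ and whose sum ranges over the union of the bound variables yields a \ppsh-formula (multiplicities are preserved precisely because the variable sets are disjoint) representing $f \otimes g$. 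For closure under permutation of arguments, a \ppsh-formula for $f$ is transformed into one for $f_\pi$ by relabelling its free variables according to $\pi$, which does not affect any multiplicities.

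No step presents a serious obstacle: the substantive work has been done in Lemmas~\ref{lem:closure_free} and~\ref{lem:tensor_closure-holant_clone}, and the remaining verifications that $\hc{\cF}$ is closed under tensor product and permutation are immediate from the definition of \ppsh-formula. The proposition is in essence the statement that ``closure under two operations, then closure under the third'' produces the same set as ``closure under all three simultaneously'', which holds here because the intermediate set $\tcl{\cF}$ is embedded in the final set $\hc{\cF}$.
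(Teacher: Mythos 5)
Your proof is correct and follows essentially the same route as the paper's: one inclusion falls out of Lemmas~\ref{lem:closure_free} and~\ref{lem:tensor_closure-holant_clone}, and the other is established by verifying directly on \ppsh{}-formulas that $\hc{\cF}$ is closed under tensor product and permutation (with closure under contraction again coming from Lemma~\ref{lem:tensor_closure-holant_clone}). The only difference is presentational: you frame it as two explicit inclusions against the abstract closure $\mathcal{K}$, whereas the paper observes that one containment is immediate and then devotes the body of the proof to the tensor/permutation closure of $\hc{\cF}$; the underlying arguments are identical.
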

\begin{proof}
 By Lemmas~\ref{lem:closure_free} and \ref{lem:tensor_closure-holant_clone}, $\hc{\cF}$ arises from $\cF$ by first taking the closure under tensor product and permutation of arguments, and then taking the closure under contraction.
 This immediately implies that $\hc{\cF}$ is closed under contraction and is contained in the closure of $\cF$ under tensor product, permutation of arguments, and contraction.
 Hence it suffices to show that $\hc{\cF}$ is closed under tensor product and permutation of arguments.

 Suppose $f,g\in\hc{\cF}$ where $\ari(f)=k$ and $\ari(g)=\ell$.
 Then there exist some \ppsh-formula $\psi_f$ over $\cF$ in variables $V_f$ which represents $f$, and some \ppsh-formula $\psi_g$ over $\cF$ in variables $V_g$ which represents $g$, where
 \[
  \psi_f = \sum_{v\in V_f'\setminus V_f} \prod_{p=1}^{s_f} \varphi_{p} \qquad\text{and}\qquad \psi_g = \sum_{v\in V_g'\setminus V_g} \prod_{q=1}^{s_g} \theta_{q}.
 \]
 Without loss of generality, assume the sets of variables $V_f'$ and $V_g'$ are disjoint.
 \begin{itemize}
  \item Tensor product: suppose $h(\vc{x}{k+\ell}) = f(\vc{x}{k}) g(x_{k+1}\zd x_{k+\ell})$.
  Consider the expression
  \[
   \psi_f \psi_g = \left( \sum_{v\in V_f'\setminus V_f} \prod_{p=1}^{s_f} \varphi_{p} \right) \left( \sum_{v\in V_g'\setminus V_g} \prod_{q=1}^{s_g} \theta_{q} \right) = \sum_{v\in (V_f'\setminus V_f)\cup(V_g'\setminus V_g)} \prod_{p=1}^{s_f} \varphi_p \prod_{q=1}^{s_g} \theta_q.
  \]
  The right-hand side is a valid \ppsh-formula with free variables $V_f\cup V_g$ and bound variables $(V_f'\cup V_g')\setminus(V_f\cup V_g)$.
  The multiplicities are unchanged since $V_f'$ and $V_g'$ are disjoint.
  This \ppsh-formula represents $h$.
  \item Permutation of arguments: suppose $\pi:[k]\to[k]$ is a permutation and $h(\vc{x}{k})=f(x_{\pi(1)}\zd x_{\pi(k)})$.
  Then $\psi_f$ can be transformed into a \ppsh-formula for $h$ by permuting the variables in the scope of each atomic formula.
 \end{itemize}
 This concludes the proof that $\hc{\cF}$ is closed under tensor product and permutation of arguments.
 Thus, it follows from Lemmas~\ref{lem:closure_free} and \ref{lem:tensor_closure-holant_clone} that $\hc{\cF}$ is the closure of $\cF$ under tensor product, permutation of arguments, and contraction.
\end{proof}

\begin{cor}\label{cor:holant_closed}
 Let $\cF\sse\allf$ be any set of functions and suppose $f\in\hc{\cF}$.
 Then $\hc{\cF,f}=\hc{\cF}$.
\end{cor}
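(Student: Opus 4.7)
The plan is to derive the corollary directly from the characterisation of holant clones given by Proposition~\ref{prop:holant-clone-closure}, which says that $\hc{\cF}$ is the closure of $\cF$ under tensor product, permutation of arguments, and contraction.

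First I would observe monotonicity: if $\cF \sse \cG$, then every \ppsh-formula over $\cF$ is in particular a \ppsh-formula over $\cG$, so $\hc{\cF} \sse \hc{\cG}$. Applied to $\cF \sse \cF \cup \{f\}$, this gives the inclusion $\hc{\cF} \sse \hc{\cF, f}$.

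For the reverse inclusion, I would argue as follows. By hypothesis, $f \in \hc{\cF}$, and by definition $\cF \sse \hc{\cF}$, so the entire generating set $\cF \cup \{f\}$ is contained in $\hc{\cF}$. Now, Proposition~\ref{prop:holant-clone-closure} applied to $\cF$ itself tells us that $\hc{\cF}$ is closed under tensor product, permutation of arguments, and contraction. Hence $\hc{\cF}$ is a set containing $\cF \cup \{f\}$ and closed under the three generating operations. But by Proposition~\ref{prop:holant-clone-closure} applied to $\cF \cup \{f\}$, the set $\hc{\cF, f}$ is the \emph{smallest} such set (being a closure). Therefore $\hc{\cF, f} \sse \hc{\cF}$.

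Combining the two inclusions yields $\hc{\cF, f} = \hc{\cF}$. No obstacles are expected; the work has essentially been done in Proposition~\ref{prop:holant-clone-closure}, and this corollary is just the observation that closure operators are idempotent in the usual sense.
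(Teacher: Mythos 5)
Your proof is correct and is essentially the argument the paper has in mind: the paper states the corollary immediately after Proposition~\ref{prop:holant-clone-closure} with no separate proof, precisely because it follows from the closure characterisation by the standard monotonicity-plus-minimality argument you give. Nothing is missing.
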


Having determined some properties of holant clones, we now consider the relationship between holant clones and functional clones.

\begin{obs}\label{obs:holant-contained-functional}
 $\hc{\cF}\sse\ang{\cF}$ for any set of functions $\cF\sse\allf$ because any \ppsh-formula is a pps-formula.
\end{obs}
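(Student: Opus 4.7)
The proof is immediate from the definitions, so the plan is simply to verify the syntactic inclusion. First I would take an arbitrary $f\in\hc{\cF}$ and, by the definition of the holant clone, exhibit a \ppsh-formula $\psi = \sum_{v_{n+1},\ldots,v_{n+m}}\prod_{j=1}^{s}\varphi_j$ over $\cF$ that represents $f$, where each $\varphi_j$ is an atomic formula over $\cF$, every free variable has multiplicity $1$ in $\psi$, and every bound variable has multiplicity $2$.

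Next I would observe that the multiplicity requirements in the definition of a \ppsh-formula are extra constraints imposed on top of the definition of a pps-formula in Definition~\ref{def:pps}; dropping them leaves an object that is still, by inspection, a pps-formula over $\cF$. Since $\cF\sse\cF\cup\{\EQ_2\}$, any atomic formula over $\cF$ is also an atomic formula over $\cF\cup\{\EQ_2\}$, so $\psi$ qualifies as a pps-formula over $\cF\cup\{\EQ_2\}$. By Definition~\ref{def:functional_clone}, this yields $f\in\ang{\cF}$, as required.

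There is no real obstacle here: the argument never has to invoke the additional generator $\EQ_2$ or the operations underlying the alternative characterisation in Lemma~\ref{lem:functional-clone-closure}, and the equality of the functions represented by $\psi$ under the two readings follows from the fact that Equation~\eqref{eq:pps-formula} is identical in both settings. The only mild subtlety worth acknowledging is the convention, noted in the remark after Lemma~\ref{lem:functional-clone-closure}, that the two definitions of a pps-formula agree provided every free variable already appears in some atomic scope; this holds automatically here because every free variable of a \ppsh-formula has multiplicity exactly~$1$.
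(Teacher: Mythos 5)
Your proof is correct and takes the same route as the paper, which simply notes that any \ppsh-formula is already a pps-formula (over $\cF\sse\cF\cup\{\EQ_2\}$), so the inclusion is immediate from the definitions. Your elaboration, including the remark that every free variable of a \ppsh-formula already occurs in some atomic scope, is a faithful unpacking of that one-line justification.
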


Unlike a functional clone, a holant clone does not automatically contain the function $\EQ_2$.
Furthermore, even if $\EQ_2\in\hc{\cF}$, the holant clone is not necessarily equal to the functional clone generated by $\cF$: as an example, we will show that $\hc{\{\EQ_2\}}\neq\ang{\{\EQ_2\}}$.

Note that $\EQ_k\in\ang{\{\EQ_2\}}$ for any $k$ because $\EQ_1(x)=\sum_{y\in\{0,1\}}\EQ_2(x,y)$ and, for $k>2$,
\[
 \EQ_k(\vc{x}{k}) = \prod_{j=1}^{k-1}\EQ_2(x_j,x_k),
\]
so $\EQ_k$ can be represented by a pps-formula over $\{\EQ_2\}$ with $k$ variables and $(k-1)$ atomic formulas.

On the other hand, $\EQ_k$ cannot be in $\hc{\{\EQ_2\}}$ if $k$ is odd.
Indeed, consider any function $f\in\hc{\{\EQ_2\}}$, then there exists some \ppsh-formula $\psi_f = \sum_{v\in V_f'\setminus V_f} \prod_{j=1}^{s} \varphi_{j}$ over $\{\EQ_2\}$ such that
\[
 f(\bx) = \sum_{\by\in\{0,1\}^m} \prod_{j=1}^s \EQ_2(\bx,\by|_{\varphi_j}).
\]
Now, $\EQ_2$, the single generating function of $\hc{\{\EQ_2\}}$ has even arity.
A tensor product of $s$ copies of $\EQ_2$ also has even arity, and permutations of arguments do not affect the arity.
Furthermore, each contraction decreases the arity by 2, so $f$ must have even arity.
But $f$ was arbitrary; thus we have shown that any function in $\hc{\{\EQ_2\}}$ has even arity.
Hence $\EQ_k\notin\hc{\{\EQ_2\}}$ if $k$ is odd, which implies $\hc{\{\EQ_2\}}\neq\ang{\{\EQ_2\}}$.

Yet, as we now show, there is a function such that any holant clone containing it is a functional clone: the ternary equality function $\EQ_3$.
Indeed, Cai, Huang and Lu already argued that adding the ternary equality function to the set of allowed functions $\cF$ is sufficient to inter-reduce a counting CSP and a holant problem, i.e.\ $\NCSP(\cF) \equiv_{PT} \hol(\cF,\EQ_3)$ \cite[Proposition~1]{cai_Holant_2012}.

\begin{prop}\label{prop:holant-functional-clone}
 Let $\cF\sse\allf$ be any set of functions.
 Then $\hc{\cF,\EQ_3}=\ang{\cF}$.
\end{prop}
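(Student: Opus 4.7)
The plan is to prove the two inclusions separately, using the operation-based characterisations of the two kinds of clone (Lemma~\ref{lem:functional-clone-closure} and Proposition~\ref{prop:holant-clone-closure}).

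For $\hc{\cF,\EQ_3}\sse\ang{\cF}$, I would apply Observation~\ref{obs:holant-contained-functional} to obtain $\hc{\cF,\EQ_3}\sse\ang{\cF,\EQ_3}$, and then note that $\EQ_3$ already lies in $\ang{\cF}$ because it is represented by the pps-formula $\EQ_2(v_1,v_3)\EQ_2(v_2,v_3)$ over $\{\EQ_2\}$ (as recorded immediately before this proposition). Hence $\ang{\cF,\EQ_3}=\ang{\cF}$, giving the inclusion.

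For the reverse inclusion $\ang{\cF}\sse\hc{\cF,\EQ_3}$, I would invoke Lemma~\ref{lem:functional-clone-closure}: it suffices to show that $\hc{\cF,\EQ_3}$ contains $\cF\cup\{\EQ_2\}$ and is closed under introduction of a fictitious argument, product, permutation of arguments, and summation. Proposition~\ref{prop:holant-clone-closure} already delivers closure under tensor product, permutation of arguments, and contraction, so the task reduces to simulating the three missing functional-clone operations using these together with $\EQ_3$ and whatever auxiliary functions $\EQ_3$ generates. The key preliminary facts are that $\EQ_1\in\hc{\EQ_3}$, obtained by contracting the last two arguments of $\EQ_3$ and observing $\sum_{y}\EQ_3(x,y,y)=1$, and that $\EQ_2\in\hc{\EQ_3}$, obtained by tensoring two copies of $\EQ_3$, contracting their ``third legs'' to form $\EQ_4$, and then contracting two more arguments of $\EQ_4$.

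With $\EQ_1$ and $\EQ_3$ available inside the holant clone, the three remaining operations are simulated as follows: a fictitious argument is added by tensoring with $\EQ_1$; summation over a single argument of a $k$-ary $f$ is obtained by forming $f\otimes\EQ_1$ and then contracting the fresh $\EQ_1$-leg with the target argument; and the product of two $k$-ary functions $f$ and $g$ is realised by taking $f\otimes g\otimes\bigotimes_{i=1}^{k}\EQ_3$ and, after suitable permutations, contracting the $i$-th argument of $f$ with one leg of the $i$-th $\EQ_3$ and the $i$-th argument of $g$ with another leg of the same $\EQ_3$, leaving the surviving third legs $c_1,\ldots,c_k$ as the free arguments; the constraints encoded by the $\EQ_3$'s force $x_i=y_i=c_i$, so the result evaluates to $f(c_1,\ldots,c_k)\,g(c_1,\ldots,c_k)$. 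The only genuinely non-trivial step here is simulating product, since contraction strictly decreases arity whereas product preserves it; the role of $\EQ_3$ is precisely to supply an extra leg per coordinate that survives the fusion and restores the arity that contraction would otherwise destroy.
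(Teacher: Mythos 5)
Your proof is correct and takes essentially the same approach as the paper. Both establish the forward inclusion by showing $\EQ_3\in\ang{\cF}$, and both handle the reverse inclusion by extracting $\EQ_2$ (and implicitly $\EQ_1$) from $\EQ_3$ via contraction and then simulating introduction of a fictitious argument, product, and summation through contractions against copies of $\EQ_3$, which is precisely the content of the paper's three explicit formulas.
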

\begin{proof}
 By Observation~\ref{obs:holant-contained-functional}, $\hc{\cF,\EQ_3}\sse\ang{\cF,\EQ_3}$ for any $\cF$.
 To show that $\hc{\cF,\EQ_3}\sse\ang{\cF}$, it therefore suffices to show that $\EQ_3\in\ang{\cF}$, which implies $\ang{\cF,\EQ_3}=\ang{\cF}$ by \cite[Lemma~2.1]{bulatov_expressibility_2013}.
 But
 \[
  \EQ_3(x_1,x_2,x_3) = \EQ_2(x_1,x_2) \EQ_2(x_1,x_3),
 \]
 so $\EQ_3(x_1,x_2,x_3)$ can be represented as a pps-formula over $\{\EQ_2\}$ in the variables $V=\{v_1,v_2,v_3\}$, which has two atomic formulas with scopes $(v_1,v_2)$ and $(v_1,v_3)$, respectively.
 Thus $\EQ_3\in\ang{\cF}$ for any $\cF$, and hence $\hc{\cF,\EQ_3}\sse\ang{\cF,\EQ_3}=\ang{\cF}$.
 
 To show the inclusion $\ang{\cF}\sse\hc{\cF,\EQ_3}$, first note that $\cF\sse\hc{\cF,\EQ_3}$ by definition.
 Furthermore, $\EQ_2\in\hc{\cF,\EQ_3}$ since
 \[
  \EQ_2(x_1,x_2) = \sum_{y,z\in\{0,1\}} \EQ_3(x_1,x_2,y) \EQ_3(y,z,z),
 \]
 i.e.\ $\EQ_2$ is represented by a \ppsh-formula over $\{\EQ_3\}$.
 Therefore, $\cF\cup\{\EQ_2\}\sse\hc{\cF,\EQ_3}$, which implies $\ang{\cF}\sse\ang{\hc{\cF,\EQ_3}}$.
 It thus suffices to prove that $\ang{\hc{\cF,\EQ_3}}=\hc{\cF,\EQ_3}$.
 By Lemma~\ref{lem:functional-clone-closure}, this is equivalent to showing that $\hc{\cF,\EQ_3}$ is already closed under introduction of fictitious arguments, product, permutation of arguments, and summation.
 Closure under permutation of arguments is shown in Proposition~\ref{prop:holant-clone-closure}.
 Now let $f,g\in\hc{\cF,\EQ_3}$ be $k$-ary functions and consider the remaining three operations.
 \begin{itemize}
  \item Introduction of fictitious arguments: suppose $h(\vc{x}{k+1}) = f(\vc{x}{k})$.
  Then
  \[
   h(\vc{x}{k+1}) = \sum_{y\in\{0,1\}} f(\vc{x}{k}) \EQ_3(x_{k+1},y,y),
  \]
  so $h$ can be represented by a \ppsh-formula over $f$ and $\EQ_3$, and thus $h\in\hc{\cF,\EQ_3}$ by Corollary~\ref{cor:holant_closed}.
  \item Product: suppose $h(\vc{x}{k}) = f(\vc{x}{k}) g(\vc{x}{k})$. Then
  \[
   h(\vc{x}{k}) = \sum_{\vc{y}{k},\vc{z}{k}\in\{0,1\}} f(\vc{y}{k}) g(\vc{z}{k}) \prod_{j=1}^k \EQ_3(x_j,y_j,z_j).
  \]
  Thus $h$ can be represented by a \ppsh-formula over $f,g$, and $\EQ_3$, so $h\in\hc{\cF,\EQ_3}$ by Corollary~\ref{cor:holant_closed}.
  \item Summation: suppose $h(\vc{x}{k-1})=\sum_{x_k\in\{0,1\}} f(\vc{x}{k})$. Then
  \[
   h(\vc{x}{k-1}) = \sum_{y,z\in\{0,1\}} f(\vc{x}{k-1},y) \EQ_3(y,z,z),
  \]
  so $h$ can be represented by a \ppsh-formula over $f$ and $\EQ_3$, and thus $h\in\hc{\cF,\EQ_3}$ by Corollary~\ref{cor:holant_closed}.
 \end{itemize}
 We have shown that $\hc{\cF,\EQ_3}$ is a functional clone and contains $\ang{\cF}$.
 In combination with the first part of the proof, this implies that $\hc{\cF,\EQ_3}=\ang{\cF}$.
\end{proof}

\begin{cor}\label{cor:holant-functional}
 Let $\cF\sse\allf$ be any set of functions and suppose $\EQ_3\in\hc{\cF}$.
 Then $\hc{\cF}=\ang{\cF}$.
\end{cor}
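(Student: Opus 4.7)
The plan is to observe that this corollary follows immediately by combining the two results just established, with no new work required. Specifically, the hypothesis $\EQ_3 \in \hc{\cF}$ lets us absorb $\EQ_3$ into the generating set without changing the holant clone, and then Proposition~\ref{prop:holant-functional-clone} identifies that clone with $\ang{\cF}$.

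In more detail, I would first apply Corollary~\ref{cor:holant_closed} to the function $\EQ_3 \in \hc{\cF}$ to conclude that $\hc{\cF, \EQ_3} = \hc{\cF}$. Next, I would invoke Proposition~\ref{prop:holant-functional-clone} directly, which asserts $\hc{\cF, \EQ_3} = \ang{\cF}$ unconditionally. Chaining these two equalities yields $\hc{\cF} = \ang{\cF}$, as required.

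There is essentially no obstacle here; the entire content of the corollary is the observation that when $\EQ_3$ is already realisable from $\cF$ by a \ppsh-formula, adjoining it as a generator is redundant, and Proposition~\ref{prop:holant-functional-clone} can be applied to $\cF$ itself. The only thing to double-check is that Corollary~\ref{cor:holant_closed} applies in the form stated (i.e.\ that adding a single function already in the holant clone does not enlarge it), but this is exactly what that corollary provides.
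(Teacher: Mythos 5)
Your proof is correct and takes exactly the same route as the paper: the paper states that the corollary ``follows immediately from Proposition~\ref{prop:holant-functional-clone} and Corollary~\ref{cor:holant_closed},'' which is precisely the two-step chaining you spell out.
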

This follows immediately from Proposition~\ref{prop:holant-functional-clone} and Corollary~\ref{cor:holant_closed}.

\begin{rem}
 Holant clones are related to the notion of $T_2$-constructibility in \cite[Section~4.1]{yamakami_approximation_2012}: if $f$ is $T_2$-constructible over some set $\cF\sse\allf$, then $f$ is in the closure of $\hc{\cF\cup\{\dl_0,\dl_1,\EQ_1\}}$ under scaling, i.e.\ $f\in\{c\cdot g\mid c\in\Anz,\; g\in\hc{\cF\cup\{\dl_0,\dl_1,\EQ_1\}}\}$.
 Conversely, any function in $\hc{\cF}$ is $T_2$-constructible over $\cF$.
 
 The two notions of ``membership in the holant clone'' and ``$T_2$-constructibility'' coincide if $\cF$ contains the unary functions $\dl_0,\dl_1$, and $\EQ_1$, as well as all non-zero nullary functions.
 We work with holant clones here because they will be better suited to the analysis of problems that are not conservative (i.e.\ where the unary functions $\dl_0$, $\dl_1$, and $\EQ_1$ may not be available).
 
 For historical reasons, functional clones are not generally closed under scaling; we define holant clones to be compatible with that.
 
 Lin and Wang have also considered the set of functions that can be realised from $\cF$ using gadgets, which they denote $S(\cF)$, though they do not abstract the definition away from the notion of holant gadgets \cite{lin_complexity_2017}.
\end{rem}

\subsection{Holographic transformations of holant clones}

We now consider the relationship of holant clones with certain holographic transformations.

\begin{obs}\label{obs:holographic_tensor}
 Suppose $\cF\sse\allf$ is a set of functions and $M$ is a 2 by 2 matrix over $\AA$.
 Then $\tcl{M\circ\cF}=M\circ\tcl{\cF}$ because $\tcl{\cF}$ arises from $\cF$ via tensor product and permutation of arguments by Lemma~\ref{lem:closure_free}.
\end{obs}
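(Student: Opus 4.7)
The plan is to verify that the map $f\mapsto M\circ f$ commutes with the two generating operations of $\tcl{\cdot}$, namely tensor product and permutation of arguments, and then conclude using Lemma~\ref{lem:closure_free}.

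First I would record the mixed-product identity $M\t{k+\ell}=M\t{k}\otimes M\t{\ell}$ together with the Kronecker-product rule $(A\otimes B)(\bx\otimes \by)=(A\bx)\otimes(B\by)$. Using the correspondence between functions and their value vectors (under which tensor product of functions equals Kronecker product of vectors), this immediately gives that if $h=f\otimes g$ with $\ari(f)=k$ and $\ari(g)=\ell$, then
\[
M\t{k+\ell}\mathbf{h}=(M\t{k}\otimes M\t{\ell})(\mathbf{f}\otimes\mathbf{g})=(M\t{k}\mathbf{f})\otimes(M\t{\ell}\mathbf{g}),
\]
so $M\circ(f\otimes g)=(M\circ f)\otimes(M\circ g)$. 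For permutations, the point is that $M\t{n}$ acts factor-wise on the tensor algebra, so permuting the tensor slots (which is exactly what $f\mapsto f_\pi$ does) commutes with the application of $M\t{n}$; hence $M\circ f_\pi=(M\circ f)_\pi$ for every permutation $\pi$ of $[\ari(f)]$.

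With these two compatibilities in hand, the inclusion $M\circ\tcl{\cF}\sse\tcl{M\circ\cF}$ follows by induction on the construction of an element of $\tcl{\cF}$ from $\cF$ via the two operations: each step can be transported across $M\circ$ and reinterpreted as the same step applied to elements of $M\circ\cF$. Conversely, since $M\in\GL$ is invertible, applying the same argument to $M^{-1}$ and $M\circ\cF$ gives $M^{-1}\circ\tcl{M\circ\cF}\sse\tcl{\cF}$, i.e.\ $\tcl{M\circ\cF}\sse M\circ\tcl{\cF}$.

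I do not expect any real obstacle: the only slightly finicky step is being explicit about permutations of arguments versus permutations of tensor factors, but once one notes that $M\t{n}$ is symmetric in its $n$ slots, the commutation is immediate.
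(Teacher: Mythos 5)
Your sketch captures the essential idea --- $f\mapsto M\circ f$ commutes with tensor product (by the mixed-product identity $M\t{k+\ell}=M\t{k}\otimes M\t{\ell}$) and with permutation of arguments (since $M\t{n}$ is invariant under permuting its tensor factors) --- and this is exactly what the paper's terse justification is appealing to when it cites Lemma~\ref{lem:closure_free}.

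The one point worth flagging is your treatment of the reverse inclusion $\tcl{M\circ\cF}\sse M\circ\tcl{\cF}$, which you obtain by running the forward argument with $M^{-1}$. The observation as stated allows $M$ to be an arbitrary $2\times 2$ matrix over $\AA$, not necessarily an element of $\GL$, so invertibility is not actually granted by the hypothesis. (There is a small internal mismatch in the paper here: Definition~\ref{def:holographic_transformation} only defines $M\circ f$ for $M\in\GL$, but the formula $M\t{n}\mathbf{f}$ makes sense for any $M$, which is evidently the intended reading.) Fortunately the detour through $M^{-1}$ is unnecessary. By Lemma~\ref{lem:closure_free}, any $h\in\tcl{M\circ\cF}$ is a permuted tensor product of finitely many functions $M\circ f_1,\ldots,M\circ f_s$ with $f_j\in\cF$, and your two commutation identities immediately rewrite $h$ as $M\circ\bigl((f_1\otimes\cdots\otimes f_s)_\pi\bigr)$, which lies in $M\circ\tcl{\cF}$. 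No inverse is needed, and the same reasoning gives the forward inclusion. In practice the paper only invokes this observation with $M\in\cO\cup\{K_1,K_2\}$, all invertible, so nothing breaks downstream; still, the direct argument is cleaner and covers the full generality of the statement without appealing to a hypothesis you were not given.
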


\begin{lem}\label{lem:hc_orthogonal_holographic}
 Let $\cF\sse\allf$ be any set of functions and let $O\in\cO$.
 Then $\hc{O\circ\cF}=O\circ\hc{\cF}$.
\end{lem}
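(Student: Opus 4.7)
The cleanest approach is to use the closure characterisation from Proposition~\ref{prop:holant-clone-closure}: $\hc{\cF}$ is the closure of $\cF$ under tensor product, permutation of arguments, and contraction. So it suffices to show that the map $f \mapsto O\circ f$ commutes with each of these three operations. Together with $\hc{\cF}\supseteq\cF$ this immediately gives $O\circ\hc{\cF}\subseteq\hc{O\circ\cF}$ and conversely, because $O^{-1}$ is also orthogonal (from $O^TO=I$ and squareness we get $OO^T=I$ as well), applying the same inclusion to $O^{-1}$ and the set $O\circ\cF$ yields $O^{-1}\circ\hc{O\circ\cF}\subseteq\hc{O^{-1}\circ O\circ\cF}=\hc{\cF}$, i.e.\ the reverse inclusion. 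So really only one direction, and only the three commutation statements, need to be checked.

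Commutation with tensor product and permutation of arguments is essentially Observation~\ref{obs:holographic_tensor}, reflecting that $(M\otimes M)^{\otimes k}=M^{\otimes 2k}$ up to the appropriate reordering of tensor factors, and that holographic transformations are applied uniformly to every argument. The substantive step is contraction: given $h(\vc{x}{k-2})=\sum_{y\in\{0,1\}} f(\vc{x}{k-2},y,y)$, I want to show $O\circ h$ equals the contraction of $O\circ f$ on its last two arguments. Expanding both sides against the definition $(M\circ f)(\bx)=\sum_{\by}\bigl(\prod_i M_{x_i,y_i}\bigr)f(\by)$, the right-hand side contains the factor $\sum_{z}O_{z,z_1}O_{z,z_2}=(O^TO)_{z_1,z_2}$, which equals $\delta_{z_1,z_2}$ precisely because $O\in\cO$ is orthogonal. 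After substituting this Kronecker delta and collapsing the sum over $z_2$, what remains matches the definition of $O\circ h$. This is the one place orthogonality of $O$ is genuinely used; for a general $M\in\GL$ the identity would fail because $M^TM\neq I$ in general.

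With the three commutation identities in hand, I would argue by induction on the construction of elements of $\hc{\cF}$. Every $g\in\hc{\cF}$ arises by finitely many tensor products, permutations of arguments, and contractions from elements of $\cF$; applying $O\circ(\cdot)$ to this construction and pushing $O$ past each operation via the commutation identities yields a parallel construction over $O\circ\cF$ that produces $O\circ g$, so $O\circ g\in\hc{O\circ\cF}$. Combined with the $O^{-1}$ argument above, this gives $\hc{O\circ\cF}=O\circ\hc{\cF}$, as required.

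The main obstacle, and essentially the only nontrivial point, is the contraction step, since it is the unique operation whose commutation with $O\circ(\cdot)$ depends on a nontrivial algebraic property of $O$ (namely $O^TO=I$). Everything else is a bookkeeping exercise that follows from the basic properties of the Kronecker product already collected earlier in the paper.
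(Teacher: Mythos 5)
Your proof is correct, and it takes a genuinely cleaner route than the paper's. The paper works directly with a \ppsh-formula representing $h\in\hc{O\circ\cF}$: it expands every atomic formula against $\omega(x,z)=O_{xz}$, performs a careful re-indexing of bound variables into first and second occurrences, observes that each bound variable yields a factor $\sum_y\omega(y,z_1)\omega(y,z_2)=\EQ_2(z_1,z_2)$ by orthogonality, and collapses these $\EQ_2$'s to exhibit $h$ as $O\circ h'$ for some $h'\in\hc{\cF}$; the reverse inclusion is then waved through as "the same steps in the opposite direction." Your version routes everything through Proposition~\ref{prop:holant-clone-closure}: since $\hc{\cF}$ is generated from $\cF$ by tensor product, permutation of arguments, and contraction, it suffices to check that $f\mapsto O\circ f$ commutes with each, and orthogonality enters only once, in the contraction step via $(O^TO)_{z_1,z_2}=\delta_{z_1,z_2}$. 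This is exactly the same algebraic fact the paper uses, but localised to a single small computation rather than threaded through a global index-bookkeeping argument. Your handling of the reverse inclusion via $O^{-1}\in\cO$ is also cleaner than the paper's appeal to symmetry of the manipulation. One small remark worth making explicit when you write this up: contraction as defined in the paper acts on the \emph{last} two arguments, so "commutes with contraction" together with "commutes with permutation of arguments" is what covers arbitrary contractions; you clearly have this in mind but it deserves a sentence.
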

\begin{proof}
 Let $\om$ be the binary function that corresponds to the matrix $O$, i.e.\ $\om(x,y)=O_{xy}$.
 Then, for any $g\in\allf_k$,
 \begin{equation}\label{eq:function_oht}
  (O\circ g)(\vc{x}{k}) = \sum_{\vc{z}{k}\in\{0,1\}} g(\vc{z}{k}) \prod_{\ell=1}^k \om(x_\ell,z_\ell)
 \end{equation}
 
 We first show that $\hc{O\circ\cF}\sse O\circ\hc{\cF}$.
 Suppose $h\in\hc{O\circ\cF}$, then there exists a \ppsh-formula $\psi_h = \sum_{v\in V_h'\setminus V_h} \prod_{j=1}^{s} \varphi_{j}$ representing $h$, where $V_h=\{\vc{v}{n}\}$ is the set of free variables and $V_h'=\{\vc{v}{n+m}\}$ is the full set of variables.
 For each bound variable $v\in V_h'\setminus V_h$, we say that the occurrence of $v$ at position $\ell$ in the scope of some atomic formula $\varphi_j$ is the \emph{first occurrence} of $v$ if $v$ does not occur in the scope of any atomic formula $\varphi_k$ with $k<j$ and if $v$ does not occur in the scope of $\varphi_j$ in a position with index less than $\ell$.
 The occurrence of $v$ that is not the first occurrence is called the \emph{second occurrence}.
 
 Now, for each atomic formula $\varphi_j$ in $\psi_h$, the specified function $f_{\varphi_j}$ takes the form $O\circ g_j$ for some $g_j\in\cF$, hence it can be expressed as in \eqref{eq:function_oht}.
 Let $a_j:=\ari(g_j)$, and write $(\bx,\by|_{\varphi_j})_\ell$ for the $\ell$-th argument of the function represented by $\varphi_j$, then
 \[
  h(\bx) = \sum_{\by\in\{0,1\}^m} \prod_{j=1}^s \sum_{z_{j,1}\zd z_{j,a_j}\in\{0,1\}} g_j(z_{j,1}\zd z_{j,a_j}) \prod_{\ell_j=1}^{a_j} \om((\bx,\by|_{\varphi_j})_{\ell_j},z_{j,\ell_j}).
 \]
 We now re-index the arguments as follows: for $\ell\in[a_j]$, define
 \[
  (\bz|_{\varphi_j})_\ell := \begin{cases} z_k & \text{if } (\bx,\by|_{\varphi_j})_\ell = x_k, \\ z_{k+n} & \text{if } (\bx,\by|_{\varphi_j})_\ell = y_k \text{ and this is the first occurrence of $y_k$,} \\ z_{k+m+n} & \text{if } (\bx,\by|_{\varphi_j})_\ell = y_k \text{ and this is the second occurrence of $y_k$}. \end{cases}
 \]
 Then we can write
 \[
  h(\bx) = \sum_{\by\in\{0,1\}^m} \sum_{\bz\in\{0,1\}^{2m+n}} \prod_{j=1}^s g_j(\bz|_{\varphi_j}) \prod_{k=1}^n \omega(x_k,z_k) \prod_{\ell=1}^{m} \omega(y_\ell,z_{\ell+n}) \omega(y_\ell,z_{\ell+m+n}).
 \]
 Note that each $z_k$ appears exactly once as the argument of some $g_j$, which is why it was possible to pull out and combine the sums over all $z_k$.
 Furthermore, note that the arguments $y_\ell$ only appear in the final part of the product, therefore we can rearrange the sums further:
 \begin{align*}
  h(\bx) &= \sum_{\bz\in\{0,1\}^{2m+n}} \prod_{j=1}^s g_j(\bz|_{\varphi_j}) \prod_{k=1}^n \omega(x_k,z_k) \prod_{\ell=1}^{m} \left(\sum_{y_\ell\in\{0,1\}} \omega(y_\ell,z_{\ell+n}) \omega(y_\ell,z_{\ell+m+n}) \right) \\
  &= \sum_{\bz\in\{0,1\}^{2m+n}} \prod_{j=1}^s g_j(\bz|_{\varphi_j}) \prod_{k=1}^n \omega(x_k,z_k) \prod_{\ell=1}^{m} \EQ_2(z_{\ell+n},z_{\ell+m+n}).
 \end{align*}
 Here, $\sum_{y_\ell\in\{0,1\}} \omega(y_\ell,z_{\ell+n}) \omega(y_\ell,z_{\ell+m+n}) = \EQ_2(z_{\ell+n},z_{\ell+m+n})$ by orthogonality of $O$.
 Now, let
 \[
  (\bz'|_{\varphi_j})_\ell = \begin{cases} z_k & \text{if } (\bx,\by|_{\varphi_j})_\ell = x_k, \\ z_{k+n} & \text{if } (\bx,\by|_{\varphi_j})_\ell = y_k \end{cases}
 \]
 for each $\ell\in[a_j]$, i.e.\ $\bz'$ does not distinguish between the two occurrences of $y_k$.
 Then, by summing out $z_{m+n+1}\zd z_{2m+n}$ and rearranging the remaining sums,
 \[
  h(\bx) = \sum_{\vc{z}{n}\in\{0,1\}} \left( \sum_{z_{n+1}\zd z_{m+n}\in\{0,1\}} \prod_{j=1}^s g_j(\bz'|_{\varphi_j}) \right) \prod_{k=1}^n \omega(x_k,z_k).
 \]
 It is straightforward to see that the term in parentheses is a \ppsh-formula over $\cF$, and the part outside the parentheses denotes a holographic transformation by $O$.
 Therefore, $h\in O\circ\hc{\cF}$.
 
 By going through the same algebraic steps in the opposite direction -- i.e.\ starting from a \ppsh-formula witnessing that $h\in O\circ\hc{\cF}$, introducing binary equality functions and replacing them with contractions over $\omega$ -- one can show that $O\circ\hc{\cF}\sse\hc{O\circ\cF}$.
 Therefore, $\hc{O\circ\cF}=O\circ\hc{\cF}$ as desired.
\end{proof}

The following lemma will be useful when considering holographic transformations by $K_1$ or $K_2$.

\begin{lem}\label{lem:Z-contraction}
 Suppose $f,g\in\allf\setminus\allf_0$ and $K\in\{K_1,K_2\}$.
 Let $k:=\ari(f)$, $\ell:=\ari(g)$ and define $h(\vc{x}{k+\ell-2}) := \sum_{y_1,y_2\in\{0,1\}} f(y_1,\vc{x}{k-1}) g(x_k\zd x_{k+\ell-2},y_2) \NEQ(y_1,y_2)$.
 Then
 \[
  \sum_{y\in\{0,1\}} (K\circ f)(y,\vc{x}{k-1}) (K\circ g)(x_k\zd x_{k+\ell-2},y) = (K\circ h)(\vc{x}{k+\ell-2}).
 \]
 If $k\geq 2$, define $h'(\vc{x}{k-2}) := \sum_{y_1,y_2\in\{0,1\}} f(\vc{x}{k-2},y_1,y_2) \NEQ(y_1,y_2)$, then
 \[
  \sum_{y\in\{0,1\}} (K\circ f)(\vc{x}{k-2},y,y) = (K\circ h')(\vc{x}{k-2}).
 \]
\end{lem}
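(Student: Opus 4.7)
The plan is to expand everything using the definition of holographic transformation, exchange the order of summation so that the sum over the contracted variable(s) $y$ hits only factors of $K$, and then recognise the result using the identity $K_1^T K_1 = K_2^T K_2 = X$ from Observation~\ref{obs:Z_properties}, where $X$ is exactly the matrix form of $\NEQ$.

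Concretely, for the first equation I would write $K = K_1$ or $K_2$ as a $2 \times 2$ matrix $K$, so that for any $k$-ary function $f$,
\[
 (K \circ f)(y, x_1 \zd x_{k-1}) = \sum_{z_0, z_1 \zd z_{k-1}} K_{y z_0} \left(\prod_{i=1}^{k-1} K_{x_i z_i}\right) f(z_0, z_1 \zd z_{k-1}),
\]
and likewise expand $(K\circ g)(x_k \zd x_{k+\ell-2}, y)$ with auxiliary variables $w_1 \zd w_{\ell-1}, w_\ell$. When I then sum over $y\in\{0,1\}$, the only $y$-dependent factors are $K_{y z_0}$ and $K_{y w_\ell}$, so
\[
 \sum_{y \in \{0,1\}} K_{y z_0} K_{y w_\ell} = (K^T K)_{z_0 w_\ell} = X_{z_0 w_\ell} = \NEQ(z_0, w_\ell),
\]
by Observation~\ref{obs:Z_properties}. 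Substituting this back and regrouping the remaining $K$-factors according to the $x_i$-indices, the sum takes the form
\[
 \sum_{z_0 \zd z_{k-1}, w_1 \zd w_\ell} \left( \prod_{i=1}^{k-1} K_{x_i z_i} \right) \left( \prod_{j=1}^{\ell-1} K_{x_{k-1+j} w_j} \right) f(z_0, z_1 \zd z_{k-1}) g(w_1 \zd w_{\ell-1}, w_\ell) \NEQ(z_0, w_\ell),
\]
which, after renaming $z_0 \to y_1$ and $w_\ell \to y_2$ inside the inner sum, is precisely $(K \circ h)(\vc{x}{k+\ell-2})$ by the definition of $h$ and of the holographic transformation.

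For the second identity (the self-contraction case), I would carry out the analogous expansion:
\[
 (K \circ f)(x_1 \zd x_{k-2}, y, y) = \sum_{z_1 \zd z_{k-2}, z_{k-1}, z_k} \left( \prod_{i=1}^{k-2} K_{x_i z_i} \right) K_{y z_{k-1}} K_{y z_k}\, f(z_1 \zd z_k),
\]
then sum over $y$ so that $\sum_y K_{y z_{k-1}} K_{y z_k} = \NEQ(z_{k-1}, z_k)$ by the same identity. Renaming $z_{k-1}, z_k$ to $y_1, y_2$ inside the sum recognises the result as $(K\circ h')(\vc{x}{k-2})$.

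The bookkeeping with indices is the only real obstacle; conceptually, the proof is just the observation that a contraction in the transformed picture corresponds, via $K^T K = X$, to a $\NEQ$-contraction in the untransformed picture. Since both $K_1$ and $K_2$ satisfy $K^T K = X$, the argument is uniform in the choice $K \in \{K_1, K_2\}$.
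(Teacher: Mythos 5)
Your proof is correct and follows essentially the same route as the paper: expand via the definition of the holographic transformation, interchange sums so the contracted variable only meets two $K$-factors, invoke $K^T K = X$ from Observation~\ref{obs:Z_properties} to produce a $\NEQ$, and then recognise the result as $K\circ h$ (resp.\ $K\circ h'$). The paper packages the $K$-entries as a binary function $\zeta(x,y)=K_{xy}$ rather than writing matrix entries directly, but that is purely notational.
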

\begin{proof}
 Let $\zeta\in\allf_2$ be the function associated with the matrix $K$, i.e.\ $\zeta(x,y)=K_{xy}$.
 Then, for any $f'\in\allf_{k'}$, $(K\circ f')(\vc{x}{k'}) = \sum_{\vc{z}{k'}\in\{0,1\}} f'(\vc{z}{k'}) \prod_{j=1}^{k'} \zeta(x_j,z_j)$.
 Thus
 \begin{align*}
  \sum_{y\in\{0,1\}} &(K\circ f)(y,x_2\zd x_k) (K\circ g)(x_{k+1}\zd x_{k+\ell-1},y) \\
  &= \sum_{y\in\{0,1\}} \left( \sum_{\vc{z}{k}\in\{0,1\}} f(\vc{z}{k})  \zeta(y,z_1) \prod_{j=2}^{k} \zeta(x_j,z_j) \right) \\
  &\hphantom{= \sum_{y\in\{0,1\}}}\, \left( \sum_{z_{k+1}\zd z_{k+\ell}\in\{0,1\}} g(z_{k+1}\zd z_{k+\ell}) \zeta(y,z_{k+\ell}) \prod_{j=k+1}^{k+\ell-1} \zeta(x_j,z_j) \right) \\
  &= \sum_{\vc{z}{k+\ell}\in\{0,1\}} f(\vc{z}{k}) g(z_{k+1}\zd z_{k+\ell}) \left( \sum_{y\in\{0,1\}} \zeta(y,z_1) \zeta(y,z_{k+\ell}) \right) \prod_{j=2}^{k+\ell-1} \zeta(x_j,z_j).
 \end{align*}
 Now, the function $\sum_{y\in\{0,1\}} \zeta(y,z_1) \zeta(y,z_{k+\ell})$ corresponds to $(K^T K)_{z_1 z_{k+\ell}}$, and, by Observation~\ref{obs:Z_properties}, $K^T K=X$ for $K\in\{K_1,K_2\}$.
 Thus, $\sum_{y\in\{0,1\}} \zeta(y,z_1) \zeta(y,z_{k+\ell}) = \NEQ(z_1,z_{k+\ell})$.
 Therefore,
 \begin{align*}
  \sum_{y\in\{0,1\}} (K\circ f)&(y,x_2\zd x_k) (K\circ g)(x_{k+1}\zd x_{k+\ell-1},y) \\
  &= \sum_{\vc{z}{k+\ell}\in\{0,1\}} f(\vc{z}{k}) g(z_{k+1}\zd z_{k+\ell}) \NEQ(z_1,z_{k+\ell}) \prod_{j=2}^{k+\ell-1} \zeta(x_j,z_j) \\
  &= \sum_{z_{2}\zd z_{k+\ell-1}\in\{0,1\}} h(z_{2}\zd z_{k+\ell-1}) \prod_{j=2}^{k+\ell-1} \zeta(x_j,z_j),
 \end{align*}
 which is equal to $(K\circ h)(x_2\zd x_{k+\ell-1})$, as desired.
 
 The proof of the second statement is analogous.
\end{proof}

\subsection{Restricted holant clones and bipartite holant clones}
\label{s:restricted}

We now define a variant of holant clones in which only certain arguments of functions can be contracted together.
Let $\Ld$ be any finite set.
We will take $\Ld = \{L,R\}$.
 Then a \emph{labelled function} with labels $\Ld$ is a function $f\in\allf_k$ together with a \emph{type} $\ld_f\in\Ld^k$.
Suppose $\cF$ is a set of labelled functions with labels $\Ld$, and consider some atomic formula $\varphi$ in variables $V$, associated with the function $f\in\cF$.
We say an occurrence of the variable $v$ at position $k$ in the scope of $\varphi$ is associated with the label $a\in\Ld$ if $(\ld_f)_k=a$.
Let $N\sse\{ \{a,b\}\mid a,b\in\Ld\}$ be some set of unordered  pairs from $\Ld$.
A \ppsh-formula $\psi$ over $\cF$ is \emph{restricted by $N$} if the following holds: for each bound variable $v$ of $\psi$, the two associated labels $a_1$ and $a_2$ satisfy $\{a_1,a_2\}\in N$.
The type of the function represented by the \ppsh-formula is induced by the labels associated with its free variables.

\begin{dfn}
 Suppose $\cF,\cG\sse\allf$ are sets of functions.
 Take $\Ld:=\{L,R\}$ and $N:=\{\{L,R\}\}$.
 Let $\cF\sqcup\cG$ denote the labelled set in which each function from $\cF$ occurs with type all-$L$ and each function from $\cG$ occurs with type all-$R$:
 \[
  \cF\sqcup\cG := \{(f,(L\zd L))\mid f\in\cF\}\cup\{(g,(R\zd R))\mid g\in\cG\}.
 \]
 A \ppsh-formula over $\cF\sqcup\cG$ restricted by $N$ is called a \emph{bipartite \ppsh-formula}.
 The set of all functions that can be represented by bipartite \ppsh-formulas over $\cF\sqcup\cG$ is called the \emph{bipartite holant clone} and denoted $\hc{\cF\sqcup\cG}$.
 
 We will sometimes use the two subsets of a bipartite holant clone containing those functions in which all variables are labelled the same:
 \begin{align*}
  \bhc{\cF}{\cG}{L} &= \{f \mid (f,(L\zd L))\in\hc{\cF\sqcup\cG} \} \\
  \bhc{\cF}{\cG}{R} &= \{f \mid (f,(R\zd R))\in\hc{\cF\sqcup\cG} \}.
 \end{align*}
\end{dfn}

Note that a function may appear multiple times with different labels in a bipartite holant clone.

\begin{prop}
 Let $\cF,\cG\sse\allf$ be sets of functions, then $\hc{\cF\sqcup\cG}$ is the closure of $\cF\sqcup\cG$ under tensor product, permutation of arguments, and contraction of two arguments with distinct labels.
\end{prop}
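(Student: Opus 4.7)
The plan is to adapt the strategy of Proposition~\ref{prop:holant-clone-closure} to the labelled setting, factoring the argument through a bipartite analogue of $\tcl{\cdot}$. Concretely, define $\tcl{\cF\sqcup\cG}$ to be the set of labelled functions representable by a bipartite \pph-formula, i.e.\ a bipartite \ppsh-formula over $\cF\sqcup\cG$ with no bound variables (so in particular trivially restricted by $N = \{\{L,R\}\}$). The goal is to prove the two bipartite analogues of Lemmas~\ref{lem:closure_free} and~\ref{lem:tensor_closure-holant_clone}, and then combine them as in Proposition~\ref{prop:holant-clone-closure}.

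First, I would show that $\tcl{\cF\sqcup\cG}$ coincides with the closure of $\cF\sqcup\cG$ under tensor product and permutation of arguments. The argument is essentially identical to the proof of Lemma~\ref{lem:closure_free}: concatenating two \pph-formulas over disjoint variables yields a \pph-formula for the tensor product, and permuting the arguments of a function corresponds to permuting the variables in the scope of the atomic formula. The only new ingredient is checking that labels are preserved: a tensor product inherits its type by concatenating the types of its factors, and a permutation of arguments correspondingly permutes the type. Conversely, any bipartite \pph-formula decomposes as a tensor product (up to permutation) of the functions in its atomic formulas, with types given by the types of those functions.

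Next, I would show that $\hc{\cF\sqcup\cG}$ is the closure of $\tcl{\cF\sqcup\cG}$ under contraction of arguments with distinct labels. As in Lemma~\ref{lem:tensor_closure-holant_clone}, given a bipartite \ppsh-formula representing $h$, each bound variable $v$ is promoted to two fresh free variables $w^L,w^R$ inheriting the two labels that the restriction $N = \{\{L,R\}\}$ forces on the two occurrences of $v$; the resulting bipartite \pph-formula represents a function $g\in\tcl{\cF\sqcup\cG}$ from which $h$ is recovered by contracting each pair $w^L,w^R$, which are labelled distinctly. The converse inclusion $\tcl{\cF\sqcup\cG}\sse\hc{\cF\sqcup\cG}$ is immediate, and closure of $\hc{\cF\sqcup\cG}$ under contraction of differently-labelled arguments follows by taking any witnessing bipartite \ppsh-formula and adjoining the extra sum: the new bound variable is associated to one label $L$ and one label $R$, so the restriction by $N$ is preserved.

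Finally, combining these two facts, $\hc{\cF\sqcup\cG}$ contains $\cF\sqcup\cG$ and is closed under the three operations. To get equality it remains to verify closure of $\hc{\cF\sqcup\cG}$ also under tensor product and permutation, which proceeds exactly as in the first half of Proposition~\ref{prop:holant-clone-closure} applied to bipartite \ppsh-formulas (again, labels are preserved by both operations). The bookkeeping of labels throughout is the main thing to get right; the likely obstacle is simply verifying carefully that every time one introduces a bound variable, substitutes, or concatenates formulas, the condition that each bound variable is witnessed by one $L$-labelled and one $R$-labelled occurrence is maintained. Once this is checked, the proposition follows as a direct transcription of Proposition~\ref{prop:holant-clone-closure}.
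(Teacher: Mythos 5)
Your proof is correct and takes essentially the same approach as the paper: the paper's proof simply observes that the restriction on bound-variable labels corresponds exactly to allowing only contractions of distinctly-labelled arguments, and then declares the rest analogous to Proposition~\ref{prop:holant-clone-closure}. You have spelled out what that analogy involves — factoring through a bipartite $\tcl{\cdot}$ and transcribing Lemmas~\ref{lem:closure_free} and~\ref{lem:tensor_closure-holant_clone} — which is exactly the intended argument.
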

\begin{proof}
 The condition that the two occurrences of each bound variables have to be associated with different labels corresponds exactly to allowing only those contractions in which the two arguments have distinct labels.
 Having noted this, the proof of the proposition is analogous to that of Proposition~\ref{prop:holant-clone-closure}.
\end{proof}

\begin{cor}\label{cor:bip_clone_closure}
 Let $\cF,\cG\sse\allf$ be sets of functions.
 Suppose that $f\in\bhc{\cF}{\cG}{L}$ and suppose that $g\in\bhc{\cF}{\cG}{R}$.
 Then the following hold:
 \begin{align*}
  \hc{(\cF\cup\{f\})\sqcup\cG} &= \hc{\cF\sqcup\cG} & 
  \hc{\cF\sqcup(\cG\cup\{g\})} &= \hc{\cF\sqcup\cG} \\
  \bhc{(\cF\cup\{f\})}{\cG}{L} &= \bhc{\cF}{\cG}{L} &
  \bhc{\cF}{(\cG\cup\{g\})}{L} &= \bhc{\cF}{\cG}{L} \\
  \bhc{(\cF\cup\{f\})}{\cG}{R} &= \bhc{\cF}{\cG}{R} &
  \bhc{\cF}{(\cG\cup\{g\})}{R} &= \bhc{\cF}{\cG}{R}.
 \end{align*}
\end{cor}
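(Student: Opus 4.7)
The plan is to reduce all six equalities to the first two, and then prove those two by substituting bipartite \ppsh-formulas at atomic formulas, in direct analogy with Corollary~\ref{cor:holant_closed}. First I would observe that $\supseteq$ is trivial in every case, since enlarging the generating set can only enlarge the clone. I would also note that $\bhc{\cF}{\cG}{L}$ and $\bhc{\cF}{\cG}{R}$ are, by definition, just the all-$L$-typed and all-$R$-typed subsets of $\hc{\cF\sqcup\cG}$; hence once we prove the first two equalities (full bipartite holant clones), the remaining four follow by taking these subsets.

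For the non-trivial direction of the first equality, I would take any $h\in\hc{(\cF\cup\{f\})\sqcup\cG}$ and fix a bipartite \ppsh-formula $\psi$ over $(\cF\cup\{f\})\sqcup\cG$ representing $h$. Since $f\in\bhc{\cF}{\cG}{L}$, there is a bipartite \ppsh-formula $\psi_f$ over $\cF\sqcup\cG$ representing $f$ whose free variables all carry label $L$. For each atomic formula in $\psi$ of the form $f(v_{i_1}\zd v_{i_k})$, I would insert a fresh copy of $\psi_f$ (renaming its bound variables so that they do not clash with any variable already appearing in $\psi$ or in another inserted copy), substituting its $j$-th free variable by $v_{i_j}$. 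Removing the $f$-atoms and adding the inserted atomic formulas yields a new expression $\psi'$ over $\cF\sqcup\cG$, and a routine unfolding of \eqref{eq:pps-formula} shows that $\psi'$ still represents $h$.

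The heart of the argument is then to check that $\psi'$ is indeed a bipartite \ppsh-formula, i.e.\ that it has the correct multiplicities and that every bound variable has one $L$-labelled and one $R$-labelled occurrence. A free variable of $\psi$ appearing in an $f$-atom is replaced by a single occurrence (inside the inserted copy of $\psi_f$) at a position with label $L$, matching its original label $L$ coming from $f$'s type, so multiplicity $1$ and its label are preserved. A bound variable of $\psi$ appearing once in an $f$-atom (with label $L$) and once elsewhere (with label $R$, by the bipartite restriction in $\psi$) ends up with one $L$-occurrence inside the inserted $\psi_f$ and its original $R$-occurrence, so multiplicity $2$ and the $\{L,R\}$ condition are preserved. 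Finally, the newly renamed bound variables of each inserted $\psi_f$ retain exactly the labels they had in $\psi_f$, which already satisfied the bipartite restriction. This completes the verification for the first equality.

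The second equality $\hc{\cF\sqcup(\cG\cup\{g\})}=\hc{\cF\sqcup\cG}$ follows by the entirely symmetric argument, swapping the roles of $L$ and $R$ and using $g\in\bhc{\cF}{\cG}{R}$. Taking all-$L$ and all-$R$-typed subsets of these full-clone equalities yields the four remaining statements. The main obstacle is not conceptual but clerical: keeping track, during the substitution, of how multiplicities and labels combine, and in particular ensuring that renaming bound variables of inserted copies of $\psi_f$ preserves the ``one $L$ and one $R$'' condition globally. Provided this bookkeeping is done carefully (as in the analogous proof of Proposition~\ref{prop:holant-clone-closure}), the corollary follows.
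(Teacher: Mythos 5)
Your proposal is correct and takes essentially the paper's route: the paper states the corollary without proof, treating it as immediate from the preceding proposition (that $\hc{\cF\sqcup\cG}$ is the closure of $\cF\sqcup\cG$ under tensor product, permutation, and label-respecting contraction, so adding a function already in the closure changes nothing), and your explicit formula-substitution argument is precisely the bookkeeping that underlies that closure fact. Your reduction of the four lower equalities to the two upper ones via the definition of $\bhc{\cF}{\cG}{L}$ and $\bhc{\cF}{\cG}{R}$ as the all-$L$ and all-$R$ slices of $\hc{\cF\sqcup\cG}$ is also exactly right.
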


A \emph{bipartite holant gadget} is a holant gadget (cf.\ Section~\ref{s:holant_gadgets_clones}) defined over some bipartite signature grid (see Definition~\ref{dfn:bipartite_signature_grid}).
The following lemma is an analogue of Lemma~\ref{lem:holant_clone_gadget} in the non-bipartite setting.
The proof is essentially the same.

\begin{lem}\label{lem:bipartite_hc_gadget}
 Let $\cF,\cG\sse\allf$ be sets of functions and suppose $f\in\allf$.
 Then $f$ can be realised by a bipartite holant gadget over $\cF|\cG$ if and only if $f\in\hc{\cF\sqcup\cG}$.

\end{lem}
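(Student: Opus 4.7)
The plan is to adapt the proof of Lemma~\ref{lem:holant_clone_gadget} by tracking the bipartite structure throughout. The key observation is that the $L/R$ labelling on arguments of functions in $\cF\sqcup\cG$ mirrors the bipartition of vertices in a bipartite holant gadget: atomic formulas associated with $\cF$-functions correspond to $V$-vertices, and those associated with $\cG$-functions correspond to $W$-vertices. The restriction $N=\{\{L,R\}\}$ on labels of bound variables corresponds exactly to the requirement that internal edges of a bipartite gadget must connect a $V$-vertex to a $W$-vertex.

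For the forward direction ($f \in \hc{\cF\sqcup\cG}$ implies realisability), I would start with a bipartite \ppsh-formula $\psi = \sum_{v \in V'\setminus V} \prod_{j=1}^{s} \varphi_j$ representing $f$ and build a multigraph with dangling edges exactly as in Lemma~\ref{lem:holant_clone_gadget}. Partition the vertex set $[s]$ into $V_\Omega := \{j : f_{\varphi_j}\in\cF\}$ and $W_\Omega := \{j : f_{\varphi_j}\in\cG\}$. Because each bound variable of $\psi$ must have its two associated labels form the pair $\{L,R\}\in N$, one occurrence lies in the scope of an $\cF$-atomic formula (all-$L$) and the other in the scope of a $\cG$-atomic formula (all-$R$). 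Hence every internal edge connects $V_\Omega$ to $W_\Omega$, so the resulting signature grid is bipartite.

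For the reverse direction, take a bipartite holant gadget $\Omega=(G,\cF|\cG,\sigma)$ with $G=(V,W,E,E')$ and construct the pps-formula as in Lemma~\ref{lem:holant_clone_gadget}: bound variables for internal edges, free variables for dangling edges. The multiplicity conditions (each internal edge has two endpoints, each dangling edge has one) ensure the formula is a \ppsh-formula. Now assign labels using the gadget's structure: an argument of an atomic formula $\varphi_j$ is labelled $L$ if $j \in V$ (since $\sigma(j) \in \cF$ has type all-$L$) and $R$ if $j \in W$. Because each internal edge of $G$ connects $V$ to $W$, each bound variable has one occurrence in an $L$-labelled position and one in an $R$-labelled position, so the two associated labels form the pair $\{L,R\}\in N$. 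Thus the \ppsh-formula is restricted by $N$.

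The only real subtlety I anticipate is the bookkeeping for free variables: each dangling edge is incident to exactly one vertex, which is either in $V$ or in $W$, and this determines the label of the corresponding free variable. This gives the type of $f$ in $\hc{\cF\sqcup\cG}$, consistent with the two subsets $\bhc{\cF}{\cG}{L}$ and $\bhc{\cF}{\cG}{R}$ when all dangling edges are attached to $V$ or to $W$ respectively. Once the translation between bipartite gadgets and label-restricted \ppsh-formulas is set up, the equivalence follows by the same equations~\eqref{eq:gadget-definition} and~\eqref{eq:pps-formula} used in the non-bipartite case, so no new computation is required.
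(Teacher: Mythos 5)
Your proposal is correct and follows exactly the approach the paper intends: the paper gives no separate proof but simply states that the argument is analogous to Lemma~\ref{lem:holant_clone_gadget}, and your writeup supplies the needed adaptation — the bijection between the $\{L,R\}$-labels of $\cF\sqcup\cG$ and the two sides of the bipartition, and the observation that the restriction $N=\{\{L,R\}\}$ on bound variables corresponds precisely to internal edges crossing the bipartition. The only cosmetic point is that when $\cF\cap\cG\neq\emptyset$ the partition of $[s]$ should be stated in terms of the labelled function (all-$L$ vs.\ all-$R$) used in $\varphi_j$ rather than in terms of membership in $\cF$ or $\cG$, but the intent is clear and the argument is sound.
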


\begin{lem}\label{lem:hc_Z_holographic}
 Let $\cF\sse\allf$ and suppose $K\in\{K_1,K_2\}$.
 Then $\hc{K\circ\cF} = K\circ\bhc{\cF}{\{\NEQ\}}{L}$.
\end{lem}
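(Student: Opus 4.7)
The plan is to prove both inclusions by induction on the structure of the respective holant clone, using Proposition~\ref{prop:holant-clone-closure} (and its bipartite analogue) to characterise each clone as the closure of its generators under tensor product, permutation of arguments, and (bipartite) contraction. The bridge between the two sides is Lemma~\ref{lem:Z-contraction}: a single contraction of two $K$-transformed arguments equals $K\circ$ applied to the corresponding $\NEQ$-mediated contraction in the pre-image, and this fact is really just the observation that $K^T K = X$ (Observation~\ref{obs:Z_properties}).

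For $\hc{K\circ\cF}\sse K\circ\bhc{\cF}{\{\NEQ\}}{L}$, I would show by induction that every $h\in\hc{K\circ\cF}$ has the form $h=K\circ h'$ with $h'\in\bhc{\cF}{\{\NEQ\}}{L}$. The base case is immediate with $h'=f$ typed all-$L$. Tensor product and permutation commute with $K\circ$ (Observation~\ref{obs:holographic_tensor}) and correspond to bipartite tensor product and permutation of labelled functions on the $h'$ side, preserving the all-$L$ type on the free arguments. The key step is contraction: if $\hat h=K\circ \hat h'$ with $\hat h'\in\bhc{\cF}{\{\NEQ\}}{L}$ and $h$ is a contraction of two arguments of $\hat h$, then the second statement of Lemma~\ref{lem:Z-contraction} gives $h=K\circ h''$, where $h''$ is obtained from $\hat h'$ by inserting a $\NEQ$ between the two corresponding variables and summing them out. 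In bipartite terms this is exactly the tensoring of $\hat h'$ with a fresh $\NEQ$ (whose two arguments are $R$-labelled) followed by the two admissible $L$-$R$ contractions that link its arguments to the designated $L$-labelled arguments of $\hat h'$, so $h''\in\bhc{\cF}{\{\NEQ\}}{L}$.

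For the reverse inclusion $K\circ\bhc{\cF}{\{\NEQ\}}{L}\sse\hc{K\circ\cF}$, essentially the same calculation runs in reverse. I would write $h'\in\bhc{\cF}{\{\NEQ\}}{L}$ as a bipartite \ppsh-formula over $\cF\sqcup\{\NEQ\}$. Since the only admissible contractions are $L$-$R$ and $\NEQ$ is binary, every atomic $\NEQ$ is consumed by exactly two contractions that together link a pair of $\cF$-arguments. Applying $K\circ$ to $h'$ globally and applying Lemma~\ref{lem:Z-contraction} once per $\NEQ$-pairing converts each $\NEQ$ together with its two contractions into a single ordinary contraction of the corresponding arguments of the relevant $K\circ f$'s. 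The result is a \ppsh-formula over $K\circ\cF$ representing $K\circ h'$, so $K\circ h'\in\hc{K\circ\cF}$.

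The main obstacle is the bookkeeping in both directions: one must check that the inductive step of the forward direction produces a syntactically valid bipartite \ppsh-formula (in particular that the new $\NEQ$ is the only way bound variables connect $L$ and $R$ labels), and that in the reverse direction the pairings of $\NEQ$s with their two contractions are well-defined and the multiplicities of free and bound variables come out correctly. Both of these parallel the pattern already established in the proof of Lemma~\ref{lem:hc_orthogonal_holographic}, with the orthogonality relation $O^T O = I$ replaced by $K^T K = X$ and hence $\EQ_2$ replaced by $\NEQ$.
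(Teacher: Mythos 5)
Your proposal is correct and takes essentially the same route as the paper: both directions are handled by induction, with Lemma~\ref{lem:Z-contraction} (i.e.\ $K^T K = X$) as the bridge that converts an ordinary contraction on the $K\circ\cF$ side into a $\NEQ$-mediated pair of bipartite contractions on the $\cF$ side and back. The only cosmetic difference is that you phrase the induction structurally via the closure characterisation of Proposition~\ref{prop:holant-clone-closure}, whereas the paper's proof inducts explicitly on the number of bound variables (forward direction) and on the number of all-$R$ atomic formulas, i.e.\ $\NEQ$s (reverse direction); in the reverse direction the paper peels off one $\NEQ$ at a time rather than ``converting each $\NEQ$-pairing at once,'' which is the rigorous version of the bookkeeping you rightly flag as the main thing to check.
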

\begin{proof}
 First, we prove $\hc{K\circ\cF} \sse K\circ\bhc{\cF}{\{\NEQ\}}{L}$.
 Assume $f\in\hc{K\circ\cF}$, then there exists a \ppsh-formula $\psi_f=\sum_{v\in V'\setminus V}\prod_{j=1}^s \varphi_j$ over $K\circ\cF$, where $V$ is the set of free variables and $V'$ is the set of all variables.
 Let $n:=\abs{V}$ and $m:=\abs{V'\setminus V}$.
 
 The proof is by induction on the number of bound variables $m$.
 The base case is $m=0$, i.e.\ $V'=V$.
 Then $f\in\tcl{K\circ\cF}$ by definition, and by Observation~\ref{obs:holographic_tensor}, $\tcl{K\circ\cF}=K\circ\tcl{\cF}$.
 But $K\circ\tcl{\cF}\sse K\circ\bhc{\cF}{\{\NEQ\}}{L}$, so $f\in K\circ\bhc{\cF}{\{\NEQ\}}{L}$ as desired.
 
 For the induction hypothesis, assume $g\in\hc{K\circ\cF}$ implies $g\in K\circ\bhc{\cF}{\{\NEQ\}}{L}$ if there exists a \ppsh-formula representing $g$ which has $m$ bound variables.
 Now consider a function $f\in\hc{K\circ\cF}$ which is represented by a \ppsh-formula $\psi_f=\sum_{v\in V'\setminus V}\prod_{j=1}^s \varphi_j$ with $(m+1)$ bound variables and $n$ free variables.
 We can write
 \[
  \psi_f = \sum_{v_{n+m+1}} \left( \sum_{v_{n+1}\zd v_{n+m}} \prod_{j=1}^s \varphi_j \right)
 \]
 Let $\psi'$ be the \ppsh-formula that arises from $\psi_f$ by replacing the two occurrences of the bound variable $v_{n+m+1}$ with distinct new free variables $w,w'$.
 Then $\psi'$ represents a function $h\in\hc{K\circ\cF}$ of arity $(n+2)$.
 Since $\psi'$ has $m$ bound variables, by the induction hypothesis, $h\in K\circ\bhc{\cF}{\{\NEQ\}}{L}$.
 Thus, there exists $h'\in\bhc{\cF}{\{\NEQ\}}{L}$ such that $h=K\circ h'$.
 Without loss of generality, assume that the variables $w,w'$ correspond to the final two arguments of $h$; otherwise permute the arguments, which does not affect membership in either of the holant clones.
 Then
 \[
  f(\vc{x}{n}) = \sum_{y\in\{0,1\}} (K\circ h')(\vc{x}{n},y,y).
 \]
 Thus, by Lemma~\ref{lem:Z-contraction}, $f=K\circ f'$, where
 \[
  f'(\vc{x}{n}) := \sum_{y_1,y_2\in\{0,1\}} h'(\vc{x}{n},y_1,y_2) \NEQ(y_1,y_2).
 \]
 But $h'\in\bhc{\cF}{\{\NEQ\}}{L}$ implies that $f'\in\bhc{\cF}{\{\NEQ\}}{L}$ since the contractions satisfy the bipartite structure, therefore $f\in K\circ\bhc{\cF}{\{\NEQ\}}{L}$ as desired.
 Hence $\hc{K\circ\cF} \sse K\circ\bhc{\cF}{\{\NEQ\}}{L}$.
 
 Now consider the opposite direction, i.e.\ we want to show that $K\circ\bhc{\cF}{\{\NEQ\}}{L} \sse \hc{K\circ\cF}$.
 Recall that each atomic formula in a bipartite \ppsh-formula has type either all-$L$ or all-$R$.
 Assume $h\in \bhc{\cF}{\{\NEQ\}}{L}$, then there exists a bipartite \ppsh-formula $\psi_h = \sum_{v\in V'\setminus V}\prod_{j=1}^s \varphi_j \prod_{k=1}^t\theta_k$ over $\cF\sqcup\{\NEQ\}$ representing $h$, where $\varphi_j$ are the atomic formulas of type all-$L$ and $\theta_k$ are the atomic formulas of type all-$R$.
 
 Again, the proof is by induction, this time on the number $t$ of atomic formulas of type all-$R$.
 The argument for the base case $t=0$ is analogous to the above since $t=0$ means there are no contractions.
 
 For the induction hypothesis, assume that $g\in K\circ\bhc{\cF}{\{\NEQ\}}{L}$ implies $g\in\hc{K\circ\cF}$ if there exists a \ppsh-formula representing $g$ which has $t$ atomic formulas of type all-$R$.
 Now consider a function $h\in\bhc{\cF}{\{\NEQ\}}{L}$ which is represented by a \ppsh-formula $\psi_h = \sum_{v\in V'\setminus V}\prod_{j=1}^s \varphi_j \prod_{k=1}^{t+1}\theta_k$ which has $(t+1)$ atomic formulas of type all-$R$.
 Let $n:=\abs{V}$ and $m:=\abs{V'\setminus V}$.
 The two variables appearing in the scope of $\theta_{t+1}$ have to be bound since $h$ has type all-$L$.
 Without loss of generality, assume they are $v_{n+1},v_{n+2}$, otherwise relabel the variables.
 We can thus write
 \[
  \psi_h = \sum_{v_{n+1},v_{n+2}} \left( \sum_{v_{n+3}\zd v_{n+m}} \prod_{j=1}^s \varphi_j \prod_{k=1}^{t}\theta_k \right) \theta_{t+1}.
 \]
 The term in parentheses is a valid bipartite \ppsh-formula of type all-$L$ with $t$ atomic formulas of type all-$R$.
 Thus it represents a function $h'\in\bhc{\cF}{\{\NEQ\}}{L}$.
 Furthermore, by the induction hypothesis, $K\circ h'\in\hc{K\circ\cF}$.
 Now,
 \[
  h(\vc{x}{n}) = \sum_{y_1,y_2\in\{0,1\}} h'(\vc{x}{n},y_1,y_2)\NEQ(y_1,y_2).
 \]
 But then by Lemma~\ref{lem:Z-contraction}, $(K\circ h)(\vc{x}{n}) = \sum_{y\in\{0,1\}} (K\circ h')(\vc{x}{n},y,y)$.
 So $K\circ h$ arises from $K\circ h'$ by contraction, thus $K\circ h\in\hc{K\circ\cF}$.
 Hence $K\circ\bhc{\cF}{\{\NEQ\}}{L} \sse \hc{K\circ\cF}$.
 
 By combining the two parts of the proof, we find that $\hc{K\circ\cF} = K\circ\bhc{\cF}{\{\NEQ\}}{L}$.
\end{proof}

\begin{lem}\label{lem:hc_Z_equal}
 Let $K\in\{K_1,K_2\}$ and suppose $\cF\sse\allf$ contains both $\EQ_2$ and $\NEQ$.
 Then $\hc{K\circ\cF}=K\circ\hc{\cF}$.
\end{lem}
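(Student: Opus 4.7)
The plan is to combine Lemma~\ref{lem:hc_Z_holographic}, which already gives $\hc{K\circ\cF} = K\circ\bhc{\cF}{\{\NEQ\}}{L}$, with a reduction of the bipartite holant clone to the ordinary one. Specifically, I will show that the hypothesis $\EQ_2,\NEQ\in\cF$ forces $\bhc{\cF}{\{\NEQ\}}{L} = \hc{\cF}$, after which applying $K\circ(-)$ to both sides yields the claim.

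The inclusion $\bhc{\cF}{\{\NEQ\}}{L}\sse\hc{\cF}$ will be immediate: any bipartite \ppsh-formula over $\cF\sqcup\{\NEQ\}$ with all free variables labelled $L$ becomes, after forgetting the labels, an ordinary \ppsh-formula over $\cF\cup\{\NEQ\}=\cF$ (using $\NEQ\in\cF$), so the represented function lies in $\hc{\cF}$. For the reverse inclusion, I plan to convert an arbitrary \ppsh-formula $\psi_f = \sum_{v\in V'\setminus V} \prod_j \varphi_j$ over $\cF$ into a bipartite one by first declaring every atom from $\cF$ to be $L$-labelled, and then inserting a bridge gadget at each bound variable $v$ to repair the bipartite condition. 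Concretely, I split the two $L$-occurrences of $v$ into fresh variables $v^{(1)}, v^{(2)}$ and multiply in the factor $\NEQ(v^{(1)}, a_v)\,\EQ_2(a_v, b_v)\,\NEQ(b_v, v^{(2)})$, with new bound variables $a_v, b_v$, the two $\NEQ$ atoms labelled $R$, and the $\EQ_2$ atom (drawn from $\cF$) labelled $L$. A direct multiplicity check will show that the resulting formula is a valid bipartite \ppsh-formula; and since $\sum_{a_v, b_v} \NEQ(v^{(1)}, a_v)\,\EQ_2(a_v, b_v)\,\NEQ(b_v, v^{(2)}) = \EQ_2(v^{(1)}, v^{(2)})$, summing over $v^{(1)}, v^{(2)}$ recovers the original contraction at $v$, so the represented function is unchanged. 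Iterating this modification over all bound variables will yield a bipartite \ppsh-formula representing $f$.

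The main technical point will be designing the bridge gadget so that the alternation of labels works out: each of $v^{(1)}, v^{(2)}, a_v, b_v$ must end up with exactly one $L$- and one $R$-occurrence, while the free variables of $\psi_f$ keep their single $L$-occurrence. The $L$-labelled $\EQ_2\in\cF$ sandwiched between two $R$-labelled copies of $\NEQ$ is precisely what makes this possible while still evaluating to $\EQ_2(v^{(1)}, v^{(2)})$, which is why both hypotheses $\EQ_2\in\cF$ and $\NEQ\in\cF$ are needed. The rest of the argument is then just bookkeeping.
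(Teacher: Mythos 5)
Your proposal is correct, and it takes a genuinely different (and arguably cleaner) route than the paper's. The paper proves the inclusion $K\circ\hc{\cF}\sse\hc{K\circ\cF}$ directly, by induction on the number of bound variables: at each step it takes $f(\bx)=\sum_y f'(\bx,y,y)$ and performs an explicit algebraic computation involving the function $\zeta$ associated with $K$, inserting copies of $\NEQ$ and $\EQ_2$ and regrouping, to reach $(K\circ f)(\bx)=\sum_{y_5,y_6}(K\circ f')(\bx,y_5,y_6)(K\circ\EQ_2)(y_5,y_6)$. You instead isolate a purely structural intermediate fact: under the hypothesis $\EQ_2,\NEQ\in\cF$, the bipartite clone $\bhc{\cF}{\{\NEQ\}}{L}$ actually coincides with $\hc{\cF}$. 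One inclusion is label-forgetting; the other is your bridge gadget $\NEQ(v^{(1)},a_v)\,\EQ_2(a_v,b_v)\,\NEQ(b_v,v^{(2)})$, which repairs the $\{L,R\}$ alternation at every contraction (each of $v^{(1)},v^{(2)},a_v,b_v$ acquires exactly one $L$- and one $R$-occurrence) while evaluating to $\EQ_2(v^{(1)},v^{(2)})$, so the represented function is unchanged. Combined with Lemma~\ref{lem:hc_Z_holographic}, this gives the result without any further algebra. Your factorisation through $\bhc{\cF}{\{\NEQ\}}{L}=\hc{\cF}$ makes the role of the two hypotheses $\EQ_2,\NEQ\in\cF$ more transparent and yields a reusable structural statement about bipartite versus ordinary holant clones that the paper never states explicitly; the paper's computation is more self-contained but more opaque. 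One small presentational note: you should make explicit (as you partly do) that the atoms from the original formula involving functions in $\cF$, including possibly $\NEQ$, are all assigned the all-$L$ labelling inherited from $\cF\sqcup\{\NEQ\}$, so that the only $R$-labelled atoms are the two fresh $\NEQ$'s in each bridge.
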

\begin{proof}
 For any sets $\cG_1, \cG_2\sse\allf$, the bipartite holant clone $\hc{\cG_1\sqcup\cG_2}$ is a subset of the holant clone $\hc{\cG_1\cup\cG_2}$ because the latter does not involve any restrictions on which arguments can be contracted together.
 In particular, $\NEQ\in\cF$ implies that $\bhc{\cF}{\{\NEQ\}}{L}\sse\hc{\cF\cup\{\NEQ\}}=\hc{\cF}$.
 Thus, by Lemma~\ref{lem:hc_Z_holographic}, $\hc{K\circ\cF}=K\circ\bhc{\cF}{\{\NEQ\}}{L}\sse K\circ\hc{\cF}$.
 
 To show $K\circ\hc{\cF}\sse\hc{K\circ\cF}$, suppose $f\in\hc{\cF}$, so that $K\circ f\in K\circ\hc{\cF}$.
 The proof is by induction on the number of bound variables $m$ in the \ppsh-formula representing $f$.
 
 The base case is $m=0$, i.e.\ there are no bound variables.
 Thus $f\in\tcl{\cF}$ by definition.
 But by Observation~\ref{obs:holographic_tensor}, $K\circ\tcl{\cF} = \tcl{K\circ\cF}$, so $K\circ f\in\tcl{K\circ\cF}\sse\hc{K\circ\cF}$.
 
 For the induction hypothesis, assume $g\in \hc{\cF}$ implies $K\circ g\in\hc{K\circ\cF}$ if there exists some \ppsh-formula representing $g$ which has $m$ bound variables.
 Let $f\in\hc{\cF}$ be represented by $\psi_f=\sum_{v\in V'\setminus V} \prod_{j=1}^s \varphi_j$ with $\abs{V}=n$ and $\abs{V'\setminus V}=m+1$.
 We can write
 \[
  \psi_f = \sum_{v_{n+m+1}} \left( \sum_{v_{n+1}\zd v_{n+m}} \prod_{j=1}^s \varphi_j \right)
 \]
 Let $\psi'$ be the \ppsh-formula that arises from $\psi_f$ by replacing the two occurrences of the bound variable $v_{n+m+1}$ with distinct new free variables $w,w'$.
 Then $\psi'$ represents a function $f'\in\hc{\cF}$ of arity $(n+2)$.
 Since $\psi'$ has $m$ bound variables, by the induction hypothesis, $K\circ f'\in\hc{K\circ\cF}$.
 Without loss of generality, assume $f(\bx)=\sum_{y\in\{0,1\}} f'(\bx,y,y)$, otherwise permute the arguments (which does not affect membership in holant clones).
 Let $\zeta\in\allf_2$ be the function corresponding to the matrix $K$ and recall from Observation~\ref{obs:Z_properties} that $K^T K = \left(\begin{smallmatrix}0&1\\1&0\end{smallmatrix}\right)$.
 Thus,
 \begin{align*}
  (K\circ f)(\bx) &= \sum_{\bz\in\{0,1\}^n} f(\bz) \prod_{j=1}^n \zeta(x_j,z_j) = \sum_{\bz\in\{0,1\}^n} \left( \sum_{y\in\{0,1\}} f'(\bz,y,y) \right) \prod_{j=1}^n \zeta(x_j,z_j) \\
  &= \sum_{\bz\in\{0,1\}^n} \left( \sum_{\vc{y}{4}\in\{0,1\}} f'(\bz,y_1,y_4) \NEQ(y_1,y_2) \EQ_2(y_2,y_3) \NEQ(y_3,y_4) \right) \prod_{j=1}^n \zeta(x_j,z_j) \\
  &= \sum_{\bz\in\{0,1\}^n} \left( \sum_{\vc{y}{6}\in\{0,1\}} f'(\bz,y_1,y_4) \zeta(y_5,y_1) \zeta(y_5,y_2) \EQ_2(y_2,y_3) \zeta(y_6,y_3) \zeta(y_6,y_4) \right) \\
  &\qquad\qquad\qquad\prod_{j=1}^n \zeta(x_j,z_j) \\
  &= \sum_{y_5,y_6\in\{0,1\}} \left( \sum_{\bz\in\{0,1\}^n} \sum_{y_1,y_4\in\{0,1\}} f'(\bz,y_1,y_4) \zeta(y_5,y_1) \zeta(y_6,y_4) \prod_{j=1}^n \zeta(x_j,z_j) \right) \\
  &\hphantom{= \sum_{y_5,y_6\in\{0,1\}} }\; \left( \sum_{y_2,y_3\in\{0,1\}} \EQ_2(y_2,y_3) \zeta(y_5,y_2) \zeta(y_6,y_3) \right) \\
  &= \sum_{y_5,y_6\in\{0,1\}} (K\circ f')(\bx,y_5,y_6) (K\circ\EQ_2)(y_5,y_6).
 \end{align*}
 But $K\circ f'$ and $K\circ\EQ_2$ are in $\hc{K\circ\cF}$, so $K\circ f\in\hc{K\circ\cF}$.
 This implies that $K\circ\hc{\cF}\sse\hc{K\circ\cF}$ and therefore, together with the first part, $\hc{K\circ\cF}=K\circ\hc{\cF}$.
\end{proof}

\section{A complexity-related preorder on sets of functions}\label{secfour}

To simplify proofs that hold for both exact evaluation of holant values and for approximation, we define a preorder on finite sets of functions and pairs of finite sets of functions which encodes information about the complexity of the corresponding holant problems.

\begin{dfn}\label{dfn:function_reduction}
 Let $\cF,\cG,\cF',\cG'\sse\allf$ be finite sets of functions and take $S$ to be either $\cF$ or $\cF|\cG$ and take $S'$ to be either $\cF'$ or $\cF'|\cG'$.
 Then we write $S\leq S'$ if there exists a polynomial-time algorithm which takes a (possibly bipartite) signature grid $\Omega=(G,S,\sigma)$ and constructs a (possibly bipartite) signature grid $\Omega'=(G',S',\sigma')$ such that $Z_\Omega = Z_{\Omega'}$.
\end{dfn}

Note that the bipartite and non-bipartite cases can be mixed in Definition~\ref{dfn:function_reduction}. For example, it is perfectly fine
to take $S=\cF$ and $S' = \cF'|\cG'$.

\begin{obs}
 The relation $\leq$ on finite sets of functions and pairs of finite sets of functions is a preorder: it is reflexive by taking $\Omega'=\Omega$, and it is transitive.
\end{obs}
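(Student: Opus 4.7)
The plan is to verify the two defining properties of a preorder directly from Definition~\ref{dfn:function_reduction}.

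For \emph{reflexivity}, given any $S$ (either $\cF$ or $\cF|\cG$) I would exhibit the identity algorithm: given the input signature grid $\Omega=(G,S,\sigma)$, simply output $\Omega'=\Omega$. This runs in linear time, which is polynomial, and trivially $Z_{\Omega'}=Z_\Omega$. Hence $S\leq S$. This is the step the observation itself flags.

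For \emph{transitivity}, suppose $S\leq S'$ via a polynomial-time algorithm $A_1$ and $S'\leq S''$ via a polynomial-time algorithm $A_2$. I would define the composite algorithm $A$ which, on input a signature grid $\Omega$ over $S$, first runs $A_1(\Omega)$ to produce a signature grid $\Omega'$ over $S'$ with $Z_{\Omega'}=Z_\Omega$, and then runs $A_2(\Omega')$ to produce a signature grid $\Omega''$ over $S''$ with $Z_{\Omega''}=Z_{\Omega'}$. Chaining the equalities yields $Z_{\Omega''}=Z_\Omega$, which is exactly the condition required for $S\leq S''$.

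The only point that needs a brief check is that $A$ runs in polynomial time. Since $A_1$ runs in time polynomial in $|\Omega|$, the size of $\Omega'$ is bounded by a polynomial in $|\Omega|$. Then $A_2$ runs in time polynomial in $|\Omega'|$, hence polynomial in $|\Omega|$, and the overall runtime is the sum of two polynomials in $|\Omega|$, which is polynomial. No step here presents any real obstacle; the only subtlety worth flagging explicitly is that Definition~\ref{dfn:function_reduction} allows $S$, $S'$, $S''$ to be independently either a single set or a bipartite pair, so transitivity in particular implicitly combines reductions between bipartite and non-bipartite settings, but this causes no issue since the definition is uniform across these cases.
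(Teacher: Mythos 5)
Your proof is correct and takes the same approach the paper intends: reflexivity via the identity algorithm $\Omega'=\Omega$, and transitivity via composition of the two polynomial-time reductions, with the standard note that a polynomial-time algorithm applied to polynomial-size output is still polynomial-time overall. The paper only states the observation without elaboration, and your argument is the obvious unfolding of it.
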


The following observation formalises the relationship between the preorder $\leq$ and the complexity of holant problems.

\begin{obs}\label{obs:preorder_reductions}
 Let $\cF,\cG,\cF',\cG'\sse\allf$ be finite sets of functions, let $S$ be either $\cF$ or $\cF|\cG$, and let $S'$ be either $\cF'$ or $\cF'|\cG'$.
 Suppose $S\leq S'$. Then
 \begin{itemize}
  \item $\hol(S)\leq_{PT}\hol(S')$,
  \item for any $\kappa\geq 1$, $\HN(S;\kappa)\leq_{PT}\HN(S';\kappa)$, and
  \item for any $\rho \in [0,2\pi)$, $\HA(S;\rho)\leq_{PT}\HA(S';\rho)$.
 \end{itemize} 
\end{obs}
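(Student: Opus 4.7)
The plan is simply to unpack the definitions. Fix an instance $\Omega=(G,S,\sigma)$ of whichever of the three problems $\hol(S)$, $\HN(S;\kappa)$, $\HA(S;\rho)$ we are trying to reduce. By hypothesis $S\leq S'$, so Definition~\ref{dfn:function_reduction} provides a polynomial-time algorithm that, applied to $\Omega$, produces a signature grid $\Omega'=(G',S',\sigma')$ satisfying $Z_\Omega = Z_{\Omega'}$. The reduction is: run this algorithm on the input, query the oracle for the corresponding $S'$-problem on $\Omega'$, and return the oracle's answer verbatim.

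To verify correctness, I would check each of the three cases in turn. For $\hol$, the identity $Z_\Omega = Z_{\Omega'}$ means the oracle's exact answer is already the required output. For $\HN(\cdot;\kappa)$, the identity implies $\abs{Z_\Omega}=\abs{Z_{\Omega'}}$; thus any rational $\hat{N}$ with $\hat{N}/\kappa \leq \abs{Z_{\Omega'}} \leq \kappa\hat{N}$ also satisfies $\hat{N}/\kappa \leq \abs{Z_\Omega} \leq \kappa\hat{N}$, and in particular $\abs{Z_\Omega}=0$ iff $\abs{Z_{\Omega'}}=0$. For $\HA(\cdot;\rho)$, the identity $Z_\Omega=Z_{\Omega'}$ gives $\arg(Z_\Omega)=\arg(Z_{\Omega'})$, so any $\hat{A}$ witnessing the approximation for $\Omega'$ witnesses it for $\Omega$ as well.

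Polynomial running time of the overall reduction is immediate, since the conversion is polynomial-time by assumption and the remainder is a single oracle call. There is no real obstacle here: the observation is essentially a tautological consequence of the fact that $\leq$ preserves the holant, and the proof amounts to remarking that preserving the holant preserves both its norm and its argument.
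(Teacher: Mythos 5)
Your proof is correct. The paper states this as an observation without a written proof, and the argument it implicitly relies on is exactly the one you give: the polynomial-time transformation from Definition~\ref{dfn:function_reduction} preserves $Z_\Omega$ exactly, hence also $\abs{Z_\Omega}$ and $\arg(Z_\Omega)$, so a single oracle call on the transformed instance suffices for each of the three problems.
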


The following results are well-known, we state them here in terms of holant clones and the preorder $\leq$ and give proofs for completeness.

\begin{lem}\label{lem:preorder_properties}
 Let $\cF,\cG\sse\allf$ be finite sets of functions.
 Then the following relationships hold:
 \begin{enumerate}[label={(\alph*)},ref={\alph*}]
  \item\label{it:1to_bipartite} $\cF\leq\cF|\{\EQ_2\}$
  \item\label{it:1from_bipartite} $\cF|\{\EQ_2\}\leq\cF$
  \item\label{it:2to_bipartite} $\cF\cup\cG\leq\cF\cup\{\EQ_2\}\big|\cG\cup\{\EQ_2\}$
  \item\label{it:2from_bipartite} $\cF|\cG\leq \cF\cup\cG$
  \item\label{it:from_clone} Suppose $\cF'\sse\hc{\cF}$ is finite, then $\cF'\leq\cF$.
  \item\label{it:from_bip_clone} Suppose $\cF'\sse\bhc{\cF}{\cG}{L}$ and $\cG'\sse\bhc{\cF}{\cG}{R}$ are finite, then $\cF'|\cG'\leq\cF|\cG$.
 \end{enumerate}
\end{lem}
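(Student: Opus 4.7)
The plan is to prove each of the six claims (a)--(f) by giving an explicit polynomial-time signature-grid transformation and then observing that the holant value is preserved essentially by the definition of $Z_\Omega$. The main tools are edge-subdivision with $\EQ_2$ vertices for parts (a)--(d), and gadget substitution (via Lemmas~\ref{lem:holant_clone_gadget} and~\ref{lem:bipartite_hc_gadget}) for parts (e) and (f).

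For (a), the transformation subdivides every edge $e$ of the input grid $\Omega=(G,\cF,\sigma)$ by inserting a new degree-$2$ vertex labelled $\EQ_2$. The resulting multigraph is bipartite with the original vertices on the $\cF$-side and the new vertices on the $\{\EQ_2\}$-side, and for any edge-assignment the two halves of each subdivided edge are forced to agree by the $\EQ_2$ factor, so the product of local weights is unchanged. For (b), the reverse construction suffices: in a bipartite grid over $\cF\mid\{\EQ_2\}$ every $\EQ_2$-vertex has degree $2$, so we contract each such vertex by identifying its two incident edges into a single edge of a new multigraph on the $\cF$-vertices; summing out the contracted variable against $\EQ_2$ collapses the two attached arguments into one, and multi-edges and self-loops are handled by the multigraph convention. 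Part (d) is immediate: any bipartite signature grid over $\cF\mid\cG$ is, by forgetting the bipartition, a signature grid over $\cF\cup\cG$ with the same holant.

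For (c), we use a refined subdivision: colour every vertex of $G$ by $L$ or $R$ according to whether $\sigma(v)\in\cF$ or $\sigma(v)\in\cG$, and then subdivide exactly the monochromatic edges, inserting an $\EQ_2$-vertex on the opposite side ($R$-side for $L$-$L$ edges, $L$-side for $R$-$R$ edges). Bichromatic edges are left as they are. The resulting grid is bipartite in the sense of Definition~\ref{dfn:bipartite_signature_grid}, all vertices carry functions from $\cF\cup\{\EQ_2\}$ on the left and $\cG\cup\{\EQ_2\}$ on the right, and the same $\EQ_2$-propagation argument as in (a) shows $Z_{\Omega'}=Z_\Omega$.

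For (e), Lemma~\ref{lem:holant_clone_gadget} guarantees that each $f\in\cF'$ is realised by some gadget $\Gamma_f$ using only functions in $\cF$; since $\cF'$ is finite, we may precompute and store one such gadget for each $f\in\cF'$, with sizes independent of the input. The transformation then replaces every vertex $v$ of $\Omega$ by a copy of $\Gamma_{\sigma(v)}$, matching its dangling edges with the edges incident to $v$ according to the ordering induced by $\sigma$. Substituting the gadget definition~\eqref{eq:gadget-definition} into the expression for $Z_{\Omega'}$ and reordering sums immediately yields $Z_\Omega = Z_{\Omega'}$. Part (f) is proved identically, invoking Lemma~\ref{lem:bipartite_hc_gadget} and noting that bipartite gadgets preserve the $L/R$ label of each dangling edge, so the substitution produces a bipartite signature grid over $\cF\mid\cG$. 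The only real bookkeeping issue, and thus the main (minor) obstacle, is verifying in (c) that the construction genuinely satisfies Definition~\ref{dfn:bipartite_signature_grid}, and in (e)/(f) that the correspondence between dangling edges and variable positions is preserved throughout the substitution; both are routine once the definitions are unfolded.
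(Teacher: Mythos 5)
Your proposal follows essentially the same line as the paper: edge subdivision with $\EQ_2$-vertices for (a) and (c), contraction of degree-2 $\EQ_2$-vertices for (b), forgetting the bipartition for (d), and gadget substitution via Lemmas~\ref{lem:holant_clone_gadget} and~\ref{lem:bipartite_hc_gadget} for (e) and (f), with the same observation that finiteness of $\cF'$ (resp.\ $\cF', \cG'$) keeps the substitution polynomial-time. The one remark the paper makes that you only touch obliquely is that in (b) bipartiteness of $G$ rules out vertex-free loops after contraction, but your appeal to the multigraph convention covers the same point.
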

\begin{proof}
 We prove the statements individually.
 \begin{enumerate}[label={(\alph*)},ref={\alph*}]
  \item Consider a signature grid $\Omega=(G,\cF,\sigma)$, where $G=(V,E)$.
   Let $G'=(V,E,E')$ be the bipartite graph with edges $E'$ and vertices $V\cup E$ that arises from $G$ by subdividing each edge.
   Define $\sigma'$ to be the function $\sigma'(v)=\sigma(v)$ for all $v\in V$ and $\sigma'(w)=\EQ_2$ for all $w\in E$.
   Then $\Omega'=(G',\cF|\{\EQ_2\},\sigma')$ is a valid signature grid and $Z_{\Omega'}=Z_\Omega$.
  \item Consider a signature grid $\Omega=(G,\cF|\{\EQ_2\},\sigma)$, where $G=(V,W,E)$ is bipartite.
   Note that all vertices in $W$ must have degree 2 since they are assigned the binary function $\EQ_2$.
   Let $G'=(V,E')$ be the graph resulting from $G$ by eliminating each vertex in $W$ and merging the two edges incident on it.
   (This cannot lead to any vertex-free loops since $G$ was bipartite.)
   Define $\sigma'(v)=\sigma(v)$ for all $v\in V$.
   Then $\Omega'=(G',\cF,\sigma')$ is a valid signature grid and $Z_{\Omega'}=Z_\Omega$.
  \item Consider a signature grid $\Omega=(G,\cF\cup\cG,\sigma)$, where $G=(V,E)$.
   Let $G'$ be the graph arising as follows: partition the original set of vertices $V$ into two parts $V_\cF=\{v\in V\mid \sigma(v)\in\cF\}$ and $V_\cG=\{v\in V\mid \sigma(v)\in\cG\}$.
   Subdivide any edge connecting two vertices that are assigned functions from the same set, adding the new vertex to the other part.
   The resulting graph has vertex parts $V_\cF\cup\{\{v,w\}\in E\mid v,w\in V_{\cG}\}$ and $V_\cG\cup\{\{v,w\}\in E\mid v,w\in V_{\cF}\}$ and it is bipartite.
   Define $\sigma'$ to be the function that acts as $\sigma$ on the original vertices and assigns $\EQ_2$ to each new vertex.
   Then $\Omega'=(G',\cF\cup\{\EQ_2\}|\cG\cup\{\EQ_2\},\sigma')$ is a valid signature grid and $Z_{\Omega'}=Z_\Omega$.
  \item Consider a signature grid $\Omega=(G,\cF|\cG,\sigma)$, where $G=(V,W,E)$ is bipartite.
   Let $G'=(V\cup W, E)$ be the graph that arises from $G$ by forgetting the bipartition.
   Define $\sigma':V\cup W\to\cF\cup\cG$ to be the function that acts as $\sigma$ on each vertex.
   Then $\Omega'=(G',\cF\cup\cG,\sigma')$ is a valid signature grid and $Z_{\Omega'}=Z_\Omega$.
  \item Consider a signature grid $\Omega=(G,\cF',\sigma)$, where $G=(V,E)$.
   By Lemma~\ref{lem:holant_clone_gadget}, each function $f\in\cF'$ can be realised by a holant gadget over $\cF$.
   Let $G'$ be the graph that arises from $G$ by replacing each vertex assigned a function $f\in\hc{\cF}$ by a gadget representing $f$, and let $\sigma'$ be the map that is induced by these gadgets.
   Then $\Omega'=(G',\cF,\sigma')$ is a valid signature grid and $Z_{\Omega'}=Z_\Omega$.
   As each gadget in $\cF'$ has a fixed finite size, this can be done in time linear in the size of the original signature grid.
  \item The argument is analogous to the previous case, using bipartite holant gadgets and Lemma~\ref{lem:bipartite_hc_gadget}, and noting that the labels match up correctly. \qedhere
 \end{enumerate}

\end{proof}

We will use the following theorem and corollary.

\begin{thm}[Valiant's holant theorem]\label{thm:holant_theorem}
 Suppose $\cF,\cG\sse\allf$ are two finite sets of functions and $M\in\GL$.
 Then $\cF|\cG \leq M\circ\cF|(M^{-1})^T\circ\cG$.
\end{thm}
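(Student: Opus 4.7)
The plan is to construct $\Omega'$ on the same underlying graph $G$ as $\Omega$, replacing each left-vertex function by its $M$-transform and each right-vertex function by its $(M^{-1})^T$-transform. Concretely, if $\Omega = (G, \cF\mid\cG, \sigma)$ with $G = (V, W, E)$ bipartite, then $\Omega' = (G,\, M\circ\cF \mid (M^{-1})^T\circ\cG,\, \sigma')$ where $\sigma'(v) = M \circ \sigma(v)$ for $v \in V$ and $\sigma'(w) = (M^{-1})^T \circ \sigma(w)$ for $w \in W$. Because $\cF$ and $\cG$ are finite, all arities are bounded by a constant, so one may precompute the value tables of $M \circ f$ for each $f \in \cF$ and of $(M^{-1})^T \circ g$ for each $g \in \cG$; producing $\Omega'$ from $\Omega$ then just amounts to relabelling each vertex with the precomputed function, which runs in time linear in $|V|+|W|$.

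For correctness, the key algebraic fact is the edge-by-edge cancellation $M^{-1}M = I$. Using Definition~\ref{def:holographic_transformation}, one has $(M \circ f)(\mathbf{x}) = \sum_{\mathbf{z}} \prod_i M_{x_i, z_i}\, f(\mathbf{z})$ and, writing $(M^{-1})^T_{a,b} = (M^{-1})_{b,a}$, also $((M^{-1})^T \circ g)(\mathbf{x}) = \sum_{\mathbf{y}} \prod_j (M^{-1})_{y_j, x_j}\, g(\mathbf{y})$. Fix a single edge $e = \{v, w\}$ with $v\in V$, $w\in W$, and let $z_e$ and $y_e$ be the dummy summation variables at the two endpoints introduced by the above expansions. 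Summing out the edge value $x_e \in \{0,1\}$ gives
\[
\sum_{x_e \in \{0,1\}} M_{x_e, z_e}\, (M^{-1})_{y_e, x_e} \;=\; (M^{-1} M)_{y_e, z_e} \;=\; [\,y_e = z_e\,],
\]
so the transformed edge contribution collapses to the original untransformed contraction.

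Applying this simultaneously on every edge, all the new "transformation variables" $x_e$ get summed out and each edge is left with a Kronecker delta identifying the dummy indices $z_e$ and $y_e$ coming from its two endpoints; renaming these back to $x_e$ recovers exactly the defining sum for $Z_\Omega$, so $Z_{\Omega'} = Z_\Omega$. A cleaner way to phrase the same argument is to write $Z_\Omega$ as a tensor contraction of the vectors $\mathbf{f}_v$ and $\mathbf{g}_w$ along the edges of $G$, insert $I = M^{-1} M$ on every edge, and absorb one $M$ into the adjacent $\cF$-vertex and one $(M^{-1})^T$ into the adjacent $\cG$-vertex — bipartiteness guarantees that this absorption is unambiguous.

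The main obstacle is essentially notational bookkeeping: one must track which tensor index of each function corresponds to which edge so that, at every edge, the $M$-factor introduced at the $\cF$-endpoint is paired with the $(M^{-1})^T$-factor introduced at the $\cG$-endpoint. Bipartiteness of the signature grid is where this pairing comes from; on a non-bipartite graph an edge between two $\cF$-vertices would contribute $M^T M$, which need not equal the identity, and the reduction would fail. This is exactly why the holant theorem is naturally stated in bipartite form, and why the non-bipartite versions in the holant literature are obtained by first bipartising via the $\EQ_2$ gadget as in Lemma~\ref{lem:preorder_properties}.
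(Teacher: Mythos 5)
The paper states Theorem~\ref{thm:holant_theorem} without proof, attributing it to Valiant as a well-known result, so there is no in-paper argument to compare against. Your proof is correct and is exactly the standard argument: expand the holographic transformations at each vertex, observe that on a bipartite grid each edge contributes one $M$-factor from its $\cF$-endpoint and one $(M^{-1})^T$-factor from its $\cG$-endpoint, sum out the edge variable to collapse the pair to $(M^{-1}M)_{y_e,z_e}=[y_e=z_e]$, and re-contract along the resulting deltas to recover $Z_\Omega$. Your closing remark — that bipartiteness is what guarantees each $M$ meets an $(M^{-1})^T$ rather than another $M$, which is why a non-bipartite statement would require $M^T M = I$ — correctly identifies the role of the bipartite assumption and why the non-bipartite corollary in the paper is restricted to $O\in\cO$.
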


\begin{cor}
 Suppose $\cF\sse\allf$ is a finite set of functions and suppose $O\in\cO$.
 Then $\cF \leq O\circ\cF$.
\end{cor}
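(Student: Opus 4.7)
The plan is to deduce this from Theorem~\ref{thm:holant_theorem} by moving to a bipartite setting, applying the holant theorem with $M=O$, and then returning to the non-bipartite setting. First, by Lemma~\ref{lem:preorder_properties}(\ref{it:1to_bipartite}), $\cF \leq \cF\mid\{\EQ_2\}$. Next, since $O\in\cO$ is orthogonal we have $O^{-1}=O^T$ and hence $(O^{-1})^T=O$, so applying Valiant's holant theorem with $M=O$ yields
\[
 \cF\mid\{\EQ_2\} \;\leq\; O\circ\cF \;\big|\; O\circ\{\EQ_2\}.
\]

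The key computation is that $O\circ\EQ_2 = \EQ_2$. The function $\EQ_2$ corresponds, under the function/matrix identification from Section~\ref{s:preliminaries}, to the $2\times 2$ identity matrix $I$; writing $\mathbf{f}$ for the column of values of $\EQ_2$, one sees that $(O\otimes O)\mathbf{f}$ is the vectorisation of $O I O^T = OO^T = I$, where the last equality uses orthogonality of $O$. Hence $O\circ\EQ_2=\EQ_2$ and we obtain
\[
 \cF\mid\{\EQ_2\} \;\leq\; O\circ\cF \;\big|\; \{\EQ_2\}.
\]

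Finally, applying Lemma~\ref{lem:preorder_properties}(\ref{it:1from_bipartite}) to the right-hand side gives $O\circ\cF\mid\{\EQ_2\} \leq O\circ\cF$, and chaining the three inequalities using transitivity of $\leq$ yields $\cF \leq O\circ\cF$, as required. There is no real obstacle here; the only thing that needs care is the verification that the holographic transformation of $\EQ_2$ by an orthogonal matrix leaves $\EQ_2$ invariant, which is exactly the fact that makes orthogonal transformations play a special role in the non-bipartite holant setting.
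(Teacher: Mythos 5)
Your proof is correct and follows the same route as the paper: move to the bipartite setting via Lemma~\ref{lem:preorder_properties}(\ref{it:1to_bipartite}), apply Valiant's holant theorem with $M=O$, observe $O\circ\EQ_2=\EQ_2$ for orthogonal $O$, and return to the non-bipartite setting via Lemma~\ref{lem:preorder_properties}(\ref{it:1from_bipartite}). You additionally spell out the verification that $O\circ\EQ_2=\EQ_2$, which the paper states without proof; that computation is sound.
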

\begin{proof}
 Note that $O\circ\EQ_2=\EQ_2$ for any orthogonal 2 by 2 matrix $O$.
 Therefore,
 \[
  \cF \leq \cF|\{\EQ_2\} \leq O\circ\cF|O\circ\{\EQ_2\} \leq O\circ\cF|\{\EQ_2\} \leq O\circ\cF,
 \]
 where the first relationship is by Lemma~\ref{lem:preorder_properties}\eqref{it:1to_bipartite}, the second relationship is Theorem~\ref{thm:holant_theorem}, the third relationship follows from reflexivity of $\leq$ together with the property of $\EQ_2$ and orthogonal matrices noted above, and the last relationship is by Lemma~\ref{lem:preorder_properties}\eqref{it:1from_bipartite}.
\end{proof}

\section{Conservative holant}

In this section, we first define some important sets of functions~$\cU$, $\cT$, $\cE$ and~$\cM$.
We give some lemmas describing their holant clones, and describing holographic transformations of them using $K_1$, $K_2$ and matrices in~$\cO$.
Section~\ref{s:ternary} gives a  partition of entangled ternary functions, which follows from the classification of entangled three-qubit states in
quantum theory, and which will be used later. Section~\ref{sec:known} states some known results that we will use.
Section~\ref{sec:univholant} develops theory which will allow us to use the quantum result that ``single-qubit and CNOT gates are universal''.
In our language, this is Lemma~\ref{lem:allf_from_universality}.
In Section~\ref{s:power}, we prove Theorem~\ref{thm:conservative_hc}, our main result about universality in the conservative case. 
In Section~\ref{sec:doapprox}, we prove Theorem~\ref{thm:main}, our main result about the complexity of approximating conservative holant problems.

Let $\cU:=\allf_1$ be the set of all unary functions for consistency with the literature.
The following subsets of $\allf$ will also be important:
\begin{itemize}
 \item the set of all unary and binary functions $\cT:=\allf_1\cup\allf_2$,
 \item the set of \emph{generalised equality functions}
  \[
   \cE := \left\{ f\in\allf \,\middle|\, \exists \ba\in\{0,1\}^{\ari(f)} \text{ s.t. } f(\bx)=0 \text{ for all } \bx\notin\{\ba, \bar{\ba}\} \right\},
  \]
  where $\bar{\ba}$ denotes the bitwise complement of $\ba$, and
 \item the set of \emph{generalised matching functions} $\cM := \left\{ f\in\allf \,\middle|\, f(\bx)=0 \text{ unless } \abs{\bx}\leq 1 \right\}$, where $\abs{\bx}$ is the Hamming weight of $\bx$.
\end{itemize}

\begin{obs}\label{obs:unary_holographic_inv}
 Let $M\in\GL$, then $M\circ\cU=\cU$.
\end{obs}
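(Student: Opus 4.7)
The observation is essentially linear-algebraic and only a short verification is needed; I would keep the proof compact.

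The plan is to use the bijection (from Section 2.2) between unary functions and column vectors in $\AA^2$. Under this identification, Definition~\ref{def:holographic_transformation} gives $M \circ u \leftrightarrow M\t{1}\mathbf{u} = M\mathbf{u}$ for every $u \in \cU$. Since $M\mathbf{u} \in \AA^2$, the image again corresponds to a unary function, establishing the inclusion $M \circ \cU \subseteq \cU$.

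For the reverse inclusion $\cU \subseteq M \circ \cU$, I would invoke invertibility: since $M \in \GL = \GL_2(\AA)$, given any $v \in \cU$ with vector $\mathbf{v}$, set $u$ to be the unary function with vector $M^{-1}\mathbf{v}$. Then $M \circ u$ has vector $M(M^{-1}\mathbf{v}) = \mathbf{v}$, so $M \circ u = v$. Combining the two inclusions yields $M \circ \cU = \cU$.

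There is no real obstacle here: the statement holds because a holographic transformation on a unary function is just multiplication by the $2\times 2$ matrix $M$, and $M$ is a bijection of $\AA^2$. The only mild point worth mentioning is that the case $n=0$ in Definition~\ref{def:holographic_transformation} (nullary functions) is not involved, since $\cU = \allf_1$ consists of arity-$1$ functions.
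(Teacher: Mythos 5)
Your proof is correct and is exactly the argument the paper leaves implicit (the statement is given as an Observation with no written proof). A holographic transformation applied to an arity-$1$ function reduces to left multiplication by the $2\times 2$ matrix $M$, which is a bijection of $\AA^2$ by invertibility, so both inclusions follow immediately; your note that the nullary case is irrelevant since $\cU=\allf_1$ is also accurate.
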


\begin{obs}\label{obs:unaries_in_families}
 $\cU\sse\cT$, $\cU\sse\cE$ and $\cU\sse\cM$.
\end{obs}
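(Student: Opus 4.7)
The statement is essentially immediate from unpacking the definitions, so the proof plan is just to verify each of the three containments in turn.

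For $\cU \sse \cT$, the inclusion holds by definition: $\cT = \allf_1 \cup \allf_2$ and $\cU = \allf_1$, so $\cU \sse \cT$ with no further argument needed.

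For $\cU \sse \cE$, let $f \in \cU$, so $\ari(f) = 1$ and the domain of $f$ is $\{0,1\}$. Take $\ba = (0) \in \{0,1\}^1$; then $\bar{\ba} = (1)$, and $\{\ba,\bar{\ba}\} = \{0,1\}^1$ is the entire domain. Hence there is no $\bx \notin \{\ba,\bar{\ba}\}$ on which $f$ could fail to vanish, so the defining condition of $\cE$ is satisfied vacuously and $f \in \cE$.

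For $\cU \sse \cM$, let $f \in \cU$; then every input $\bx$ to $f$ lies in $\{0,1\}^1$ and satisfies $\abs{\bx} \leq 1$. Hence the condition ``$f(\bx) = 0$ unless $\abs{\bx} \leq 1$'' is vacuously true, giving $f \in \cM$.

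There is no real obstacle here; the whole content is that a unary function has only two possible inputs $0$ and $1$, which are automatically the two ``allowed'' inputs for both $\cE$ (as $\ba$ and $\bar{\ba}$) and $\cM$ (since both have Hamming weight at most one).
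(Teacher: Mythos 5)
Your proof is correct, and since the paper states this as an observation without a written-out proof, there is no alternative ``paper approach'' to compare against; the three one-line verifications you give (set containment for $\cT$, vacuous satisfaction of the $\cE$ condition via $\ba=(0)$, and vacuous satisfaction of the $\cM$ condition since $\abs{\bx}\le 1$ holds for all $\bx\in\{0,1\}^1$) are exactly the reasoning the authors leave implicit.
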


\begin{obs}\label{obs:constants_in_U_clone}
 $\allf_0\sse\hc{\cU}$ because any constant $c\in\allf_0$ can be represented as a contraction $\sum_y \dl_0(y) u_c(y)$, where $u_c=[c,0]\in\cU$.
\end{obs}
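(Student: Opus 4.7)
The plan is to verify that every nullary constant $c \in \AA$, viewed as an element of $\allf_0$, can be represented by a valid \ppsh-formula over $\cU$, which by definition places $c$ in $\hc{\cU}$. Since a nullary function has $n = 0$ free variables, I need a \ppsh-formula in which every variable is bound and appears with multiplicity exactly $2$; the simplest candidate uses a single bound variable occurring in two unary atomic formulas.

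Concretely, take the unaries $\dl_0 = [1, 0] \in \cU$ and $u_c = [c, 0] \in \cU$, and form the formula
\[
 \psi \;=\; \sum_{y} \dl_0(y)\, u_c(y).
\]
I would check the two conditions required by Section~\ref{sec:holantclone}: the variable $y$ is bound and appears exactly twice (once in each atomic formula), there are no free variables, and both constituent functions lie in $\cU$. The value of $\psi$ computed via equation~\eqref{eq:pps-formula} is $\dl_0(0) u_c(0) + \dl_0(1) u_c(1) = 1 \cdot c + 0 \cdot 0 = c$, so $f_\psi = c$ and $c \in \hc{\cU}$.

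As a sanity check, the same conclusion follows from the alternative characterisation of Proposition~\ref{prop:holant-clone-closure}: $\dl_0, u_c \in \cU \subseteq \hc{\cU}$, their tensor product $\dl_0 \otimes u_c$ is the binary function $(x_1, x_2) \mapsto \dl_0(x_1) u_c(x_2)$ and also lies in $\hc{\cU}$, and a single contraction (identifying its two arguments and summing) yields the nullary function with value $c$. There is no real obstacle here; the only thing to be careful about is that $\hc{\cU}$ really does admit nullary members, which is built into the framework since Definition~\ref{def:pps} permits $n = 0$.
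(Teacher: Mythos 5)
Your proof is correct and follows exactly the paper's construction: the same $\ppsh$-formula $\sum_y \dl_0(y)\,u_c(y)$ with $u_c=[c,0]\in\cU$, and the same arithmetic verification that it evaluates to $c$. The additional sanity check via Proposition~\ref{prop:holant-clone-closure} is a harmless reformulation of the same contraction step and does not change the argument.
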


Note that $\tcl{\cU}$ is the set of all degenerate functions.

\begin{lem}\label{lem:degenerate_in_families}
 Suppose $M\in\GL$, then $\tcl{\cU}\sse\hc{M\circ\cT}$, $\tcl{\cU}\sse\hc{M\circ\cE}$, and $\tcl{\cU}\sse\hc{M\circ\cM}$.
\end{lem}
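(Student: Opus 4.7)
The plan is to reduce the claim to three facts already at hand: Observation~\ref{obs:unary_holographic_inv} (that $M\circ\cU=\cU$ for every $M\in\GL$), Observation~\ref{obs:unaries_in_families} (the inclusions $\cU\sse\cT$, $\cU\sse\cE$, $\cU\sse\cM$), and the basic containment $\tcl{\cF'}\sse\hc{\cF'}$, which holds for every $\cF'\sse\allf$ because every \pph-formula is in particular a \ppsh-formula.

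First I would apply the holographic transformation $M\circ(\cdot)$ to each of the three inclusions in Observation~\ref{obs:unaries_in_families}. Applying $M\circ(\cdot)$ pointwise to a containment of function sets preserves that containment, so $M\circ\cU\sse M\circ\cX$ for each $\cX\in\{\cT,\cE,\cM\}$. Combined with $M\circ\cU=\cU$ from Observation~\ref{obs:unary_holographic_inv}, this yields $\cU\sse M\circ\cT$, $\cU\sse M\circ\cE$, and $\cU\sse M\circ\cM$.

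Next I would take the tensor-product closure of both sides. From Lemma~\ref{lem:closure_free}, $\tcl{\cdot}$ is characterised as closure under tensor product and permutation of arguments, so it is monotone: $\cF_1\sse\cF_2$ implies $\tcl{\cF_1}\sse\tcl{\cF_2}$. Hence $\tcl{\cU}\sse\tcl{M\circ\cX}$ for each of the three choices of $\cX$. Finally, $\tcl{M\circ\cX}\sse\hc{M\circ\cX}$ gives the desired conclusion in each case.

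I do not expect any real obstacle: the whole argument is a short chain of inclusions, with the orthogonal-matrix-like invariance $M\circ\cU=\cU$ doing the essential work by ensuring that the unary parts of $\cT$, $\cE$, and $\cM$ are untouched by the holographic transformation.
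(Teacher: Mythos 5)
Your proposal is correct and follows essentially the same chain of inclusions as the paper's own proof: combine $M\circ\cU=\cU$ with $\cU\sse\cT,\cE,\cM$ to get $\cU\sse M\circ\cX$, then use monotonicity of $\tcl{\cdot}$ and the containment $\tcl{\cdot}\sse\hc{\cdot}$. The only cosmetic difference is that the paper cites Lemma~\ref{lem:tensor_closure-holant_clone} for $\tcl{\cF}\sse\hc{\cF}$, whereas you give the direct justification that a \pph-formula is a \ppsh-formula; both are valid.
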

\begin{proof}
 By Observations~\ref{obs:unary_holographic_inv} and \ref{obs:unaries_in_families}, if $M\in\GL$, we have $\cU\sse M\circ\cT$, $\cU\sse M\circ\cE$ and $\cU\sse M\circ\cM$.
 Now, for any $\cF\sse\allf$ such that $\cU\sse\cF$, we have $\tcl{\cU}\sse\tcl{\cF}$ by the definition of $\tcl{-}$.
 Furthermore, by Lemma~\ref{lem:tensor_closure-holant_clone}, $\tcl{\cF}\sse\hc{\cF}$.
 Thus, $\tcl{\cU}\sse\hc{\cF}$ for $\cF\in\{M\circ\cT, M\circ\cE, M\circ\cM\}$, as desired.
\end{proof}

\begin{lem}\label{lem:M_triangular}
 Suppose $f$ is a symmetric ternary entangled function in $\cM$.
 Then we can write $f=[f_0,f_1,0,0]$ with $f_1\neq 0$.
 Now, if $U=\left(\begin{smallmatrix}a&b\\0&d\end{smallmatrix}\right)$ is an invertible upper triangular matrix over $\AA$, then $U\circ f\in\cM$.
 Conversely, if $M\in\GL$ and $M\circ f\in\cM$, then $M$ must be upper triangular.
\end{lem}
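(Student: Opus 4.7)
The plan is to handle the three claims in order. First I would check that $f_1\neq 0$: since $f$ is symmetric and ternary, $f=[f_0,f_1,f_2,f_3]$, and membership in $\cM$ forces $f_2=f_3=0$. If additionally $f_1=0$, then $f=f_0\cdot[1,0]^{\otimes 3}$, a tensor product of unary functions, i.e.\ degenerate. Observation~\ref{obs:symmetric_degenerate_entangled} would then contradict the entanglement assumption, so $f_1\neq 0$.

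Next, for the upper-triangular direction, I would compute the holographic transformation pointwise. Writing
\[
 (U\circ f)(y_1,y_2,y_3) = \sum_{\bx:\,|\bx|\leq 1} U_{y_1 x_1} U_{y_2 x_2} U_{y_3 x_3}\, f(\bx),
\]
the restriction of the sum to $|\bx|\leq 1$ uses $f\in\cM$. When $|\by|\geq 2$, at least two coordinates $y_j$ equal $1$; but in each term of the sum at most one $x_j$ equals $1$, so some $y_j=1$ must be paired with an $x_j=0$, contributing a factor $U_{1,0}=0$. Hence $(U\circ f)(\by)=0$ whenever $|\by|\geq 2$, which is exactly $U\circ f\in\cM$.

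For the converse, write $M=\begin{pmatrix}a&b\\c&d\end{pmatrix}$; the goal is $c=0$. I would extract two linear conditions by evaluating $(M\circ f)$ at the two heaviest inputs: using the same formula above,
\[
 (M\circ f)(1,1,1) = c^2\bigl(f_0 c + 3 f_1 d\bigr), \qquad (M\circ f)(1,1,0) = c\bigl[a(f_0 c + 2 f_1 d) + f_1 c b\bigr],
\]
and both must vanish because $M\circ f\in\cM$. Assume for contradiction that $c\neq 0$. The first equation gives $f_0 c + 3 f_1 d = 0$, hence $f_0 c + 2 f_1 d = -f_1 d$. Plugging this into the second and cancelling $c$ yields $f_1(bc-ad)=0$, and since $f_1\neq 0$ we obtain $\det M = ad-bc = 0$, contradicting $M\in\GL$. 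Therefore $c=0$ and $M$ is upper triangular.

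The main obstacle is just identifying the right two test inputs $\by$ to obtain enough constraints; no new machinery is needed, only the explicit formula for the holographic transformation combined with $f_1\neq 0$ and invertibility of $M$.
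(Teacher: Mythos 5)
Your proof is correct and follows essentially the same route as the paper: the $f_1\neq 0$ argument via degeneracy, the explicit pointwise formula for holographic transformations, and the converse via $\det M=0$ all match the paper's reasoning. The only stylistic improvement is your forward direction, which replaces the paper's explicit four-component computation of $U\circ f$ with the cleaner structural observation that when $|\by|\geq 2$ some factor $U_{1,0}=0$ must appear; this is a slightly more illuminating way of packaging the same fact.
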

\begin{proof}
 The set $\cM$ is the set of functions that are non-zero only on inputs of Hamming weight at most 1, so any symmetric ternary function $f\in\cM$ can be written as $f=[f_0,f_1,0,0]$.
 If $f_1=0$, then $f(x,y,z)=f_0 \dl_0(x)\dl_0(y)\dl_0(z)$, so $f$ is degenerate.

 It is straightforward to check that, if $M=\left(\begin{smallmatrix}a&b\\c&d\end{smallmatrix}\right)$, then
 \begin{equation}\label{eq:M-circ-f}
  M\circ f = [(af_0+3bf_1)a^2,\; (acf_0+2bcf_1+adf_1)a,\; (acf_0+bcf_1+2adf_1)c,\; (cf_0+3df_1)c^2],
 \end{equation}
 a piece of code for doing so can be found in Appendix~\ref{a:M_triangular}.
 The first statement thus follows by letting $c=0$ to find $U\circ f = [(af_0+3bf_1)a^2, a^2df_1, 0, 0] \in\cM$.
 
 For the second statement, suppose $M$ is invertible and $M\circ f\in\cM$.
 This is equivalent to $(acf_0+bcf_1+2adf_1)c=0$ and $(cf_0+3df_1)c^2=0$.
 Assume, for a contradiction, that $c\neq 0$.
 Then the latter equality implies that $d = -cf_0/(3f_1)$.
 Plugging this into the former equality yields
 \[
  0 = c\left(acf_0 + bcf_1 -\frac{cf_0}{3f_1}2af_1\right) = \frac{c^2}{3} \left( af_0 + 3bf_1 \right),
 \]
 hence $b=-af_0/(3f_1)$.
 But then $ad-bc = -acf_0/(3f_1)+acf_0/(3f_1) = 0$, contradicting the assumption that $M$ is invertible.
 Thus, $M\circ f\in\cM$ implies $c=0$, i.e.\ $M$ is upper triangular.
\end{proof}

\begin{obs}\label{obs:E_bit-flip}
 Flipping bits does not affect whether a function is in $\cE$, i.e.\ $\left(\begin{smallmatrix}0&1\\1&0\end{smallmatrix}\right)\circ\cE=\cE$.
 Thus, by Observation~\ref{obs:Z_properties}, $K_1\circ\cE=K_2\circ\cE$.
\end{obs}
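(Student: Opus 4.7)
The plan is to verify the two claims in the observation in turn. The first claim -- that $X\circ\cE=\cE$ where $X=\smm{0&1\\1&0}$ -- rests on understanding the action of the holographic transformation $X\circ$ on the function values. I would start by computing, for a $k$-ary function $f$, the values of $X\circ f$. Since the vector corresponding to $X\circ f$ is $X\t{k}\mathbf{f}$ and $X$ is the bit-flip matrix, a short induction on $k$ (or a direct Kronecker product calculation) shows that $(X\circ f)(\bx)=f(\bar{\bx})$, where $\bar{\bx}$ denotes the bitwise complement of $\bx$.

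With this identity in hand, the containment $X\circ\cE\sse\cE$ is immediate. Indeed, if $f\in\cE$ with witness $\ba\in\{0,1\}^k$ (so $f(\bx)=0$ whenever $\bx\notin\{\ba,\bar{\ba}\}$), then $(X\circ f)(\bx)=f(\bar{\bx})=0$ whenever $\bar{\bx}\notin\{\ba,\bar{\ba}\}$, i.e.\ whenever $\bx\notin\{\bar{\ba},\ba\}$, so $X\circ f\in\cE$ with the same witness pair $\{\ba,\bar{\ba}\}$. For the reverse inclusion, I would observe that $X^2=I$, and that holographic transformations compose in the sense that $(MN)\circ f = M\circ(N\circ f)$ (this follows directly from Definition~\ref{def:holographic_transformation}, since $(MN)\t{k}=M\t{k}N\t{k}$). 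Hence $\cE = I\circ\cE = (X X)\circ\cE = X\circ(X\circ\cE)\sse X\circ\cE$, giving equality.

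For the second claim, by Observation~\ref{obs:Z_properties} we have $K_1 X = K_2$. Using the composition property again,
\[
 K_2\circ\cE = (K_1 X)\circ\cE = K_1\circ(X\circ\cE) = K_1\circ\cE,
\]
where the last equality uses the first claim. This finishes the proof.

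There is no serious obstacle here; the only thing to be careful about is the direction of the bit flip in the formula $(X\circ f)(\bx)=f(\bar{\bx})$ and the verification that $\circ$ is compatible with matrix multiplication, both of which follow directly from the definitions.
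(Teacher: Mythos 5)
Your proof is correct and takes the natural, essentially canonical approach (the paper states this as an unproved observation). You correctly identify that $(X\circ f)(\bx)=f(\bar{\bx})$, that $\cE$ is preserved under bit-flipping the support pair $\{\ba,\bar{\ba}\}$, and that equality (rather than mere one-sided inclusion) follows from $X^2=I$ together with the composition law $(MN)\circ f = M\circ(N\circ f)$, which in turn rests on $(MN)^{\otimes k}=M^{\otimes k}N^{\otimes k}$; the second claim then follows from $K_1X=K_2$ exactly as you say.
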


\begin{rem}
 Since $\EQ_3\in\cE$, by Corollary~\ref{cor:holant-functional}, $\hc{\cE}$ is a functional clone.
 In the literature, this functional clone is often called the set of \emph{product-type functions} and denoted $\cP$.
\end{rem}

\begin{lem}\label{lem:E_contraction}
 Suppose $f,g\in\cE$ and let $k:=\ari(f)$, $\ell:=\ari(g)$.
 Then
 \[
  \sum_{y\in\{0,1\}} f(\vc{x}{k-1},y) g(x_k\zd x_{k+\ell-2},y)
 \]
 is in $\cE\cup\allf_0$.
 Furthermore, if $k\geq 2$, then $\sum_{y\in\{0,1\}} f(\vc{x}{k-2},y,y)$ is also in $\cE\cup\allf_0$.
\end{lem}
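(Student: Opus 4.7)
The plan is to use the structural characterisation of $\cE$: a function $f$ of arity $k$ lies in $\cE$ exactly when there exists $\ba\in\{0,1\}^k$ such that the support of $f$ is contained in the two complementary strings $\{\ba,\bar\ba\}$. Equivalently, if $f$ is nonzero then $f$ depends only on a single ``phase'' bit $s\in\{0,1\}$ which determines all coordinates via $x_i=a_i\oplus s$. Observe that the zero function lies trivially in $\cE$, so it suffices to show that the support of each contracted function is contained in such a complementary pair, or else consists of a single nullary value.

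For the first statement, write $f$ as supported on $\{\ba,\bar\ba\}$ with $\ba=(a_1\zd a_k)$ and $g$ as supported on $\{\bb,\bar\bb\}$ with $\bb=(b_1\zd b_\ell)$. The contracted function
\[
 h(\vc{x}{k+\ell-2})=\sum_{y\in\{0,1\}}f(\vc{x}{k-1},y)\,g(x_k\zd x_{k+\ell-2},y)
\]
receives a nonzero contribution from a term $(\bx,y)$ only when both $(\vc{x}{k-1},y)$ lies in $\{\ba,\bar\ba\}$ and $(x_k\zd x_{k+\ell-2},y)$ lies in $\{\bb,\bar\bb\}$. Introduce phase bits $s_f,s_g\in\{0,1\}$ for $f$ and $g$ respectively, so that $y=a_k\oplus s_f=b_\ell\oplus s_g$. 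This forces $s_g=s_f\oplus(a_k\oplus b_\ell)$, so there remains a single free parameter $s:=s_f\in\{0,1\}$, and each value of $s$ yields a unique configuration $\bx$, with the two configurations arising from $s=0$ and $s=1$ being bitwise complementary. Hence the support of $h$ has at most two elements, and they are complementary, so $h\in\cE$ whenever $k+\ell-2\geq 1$; when $k=\ell=1$ the result is a scalar and lies in $\allf_0$.

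For the second statement, the self-contraction $h'(\vc{x}{k-2})=\sum_{y\in\{0,1\}}f(\vc{x}{k-2},y,y)$ is analysed in the same way. A nonzero contribution requires $(\vc{x}{k-2},y,y)\in\{\ba,\bar\ba\}$; in the parametrisation $x_i=a_i\oplus s$, this forces $a_{k-1}\oplus s=a_k\oplus s$, i.e.\ $a_{k-1}=a_k$. If $a_{k-1}\neq a_k$ then $h'\equiv 0\in\cE$. Otherwise both values of $s$ give a valid configuration, and the two resulting strings $(a_1\oplus s\zd a_{k-2}\oplus s)_{s\in\{0,1\}}$ are complementary, so $h'\in\cE$ (or $h'\in\allf_0$ if $k=2$).

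The main potential pitfall is the book-keeping: making sure that, after identifying the phase bits of $f$ and $g$ via the contracted variable, the two resulting configurations on the $(k+\ell-2)$-bit variable set are genuinely complementary rather than merely distinct. This is immediate from the observation that flipping $s$ simultaneously flips every coordinate $x_i=a_i\oplus s$ on the $f$-side and every coordinate $x_{k-1+j}=b_j\oplus s_g$ on the $g$-side, since $s_g=s\oplus c$ also flips when $s$ does. There is no deeper obstacle; the proof amounts to this short case analysis.
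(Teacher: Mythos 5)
Your proof is correct and follows essentially the same approach as the paper: both arguments use the characterisation of $\cE$ as functions supported on a complementary pair $\{\ba,\bar\ba\}$ and then identify the (at most two, complementary) inputs on which the contracted function can be nonzero. Your ``phase bit'' parametrisation is just a cleaner way of organising the same bookkeeping that the paper carries out by explicitly constructing the string $\bc$ (splitting on whether $a_k=b_\ell$), so there is no substantive difference.
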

\begin{proof}
 Since $f\in\cE$, there exists a bit string $\ba\in\{0,1\}^{k}$ such that $f(\bx)=0$ unless $\bx\in\{\ba,\bar{\ba}\}$.
 Similarly, since $g\in\cE$, there exists a bit string $\bb\in\{0,1\}^{\ell}$ such that $g(\by)=0$ unless $\by\in\{\bb,\bar{\bb}\}$.
 
 For simplicity, we will prove the second statement first.
 Let
 \[
  h(\vc{x}{k-2}) = \sum_{y\in\{0,1\}} f(\vc{x}{k-2},y,y).
 \]
 If $k=2$, then $\ari(h)=0$.
 Thus $h\in\allf_0$ and we are done.
 Otherwise, let $\bc$ be the bit string consisting of the first $(k-2)$ bits of $\ba$.
 Suppose $\bx\in\{0,1\}^{k-2}\setminus\{\bc,\bar{\bc}\}$.
 Then $h(\bx) = f(\bx,0,0)+f(\bx,1,1) = 0$.
 Therefore $h$ is nonzero on at most two complementary inputs and thus $h\in\cE$.
 
 Now, to prove the first statement, let
 \[
  h'(\vc{x}{k+\ell-2}) = \sum_{y\in\{0,1\}} f(\vc{x}{k-1},y) g(x_k\zd x_{k+\ell-2},y).
 \]
 If $k=\ell=1$, then $\ari(h')=0$, so $h\in\allf_0$ and we are done.
 Otherwise, define $\bc\in\{0,1\}^{k+\ell-2}$ as follows: if $a_k=b_\ell$, let $\bc$ be the bit string consisting of the first $(k-1)$ bits of $\ba$ followed by the first $(\ell-1)$ bits of $\bb$; otherwise let $\bc$ be the bit string consisting of the first $(k-1)$ bits of $\ba$ followed by the first $(\ell-1)$ bits of $\bar{\bb}$.
 It is straightforward to check that $h'(\bx)=0$ unless $\bx\in\{\bc,\bar{\bc}\}$. Therefore $h'\in\cE$.
\end{proof}

\begin{lem}\label{lem:M_contraction}
 Suppose $f,g\in\cM$ and let $k:=\ari(f)$, $\ell:=\ari(g)$.
 Then
 \[
  \sum_{y_1,y_2\in\{0,1\}} f(y_1,\vc{x}{k-1}) g(x_k\zd x_{k+\ell-2},y_2) \NEQ(y_1,y_2)
 \]
 is in $\cM\cup\allf_0$.
 Furthermore, if $k\geq 2$, then $\sum_{y_1,y_2\in\{0,1\}} f(\vc{x}{k-2},y_1,y_2) \NEQ(y_1,y_2)$ is also in $\cM\cup\allf_0$.
\end{lem}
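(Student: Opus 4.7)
My plan is to use the fact that $\NEQ(y_1,y_2)$ forces $(y_1,y_2) \in \{(0,1),(1,0)\}$, so both sums collapse to two terms. Then the membership in $\cM$ follows by a short Hamming-weight bookkeeping. I would prove the second statement first, as it is the simpler case, and then use the same idea for the first.

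For the second statement, expanding the $\NEQ$ gives
\[
 h'(\vc{x}{k-2}) = f(\vc{x}{k-2},0,1) + f(\vc{x}{k-2},1,0).
\]
Each of the two arguments passed to $f$ already has Hamming weight at least $1$ coming from the final two coordinates. Since $f\in\cM$ vanishes on inputs of weight $\geq 2$, the only input on which $h'$ could be nonzero is the all-zero input $\vc{x}{k-2}=0\zd 0$, which has weight $0\leq 1$. If $k=2$ then $h'$ is nullary, so $h'\in\allf_0$; otherwise $h'\in\cM$.

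For the first statement, expanding the $\NEQ$ similarly gives
\[
 h(\vc{x}{k+\ell-2}) = f(0,\vc{x}{k-1})g(x_k\zd x_{k+\ell-2},1) + f(1,\vc{x}{k-1})g(x_k\zd x_{k+\ell-2},0).
\]
In the first term, membership of $f,g$ in $\cM$ forces $\abs{(x_1\zd x_{k-1})}\leq 1$ and $\abs{(x_k\zd x_{k+\ell-2},1)}\leq 1$; the latter forces $x_k=\ldots=x_{k+\ell-2}=0$, so the whole input has weight at most $1$. The second term is symmetric, forcing $x_1=\ldots=x_{k-1}=0$ and $\abs{(x_k\zd x_{k+\ell-2})}\leq 1$. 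In either case the total Hamming weight of $(\vc{x}{k+\ell-2})$ is at most $1$, so $h$ is zero on any input of weight $\geq 2$. If $k=\ell=1$ then $h$ is nullary and lies in $\allf_0$; otherwise $h\in\cM$.

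The argument is entirely mechanical: there is no real obstacle, only the notational care of tracking which coordinate of which argument is fixed to $0$ or $1$ by the $\NEQ$ and which constraints the $\cM$-membership of $f$ and $g$ then impose. I would present the second statement explicitly and remark that the first follows by the same case analysis applied separately to the $f$-side and the $g$-side, noting the degenerate nullary case when all arguments are consumed by the contraction.
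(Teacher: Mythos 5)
Your proof is correct and takes essentially the same approach as the paper's: a direct Hamming-weight bookkeeping argument showing $h$ and $h'$ vanish on any input of weight greater than one, with the nullary degenerate cases handled separately. The only cosmetic difference is that you expand the $\NEQ$ factor at the outset into two explicit terms, whereas the paper first splits by how the weight of $\bx$ distributes between the $f$-side and $g$-side and invokes $\NEQ$ only in the balanced subcase; both organizations lead to the same calculation.
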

\begin{proof}
 Since $f\in\cM$, $f(\bx)=0$ unless $\abs{\bx}\leq 1$.
 Similarly, since $g\in\cM$, $g(\bx)=0$ unless $\abs{\bx}\leq 1$.
 
 To prove the first statement, let
 \[
  h(\vc{x}{k+\ell-2}) = \sum_{y_1,y_2\in\{0,1\}} f(y_1,\vc{x}{k-1}) g(x_k\zd x_{k+\ell-2},y_2) \NEQ(y_1,y_2).
 \]
 If $k=\ell=1$, then $\ari(h)=0$, so $h\in\allf_0$ and we are done.
 Otherwise, suppose $\bx\in\{0,1\}^{k+\ell-2}$ satisfies $\abs{\bx}>1$.
 Then either $\sum_{j=1}^{k-1} x_j > 1$, or $\sum_{j=k}^{k+\ell-2} x_j > 1$, or  $\sum_{j=1}^{k-1} x_j = \sum_{j=k}^{k+\ell-2} x_j = 1$.
 In the first two cases, it is straightforward to see that $h(\bx)=0$.
 It remains to show that $h$ is zero in the third case.
 Now, $f(y_1,\vc{x}{k-1})$ is nonzero only if its input has Hamming weight at most 1, so $y_1$ must be 0 to get a non-zero value for $h$.
 Similarly, $g(x_k\zd x_{k+\ell-2},y_2)$ is nonzero only if its input has Hamming weight at most 1, so $y_2$ must be 0 to get a non-zero value for $h$.
 But if $y_1=y_2=0$, then $\NEQ(y_1,y_2)=0$.
 Therefore $h$ is zero on all inputs of Hamming weight greater than 1, so $h\in\cM$.
 
 To prove the second statement, let
 \[
  h'(\vc{x}{k-2}) = \sum_{y_1,y_2\in\{0,1\}} f(\vc{x}{k-2},y_1,y_2) \NEQ(y_1,y_2).
 \]
 If $k=2$, then $\ari(h')=0$, so $h'\in\allf_0$ and we are done.
 Otherwise, suppose $\bx\in\{0,1\}^{k-2}$ satisfies $\abs{\bx}>1$.
 Then $\abs{\bx}+y_1+y_2 >1$ for all $y_1,y_2\in\{0,1\}$, so $h'(\bx)=0$.
 Therefore $h'$ is zero on all inputs of Hamming weight greater than 1, so $h'\in\cM$.
\end{proof}

Recall from Lemma~\ref{lem:closure_free} that $\tcl{\cF}$ is the closure of $\cF$ under tensor product and permutation of arguments (but not contraction).

\begin{obs}\label{obs:scaling}
 Let $\cF$ be one of $\cT,\cE$ and $\cM$, let $c\in\AA$, and let $M\in\GL$.
 \begin{itemize}
  \item If $f\in\cF$, then $c\cdot f\in\cF$.
  \item If $f\in M\circ\cF$, then $c\cdot f\in M\circ\cF$.
  \item If $f\in\tcl{\cF}$, then $c\cdot f\in \tcl{\cF}$.
  \item If $f\in\tcl{M\circ\cF}$, then $c\cdot f\in\tcl{M\circ\cF}$.
 \end{itemize}
 In words, $\cT,\cE$ and $\cM$ contain arbitrary scalings of each of their elements, and this property is preserved under holographic transformation as well as tensor product and permutation of arguments.
\end{obs}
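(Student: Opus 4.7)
The plan is to verify each of the four bullets in turn, with the earlier ones feeding into the later ones. No bullet is genuinely deep; the observation just packages several easy compatibility facts about scaling, tensor products and holographic transformations. The slight subtlety, and the only step that requires care, is making sure every function in sight has positive arity so that ``scale one tensor factor'' is meaningful, and that linearity $M\circ(cf)=c(M\circ f)$ is used correctly.

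For the first bullet, I would observe that all three families $\cT$, $\cE$, $\cM$ are closed under scalar multiplication directly from their definitions. Indeed, $\cT=\allf_1\cup\allf_2$ is closed because $c\cdot f$ has the same arity as $f$; the defining condition of $\cE$ (being supported on $\{\ba,\bar\ba\}$) is preserved because scaling does not enlarge the support; and the defining condition of $\cM$ (being supported on strings of Hamming weight $\le 1$) is preserved for the same reason.

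For the second bullet, the key observation is that holographic transformation is linear in the function: under the bijection between $k$-ary functions and vectors in $\AA^{2^k}$ used in Definition~\ref{def:holographic_transformation}, we have $M\circ(cf)=c(M\circ f)$ because $M^{\otimes k}(c\mathbf{f})=c(M^{\otimes k}\mathbf{f})$. So if $f\in M\circ\cF$, write $f=M\circ g$ with $g\in\cF$; then $c\cdot f=M\circ(c\cdot g)$, and $c\cdot g\in\cF$ by the first bullet.

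For the third bullet, note that every function in $\cT,\cE,\cM$ has arity at least $1$, so every $f\in\tcl{\cF}$ has arity at least $1$. By Lemma~\ref{lem:closure_free}, $f$ is, up to a permutation of arguments, a tensor product $f_1\otimes\cdots\otimes f_s$ of functions in $\cF$ (with $s\ge 1$). Then $c\cdot f$ is, up to the same permutation, $(c\cdot f_1)\otimes f_2\otimes\cdots\otimes f_s$, and $c\cdot f_1\in\cF$ by the first bullet; hence $c\cdot f\in\tcl{\cF}$. The fourth bullet follows by the same peel‑off argument applied to $M\circ\cF$ in place of $\cF$, now using the second bullet for the single scaled factor, and noting that $M\circ$ commutes with tensor products and permutations of arguments (since $M^{\otimes(k+\ell)}=M^{\otimes k}\otimes M^{\otimes\ell}$). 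Alternatively one can simply combine bullets two and three: $\tcl{M\circ\cF}=M\circ\tcl{\cF}$ by Observation~\ref{obs:holographic_tensor}, and then apply bullet three inside and bullet two outside.
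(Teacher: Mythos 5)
Your proof is correct and complete. The paper states this as an \emph{Observation} without supplying a proof, so there is no ``official'' argument to compare against; you are simply filling in what the authors judged obvious. Your treatment does exactly what is needed: direct support-preservation for the first bullet, linearity of $M\circ{-}$ (i.e.\ $M^{\otimes k}(c\mathbf{f})=c\,M^{\otimes k}\mathbf{f}$) for the second, and the peel-off-one-tensor-factor argument (or the shortcut via Observation~\ref{obs:holographic_tensor}) for the third and fourth. Your side remark about positive arity is the right thing to have flagged, since it is what makes ``scale one tensor factor'' well-defined, and it holds here because $\cT$, $\cE$ and $\cM$ all live in arity at least~$1$ so every function in $\tcl{\cF}$ does too. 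Nothing to fix.
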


\begin{obs}\label{obs:E_holographic}
 For any $M\in\cO\cup\{K_1,K_2\}$, we have $\hc{M\circ\cE}=M\circ\hc{\cE}$: if $M\in\cO$, this is by Lemma~\ref{lem:hc_orthogonal_holographic}, and if $M\in\{K_1,K_2\}$, this is by Lemma~\ref{lem:hc_Z_equal}, noting that $\EQ_2,\NEQ\in\cE$.
\end{obs}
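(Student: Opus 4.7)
The plan is to split into the two cases suggested by the statement itself, since each is a direct application of a previously-established lemma.

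First I would dispatch the case $M \in \cO$: Lemma~\ref{lem:hc_orthogonal_holographic} asserts that $\hc{O \circ \cF} = O \circ \hc{\cF}$ for every orthogonal $O$ and every $\cF \sse \allf$. Instantiating with $\cF = \cE$ yields $\hc{M \circ \cE} = M \circ \hc{\cE}$ with no further work.

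Second, for $M \in \{K_1, K_2\}$, the relevant tool is Lemma~\ref{lem:hc_Z_equal}, which gives $\hc{K \circ \cF} = K \circ \hc{\cF}$ under the hypothesis that $\cF$ contains both $\EQ_2$ and $\NEQ$. The only thing to verify is that $\cE$ satisfies this hypothesis. For $\EQ_2$, pick $\ba = (0,0)$; then $\bar{\ba} = (1,1)$, and $\EQ_2$ vanishes off $\{\ba, \bar{\ba}\}$, so $\EQ_2 \in \cE$. For $\NEQ$, pick $\ba = (0,1)$; then $\bar{\ba} = (1,0)$, and $\NEQ$ vanishes off $\{\ba, \bar{\ba}\}$, so $\NEQ \in \cE$. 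With these memberships in hand, Lemma~\ref{lem:hc_Z_equal} applies directly.

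There is no real obstacle here, since the observation explicitly names the two lemmas that do all the work; the content of the proof is just the verification that $\EQ_2$ and $\NEQ$ lie in $\cE$ so that the hypothesis of Lemma~\ref{lem:hc_Z_equal} is met.
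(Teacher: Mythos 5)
Your proposal follows exactly the paper's own argument: case-split on $M \in \cO$ versus $M \in \{K_1, K_2\}$, apply Lemma~\ref{lem:hc_orthogonal_holographic} in the first case and Lemma~\ref{lem:hc_Z_equal} in the second, with the only verification being that $\EQ_2, \NEQ \in \cE$. Your explicit check of those two memberships is a correct unpacking of the paper's ``noting that $\EQ_2,\NEQ\in\cE$''.
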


\begin{lem}\label{lem:closure_clone}
 Suppose $\cF$ is one of $\cT$, $M\circ\cE$ for some $M\in\cO\cup\{K_1,K_2\}$, or $K\circ\cM$ for some $K\in\{K_1,K_2\}$.
 Then $\tcl{\cF}\cup\allf_0=\hc{\cF}$.
\end{lem}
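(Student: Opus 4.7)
The plan is to show both inclusions. The inclusion $\tcl{\cF}\cup\allf_0\sse\hc{\cF}$ is immediate: $\tcl{\cF}\sse\hc{\cF}$ by Lemma~\ref{lem:tensor_closure-holant_clone}, and $\cU\sse\cF$ holds in each of the three cases (using Observations~\ref{obs:unaries_in_families} and~\ref{obs:unary_holographic_inv} for the holographically transformed cases), so $\allf_0\sse\hc{\cU}\sse\hc{\cF}$ by Observation~\ref{obs:constants_in_U_clone}. For the reverse inclusion, I would invoke Proposition~\ref{prop:holant-clone-closure} to show that $\tcl{\cF}\cup\allf_0$ is already closed under the three generating operations of $\hc{\cF}$. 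Closure under tensor product and permutation of arguments is automatic, since $\tcl{\cF}$ is defined by these and tensoring with a nullary constant is just a scaling, preserved in $\tcl{\cF}$ by Observation~\ref{obs:scaling}. So the crux is closure under contraction.

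Given $h\in\tcl{\cF}$ of arity at least 2, write $h=f_1\otimes\cdots\otimes f_k$ (modulo permutation) with each $f_i\in\cF$, using Lemma~\ref{lem:closure_free}. A contraction of two arguments of $h$ either (A) contracts two arguments within a single $f_i$, or (B) contracts across two distinct factors $f_i,f_{i'}$. In each sub-case, I need to check that the contracted piece lies in $\cF\cup\allf_0$, so that the full result lies in $\tcl{\cF}\cup\allf_0$; in the sub-cases where the contracted piece is nullary, Observation~\ref{obs:scaling} absorbs it as a scaling of the remaining tensor product (or, if no factors remain, delivers a pure nullary constant in $\allf_0$). For $\cF=\cT$ this is a direct calculation: each $f_i$ is unary or binary, a within-factor contraction on a binary factor produces a scalar, a cross-factor unary-unary contraction is a scalar, unary-binary yields a unary, and binary-binary yields a binary (matrix multiplication). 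For $\cF=M\circ\cE$ with $M\in\cO\cup\{K_1,K_2\}$, I would first reduce to the untransformed case via Observations~\ref{obs:E_holographic} and~\ref{obs:holographic_tensor}, which give $\hc{M\circ\cE}=M\circ\hc{\cE}$ and $\tcl{M\circ\cE}=M\circ\tcl{\cE}$ (and $M$ fixes $\allf_0$ pointwise); then the two parts of Lemma~\ref{lem:E_contraction} supply exactly the statements that the two sub-cases require. For $\cF=K\circ\cM$ with $K\in\{K_1,K_2\}$, Lemma~\ref{lem:hc_Z_holographic} combined with Observation~\ref{obs:holographic_tensor} reduces the task to showing $\bhc{\cM}{\{\NEQ\}}{L}\sse\tcl{\cM}\cup\allf_0$; here the bipartite label constraint forces every bound variable to pair one argument of a $\cM$-factor with one slot of some $\NEQ$, so the contractions in this clone are exactly iterated $\NEQ$-contractions on pairs of arguments of tensor products of $\cM$-functions, and the two parts of Lemma~\ref{lem:M_contraction} play the role that Lemma~\ref{lem:E_contraction} plays in the previous case.

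The main obstacle will be the $K\circ\cM$ case, where the available contraction lemma is phrased in terms of $\NEQ$-contractions rather than ordinary contractions. The reduction requires correctly identifying $\bhc{\cM}{\{\NEQ\}}{L}$ as the closure of $\cM$ under tensor product, permutation, and $\NEQ$-contractions of pairs of arguments, and then verifying that the two kinds of contraction (within one factor and across two factors) match the two parts of Lemma~\ref{lem:M_contraction}. Once this bookkeeping is in place, the inductive closure argument for Cases~$\cE$ and~$\cM$ is essentially identical to the one for Case~$\cT$.
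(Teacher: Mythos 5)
Your proposal is correct and follows essentially the same strategy as the paper's proof: both directions are argued identically, the crux is closure of $\tcl{\cF}\cup\allf_0$ under contraction, and the contraction is analysed by factoring the function as a tensor product (your $f_1\otimes\cdots\otimes f_k$ corresponds exactly to the paper's $\prod_j\varphi_j$ with atomic-formula indices $\ell,m$), splitting into ``same factor'' versus ``different factors'', and invoking Lemmas~\ref{lem:E_contraction} and~\ref{lem:M_contraction} after pulling the holographic transformation outside. The only cosmetic difference is in the $K\circ\cM$ case: you route through $\bhc{\cM}{\{\NEQ\}}{L}$ via Lemma~\ref{lem:hc_Z_holographic}, whereas the paper stays in the $K\circ\cM$ picture and applies Lemma~\ref{lem:Z-contraction} directly at each contraction step; since Lemma~\ref{lem:hc_Z_holographic} is itself proved from Lemma~\ref{lem:Z-contraction}, these amount to the same argument packaged slightly differently.
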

\begin{proof}
 We first show that $\tcl{\cF}\cup\allf_0\sse\hc{\cF}$.
 Indeed, $\tcl{\cG}\sse\hc{\cG}$ for all sets of functions $\cG$ by Proposition~\ref{prop:holant-clone-closure}.
 Furthermore, for all $\cF$ listed in the lemma, $\cU\sse\cF$ by Observations~\ref{obs:unaries_in_families} and \ref{obs:unary_holographic_inv}. Now, $\allf_0\sse\hc{\cU}$ by Observation~\ref{obs:constants_in_U_clone}, thus $\allf_0\sse\hc{\cF}$ and hence $\tcl{\cF}\cup\allf_0\sse\hc{\cF}$.
 
 For the other direction -- $\hc{\cF}\sse \tcl{\cF}\cup\allf_0$ -- it suffices to show that $\tcl{\cF}\cup\allf_0$ is closed under contraction.
 So assume $f\in\tcl{\cF}\cup\allf_0$ is a function of arity $k$, where $k\geq 2$, and let $h(\vc{x}{k-2}) = \sum_{y\in\{0,1\}} f(\vc{x}{k-2},y,y)$.
 If $k=2$, then $h\in\allf_0$, so we are done.
 Hence, from now on, we may assume $k>2$.
 
 By the definition of $\tcl{\cF}$, there exists a \ppsh-formula $\psi_f=\prod_{j=1}^s\varphi_j$ over $\cF$ in variables $\{\vc{v}{k}\}$ which does not involve any bound variables and which represents $f$.
 Let $\ell$ be the index of the atomic formula that involves the variable $v_k$ and let $m$ be the index of the atomic formula that involves the variable $v_{k-1}$.
 Define $\varphi_\ell'$ to be the atomic formula that is equal to $\varphi_\ell$ except that $v_k$ is replaced with $v_{k-1}$.
 For each $j\in [s]\setminus\{\ell\}$, set $\varphi_j'=\varphi_j$.
 Then $h$ is represented by the \ppsh-formula $\psi_h=\sum_{v_{k-1}}\prod_{j=1}^s\varphi_j'$.
 
 We now distinguish cases according to the different sets of functions.
 \begin{enumerate}
  \item Suppose $\cF=\cT$.
   Then each atomic formula $\varphi_j$ has arity 1 or 2.
   \begin{itemize}
    \item If $\ell=m$, then
     \begin{equation}\label{eq:h_ppsh_leqm}
      \psi_h = \left(\sum_{v_{k-1}}\varphi_\ell'\right)\prod_{j\in[s]\setminus\{\ell\}}\varphi_j',
     \end{equation}
     and the function $g$ represented by $\sum_{v_{k-1}}\varphi_\ell'$ is a constant.
     Thus $h$ is a scaling of an element of $\tcl{\cT}$, so $h\in\tcl{\cT}$ by Observation~\ref{obs:scaling}.
    \item If $\ell\neq m$, then
     \begin{equation}\label{eq:h_ppsh_lneqm}
      \psi_h = \left(\sum_{v_{k-1}}\varphi_\ell'\varphi_m'\right)\prod_{j\in[s]\setminus\{\ell,m\}}\varphi_j'.
     \end{equation}
     The function $g$ represented by $\sum_{v_{k-1}}\varphi_\ell'\varphi_m'$ has arity at most 2 since each atomic formula has arity at most 2 and one of the variables is bound.
     If $\ari(g)=0$, $h$ is a scaling of an element of $\tcl{\cT}$, thus $h\in\tcl{\cT}$ by Observation~\ref{obs:scaling}.
     Otherwise, $g\in\cT$.
     Thus we can replace $\sum_{v_{k-1}}\varphi_\ell'\varphi_m'$ with a single atomic formula $\theta$ whose function is $g$ and whose scope contains the free variables of $\varphi_\ell'$ and $\varphi_m'$.
     Then the \ppsh-formula $\theta\prod_{j\in[s]\setminus\{\ell,m\}}\varphi_j'$ has no bound variables and represents $h$, so $h\in\tcl{\cT}$ by definition.
   \end{itemize}
   In each subcase, we have shown that $h\in\tcl{\cT}$, thus $\tcl{\cT}\cup\allf_0$ is closed under contraction and hence $\hc{\cT}\sse \tcl{\cT}\cup\allf_0$.
  \item Suppose $\cF=M\circ\cE$ for some $M\in\cO\cup\{K_1,K_2\}$.
   First, consider the case $\cF=\cE$.
   \begin{itemize}
    \item If $\ell=m$, then $h$ can be expressed as in \eqref{eq:h_ppsh_leqm}.
     The atomic formula $\varphi_\ell'$ specifies a function $g_\ell\in\cE$ of arity at least 2.
     Hence $\sum_{v_{k-1}}\varphi_\ell'$ specifies a function $g\in\cE\cup\allf_0$ by Lemma~\ref{lem:E_contraction}.
    \item If $\ell\neq m$, then $h$ can be expressed as in \eqref{eq:h_ppsh_lneqm}.
     Assume without loss of generality that $v_{k-1}$ occurs in the last position in the scope of both $\varphi_\ell'$ and $\varphi_m'$; if this is not the case, permute the arguments appropriately.
     Again, the function $g$ represented by $\sum_{v_{k-1}}\varphi_\ell'\varphi_m'$ is in $\cE\cup\allf_0$ by Lemma~\ref{lem:E_contraction}.
   \end{itemize}
   The argument thus proceeds the same way in both subcases.
   If $g$ is a constant, then $h\in\tcl{\cE}$ by Observation~\ref{obs:scaling}.
   Otherwise, $g\in\cE$, which in turn implies $h\in\tcl{\cE}$ by definition.
   In each subcase, we have shown that $h\in\tcl{\cE}$, thus $\tcl{\cE}\cup\allf_0$ is closed under contraction and hence $\hc{\cE}\sse \tcl{\cE}\cup\allf_0$.
   
   We have shown that the desired result holds for $\cF=\cE$.
   Now, recall that $\hc{M\circ\cE}=M\circ\hc{\cE}$ for any $M\in\cO\cup\{K_1,K_2\}$ by Observation~\ref{obs:E_holographic}, that $\tcl{M\circ\cE}=M\circ\tcl{\cE}$ by Observation~\ref{obs:holographic_tensor}, and that $M\circ\allf_0=\allf_0$ by Definition~\ref{def:holographic_transformation}.
   Therefore,
   \[
    \hc{M\circ\cE}=M\circ\hc{\cE}\sse M\circ(\tcl{\cE}\cup\allf_0) = \tcl{M\circ\cE}\cup\allf_0,
   \]
   for all $M\in\cO\cup\{K_1,K_2\}$.
  \item Suppose $\cF=K\circ\cM$ for some $K\in\{K_1,K_2\}$.
   \begin{itemize}
    \item If $\ell=m$, then $h$ can be expressed as in \eqref{eq:h_ppsh_leqm}.
     The atomic formula $\varphi_\ell'$ specifies a function $K\circ g_\ell$, where $g_\ell\in\cM$ has arity $r\geq 2$.
     Let
     \[
      g(\vc{x}{r-2}):= \sum_{y_1,y_2\in\{0,1\}}g_\ell(\vc{x}{r-2},y_1,y_2)\NEQ(y_1,y_2),
     \]
     then by Lemma~\ref{lem:Z-contraction}, the \ppsh-formula $\sum_{v_{k-1}} \varphi_\ell'$ represents $K\circ g$.
     Furthermore, by Lemma~\ref{lem:M_contraction}, $g\in\cM\cup\allf_0$.
    \item If $\ell\neq m$, the function $h$ can be expressed as in \eqref{eq:h_ppsh_lneqm}.
     Assume without loss of generality that $v_{k-1}$ occurs in the first position in the scope of $\varphi_\ell'$ and in the last position in the scope of $\varphi_m'$; if this is not the case, permute the arguments appropriately.
     The atomic formula $\varphi_\ell'$ specifies a function $K\circ g_\ell$, where $g_\ell\in\cM$.
     Similarly, the atomic formula $\varphi_m'$ specifies a function $K\circ g_m$, where $g_m\in\cM$.
     Let
     \begin{align*}
     & g(\vc{x}{r_\ell+r_m-2}) := \\
    &  \sum_{y_1,y_2\in\{0,1\}} g_\ell(y_1,\vc{x}{r_\ell-1}) g_m(x_{r_\ell}\zd x_{r_\ell+r_m-2},y_2) \NEQ(y_1,y_2),
     \end{align*}
     then by Lemma~\ref{lem:Z-contraction}, the \ppsh-formula $\sum_{v_{k-1}} \varphi_\ell'\varphi_m'$ represents $K\circ g$.
     Furthermore, $g\in\cM\cup\allf_0$ by Lemma~\ref{lem:M_contraction}.
   \end{itemize}
   Now, the argument proceeds the same way in both subcases.
   If $g$ is a constant, then $h\in\tcl{K\circ\cM}$ by Observation~\ref{obs:scaling}.
   Otherwise, $g\in\cM$, which in turn implies $h\in\tcl{K\circ\cM}$ by definition.
   In each subcase, we have shown that $h\in\tcl{K\circ\cM}$, thus $\tcl{K\circ\cM}\cup\allf_0$ is closed under contraction and hence $\hc{K\circ\cM}\sse\tcl{K\circ\cM}\cup\allf_0$.
 \end{enumerate}
 We have shown for each set $\cF$ listed above that $\tcl{\cF}\cup\allf_0\sse\hc{\cF}$ and that $\hc{\cF}\sse\tcl{\cF}\cup\allf_0$.
 Thus we conclude that $\hc{\cF}=\tcl{\cF}\cup\allf_0$ for each of these sets $\cF$, as desired.
\end{proof}

The holant clones $\hc{M\circ\cE}$ for some $M\in\cO\cup\{K_1,K_2\}$ and $\hc{K\circ\cM}$ for some $K\in\{K_1,K_2\}$ have sets of generators that are much more concise than $M\circ\cE$ or $K\circ\cM$.
In the case of $\hc{M\circ\cE}$, one binary and one ternary function together with the set $\cU$ of unary functions are sufficient, see Lemma~\ref{lem:cE_generators} below.
For $\hc{K\circ\cM}$, one unary and one ternary function together with the set of holographically transformed binary generalised disequality functions are sufficient, this is shown in Lemma~\ref{lem:cM_generators}.
By binary generalised disequality functions, we mean all binary functions that are zero if their inputs are equal (though they need not be symmetric under interchange of the inputs).

\begin{lem}\label{lem:cE_generators}
 For any $M\in\cO\cup\{K_1,K_2\}$, we have $\hc{M\circ\cE}=\hc{M\circ\cU,M\circ\EQ_3,M\circ\NEQ}$.
\end{lem}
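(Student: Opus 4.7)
The plan is to first establish the untransformed identity $\hc{\cE}=\hc{\cU,\EQ_3,\NEQ}$ and then transfer it across the holographic transformation $M\circ$. The inclusion $\hc{M\circ\cU,M\circ\EQ_3,M\circ\NEQ}\sse\hc{M\circ\cE}$ is immediate from Corollary~\ref{cor:holant_closed} using $\cU\sse\cE$, $\EQ_3\in\cE$ and $\NEQ\in\cE$, so the substance lies in the reverse inclusion.

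The key technical step will be to show directly that every $f\in\cE$ lies in $\hc{\cU,\EQ_3,\NEQ}$. The cases $\ari(f)\in\{0,1\}$ are handled by Observation~\ref{obs:constants_in_U_clone} and the inclusion $\cU\sse\hc{\cU}$ respectively. For $\ari(f)=k\geq 2$, write $f(\ba)=\alpha$ and $f(\bar\ba)=\beta$ on the complementary pair of possibly-nonzero inputs. I will assemble $f$ from three ingredients, all lying in $\hc{\cU,\EQ_3,\NEQ}$: higher-arity equalities $\EQ_m$ for $m\geq 2$, built by iterated contractions of $\EQ_3$'s together with the identity $\EQ_2=\sum_y\EQ_3(x_1,x_2,y)\EQ_1(y)$ using $\EQ_1=[1,1]\in\cU$; the weighted equality $[\alpha,0,\ldots,0,\beta]$ of arity $k$, obtained by contracting $\EQ_{k+1}$ with the unary $u=[\alpha,\beta]\in\cU$; and bit flips, implemented by contracting $\NEQ$ at each coordinate where $\ba$ is $1$, which moves the support from $\{\mathbf{0},\mathbf{1}\}$ to $\{\ba,\bar\ba\}$. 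Combining these into a single \ppsh-formula and invoking Corollary~\ref{cor:holant_closed} to absorb $\EQ_2$ yields $f\in\hc{\cU,\EQ_3,\NEQ}$, and together with the trivial reverse inclusion this gives $\hc{\cE}=\hc{\cU,\EQ_3,\NEQ}$.

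The holographic transfer is mechanical for $M\in\cO$: Lemma~\ref{lem:hc_orthogonal_holographic} gives $\hc{M\circ\cE}=M\circ\hc{\cE}=M\circ\hc{\cU,\EQ_3,\NEQ}=\hc{M\circ\cU,M\circ\EQ_3,M\circ\NEQ}$. The case $M=K\in\{K_1,K_2\}$ is the main subtlety, since Lemma~\ref{lem:hc_Z_equal} demands the generating set contain both $\EQ_2$ and $\NEQ$. Because $\cE$ contains both, that lemma applies on the left and gives $\hc{K\circ\cE}=K\circ\hc{\cE}$. For the right-hand side I will first enlarge $\{\cU,\EQ_3,\NEQ\}$ by $\EQ_2$ (harmless, since $\EQ_2$ already lies in the clone), apply Lemma~\ref{lem:hc_Z_equal} to the enlarged set, and then drop $K\circ\EQ_2$ from the resulting generators. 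To justify that last removal, Lemma~\ref{lem:Z-contraction} applied to $\EQ_2(x_1,x_2)=\sum_y\EQ_3(x_1,x_2,y)\EQ_1(y)$ shows $\sum_y(K\circ\EQ_3)(y,x_1,x_2)(K\circ\EQ_1)(y)=(K\circ h)(x_1,x_2)$ for the corresponding disequality-contracted $h$; a short computation using $\sum_{y_2}\NEQ(y_1,y_2)=1$ collapses $h$ back to $\EQ_2$, so $K\circ\EQ_2\in\hc{K\circ\cU,K\circ\EQ_3,K\circ\NEQ}$, and Corollary~\ref{cor:holant_closed} removes it.

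The hard part is precisely this final step for $K\in\{K_1,K_2\}$: the holographic transformation by $K$ converts equality contractions into disequality contractions, so one cannot blindly commute $K\circ$ with clone-theoretic operations as in the orthogonal case. One has to check that the single equality contraction used in the generating-set manipulations, namely the recovery of $\EQ_2$ from $\EQ_3$ and $\EQ_1$, still lands inside the target clone after transformation, which Lemma~\ref{lem:Z-contraction} reduces to the trivial calculation above.
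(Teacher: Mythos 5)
Your proposal is correct and follows essentially the same route as the paper: reduce to $\hc{\cE}=\hc{\cU,\EQ_3,\NEQ}$ by showing each $f\in\cE$ is realised from an iterated-equality gadget contracted with a unary weight and then bit-flipped by $\NEQ$'s, and transfer across the holographic transformation, handling $K\in\{K_1,K_2\}$ by temporarily adding $\EQ_2$ to satisfy Lemma~\ref{lem:hc_Z_equal} and then removing $K\circ\EQ_2$ via Lemma~\ref{lem:Z-contraction}. Your direct assembly of the general $\cE$-function is a streamlined rewording of the paper's induction on the Hamming weight of $\ba$, not a genuinely different argument.
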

\begin{proof}
 First, recall that $\hc{M\circ\cE}=M\circ\hc{\cE}$ for any $M\in\cO\cup\{K_1,K_2\}$ by Observation~\ref{obs:E_holographic}.
 Furthermore,
 \[
  \hc{M\circ\cU,M\circ\EQ_3,M\circ\NEQ} = \hc{M\circ(\cU\cup\{\EQ_3,\NEQ\})},
 \]
 so if $M\in\cO$, then the holographic transformation can be pulled out of the holant clone by Lemma~\ref{lem:hc_orthogonal_holographic}.
 If instead $M\in\{K_1,K_2\}$, note that
 \begin{equation}\label{eq:EQ2_in_E}
  \EQ_2(x_1,x_2) = \sum_{y_1,y_2\in\{0,1\}} \EQ_3(x_1,x_2,y_1) \NEQ(y_1,y_2) \EQ_1(y_2),
 \end{equation}
 thus by Lemma~\ref{lem:Z-contraction}, $(M\circ\EQ_2)(x_1,x_2) = \sum_{y\in\{0,1\}}(M\circ\EQ_3)(x_1,x_2,y)(M\circ\EQ_1)(y)$.
 Hence $M\circ\EQ_2\in\hc{M\circ(\cU\cup\{\EQ_3,\NEQ\})}$ and thus
 \begin{align*}
  \hc{M\circ(\cU\cup\{\EQ_3,\NEQ\})}
  &= \hc{M\circ(\cU\cup\{\EQ_3,\EQ_2,\NEQ\})} \\
  &= M\circ\hc{\cU\cup\{\EQ_3,\EQ_2,\NEQ\}} \\
  &= M\circ\hc{\cU\cup\{\EQ_3,\NEQ\}},
 \end{align*}
 where the first step is by Corollary~\ref{cor:holant_closed}, the second step is by Lemma~\ref{lem:hc_Z_equal}, and the last step is again by Corollary~\ref{cor:holant_closed} since \eqref{eq:EQ2_in_E} also implies $\EQ_2\in\hc{\cU\cup\{\EQ_3,\NEQ\}}$.
 Therefore, for any $M\in\cO\cup\{K_1,K_2\}$, we have
 \[
  \hc{M\circ\cU,M\circ\EQ_3,M\circ\NEQ} = M\circ\hc{\cU,\EQ_3,\NEQ}.
 \]
 
 It thus suffices to prove $\hc{\cE}=\hc{\cU,\EQ_3,\NEQ}$; the desired result then follows since for any $M\in\cO\cup\{K_1,K_2\}$,
 \[
  \hc{M\circ\cE} = M\circ\hc{\cE} = M\circ\hc{\cU,\EQ_3,\NEQ} = \hc{M\circ\cU,M\circ\EQ_3,M\circ\NEQ}.
 \]
 Note that $\cU\cup\{\EQ_3,\NEQ\}\sse\cE$, so $\hc{\cU,\EQ_3,\NEQ}\sse\hc{\cE}$ by the definition of holant clones.
 
 It only remains to show that $\hc{\cE}\sse\hc{\cU,\EQ_3,\NEQ}$.
 
 Suppose $f\in\cE$ and let $n=\ari(f)$.
 If $n=1$, then $f\in\cU\sse\hc{\cU,\EQ_3,\NEQ}$, so we are done. 
 If $n>1$, there exists a bit string $\ba\in\{0,1\}^n$ such that $f(\bx)=0$ for all $\bx\notin\{\ba,\bar{\ba}\}$.
 We proceed by induction on the Hamming weight of $\ba$, or equivalently on the number of 1's in the bit string.
 
 For the base case, suppose $\abs{\ba}=0$, i.e.\ $\ba$ contains no 1's.
 Let $u_f:=[f(\ba),f(\bar{\ba})]\in\cU$.
 Then
 \[
  f(\vc{x}{n}) = \sum_{\vc{y}{n-1}\in\{0,1\}} \EQ_3(x_1,x_2,y_1) u_f(y_{n-1}) \prod_{k=1}^{n-2} \EQ_3(x_{k+2},y_{k},y_{k+1}),
 \]
 so $f\in\hc{\cU,\EQ_3,\NEQ}$.
 (Note that if $n=2$, the product at the end is empty, i.e.\ equal to 1.)
 
 For the induction step, assume there is some non-negative integer $m$ such that if $g\in\allf_n$ has the property $g(\bx)=0$ for all $\bx\in\{0,1\}^n\setminus\{\bb,\bar{\bb}\}$, where $\abs{\bb}=m$, then $g\in\hc{\cU,\EQ_3,\NEQ}$.
 Consider a function $f\in\cE$ which satisfies $f(\bx)=0$ unless $\bx\in\{\ba,\bar{\ba}\}$, where $\abs{\ba}=m+1$.
 Let $j\in [n]$ be some index such that $a_j=1$, and define
 \begin{equation}\label{eq:in_E_clone}
  f'(\vc{x}{n}) := \sum_{y\in\{0,1\}} \NEQ(x_j,y) f(\vc{x}{j-1},y,x_{j+1}\zd x_n).
 \end{equation}
 Then $f'$ is a function which satisfies $f'(\bx)=0$ unless $\bx\in\{\ba',\bar{\ba}'\}$, where $\ba'$ is the bit string that agrees with $\ba$ except in the $j$-th position.
 Since $a_j=1$, this implies that $\abs{\ba'}=m$.
 Thus $f'\in\hc{\cU,\EQ_3,\NEQ}$ by the induction hypothesis.
 But \eqref{eq:in_E_clone} is equivalent to
 \[
  f(\vc{x}{n})=\sum_{y\in\{0,1\}} \NEQ(x_j,y) f'(\vc{x}{j-1},y,x_{j+1}\zd x_n),
 \]
 so $f\in\hc{\cU,\EQ_3,\NEQ,f'}$.
 Hence, by Corollary~\ref{cor:holant_closed}, $f\in\hc{\cU,\EQ_3,\NEQ}$.
 
 We have shown that $f\in\cE$ implies $f\in\hc{\cU,\EQ_3,\NEQ}$, therefore $\cE\sse\hc{\cU,\EQ_3,\NEQ}$.
 By the definition of holant clones, this implies $\hc{\cE}\sse\hc{\hc{\cU,\EQ_3,\NEQ}}$.
 Now by Proposition~\ref{prop:holant-clone-closure},
 \[
  \hc{\hc{\cU,\EQ_3,\NEQ}}=\hc{\cU,\EQ_3,\NEQ}.
 \]
 Therefore, we have $\hc{\cE}\sse\hc{\cU,\EQ_3,\NEQ}$, which completes the proof.
\end{proof}

\begin{lem}\label{lem:cM_generators}
 Let $\cD:=\{f\in\allf_2\mid f(0,0)=f(1,1)=0\}$ and suppose $K\in\{K_1,K_2\}$.
 Then $\hc{K\circ\cM}=\hc{K\circ\cD,K\circ\ONE_3,K\circ\EQ_1}$.
\end{lem}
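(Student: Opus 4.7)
The forward inclusion is straightforward because each of the generators on the right lies in $\cM$: $\cD\sse\cM$ since every $d\in\cD$ vanishes outside $\{(0,1),(1,0)\}$, $\ONE_3\in\cM$ by definition of $\cM$, and $\EQ_1\in\cU\sse\cM$ by Observation~\ref{obs:unaries_in_families}. Hence $\{K\circ\ONE_3,K\circ\EQ_1\}\cup K\circ\cD\sse K\circ\cM$, and taking holant clones preserves the inclusion.

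For the reverse inclusion, I plan to reduce the problem to an explicit construction carried out in the pre-transformation world. By Lemma~\ref{lem:closure_clone}, $\hc{K\circ\cM}=\tcl{K\circ\cM}\cup\allf_0$, and by Observation~\ref{obs:holographic_tensor}, $\tcl{K\circ\cM}=K\circ\tcl{\cM}$. Since $K\circ(-)$ commutes with both tensor product and permutation of arguments, and the target holant clone is closed under these operations by Proposition~\ref{prop:holant-clone-closure}, it suffices to show (a) every nullary function lies in $\hc{K\circ\cD,K\circ\ONE_3,K\circ\EQ_1}$ and (b) $K\circ f$ lies there for every $f\in\cM$. The key translation tool will be Lemma~\ref{lem:Z-contraction}, which rewrites any holant contraction of $K$-transformed functions as the holographic transform by $K$ of a $\NEQ$-mediated contraction of the untransformed functions. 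Under this translation, showing $K\circ h$ lies in the target clone reduces to building $h$ from $\cD\cup\{\ONE_3,\EQ_1\}$ by tensor products, permutations, and $\NEQ$-mediated contractions.

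The construction will proceed in four stages. First, for any $a,b\in\AA$, the unary $[a,b]$ arises as the $\NEQ$-contraction of $\EQ_1$ with the $\cD$-function $d$ defined by $d(0,1)=a$ and $d(1,0)=b$; the $\NEQ$-contraction of two such unaries (for instance $[c,0]$ with $[0,1]$) then yields every constant in $\AA$, settling (a) and the arity-one case of (b). Second, I build $\ONE_k$ for all $k\geq 3$ by induction: a direct calculation shows that $\NEQ$-contracting the last legs of $\ONE_{k-1}$ and $\ONE_3$ yields $\ONE_k$. Third, for arity $k\geq 2$ and any $c_0\in\AA$, I obtain the symmetric function $[c_0,1,0,\ldots,0]$ by $\NEQ$-contracting $\ONE_{k+1}$ with the unary $[c_0,1]$ from the first stage. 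Fourth, to realise an arbitrary $f\in\cM$ of arity $k\geq 2$ with values $c_0:=f(\mathbf{0})$ and $c_j:=f(e_j)$, I successively, for each $j\in[k]$, $\NEQ$-contract at position $j$ with a $\cD$-function $d_j$ satisfying $d_j(0,1)=c_j$ and $d_j(1,0)=1$.

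The principal obstacle is verifying that the fourth stage sets exactly the intended coefficient without perturbing previously set ones. This will follow from an explicit case analysis on the sum over $(y_1,y_2)$ in Lemma~\ref{lem:Z-contraction}, analogous to the computation in the proof of Lemma~\ref{lem:M_contraction}: the $(y_1,y_2)=(0,1)$ branch rescales every existing weight-at-most-one coefficient by the factor $d_j(1,0)=1$ (so these coefficients, including $c_0$ at $\mathbf{0}$, survive unchanged), whereas the $(y_1,y_2)=(1,0)$ branch supplies the new coefficient $d_j(0,1)=c_j$ at the contracted position. Iterating over $j\in[k]$ therefore yields $f$ with the prescribed values, completing (b) and hence the reverse inclusion.
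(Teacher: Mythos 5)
Your proposal is correct, and it establishes the same result as the paper by essentially the same underlying gadget, but it reaches that gadget by a somewhat different route and presents it differently. The paper reduces the reverse inclusion to $\cM\sse\bhc{(\cD\cup\{\ONE_3,\EQ_1\})}{\{\NEQ\}}{L}$ by invoking Lemma~\ref{lem:hc_Z_holographic} (which packages the ``contractions-of-$K$-transformed-functions equal $K$-transform-of-$\NEQ$-mediated-contractions'' translation once and for all), whereas you work with $\hc{K\circ\cM}=\tcl{K\circ\cM}\cup\allf_0=K\circ\tcl{\cM}\cup\allf_0$ via Lemma~\ref{lem:closure_clone} and Observation~\ref{obs:holographic_tensor}, and then translate each $\NEQ$-contraction individually using Lemma~\ref{lem:Z-contraction}. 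These are interchangeable: your route avoids the bipartite holant clone formalism at the cost of tracking the $\NEQ$-intercession by hand, and it uses the stronger structural fact $\hc{K\circ\cM}=\tcl{K\circ\cM}\cup\allf_0$ where the paper only needs $\hc{K\circ\cM}$ to be the closure of $K\circ\cM$ (Proposition~\ref{prop:holant-clone-closure}). The other genuine difference is presentational: the paper writes a single closed-form bipartite \ppsh-formula \eqref{eq:function_K_circ_cM} using one $\ONE_{n+1}$, a $d_0$-chain for the Hamming-weight-zero coefficient, and one $d_j$ per leg, and verifies it by a case analysis on $\abs{\bx}$; you build the same object iteratively (first the symmetric $[c_0,1,0,\zd 0]$ from $\ONE_{k+1}$ and a unary, then modify one leg at a time with a $\cD$-function), which makes each step easier to verify locally. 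Unfolding both constructions yields essentially identical gadgets, with the $\cD$-functions' two nonzero entries swapped relative to the paper's convention. One small caveat: your description of which branch $(y_1,y_2)\in\{(0,1),(1,0)\}$ of the $\NEQ$-sum preserves existing coefficients versus installs the new one depends on which argument of $d_j$ is contracted, and as stated it has the roles of $d_j(0,1)$ and $d_j(1,0)$ in the two branches reversed; this is a bookkeeping slip, not a gap, since the computation goes through with the opposite pairing.
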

\begin{proof}
 Since $\cD\sse\cM$ and $\ONE_3,\EQ_1\in\cM$, we have $\hc{K\circ\cD,K\circ\ONE_3,K\circ\EQ_1}\sse\hc{K\circ\cM}$ by the definition of holant clones.
 
 Furthermore, by Proposition~\ref{prop:holant-clone-closure}, to prove $\hc{K\circ\cM}\sse\hc{K\circ\cD,K\circ\ONE_3,K\circ\EQ_1}$, it is enough to show $K\circ\cM\sse\hc{K\circ\cD,K\circ\ONE_3,K\circ\EQ_1}$.
 Now by Lemma~\ref{lem:hc_Z_holographic}, we have
 \[
  \hc{K\circ\cD,K\circ\ONE_3,K\circ\EQ_1} = \hc{K\circ(\cD\cup\{\ONE_3,\EQ_1\})} = K\circ\bhc{(\cD\cup\{\ONE_3,\EQ_1\})}{\{\NEQ\}}{L}.
 \]
 As the matrix $K$ is invertible, it therefore suffices to prove
 \[
  \cM \sse \bhc{(\cD\cup\{\ONE_3,\EQ_1\})}{\{\NEQ\}}{L}.
 \]
 
 Suppose $f\in\cM$ with $\ari(f)=n$.
 If $n=1$, let $d\in\cD$ be the function satisfying $d(0,1)=f(0)$ and $d(1,0)=f(1)$. Then
 \[
  f(x) = \sum_{y,z\in\{0,1\}} d(x,y)\NEQ(y,z)\EQ_1(z),
 \]
 which implies $f\in\bhc{(\cD\cup\{\ONE_3,\EQ_1\})}{\{\NEQ\}}{L}$, as desired.
 This completes the argument for $n=1$.
 We now make two digressions.
 \begin{itemize}
  \item Since Observation~\ref{obs:unaries_in_families} shows that $\cU\sse\cM$, the $n=1$ argument applies to an arbitrary unary function $f$.
   Therefore, $\cU\sse\bhc{(\cD\cup\{\ONE_3,\EQ_1\})}{\{\NEQ\}}{L}$.
  \item Before considering arbitrary higher-arity functions, first consider the functions $\ONE_n$ for positive integers $n$.
   We already know $\ONE_1\in\bhc{(\cD\cup\{\ONE_3,\EQ_1\})}{\{\NEQ\}}{L}$ since $\ONE_1\in\cU$.
   The function $\ONE_2$ is in $\cD$, which immediately implies $\ONE_2\in\bhc{(\cD\cup\{\ONE_3,\EQ_1\})}{\{\NEQ\}}{L}$.
   Furthermore, $\ONE_3\in\bhc{(\cD\cup\{\ONE_3,\EQ_1\})}{\{\NEQ\}}{L}$ by definition.
   Suppose, for the purpose of an argument by induction, that $\ONE_k\in\bhc{(\cD\cup\{\ONE_3,\EQ_1\})}{\{\NEQ\}}{L}$ for some integer $k\geq 3$.
   It is easy to check that
   \[
    \ONE_{k+1}(\vc{x}{k+1}) = \sum_{y,z\in\{0,1\}} \ONE_k(\vc{x}{k-1},y)\NEQ(y,z)\ONE_3(z,x_k,x_{k+1}).
   \]
   Using the induction hypothesis, this implies $\ONE_{k+1}\in\bhc{(\cD\cup\{\ONE_3,\EQ_1\})}{\{\NEQ\}}{L}$.
   Hence, by induction, $\ONE_k\in\bhc{(\cD\cup\{\ONE_3,\EQ_1\})}{\{\NEQ\}}{L}$ for any positive integer $k$.
 \end{itemize}
 We now go back to considering an arbitrary function $f\in\cM$ with $n:=\ari(f)>1$.
 For any $k\in[n]$, let $\be_k$ be the $n$-bit string which satisfies $(\be_k)_j=0$ if $j\neq k$ and $(\be_k)_k=1$.
 Define $d_0\in\cD$ to be the binary function satisfying $d_0(0,1)=1$ and $d_0(1,0)=f(0\zd 0)$.
 Furthermore, for any $k\in [n]$, define $d_k\in\cD$ to be the binary function satisfying $d_k(0,1)=1$ and $d_k(1,0)=f(\be_k)$.
 Then we claim
 \begin{multline}\label{eq:function_K_circ_cM}
  f(\vc{x}{n}) = \sum \ONE_{n+1}(\vc{z}{n+1}) \NEQ(z_{n+1},w_1) d_0(w_2,w_1) \NEQ(w_2,w_3) \EQ_1(w_3) \\ \prod_{j=1}^n d_j(x_j,y_j)\NEQ(y_j,z_j),
 \end{multline}
 where the sum is over $w_1,w_2,w_3,\vc{y}{n},\vc{z}{n+1}\in\{0,1\}$.
 To show that this equality is true, we distinguish several cases depending on the Hamming weight of $\bx=\vc{x}{n}$.
 \begin{itemize}
  \item Suppose $\abs{\bx}>1$, then the LHS is 0 because $f\in\cM$.
   On the RHS, note that for each $j\in[n]$ the product $d_j(x_j,y_j) \NEQ(y_j,z_j)$ is zero unless $x_j=z_j$.
   Thus, if $\abs{\bx}>1$, then all terms with a non-zero contribution to the sum must have $\abs{\bz}>1$.
   But for those terms, $\ONE_{n+1}(\bz)=0$.
   Hence the RHS is 0 and the equality holds.
  \item Suppose $\abs{\bx}=1$, i.e.\ $\bx=\be_\ell$ for some $\ell\in [n]$.
   Then
   \[
    \sum_{\vc{y}{n}\in\{0,1\}} \prod_{j=1}^n d_j((\be_\ell)_j,y_j)\NEQ(y_j,z_j) = f(\be_\ell) \prod_{j=1}^n \EQ_2((\be_\ell)_j,z_j),
   \]
   so the RHS of \eqref{eq:function_K_circ_cM} becomes
   \[
    \sum_{z_{n+1},w_1,w_2,w_3\in\{0,1\}} \ONE_{n+1}(\be_\ell,z_{n+1}) \NEQ(z_{n+1},w_1) d_0(w_2,w_1) \NEQ(w_2,w_3) \EQ_1(w_3) f(\be_\ell).
   \]
   Now the function $\ONE_{n+1}(\be_\ell,z_{n+1})$ is non-zero only if $z_{n+1}=0$, hence we can furthermore simplify this to
   \begin{multline*}
    \sum_{w_1,w_2,w_3\in\{0,1\}} \NEQ(0,w_1) d_0(w_2,w_1) \NEQ(w_2,w_3) \EQ_1(w_3) f(\be_\ell) \\
    = \sum_{w_2,w_3\in\{0,1\}} d_0(w_2,1) \NEQ(w_2,w_3) \EQ_1(w_3) f(\be_\ell) \\
    = \sum_{w_3\in\{0,1\}} \NEQ(0,w_3) \EQ_1(w_3) f(\be_\ell) = f(\be_\ell),
   \end{multline*}
   as desired.
  \item Suppose $\abs{\bx}=0$, i.e.\ $\bx$ is the all-zero bit string.
   Then, for all $j\in [n]$,
   \[
    \sum_{y_j\in\{0,1\}} d_j(0,y_j)\NEQ(y_j,z_j) = \EQ_2(0,z_j).
   \]
   Thus, the RHS of \eqref{eq:function_K_circ_cM} becomes
   \begin{multline*}
    \sum_{z_{n+1},w_1,w_2,w_3\in\{0,1\}} \ONE_{n+1}(0\zd 0, z_{n+1}) \NEQ(z_{n+1},w_1) d_0(w_2,w_1) \NEQ(w_2,w_3) \EQ_1(w_3) \\
    = \sum_{w_1,w_2,w_3\in\{0,1\}} \NEQ(1,w_1) d_0(w_2,w_1) \NEQ(w_2,w_3) \EQ_1(w_3) \\
    = \sum_{w_2,w_3\in\{0,1\}} d_0(w_2,0) \NEQ(w_2,w_3) \EQ_1(w_3) \\
    = \sum_{w_3\in\{0,1\}} f(0\zd 0) \NEQ(1,w_3) \EQ_1(w_3) = f(0\zd 0),
   \end{multline*}
   again as desired.
 \end{itemize}
 Hence \eqref{eq:function_K_circ_cM} holds for all $\bx\in\{0,1\}^n$.
 Thus, $f\in\bhc{(\cD\cup\{\ONE_3,\ONE_{n+1},\EQ_1\})}{\{\NEQ\}}{L}$.
 By Corollary~\ref{cor:bip_clone_closure}, since $\ONE_{n+1}\in\bhc{(\cD\cup\{\ONE_3,\EQ_1\})}{\{\NEQ\}}{L}$, we have $f\in\bhc{(\cD\cup\{\ONE_3,\EQ_1\})}{\{\NEQ\}}{L}$.
 As $f$ was an arbitrary function in $\cM$, this establishes $\cM\sse\bhc{(\cD\cup\{\ONE_3,\EQ_1\})}{\{\NEQ\}}{L}$, completing the proof.
\end{proof}

\subsection{Ternary functions and the entanglement classification}
\label{s:ternary}

Ternary functions play an important part in the holant dichotomies since $\hol(\cF)$ is computable exactly in polynomial time for any finite $\cF\sse\hc{\cT}$: i.e.\ any set of functions that are tensor products of unary and binary functions.

Recall that a function of arity at least 2 is entangled if it is not decomposable as a tensor product.
Thus, by Lemma~\ref{lem:closure_free}, a ternary function $f$ is entangled if and only if $f\notin\tcl{\cT}$.
By Lemma~\ref{lem:closure_clone}, this is equivalent to $f\notin\hc{\cT}$.

We say two vectors $\mathbf{f},\mathbf{f}'\in\AA^{2^3}$ are related by a \emph{generalised holographic transformation} if there exist matrices $A,B,C\in\GL$ such that $\mathbf{f}=(A\otimes B\otimes C)\mathbf{f}'$, where $\otimes$ denotes the Kronecker product of matrices.\footnote{In quantum information theory, this notion is called ``equivalence under stochastic local operations with classical communication'' (SLOCC).}
Entangled ternary functions can be partitioned into two classes: those that can be transformed to $\EQ_3$ by a generalised holographic transformation and those that can be transformed to $\ONE_3$ by a generalised holographic transformation.
This corresponds to the classification of entangled three-qubit states in quantum theory, which are represented by vectors in $\CC^{2^3}$ \cite{dur_three_2000}.
In quantum theory, the normalised vector corresponding to $\EQ_3$ is known as the GHZ state and the normalised vector corresponding to $\ONE_3$ is known as the $W$ state.

Formally, let $\bg$ be the vector corresponding to $\EQ_3$ and let $\bw$ be the vector corresponding to $\ONE_3$.
Suppose $f\in\allf_3\setminus\hc{\cT}$, then either there exist matrices $A,B,C\in\GL$ such that $\mathbf{f}=(A\otimes B\otimes C)\bg$, in which case $\mathbf{f}$ is said to be in the GHZ class, or there exist matrices $A,B,C\in\GL$ such that $\mathbf{f}=(A\otimes B\otimes C)\bw$, in which case $\mathbf{f}$ is said to be in the $W$ class.

If $f$ is symmetric, then ordinary (non-generalised) holographic transformations suffice, i.e.\ we may assume without loss of generality that $A=B=C$.

It is possible to determine from some polynomials in the values of $f$ whether a ternary function $f$ is in the GHZ class, the $W$ class, or decomposable.
The following proposition gives a special case of this result, which was derived in \cite{li_simple_2006} for arbitrary (i.e.\ not necessarily symmetric) vectors in $\CC^{2^3}$.

\begin{prop}\label{prop:entanglement_classification}
 Suppose $f\in\allf_3$ is symmetric, and write $f=[f_0,f_1,f_2,f_3]$.
 There exists a matrix $M\in\GL$ such that $f=M\circ\EQ_3$ if and only if
 \[
  (f_0f_3-f_1f_2)^2 - 4(f_1^2-f_0f_2)(f_2^2-f_1f_3) \neq 0.
 \]
 If this polynomial is zero, the following cases occur:
 \begin{itemize}
  \item If furthermore $f_1^2=f_0f_2$ and $f_2^2=f_1f_3$, then $f$ is degenerate.
  \item Otherwise, there exists a matrix $M\in\GL$ such that $f=M\circ\ONE_3$.
 \end{itemize}
\end{prop}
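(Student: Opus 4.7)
The plan is to combine direct polynomial computation with the entanglement classification just discussed, which partitions the symmetric ternary functions into three mutually exclusive and exhaustive classes: degenerate (equivalently, in $\hc{\cT}$), GHZ-class ($f = M\circ\EQ_3$ for some $M\in\GL$), and $W$-class ($f = M\circ\ONE_3$ for some $M\in\GL$). Write $A(f) := f_1^2 - f_0 f_2$, $B(f) := f_2^2 - f_1 f_3$, $C(f) := f_0 f_3 - f_1 f_2$ and $P(f) := C(f)^2 - 4 A(f) B(f)$. I would establish the three ``only if'' implications: (i) $f$ GHZ-class $\Rightarrow P(f) \neq 0$; (ii) $f$ $W$-class $\Rightarrow P(f) = 0$ and not both of $A(f),B(f)$ vanish; and (iii) $f$ degenerate $\Rightarrow A(f) = B(f) = 0$ (hence $P(f)=0$). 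By mutual exclusivity and exhaustiveness of the three classes, the converses follow automatically and together they yield all claims in the proposition.

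Each of (i)--(iii) is a direct computation. For (i), I would parametrise $M = \smm{a&b\\c&d}$ and expand $(M\circ\EQ_3)(x_1,x_2,x_3) = M_{x_1 0}M_{x_2 0}M_{x_3 0} + M_{x_1 1}M_{x_2 1}M_{x_3 1}$ to obtain $f_0 = a^3+b^3$, $f_1 = a^2 c + b^2 d$, $f_2 = ac^2 + b d^2$, $f_3 = c^3 + d^3$. Routine algebra then yields $A(f) = -ab(ad-bc)^2$, $B(f) = -cd(ad-bc)^2$ and $C(f) = (ad+bc)(ad-bc)^2$, so
\[
P(f) = (ad-bc)^4\bigl[(ad+bc)^2 - 4abcd\bigr] = (ad-bc)^6 = (\det M)^6,
\]
which is nonzero since $M\in\GL$. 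For (ii), the analogous expansion of $M\circ\ONE_3$ gives $f_0 = 3a^2 b$, $f_1 = a^2 d + 2abc$, $f_2 = bc^2 + 2acd$, $f_3 = 3c^2 d$, and then $A(f) = a^2(ad-bc)^2$, $B(f) = c^2(ad-bc)^2$, $C(f) = -2ac(ad-bc)^2$, so $P(f) = 4a^2c^2(ad-bc)^4 - 4a^2c^2(ad-bc)^4 = 0$; moreover $A(f) = B(f) = 0$ would force $a = c = 0$ and therefore $\det M = 0$, which is impossible. For (iii), any symmetric degenerate ternary function can be written as $u^{\otimes 3}$ (the three unary factors of a symmetric rank-one ternary tensor must all be proportional to a common $u$, since otherwise the tensor product is not invariant under permutation of arguments), so $f_k = u_0^{3-k}u_1^k$ and $A(f) = B(f) = 0$ follows on inspection.

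Combining: if $P(f) \neq 0$, then (ii) and (iii) exclude $W$ and degenerate, so $f$ is GHZ-class, giving the first biconditional. If $P(f) = 0$ and $A(f) = B(f) = 0$, then (ii) excludes $W$, forcing $f$ degenerate; if $P(f) = 0$ and at least one of $A(f),B(f)$ is nonzero, then (iii) excludes the degenerate case, forcing $f$ to be $W$-class, i.e.\ $f = M \circ \ONE_3$ for some $M\in\GL$. The main obstacle is purely computational: the identity $P(M\circ\EQ_3) = (\det M)^6$ has a clean form but its expansion is long and is perhaps best verified symbolically, analogously to the identity cited in Lemma~\ref{lem:M_triangular}.
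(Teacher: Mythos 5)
Your approach—reducing to the three-way classification of symmetric ternary functions into degenerate, GHZ class, and $W$ class, then computing the invariants $A(f)=f_1^2-f_0f_2$, $B(f)=f_2^2-f_1f_3$, $C(f)=f_0f_3-f_1f_2$, $P(f)=C^2-4AB$ on each canonical form—is sound, and in fact supplies a self-contained argument that the paper does not give (the paper simply cites Li, Li, Huang and Li for the general non-symmetric statement). Your expansions check out: for $M=\smm{a&b\\c&d}$ one indeed gets $A(M\circ\EQ_3)=-ab(\det M)^2$, $B(M\circ\EQ_3)=-cd(\det M)^2$, $C(M\circ\EQ_3)=(ad+bc)(\det M)^2$, whence $P(M\circ\EQ_3)=(\det M)^6$; and $A(M\circ\ONE_3)=a^2(\det M)^2$, $B(M\circ\ONE_3)=c^2(\det M)^2$, $C(M\circ\ONE_3)=-2ac(\det M)^2$, whence $P(M\circ\ONE_3)=0$, and $A=B=0$ would force $a=c=0$ hence $\det M=0$, as you say.

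There is one genuine, if small, gap in item (iii) and in the combining paragraph. You assert that $f$ degenerate implies $A(f)=B(f)=0$, adding ``(hence $P(f)=0$)''. That parenthetical inference is not valid: from $A=B=0$ alone, $P=C^2-4AB=C^2$, which need not vanish---for instance $f=\EQ_3$ has $A=B=0$ but $C=1$ and $P=1$. This matters exactly for the implication ``$P(f)\neq 0\Rightarrow f$ is GHZ class'': to exclude the degenerate case when $P\neq 0$ you must know that degenerate functions have $P=0$, and (iii) as written does not deliver that, so the trichotomy argument in your final paragraph does not close for that direction. The repair is one line: from $f_k=u_0^{3-k}u_1^k$ one also gets $C(f)=u_0^3u_1^3-u_0^3u_1^3=0$, so all of $A$, $B$, $C$ vanish and therefore $P(f)=0$. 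With that addition your case analysis is complete.
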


Note that the matrix $M$ in Proposition~\ref{prop:entanglement_classification} may not be unique (though this will not cause any issues).
For example, if $f=\EQ_3$, then
\[
 f=\begin{pmatrix}1&0\\0&1\end{pmatrix}\circ\EQ_3 \qquad\text{and}\qquad f=\begin{pmatrix}1&0\\0&e^{2i\pi/3}\end{pmatrix}\circ\EQ_3.
\]

\subsection{Known results that we will use}\label{sec:known}

The following results have been adapted to our notation.

\begin{thm}[{\cite[Theorem~2.2]{cai_dichotomy_2011}}]\label{thm:holant-star-dichotomy}
 Suppose $\cF$ is a finite subset of $\allf$.
 If
 \begin{enumerate}
  \item $\cF\sse\hc{\cT}$, or
  \item there exists $O\in\cO$ such that $\cF\sse\hc{O\circ \cE}$, or
  \item $\cF\sse\hc{K_1\circ\cE}=\hc{K_2\circ\cE}$, or
  \item there exists a matrix $K\in\{K_1,K_2\}$ such that $\cF\sse\hc{K\circ\cM}$,
 \end{enumerate}
 then, for any finite subset $S\sse\cU$, the problem $\hol(\cF,S)$ is polynomial-time computable.
 Otherwise, there exists a finite subset $S\sse\cU$ such that $\hol(\cF,S)$ is \numP-hard.
 The dichotomy is still true even if the inputs are restricted to planar graphs.
\end{thm}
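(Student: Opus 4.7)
The plan is to handle the tractability and hardness directions separately, leaning on Lemma~\ref{lem:closure_clone}, which characterises each of the four tractable clones as $\tcl{\cF}\cup\allf_0$ generated from a concise set of generators. For the tractable direction I would treat each case in turn. (1) For $\cF\sse\hc{\cT}$, every signature in $\cF$ is, up to a scalar, a tensor product of unaries and binaries; splitting each vertex accordingly decomposes the signature grid into a disjoint union of paths and cycles, whose holant is a product of $2\times 2$ matrix traces and vector inner products and is therefore computable in polynomial time. (2) For $\cF\sse\hc{O\circ\cE}$ with $O\in\cO$, Valiant's holant theorem (Theorem~\ref{thm:holant_theorem}) together with $O^TO=I$ reduces the problem to $\cF\sse\hc{\cE}$; since $\EQ_3\in\cE$, Corollary~\ref{cor:holant-functional} then identifies this with $\NCSP$ over the product-type functional clone $\ang{\cE}$, which is tractable because the weights factor through binary equalities and can be computed connected-component-wise. (3) For $\cF\sse\hc{K_i\circ\cE}$, Lemma~\ref{lem:hc_Z_holographic} converts the problem to a bipartite holant over $\cE\sqcup\{\NEQ\}$, which reduces to case (2) after absorbing the $\NEQ$ edges. (4) For $\cF\sse\hc{K\circ\cM}$, the $K$-twist turns the signatures into matchgate signatures; Valiant's matchgate theory together with the FKT algorithm (planar case) and matchgate identities (general case) yields a polynomial-time evaluation.

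For the hardness direction I would use pinning and interpolation. With $\cU$ at my disposal, the unary constants $\dl_0=[1,0]$ and $\dl_1=[0,1]$ pin individual arguments, so any sub-signature obtained by fixing some inputs of an $f\in\cF$ is realisable in $\hc{\cF\cup\cU}$. Assume $\cF$ fails all four clone conditions. I would first argue that some $f\in\cF$ has arity $\geq 3$ and is entangled, since otherwise $\cF\sse\hc{\cT}$. Using pinning I would extract a ternary entangled sub-signature $h$, classify it via Proposition~\ref{prop:entanglement_classification} as either GHZ-type or $W$-type, and verify that under the standing assumptions $h$ lies outside the tractable clones compatible with its type: in the GHZ case outside every $\hc{O\circ\cE}$ and $\hc{K_i\circ\cE}$, in the $W$ case outside every $\hc{K\circ\cM}$. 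A further small gadget with appropriate unaries, chosen by an interpolation argument where a single reduction is insufficient, then converts $h$ into a canonical hard signature, allowing a reduction from a known \numP-hard counting problem.

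The main obstacle is the hardness direction, where one must certify the absence of hidden tractable structure. The key technical steps are (a) an exhaustive case analysis showing that failure of all four clone conditions propagates to every ternary sub-signature obtainable by pinning, where Lemma~\ref{lem:M_triangular} forces the required holographic transformation to be upper triangular in the $W$ case and Lemma~\ref{lem:cE_generators} constrains the $\cE$-type generators, while the matchgate identities rule out hidden membership in $\hc{K\circ\cM}$; and (b) showing that the resulting ternary signature together with suitable unaries is rich enough to encode a \numP-hard problem via polynomial-sized gadget reductions and interpolation over one-parameter families of unaries. Planarity is preserved throughout, since pinning is a single-edge gadget, all clone-closure gadgets used are local, and the FKT-based algorithm of case (4) is intrinsically planar.
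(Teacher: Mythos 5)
This theorem is not proved in the paper at all: it is imported verbatim (with notational translation) from Cai, Lu and Xia~\cite[Theorem~2.2]{cai_dichotomy_2011}. The only content the paper adds is a short remark following the statement explaining that the four tractable classes were originally described as tensor closures rather than holant clones, that Lemma~\ref{lem:closure_clone} shows these coincide up to nullary functions (which do not affect the holant), and that Observation~\ref{obs:E_bit-flip} merges the $K_1\circ\cE$ and $K_2\circ\cE$ cases. So there is no ``paper's own proof'' against which to compare your argument; you are attempting to reprove a substantial prior result from scratch, which is a different task.

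Evaluated on its own terms, your sketch is at best a roadmap, and several steps do not survive scrutiny. For tractability case~(4) you invoke ``Valiant's matchgate theory together with the FKT algorithm'' for non-planar inputs as well, but matchgate/FKT arguments give polynomial time only on planar instances; the tractability of $\hc{K\circ\cM}$ over arbitrary graphs in the conservative setting needs a different algorithm (Cai--Lu--Xia in fact compute these holants directly, exploiting the sparsity structure of $\cM$-signatures, not matchgate identities). For the hardness direction, the central claim that pinning alone extracts an entangled ternary sub-signature from any entangled $f$ of arity at least~3 is false as stated: fixing arguments of an entangled function can yield a degenerate function, which is precisely why the paper's machinery (Lemma~\ref{lem:arity_reduction} and Lemma~\ref{lem:min-arity}) carefully works within the full holant clone $\hc{f,\cU}$, not merely via pinnings, and tracks membership in the candidate tractable set $\cR$ at every step. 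Finally, ``a further small gadget with appropriate unaries \ldots then converts $h$ into a canonical hard signature, allowing a reduction from a known $\numP$-hard counting problem'' names neither the gadget, nor the target signature, nor the source problem; that is the entire technical content of the Cai--Lu--Xia hardness proof, and it is missing. The planarity claim is also unsubstantiated: reductions preserving bipartiteness or locality do not automatically preserve planarity (a holant gadget can itself be non-planar, and interpolation constructions typically attach many copies of a gadget to the same vertex), so the planar version requires a separate argument, which you have not provided.

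Since the paper treats this as an external citation and you are not allowed to, the honest comparison is that your proposal is an outline of the kind of argument one would have to give, with the two hardest pieces (the algorithm for case~(4) on general graphs and the reduction underlying hardness) left unestablished and one plainly incorrect step (pinning suffices for arity reduction). If your intent was to match the paper, the correct move is simply to cite \cite{cai_dichotomy_2011} and reproduce the paper's remark about the equivalence between tensor-closure and holant-clone formulations; if your intent was to give a self-contained proof, substantially more work is needed in both directions.
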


By Lemma~\ref{lem:closure_clone}, if $\cG$ is one of $\cT$, $M\circ\cE$ with $M\in\cO\cup\{K_1,K_2\}$, or $K\circ\cM$ with $K\in\{K_1,K_2\}$, then $\tcl{\cG}\cup\allf_0=\hc{\cG}$.
Recall that nullary functions do not affect the complexity of a holant problem.
Thus, while the original statement of \cite[Theorem~2.2]{cai_dichotomy_2011} uses only closure under tensor product (with closure under permutations being included implicitly), we can use holant clones instead without affecting the result.
The same applies to the lemmas in this section.
Furthermore, by Observation~\ref{obs:E_bit-flip}, $\hc{K_1\circ\cE}=\hc{K_2\circ\cE}$, so it suffices to consider the case of a holographic transformation by $K_1$ despite both being treated separately in \cite{cai_dichotomy_2011}.

\begin{lem}[{\cite[Lemma~6.1]{cai_dichotomy_2011}}]\label{lem:arity_reduction}
 Let $\cR$ be any one of $\hc{\cT}$, $\hc{O\circ\cE}$ for some $O\in\cO$, $\hc{K_1\circ\cE}=\hc{K_2\circ\cE}$, or $\hc{K\circ\cM}$ for some $K\in\{K_1,K_2\}$.
 Let $r = 3$ if $\cR=\hc{\cT}$, and $r = 2$ in the other three cases.
 Suppose $f\in\allf\setminus\cR$ and $r<\ari(f)$.
 Then there exists $g\in\hc{f,\cU}$ such that $g\notin\cR$ and $r\leq\ari(g)<\ari(f)$.
\end{lem}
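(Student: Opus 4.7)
I will produce the required $g$ as a single-argument unary pinning of $f$. For each position $i\in[\ari(f)]$ and each unary $u\in\cU$, define
\[
 g_{i,u}(x_1\zd x_{i-1},x_{i+1}\zd x_{\ari(f)}) := \sum_{y\in\{0,1\}} f(x_1\zd x_{i-1},y,x_{i+1}\zd x_{\ari(f)})\,u(y).
\]
By Proposition~\ref{prop:holant-clone-closure}, $g_{i,u}\in\hc{f,\cU}$ (as a single contraction of $f$ with $u$), and its arity $\ari(f)-1$ satisfies $r\leq\ari(f)-1<\ari(f)$ under the hypothesis $\ari(f)>r$. So it suffices to exhibit a single pair $(i,u)$ with $g_{i,u}\notin\cR$. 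I will argue this contrapositively: assuming every $g_{i,u}$ lies in $\cR$, I will deduce $f\in\cR$, contradicting the hypothesis.

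For the three holographically transformed forms of $\cR$, first reduce to an untransformed base case. Observation~\ref{obs:unary_holographic_inv} gives $M\circ\cU=\cU$ for every $M\in\GL$, so pinning by a unary commutes with holographic transformation up to replacing $u$ by another unary. When $\cR=\hc{O\circ\cE}$ for $O\in\cO$, Lemma~\ref{lem:hc_orthogonal_holographic} gives $\hc{O\circ\cE}=O\circ\hc{\cE}$, reducing to $\cR=\hc{\cE}$; when $\cR=\hc{K\circ\cE}$, Lemma~\ref{lem:hc_Z_equal} (applicable since $\EQ_2,\NEQ\in\cE$) does the same. When $\cR=\hc{K\circ\cM}$ we do not have $\EQ_2\in\cM$, so Lemma~\ref{lem:hc_Z_holographic} is used instead to rewrite $\cR=K\circ\bhc{\cM}{\{\NEQ\}}{L}$, and the pinning analysis is carried out in that bipartite clone.

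With the reductions in hand, Lemma~\ref{lem:closure_clone} supplies the structural description $\hc{\cG}=\tcl{\cG}\cup\allf_0$ for each base-case generating set $\cG\in\{\cT,\cE,K\circ\cM\}$. Hence a non-nullary member of $\cR$ is exactly a tensor product, up to permutation of arguments, of elements of $\cG$. The main claim to establish is: \emph{if every single-argument unary pinning of $f$ decomposes as a tensor product of elements of $\cG$, then $f$ does too.} For $\cR=\hc{\cT}$ with $r=3$, the generators are unary and binary and the claim reduces to verifying that the ``binary-block'' partition of arguments is consistent across all pinnings. For $\cR=\hc{\cE}$ with $r=2$, the generators are supported on complementary pairs $\{\ba,\bar\ba\}$, and the argument is analogous using the common support pattern. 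For the bipartite $\cM$-case, the analogous argument uses the ``Hamming weight $\leq 1$'' structure of $\cM$.

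The main obstacle is this final gluing step, which must invoke pinnings by \emph{generic} unaries $u=[a,b]$ rather than only the restrictions $u=\dl_0,\dl_1$: restrictions alone are insufficient, since e.g.\ both restrictions of $\EQ_4$ are degenerate (hence in $\hc{\cT}$) whereas pinning $\EQ_4$ by $u=[1,1]$ recovers the entangled $\EQ_3\notin\hc{\cT}$. The proof must therefore exploit the full family of pinnings $\{g_{i,u}\}_{i,u}$ to extract enough linear-algebraic constraints to reconstruct the tensor factorisation of $f$, handled case by case according to the entanglement pattern of each generating set.
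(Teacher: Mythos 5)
The paper does not prove this lemma: it is quoted from \cite[Lemma~6.1]{cai_dichotomy_2011}, and the only original content accompanying it is the remark (appearing after Theorem~\ref{thm:holant-star-dichotomy}, based on Lemma~\ref{lem:closure_clone}) that the holant-clone formulation is equivalent to Cai--Lu--Xia's tensor-closure formulation. So there is no in-paper proof to compare against; your proposal has to be assessed on its own merits as a self-contained argument. The set-up you give is reasonable and very likely tracks the cited proof: unary pinnings $g_{i,u}$ have the right arity, lie in $\hc{f,\cU}$ by Proposition~\ref{prop:holant-clone-closure}, and the contrapositive ``every $g_{i,u}\in\cR$ implies $f\in\cR$'' is the natural thing to establish. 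Your observation that $\dl_0,\dl_1$ alone are insufficient (illustrated by $\EQ_4$) is correct and worth making. On the reductions: the $O\in\cO$ and $K\circ\cE$ cases go through as you say. For $K\circ\cM$ you invoke the bipartite clone via Lemma~\ref{lem:hc_Z_holographic}, but in fact it is cleaner to use Lemma~\ref{lem:Z-contraction} directly: with $g=u$ unary it shows that pinning $K\circ f'$ by $K\circ u'$ equals $K\circ h$ where $h$ is the pinning of $f'$ by the bit-flipped unary $y\mapsto u'(1-y)$, so the analysis again reduces to an untransformed pinning inside $\cM$.

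The difficulty is that the contrapositive claim you identify as ``the main claim'' \emph{is} the lemma; the rest is bookkeeping. You state the gluing principle---if every pinning of $f$ factors over the generators of $\cR$, then $f$ does too---but do not prove it for any of the three base cases. You correctly flag the issue (``the main obstacle is this final gluing step''), but nothing is done to resolve it: there is no argument that the binary-block partitions witnessing membership in $\tcl{\cT}$ for different pinnings can be made compatible, no argument about the stability of the $\{\ba,\bar\ba\}$ supports for $\cE$, and no argument about the Hamming-weight structure for $\cM$. Note also that your contrapositive is strictly stronger than needed: the lemma only asserts the existence of \emph{some} $g\in\hc{f,\cU}$ of smaller arity outside $\cR$, so even if the all-pinnings-in-$\cR$-implies-$f\in\cR$ claim failed, the lemma could survive via some other gadget; your argument commits to the pinning route without establishing it works. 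As a proof, this is a plan with the central lemma left open.
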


Note that, by repeated application, Lemma~\ref{lem:arity_reduction} 
implies that there is a 
$g\in\hc{f,\cU}$ such that $g\notin\cR$ and $ \ari(g)=r$.

\begin{lem}\label{lem:symmetric_ternary}
 Suppose $f\in\allf_3$ satisfies $f\notin\hc{\cT}$, and $s_1,s_2\in\allf_2$ satisfy $s_1\notin\hc{K_1\circ\cM}$ and $s_2\notin\hc{K_2\circ\cM}$.
 Then there is a symmetric ternary function $g\in\hc{f,s_1,s_2}$ such that $g\notin\hc{\cT}$.
\end{lem}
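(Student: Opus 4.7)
The plan is to construct a symmetric ternary non-decomposable function in $\hc{f,s_1,s_2}$ by combining copies of $f$ via a cyclic gadget using binary connectors drawn from $\{s_1,s_2\}$, and then verify non-decomposability via the entanglement classification of Proposition~\ref{prop:entanglement_classification}. Since $f\in\allf_3\setminus\hc{\cT}$, Proposition~\ref{prop:entanglement_classification} and the discussion in Section~\ref{s:ternary} give $A,B,C\in\GL$ and a canonical symmetric representative $\mathbf{h}\in\{\bg,\bw\}$ such that $f=(A\otimes B\otimes C)\circ\mathbf{h}$; since the entanglement class is preserved under generalized holographic transformations, producing a symmetric non-decomposable gadget reduces to producing one of the form $(M\otimes M\otimes M)\circ h$ with $h$ a non-degenerate symmetric ternary function.

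First I would consider the cyclic ``triangle'' gadget
\[
 g_t(x_1,x_2,x_3) \;=\; \sum_{a_i,b_i} f(x_1,a_1,b_1)\,f(x_2,a_2,b_2)\,f(x_3,a_3,b_3)\,t(b_1,a_2)\,t(b_2,a_3)\,t(b_3,a_1),
\]
for $t\in\{s_1,s_2\}$, which lies in $\hc{f,t}$ and is cyclically symmetric in $(x_1,x_2,x_3)$ by relabelling internal variables. Using $f=(A\otimes B\otimes C)\circ\mathbf{h}$, Observation~\ref{obs:holographic_tensor}, and the standard interplay between holographic transformations and contractions, $g_t$ factors as $(A\otimes A\otimes A)\circ\tilde h_t$, where $\tilde h_t$ is the same triangle gadget built from the symmetric $\mathbf{h}$ and the conjugated binary connector $\tilde t:=C^T T B$ (with $T$ the matrix of $t$). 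To upgrade cyclic symmetry to full $S_3$-symmetry I would combine $g_t$ with its mirror gadget $g_t^{\mathrm{rev}}$ obtained by reversing the orientation of the triangle, i.e.\ swapping the roles of the $a$- and $b$-slots in each copy of $f$; exploiting the symmetry of $\mathbf{h}$, a suitable combination of the two chiralities (realizable as a larger gadget over $\{f,s_1,s_2\}$) produces a symmetric ternary $g=(A\otimes A\otimes A)\circ h$ in $\hc{f,s_1,s_2}$ with $h$ fully symmetric.

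The main obstacle is showing that the inner function $h$ can be forced to be non-degenerate by appropriate choice of connector. Applying the decomposability polynomial criterion of Proposition~\ref{prop:entanglement_classification} to the explicit form of $h$ yields a polynomial identity in the entries of $\tilde t = C^T T B$ (and a companion identity for the mirror chirality) whose vanishing is implied by $t$ lying in $\hc{K_i\circ\cM}$ for an appropriate $i\in\{1,2\}$ determined by the entanglement class of $f$ and the chirality of the triangle — with $K_1$ and $K_2$ playing dual roles corresponding to the two chiralities; here Lemma~\ref{lem:cM_generators} provides concrete generators for $\hc{K_i\circ\cM}$ and Lemma~\ref{lem:closure_clone} yields its closure description. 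The two hypotheses $s_1\notin\hc{K_1\circ\cM}$ and $s_2\notin\hc{K_2\circ\cM}$ together rule out the possibility that both chiralities degenerate simultaneously: by contraposition, if both inner functions were decomposable, then $s_1$ and $s_2$ would each be forced into the corresponding bad clone, contradicting the hypotheses. Hence at least one of the symmetrized gadgets has a non-degenerate inner core, yielding the required symmetric ternary function $g\in\hc{f,s_1,s_2}\setminus\hc{\cT}$.
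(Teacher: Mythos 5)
The paper proves this lemma essentially by citing \cite[Lemmas~16 and 17]{backens_new_2017} and translating terminology, so your attempt at a direct gadget-based argument is a genuinely different route. Your overall architecture — build a cyclic triangle gadget over $f$ and a binary connector, factor out the generalised holographic transformation so the inner gadget is built from the symmetric canonical $h\in\{\EQ_3,\ONE_3\}$, then test decomposability with the polynomial criterion of Proposition~\ref{prop:entanglement_classification} — is sound and matches the style of Lemma~\ref{lem:ghz_from_w}. (One simplification: the mirror-gadget symmetrisation is unnecessary, since for ternary Boolean functions the $C_3$-orbits on $\{0,1\}^3$ coincide with the $S_3$-orbits, so cyclic invariance already gives full symmetry.)

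However, the decisive step is asserted rather than proved, and the framing of that step is wrong. The $K_1$/$K_2$ dichotomy in the hypotheses is not governed by the \emph{chirality} of the triangle. It is governed by which of the families $K_1\circ\cM$, $K_2\circ\cM$ the function $f$ itself falls into when $f$ is in the $W$ class: if $f\in K_j\circ\cM$, the plain triangle over $f$ alone can collapse, and one uses $s_j$ (and only $s_j$) as the connector; if $f$ is in the GHZ class, or in the $W$ class but $f\notin K_1\circ\cM\cup K_2\circ\cM$, no binary connector is needed at all. Your ``by contraposition'' step claims that decomposability of the $s_1$-gadget and of the $s_2$-gadget would respectively force $s_1\in\hc{K_1\circ\cM}$ and $s_2\in\hc{K_2\circ\cM}$. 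That is not true as stated: if, say, $f\in K_1\circ\cM$ and $s_2\in K_1\circ\cM$ (perfectly compatible with $s_2\notin\hc{K_2\circ\cM}$), then the $s_2$-gadget lies in $\hc{K_1\circ\cM}$ and is decomposable even though $s_2$ satisfies your hypothesis. So the two conditions do not pair off with two chiralities; rather, one of them becomes relevant depending on a property of $f$, and identifying which one, and then carrying out the non-vanishing computation for $(M^TSM)_{00}$ and $\det(M^TSM)$ as in the proof of Lemma~\ref{lem:ghz_from_w}, is exactly the content that is missing. Without that case analysis and the accompanying algebra, the proposal is an outline rather than a proof.
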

\begin{proof}
 This follows from \cite[Lemmas~16 and 17]{backens_new_2017}\footnote{The given lemma numbers are for the conference version of that paper. In the full version, these Lemmas are numbered 18 and 19, respectively.}.
 As the terminology and notation used in that paper is rather different from that used here, we give a translation:
 \begin{itemize}
  \item The phrase ``$n$-qubit state'' in the lemmas means a vector corresponding to a function of arity $n$; in particular, the vector $\ket{\psi}$ corresponds to our $\mathbf{f}$, $\ket{\GHZ}$ is the vector we denoted $\bg$ in Section~\ref{s:ternary}, $\ket{W}$ is the vector we denoted $\bw$, and $\ket{\phi}$ will be either $\bs_1$ or $\bs_2$.
  \item Up to an irrelevant scalar factor, the matrix $K$ in \cite[Lemma~17]{backens_new_2017} is the matrix we call $K_1$, and $KX$ in the lemma is the matrix we call $K_2$.
 \end{itemize}

 We are now ready to give the proof.
 The function $f\in\allf_3\setminus\hc{\cT}$ is ternary and entangled.
 By the entanglement classification described in Section~\ref{s:ternary}, it is in either the GHZ class or the $W$ class.
 
 \textbf{Case~1}: Suppose the function $f$ is in the GHZ class, i.e.\ there exist matrices $A,B,C\in\GL$ such that $\mathbf{f}=(A\otimes B\otimes C)\bg$, where $\bg$ is the vector corresponding to $\EQ_3$.
 In this case, the desired result follows from \cite[Lemma~16]{backens_new_2017}.
 The lemma states that either there exists a triangle gadget in $\hc{f}$ which represents a non-degenerate symmetric ternary function, or that $f$ is already symmetric.
 In the former case, the function associated with the triangle gadget is the desired $g$; in the latter case $g=f$.
 
 \textbf{Case~2}: Suppose the function $f$ is in the $W$ class, i.e.\ there exist matrices $A,B,C\in\GL$ such that $\mathbf{f}=(A\otimes B\otimes C)\bw$, where $\bw$ is the vector corresponding to $\ONE_3$.
 In this case, the desired result follows from \cite[Lemma~17]{backens_new_2017}.
 The lemma splits into three subcases.
 \begin{itemize}
  \item If $f\in K_1\circ\cM$, we also need a binary entangled function which is not in $K_1\circ\cM$: this is satisfied by taking $\ket{\phi}$ to be $\bs_1$.
  \item If $f\in K_2\circ\cM$, we also need a binary entangled function which is not in $K_2\circ\cM$: this is satisfied by taking $\ket{\phi}$ to be $\bs_2$.
  \item If $f\notin K_1\circ\cM\cup K_2\circ\cM$, the binary functions are not actually required.
 \end{itemize}
 The lemma then states that a symmetric ternary entangled function (which we will call $g$) can be ``constructed''.
 From the proof in the full version of the paper it can be seen this means that the function is contained in the holant clone generated by all the functions required in the lemma, i.e.\ $g\in\hc{f,s_1}$, $g\in\hc{f,s_2}$, or $g\in\hc{f}$, depending on the subcase.
 
 Now, $\hc{f},\hc{f,s_1},\hc{f,s_2}$ are all subsets of $\hc{f,s_1,s_2}$.
 Hence, in each case, $g\in\hc{f,s_1,s_2}$, as desired.
 Furthermore, $g$ is entangled by construction, so since $g$ has arity 3, $g\notin\hc{\cT}$.
\end{proof}

A similar but slightly weaker result was also proved in \cite[Lemma~7.3]{cai_dichotomy_2011}.

\begin{lem}[{\cite[Lemma 7.4]{cai_dichotomy_2011}}]\label{lem:binary_symmetric}
 Let $\cR$ be any one of $\hc{O\circ\cE}$ for some $O\in\cO$, $\hc{K_1\circ\cE}=\hc{K_2\circ\cE}$, or $\hc{K\circ\cM}$ for some $K\in\{K_1,K_2\}$.
 Suppose $f$ is a symmetric ternary function in $\cR\setminus\hc{\cT}$ and $g$ is a binary function in $\allf_2\setminus\cR$.
 Then there is a symmetric binary function $h\in\hc{f,g,\cU}$ such that $h\notin\cR$.
\end{lem}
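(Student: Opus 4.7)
The plan is to construct $h$ via a ``sandwich'' gadget. For any unary $u=(u_0,u_1)\in\cU$, let
\[ A_u(y_1,y_2) := \sum_{y\in\{0,1\}} f(y_1,y_2,y)\,u(y); \]
each $A_u$ is a symmetric binary function (since $f$ is symmetric) lying in $\hc{f,\cU}\sse\cR$. Then define
\[ h(x_1,x_2) := \sum_{y_1,y_2\in\{0,1\}} g(y_1,x_1)\,A_u(y_1,y_2)\,g(y_2,x_2). \]
Viewing binaries as $2{\times}2$ matrices, $h=g^T A_u\, g$, so $h^T = g^T A_u^T g = g^T A_u g = h$: the output is automatically symmetric regardless of $g$. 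The gadget is a \ppsh-formula over $\{A_u,g\}$, so by Corollary~\ref{cor:holant_closed}, $h\in\hc{A_u,g,\cU}\sse\hc{f,g,\cU}$.

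To secure non-degeneracy of $A_u$, write $f=[f_0,f_1,f_2,f_3]$ and compute
\[ \det A_u = u_0^2(f_0f_2-f_1^2) + u_0 u_1(f_0f_3 - f_1f_2) + u_1^2(f_1f_3 - f_2^2). \]
This quadratic in $(u_0,u_1)$ vanishes identically iff all three coefficients are zero; a short algebraic manipulation shows that the three equations force $f$ to have the form $c\,(1,r)^{\otimes 3}$, i.e.\ $f$ is degenerate. Since $f$ is entangled (Observation~\ref{obs:symmetric_degenerate_entangled}), some $u\in\cU$ gives $\det A_u\neq 0$, so $A_u$ is non-degenerate.

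The main obstacle is showing $h\notin\cR$ for a suitable $u$. I would proceed by a case analysis on the form of $\cR$, using Lemmas~\ref{lem:hc_orthogonal_holographic} and~\ref{lem:hc_Z_equal} to apply the defining holographic transformation so that $\cR$ becomes one of $\hc{\cE}$, $\hc{K_1\circ\cE}$, or $\hc{K\circ\cM}$; by Lemma~\ref{lem:closure_clone} each such $\cR$ equals $\tcl{\cF'}\cup\allf_0$ for an explicit $\cF'$. Membership of a symmetric binary $[a,b,c]$ in these clones is characterized by simple support/support-triangular conditions (in the $\cE$ cases, either $b=0$ or $a=c=0$; in the $\cM$ case, $c=0$, up to transformation), so checking $h\in\cR$ reduces to a finite system of algebraic constraints on the entries of $A_u$ (and hence on $(u_0,u_1)$) for each fixed~$g$. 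I would argue that if $g^T A_u g\in\cR$ for \emph{every} valid $u$, the resulting one-parameter family of constraints is enough to force the entries of $g$ themselves to satisfy the defining identities of $\cR$, contradicting $g\notin\cR$. In the $\cE$ cases the normalized $f$ must take the form $[f_0,0,0,f_3]$ with $f_0f_3\neq 0$, so $A_u$ is diagonal and the support analysis is transparent; in the $\cM$ case Lemma~\ref{lem:M_triangular} constrains the normalized $f$ to $[f_0,f_1,0,0]$ and a similar (triangular) support argument applies. If for some exceptional $g$ the sandwich happens to always land in $\cR$ (because $g^T A_u g$ is quadratic in $g$ and could symmetrize~$g$'s non-membership away), I would fall back to the alternative gadget $h'(x_1,x_2) = \sum_y f(x_1,x_2,y)(g\,u)(y)$, which is symmetric because $f$ is, and combine it with modifications of $g$ by unaries to cover these residual cases.
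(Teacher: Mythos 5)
The paper states this lemma purely as a citation to \cite[Lemma 7.4]{cai_dichotomy_2011} and provides no proof of its own, so there is no in-paper argument to compare against; I can only judge your sketch on its merits.

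Your sandwich gadget $h = g^T A_u g$ is a reasonable starting point, and the two preliminary observations are correct: $h$ is automatically symmetric since $A_u$ is, and some $u$ makes $A_u$ non-degenerate precisely because $f$ is entangled. It is also worth noting (you do not) that $\det h = (\det g)^2 \det A_u \neq 0$ once such a $u$ is fixed, since $g\notin\cR$ forces $g$ non-degenerate by Lemma~\ref{lem:degenerate_in_families}; this automatically eliminates the degenerate branch of the membership test, which your support characterisation for the $\cE$ cases silently drops (a symmetric $[a,b,c]$ lies in $\hc{\cE}$ also when $ac=b^2$).

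The real gap is the central step, that some $u$ gives $h\notin\cR$, and especially the $K$-transformed cases. You propose to ``apply the defining holographic transformation'' via Lemmas~\ref{lem:hc_orthogonal_holographic} and~\ref{lem:hc_Z_equal}, but $K_1,K_2$ are not orthogonal so Lemma~\ref{lem:hc_orthogonal_holographic} does not apply, and Lemma~\ref{lem:hc_Z_equal} requires $\EQ_2,\NEQ\in\cF$, which $\{f,g\}\cup\cU$ does not satisfy. Pushing the algebra through directly, writing $f=K\circ f'$ and $g=K\circ g'$, one obtains $K^{-1}\circ h = (g')^T\,\bigl(X A'_{u'} X\bigr)\,g'$ with $u'=K^T u$ and $A'_{u'}$ the analogous middle matrix built from $f'$: the conjugation by $X$ arises from $K^T K=X$ (Observation~\ref{obs:Z_properties}) and must be tracked, since in the $\cM$ case it relocates the zero entry of $A'$ and so changes which entry of $h$ witnesses non-membership. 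The conclusion does survive this bookkeeping, but your sketch is silent about it, and the vague clause ``enough to force the entries of $g$ themselves to satisfy the defining identities of $\cR$'' papers over a Zariski-density argument over the unary parameter that genuinely needs to be written out case by case.

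Finally, your fall-back gadget $h'(x_1,x_2)=\sum_y f(x_1,x_2,y)(g\,u)(y)$ is useless: $g\,u$ is again a unary function, so $h'\in\hc{f,\cU}\sse\cR$ automatically (since $f\in\cR$, $\cU\sse\cR$, and $\cR$ is a holant clone). It can never escape $\cR$. Fortunately the main gadget already covers all cases, so the fall-back is unnecessary, but as written it would silently fail.
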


\subsection{Universal quantum circuits as holant clones}\label{sec:univholant}

From Lemma~\ref{lem:closure_clone}, we have
\begin{itemize}
 \item $\hc{\cT}=\tcl{\cT}\cup\allf_0$,
 \item $\hc{M\circ\cE}=\tcl{M\circ\cE}\cup\allf_0$ for any $M\in\cO\cup\{K_1,K_2\}$, and
 \item $\hc{K\circ\cM}=\tcl{K\circ\cM}\cup\allf_0$ for any $K\in\{K_1,K_2\}$.
\end{itemize}
This makes it straightforward to see that these holant clones do not contain all functions.
For example,
\begin{itemize}
 \item if $f$ is any entangled function of arity at least 3, then $f\notin\hc{\cT}$,
 \item $\ONE_3\notin\hc{M\circ\cE}$ for any $M\in\cO\cup\{K_1,K_2\}$, and
 \item $\EQ_3\notin\hc{K\circ\cM}$ for any $K\in\{K_1,K_2\}$.
\end{itemize}

We now show that any set of functions which is not a subset of any of the tractable families of Theorem~\ref{thm:holant-star-dichotomy} generates a holant clone that is equal to $\allf$.
To do this, we make use of another result from quantum theory, which is known simply as ``single-qubit and CNOT gates are universal'' \cite[Section~4.5.2]{nielsen_quantum_2010}.
This statement leaves a lot of meaning implicit, it is more accurate to say
``quantum circuits consisting of single-qubit and CNOT gates are universal for unitary operators''.
We will now unpack the technical terms in this phrase and give a rigorous statement using the notation of restricted holant clones (cf.\ Section~\ref{s:restricted}).

For our purposes, a \emph{gate} or \emph{operator} can be thought of simply as an even-arity function; gates are implicitly assumed to be unitary.
We say a function $f\in\allf_{2n}$ is \emph{unitary} if\footnote{This definition corresponds to the matrix $M_f$ of values of $f$, whose columns are indexed by the first $n$ arguments and whose rows are indexed by the second $n$ arguments, being a unitary matrix.}
\[
 \sum_{\vc{z}{n}\in\{0,1\}} f^*(\vc{x}{n},\vc{z}{n})f(\vc{y}{n},\vc{z}{n}) = \prod_{j=1}^n \EQ_2(x_j,y_j),
\]
where $f^*$ is the function whose values are the complex conjugates of the values of $f$.
(Functions of odd arity cannot be unitary.)
A \emph{single-qubit gate} is a binary unitary function; we will denote the set of all such functions by
\[
 \cS := \{g\in\allf_2\mid g \text{ is unitary}\}.
\]
The \emph{CNOT gate} corresponds to the arity-4 function which satisfies
\[
 \CNOT(\bx) = \begin{cases} 1 &\text{if } \bx\in\{0000,0101,1011,1110\} \\ 0 &\text{otherwise.} \end{cases}
\]
It is straightforward to verify that $\CNOT$ is indeed unitary.

A \emph{quantum circuit} is the quantum version of a Boolean circuit.
In our terminology, it can be defined as a type of restricted holant gadget.

\begin{dfn}
 Let $\Ld=\{in,out\}$ be a set of labels and let $N=\{\{in,out\}\}$.
 Let $\cF$ be a set of unitary functions where, for each function, the first half of the arguments are labelled $in$ and the second half of the arguments are labelled $out$.
 Then a \ppsh-formula over $\cF$ restricted by $N$ is called a \emph{quantum circuit} if it corresponds to a holant gadget whose underlying graph is acyclic when made into a directed graph by orienting each edge from $out$ to $in$.
 For consistency with the definition of unitarity, the arguments of the function represented by this \ppsh-formula are required to be ordered so that the ones labelled $in$ come before the ones labelled $out$.
 We denote the set of all functions that can be represented by quantum circuits over $\cF$ by $\qc{\cF}$.
\end{dfn}

The expression \emph{universal for unitary operators} means that any unitary function $f$ is the effective function of some quantum circuit over $\cS\cup\{\CNOT\}$.
Using our terminology, we can thus state the result ``single-qubit and CNOT gates are universal'' \cite[Section~4.5.2]{nielsen_quantum_2010} as follows.

\begin{prop}[Universality of single-qubit and CNOT gates]\label{prop:universality}
 Suppose $f\in\allf$ is unitary, then $f\in\qc{\cS,\CNOT}$.
\end{prop}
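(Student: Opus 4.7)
The plan is to invoke the standard quantum-circuit universality argument from \cite[Section~4.5]{nielsen_quantum_2010} and verify that each step translates faithfully into the language of restricted holant clones. First I would check that composition of quantum circuits behaves as one would hope: if $f, g \in \qc{\cF}$ and the $out$-labelled arguments of $g$ are contracted against the $in$-labelled arguments of $f$ in a bipartite \ppsh-formula, then the resulting function corresponds to the matrix product of the unitary matrices associated with $f$ and $g$. The acyclicity requirement is preserved under such sequential composition, and tensor products correspond to placing circuits side-by-side, which is also acyclic. Hence $\qc{\cS,\CNOT}$ is closed under both the parallel and sequential composition operations used in the standard proof.

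With this infrastructure in place, the proof reduces to showing that the $2^n\times 2^n$ unitary matrix $U$ associated with a given unitary function $f\in\allf_{2n}$ is a product of matrices, each a tensor product of identity blocks with either a single-qubit unitary or the $\CNOT$ matrix (with arbitrary placement among the $n$ qubit wires). I would invoke the three textbook steps: (i)~decompose $U$ as a product of at most $2^{n-1}(2^n-1)$ two-level unitaries, via an iterative Gaussian-elimination procedure that zeroes out off-diagonal entries column by column; (ii)~express each two-level unitary, using a Gray code connecting the two computational basis states on which it acts non-trivially, as a multi-controlled single-qubit gate conjugated by multi-controlled bit flips; (iii)~implement each multi-controlled single-qubit gate using a controlled-$V$ construction (with $V^2$ equal to the target unitary) together with the Toffoli decomposition, both of which use only $\CNOT$ gates and single-qubit gates.

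The main obstacle is not mathematical but bookkeeping: one must check that the elementary gates produced by each decomposition really sit in $\qc{\cS,\CNOT}$ with the correct in/out labelling; that the argument reorderings needed to embed each gate into the full $n$-qubit tensor product are permitted (they are, via the restricted-holant-clone analogue of Proposition~\ref{prop:holant-clone-closure}); and that the assembled circuit remains acyclic when oriented from $out$ to $in$. All three are immediate because the Nielsen--Chuang construction is naturally layered, so I would content myself with highlighting the correspondence between matrix products and restricted \ppsh-formulas over $\cS \cup \{\CNOT\}$ and simply cite \cite{nielsen_quantum_2010} for the underlying linear-algebraic decompositions.
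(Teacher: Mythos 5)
Your proposal is correct, and the approach is the same one the paper takes: the paper simply sets up the definitions of \emph{unitary function} and \emph{quantum circuit} precisely so that the proposition becomes a direct restatement of the standard Nielsen--Chuang universality result, and then cites \cite[Section~4.5.2]{nielsen_quantum_2010} without giving any further proof. Your sketch spells out the bookkeeping (sequential composition as matrix product, parallel composition as tensor product, acyclicity preservation) that the paper leaves implicit in its definitions, but the core move --- cite the linear-algebraic decomposition and check the translation into restricted-\ppsh{} language --- is identical.
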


To avoid the need to argue about quantum circuits, we will instead use the following weakening of Proposition~\ref{prop:universality}, which is sufficient for the subsequent proofs.

\begin{cor}\label{cor:universality}
 Suppose $f\in\allf$ is unitary, then $f\in\hc{\cS,\CNOT}$.
\end{cor}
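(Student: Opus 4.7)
The plan is to deduce the corollary directly from Proposition~\ref{prop:universality} by observing that $\qc{\cS,\CNOT} \subseteq \hc{\cS,\CNOT}$. This inclusion is essentially definitional: a quantum circuit is defined in the excerpt as a \ppsh-formula over $\cF$ that is restricted by the label set $N=\{\{in,out\}\}$ and that additionally corresponds to an acyclic directed gadget. Both of these extra conditions (the label-compatibility of contractions and the acyclicity of the induced orientation) are restrictions imposed on top of the defining requirements of a \ppsh-formula; they do not introduce any operation not already permitted in the definition of a holant clone.

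Concretely, suppose $f \in \qc{\cS,\CNOT}$ is witnessed by a \ppsh-formula $\psi$ over $\cS \cup \{\CNOT\}$ restricted by $N$, with the acyclicity property. Forgetting the labelling of arguments (so that we treat elements of $\cS \cup \{\CNOT\}$ purely as unlabelled functions in $\allf$) and forgetting the acyclicity condition, $\psi$ remains a valid \ppsh-formula over $\cS \cup \{\CNOT\}$ in the sense used to define $\hc{\cS,\CNOT}$: every free variable still has multiplicity $1$ and every bound variable still has multiplicity~$2$. The function represented by $\psi$ is unchanged by this forgetting (the label/acyclicity constraints play no role in the summation that defines the represented function, only in whether such a \ppsh-formula is admitted as a quantum circuit). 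Hence $f \in \hc{\cS,\CNOT}$.

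Combining this inclusion with Proposition~\ref{prop:universality} finishes the proof: any unitary $f \in \allf$ lies in $\qc{\cS,\CNOT}$ by universality, and therefore in $\hc{\cS,\CNOT}$.

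There is no serious obstacle here; the only thing to check is the bookkeeping above, namely that the multiplicity conditions on free and bound variables in a \ppsh-formula are automatically inherited from the quantum-circuit definition, and that the value computed by \eqref{eq:pps-formula} does not depend on the labels or on the orientation used in the acyclicity condition. Both facts are immediate from the definitions, so the deduction of Corollary~\ref{cor:universality} from Proposition~\ref{prop:universality} is a one-line inclusion argument once the two clone-like objects are placed side by side.
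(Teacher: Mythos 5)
Your proposal is correct and matches the paper's intended argument: the paper offers no explicit proof of the corollary but introduces it as a "weakening" of Proposition~\ref{prop:universality}, and the implicit justification is exactly the inclusion $\qc{\cS,\CNOT}\subseteq\hc{\cS,\CNOT}$ you describe, obtained by forgetting the label restriction and acyclicity condition (and noting that the in-before-out ordering in the quantum-circuit definition agrees with the argument ordering in the definition of a unitary function).
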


For any $\ba\in\AA^n$, denote the usual complex Euclidean norm by $\norm{\ba} := (\sum_{j=1}^n \abs{a_j}^2)^{1/2}$.
We will use the following lemma from linear algebra.
The result is elementary, but we give a proof for completeness.

\begin{lem}\label{lem:unitary}
 Suppose $\ba\in\AA^k$.
 Let $\bb\in\AA^k$ be the vector that has $\norm{\ba}$ as its first component and zeroes everywhere else, i.e.\ $\bb=(\norm{\ba},0,0\zd0)$ with $(k-1)$ zeroes.
 Then there exists a unitary matrix $U\in\AA^{k\times k}$ such that $\ba=U\bb$.
\end{lem}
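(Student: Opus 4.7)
The plan is to construct $U$ explicitly so that its first column is the unit vector in the direction of $\ba$, and then extend to an orthonormal basis in the remaining columns via Gram--Schmidt. The key computation is $U\bb = U(\norm{\ba}\be_1) = \norm{\ba}\cdot(\text{first column of }U) = \norm{\ba}\cdot(\ba/\norm{\ba}) = \ba$, so only the existence of such a $U$ with entries in $\AA$ needs justification.

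First I would dispose of the degenerate case $\ba = \mathbf{0}$: here $\bb = \mathbf{0}$ as well, so the identity matrix $I \in \AA^{k\times k}$ satisfies $\ba = I\bb$ trivially. Assume henceforth $\ba \neq \mathbf{0}$, so $\norm{\ba} > 0$. Note that $\norm{\ba}^2 = \sum_j |a_j|^2 = \sum_j a_j \overline{a_j}$ is a non-negative algebraic real (since $\AA$ is closed under complex conjugation), hence $\norm{\ba} \in \AA$ as well because $\AA$ is closed under taking non-negative square roots of its non-negative real elements. Set $\bu_1 := \ba/\norm{\ba} \in \AA^k$.

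Next I would extend $\bu_1$ to an orthonormal basis $\bu_1, \bu_2, \ldots, \bu_k$ of $\AA^k$ with respect to the standard Hermitian inner product. The standard way is Gram--Schmidt applied to $\bu_1, \be_2, \be_3, \ldots, \be_k$: at each stage one subtracts off the components along the previously chosen orthonormal vectors and then normalises. The only issue is that at some step the intermediate vector might be zero; in that case one skips it and moves on to the next $\be_j$, which yields $k-1$ further orthonormal vectors in total since $\bu_1, \be_2, \ldots, \be_k$ span $\AA^k$. All operations used are field operations in $\AA$ together with non-negative real square roots of sums $\sum_j v_j \overline{v_j}$ of algebraic numbers, so every $\bu_j$ lies in $\AA^k$.

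Finally I would let $U$ be the matrix whose $j$-th column is $\bu_j$. By construction its columns form an orthonormal basis of $\AA^k$, so $U^* U = I$, i.e.\ $U$ is unitary, and $U \in \AA^{k \times k}$. Then
\[
 U\bb = U\pmm{\norm{\ba}\\0\\\vdots\\0} = \norm{\ba}\, \bu_1 = \norm{\ba}\cdot\frac{\ba}{\norm{\ba}} = \ba,
\]
as required. I expect no real obstacle here; the only subtle point is checking that $\AA$ is closed under the operations needed by Gram--Schmidt (field operations and non-negative real square roots), which is standard.
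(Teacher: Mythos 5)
Your proof is correct and takes essentially the same approach as the paper: dispose of the zero-vector case, normalise $\ba$, extend it to an orthonormal basis of $\AA^k$ via Gram--Schmidt, and take $U$ to be the matrix whose columns are that basis with the normalised $\ba$ placed first. The one point you spell out that the paper leaves implicit --- that $\norm{\ba}\in\AA$ and, more generally, that $\AA$ is closed under the field operations and non-negative real square roots used by Gram--Schmidt --- is a worthwhile bit of extra care.
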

\begin{proof}
 First, suppose $\norm{\ba}=0$, then both $\ba$ and $\bb$ are the all-zero vector and $\ba=U\bb$ holds for any unitary matrix.
 Thus, from now on we may assume that $\ba$ is not a zero vector, i.e.\ $\norm{\ba}>0$.
 
 We will find a suitable unitary matrix $U$ by constructing an orthonormal basis for $\AA^k$ which contains the unit vector $\be_\ba:=(\norm{\ba})^{-1}\ba$.
 Any matrix whose columns form an orthonormal basis is unitary.
 If we assemble the basis vectors into a matrix in such a way that the first column is the vector $\be_\ba$, then $\ba=U\bb$ since
 \[
  (U\bb)_\ell = \sum_{j=1}^k U_{\ell j}b_j = \norm{\ba} U_{\ell 1} = \norm{\ba} (\be_\ba)_\ell = a_\ell.
 \]

 To find a suitable basis of $\AA^k$, let $B_1:=\{\ba\}$ and repeat the following step for each $j\in\{1,\ldots,k-1\}$: Find a vector $\bv_j\in\AA^k\setminus\operatorname{span}(B_j)$, where $\operatorname{span}(B_j)$ is the vector space spanned by the elements of $B_j$, and set $B_{j+1}:= B_j\cup\{\bv_j\}$.
 The set $B_k$ constructed in this way must be a basis for $\AA^k$.
 
 Next, construct an orthonormal basis $B_k'$ from $B_k$ via the Gram-Schmidt process with $\ba$ as the initial vector.
 This ensures that $B_k'$ contains the unit vector $\be_\ba$.
 Let $U$ be the matrix whose first column is $\be_\ba$ and whose subsequent columns correspond to the other elements of $B_k'$ in some arbitrary order.
 Then $\ba=U\bb$, as desired.
\end{proof}

By combining this lemma with the universality of single-qubit gates and CNOT, we find the following powerful result.

\begin{lem}\label{lem:allf_from_universality}
 $\allf\sse\hc{\cU,\cS,\CNOT}$.
\end{lem}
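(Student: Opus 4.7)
My plan is to combine Lemma~\ref{lem:unitary} with Corollary~\ref{cor:universality} in a direct way. Given any $f\in\allf$, I will build $f$ by applying a (universal, hence realisable) unitary gate to a degenerate ``input'' function that is itself a tensor product of unaries.

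Let me first dispose of the small-arity cases. If $\ari(f)=0$, then $f\in\allf_0\sse\hc{\cU}$ by Observation~\ref{obs:constants_in_U_clone}. If $\ari(f)=1$, then $f\in\cU$ trivially. So assume $k:=\ari(f)\geq 1$ from now on; I will present a uniform construction covering all $k\geq 1$.

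Let $\mathbf{f}\in\AA^{2^k}$ denote the vector of values of $f$, and apply Lemma~\ref{lem:unitary} to get a unitary matrix $U\in\AA^{2^k\times 2^k}$ such that $\mathbf{f}=U\mathbf{b}$, where $\mathbf{b}=(\norm{\mathbf{f}},0\zd 0)^T$. Define the $2k$-ary function $g$ by $g(\bx,\by):=U_{\bx,\by}$, regarding $\bx,\by\in\{0,1\}^k$ as indices into $U$. The paper's unitarity condition
\[
 \sum_{\bz\in\{0,1\}^k} g^*(\bx,\bz)\, g(\by,\bz)=\prod_{j=1}^k \EQ_2(x_j,y_j)
\]
is precisely the assertion that $UU^\dagger=I$, which holds. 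Hence $g$ is unitary in the paper's sense, so by Corollary~\ref{cor:universality}, $g\in\hc{\cS,\CNOT}$.

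Next, I realise the vector $\mathbf{b}$ as an explicit degenerate function. Let $u:=[\norm{\mathbf{f}},0]\in\cU$ and $\dl_0:=[1,0]\in\cU$, and define $b:=u\otimes\dl_0^{\otimes(k-1)}$, so that $b(0\zd 0)=\norm{\mathbf{f}}$ and $b(\bx)=0$ otherwise. By Lemma~\ref{lem:closure_free} and Lemma~\ref{lem:tensor_closure-holant_clone}, $b\in\tcl{\cU}\sse\hc{\cU}$. Finally, contracting the second block of $k$ arguments of $g$ against the $k$ arguments of $b$ gives, for every $\bx\in\{0,1\}^k$,
\[
 \sum_{\by\in\{0,1\}^k} g(\bx,\by)\,b(\by)=\sum_{\by} U_{\bx,\by}\,\mathbf{b}_{\by}=(U\mathbf{b})_{\bx}=f(\bx).
\]
By Proposition~\ref{prop:holant-clone-closure}, tensor products and iterated contractions are legal operations inside a holant clone, so $f\in\hc{\cU,g}\sse\hc{\cU,\cS,\CNOT}$ by Corollary~\ref{cor:holant_closed}. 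Since $f$ was arbitrary, this gives $\allf\sse\hc{\cU,\cS,\CNOT}$.

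The only point that requires any care is the bookkeeping between the ``column/row'' convention of the unitarity definition in the paper and the left/right factors of the contraction $\mathbf{f}=U\mathbf{b}$; this is why I checked explicitly that the paper's condition matches $UU^\dagger=I$ rather than $U^\dagger U=I$. Everything else is routine: Lemma~\ref{lem:unitary} produces $U$, Corollary~\ref{cor:universality} imports it into the clone, and a single round of $k$ contractions assembles $f$ from $g$ and $b$.
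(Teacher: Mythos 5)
Your proof is correct and is essentially the same as the paper's: both apply Lemma~\ref{lem:unitary} to write $\mathbf{f}=U\mathbf{b}$ with $\mathbf{b}$ concentrated at the all-zeros index, realise $\mathbf{b}$ as a tensor product of unaries, import the arity-$2k$ unitary function into the clone via Corollary~\ref{cor:universality}, and contract to recover $f$. The only (cosmetic) difference is that you fold the scalar $\norm{\mathbf{f}}$ into one of the unary factors and you spell out explicitly that the paper's unitarity condition for $g(\bx,\by)=U_{\bx,\by}$ is $UU^\dagger=I$; the paper states this implicitly.
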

\begin{proof}
 First, note that, by Observation~\ref{obs:constants_in_U_clone}, $\allf_0\sse\hc{\cU}\sse\hc{\cU,\cS,\CNOT}$.
 
 Let $n$ be a positive integer, suppose $f\in\allf_n$, and let $f'(\vc{x}{n}) := \norm{\mathbf{f}} \prod_{k=1}^n \dl_0(x_k)$.
 Then $f'\in\hc{\cU,\cS,\CNOT}$ and $\mathbf{f}'=(\norm{\mathbf{f}},0,0\zd 0)$, with $(2^n-1)$ zeroes.
 Therefore, by Lemma~\ref{lem:unitary}, there exists a unitary matrix $U\in\AA^{2^n\times 2^n}$ such that $\mathbf{f}=U\mathbf{f}'$.
 Let $g_U\in\allf_{2n}$ be the function that satisfies $g_U(\vc{x}{n},\vc{y}{n}) = U_{\vc{x}{n},\vc{y}{n}}$ for all $\vc{x}{n},\vc{y}{n}\in\{0,1\}$.
 Then $g_U$ is a unitary function and
 \[
  f(\vc{x}{n}) = \sum_{\vc{y}{n}\in\{0,1\}} g_U(\vc{x}{n},\vc{y}{n})f'(\vc{y}{n}).
 \]
 But $g_U\in\hc{\cS,\CNOT}$ by Corollary~\ref{cor:universality}.
 Thus, by Corollary~\ref{cor:holant_closed}, $f\in\hc{\cU,\cS,\CNOT}$.
 Since $n$ and $f$ were arbitrary, and the case $\allf_0$ was considered separately, this implies $\allf\sse\hc{\cU,\cS,\CNOT}$.
\end{proof}

We will also use the following results.
The second of these is again elementary but we provide a proof for completeness.

\begin{lem}[{\cite[Lemma~24 in full version]{backens_new_2017}}]\label{lem:QR_decomposition}
 Let $M\in\GL$, then the following hold:
 \begin{itemize}
  \item There exists $Q\in\cO\cup\{K_1,K_2\}$ such that $Q^{-1}M$ is upper triangular.
  \item There exists $Q\in\cO\cup\{K_1,K_2\}$ such that $Q^{-1}M$ is lower triangular.
  \item If $Q^{-1}M$ is neither lower nor upper triangular for any orthogonal $Q$, then $M=K_1D$ or $M=K_2D$, where $D\in\GL$ is diagonal.
 \end{itemize}
\end{lem}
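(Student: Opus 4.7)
The plan is to think of this as a QR-type factorisation over $\AA$: for a generic $M$, a single rotation-type orthogonal matrix will zero out the required off-diagonal entry; the degenerate case where no orthogonal matrix works turns out to be exactly when the relevant column (or row) is an eigenvector of something like $\smm{0&1\\-1&0}$, and that is precisely the situation that $K_1$ or $K_2$ is designed to handle. The third bullet then drops out of the first two as a joint obstruction statement.

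For the first bullet I parameterise the rotation-type orthogonal matrices as $Q = \smm{p & -q \\ q & p}$ with $p^2+q^2=1$, noting that $Q^{-1}=Q^T$. Writing $M=\smm{a&b\\c&d}$, I compute $Q^{-1}M$ and observe that its $(2,1)$-entry equals $pc-qa$. If $a^2+c^2\neq 0$ I choose $p=a/\sqrt{a^2+c^2}$ and $q=c/\sqrt{a^2+c^2}$ in $\AA$, which lies on the algebraic circle and zeros the entry. If instead $a^2+c^2=0$, then invertibility of $M$ rules out $a=c=0$, so $c=\epsilon ia$ with $\epsilon\in\{+1,-1\}$ and $a\neq 0$. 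Using $K_j^{-1}=XK_j^T$ from Observation~\ref{obs:Z_properties}, a direct multiplication shows that $K_1^{-1}M$ is upper triangular when $\epsilon=+1$, and $K_2^{-1}M$ is upper triangular when $\epsilon=-1$. This gives the first bullet.

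The second bullet is completely analogous: the $(1,2)$-entry of $Q^{-1}M$ for rotation-type $Q$ is $pb+qd$, vanishing for $p=d/\sqrt{b^2+d^2}$, $q=-b/\sqrt{b^2+d^2}$ whenever $b^2+d^2\neq 0$; and if $b^2+d^2=0$ then $d=\epsilon' ib$ with $\epsilon'\in\{\pm 1\}$, $b\neq 0$, and one checks that $K_1^{-1}M$ (resp.\ $K_2^{-1}M$) is lower triangular in the corresponding case. For the third bullet, suppose no orthogonal $Q$ makes $Q^{-1}M$ upper or lower triangular. By the analyses above this forces simultaneously $a^2+c^2=0$ and $b^2+d^2=0$, so $c=\epsilon_1 ia$ and $d=\epsilon_2 ib$ with $\epsilon_1,\epsilon_2\in\{\pm 1\}$ and $a,b\neq 0$. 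The determinant is $ad-bc=ab(\epsilon_2-\epsilon_1)i$, so invertibility forces $\epsilon_1\neq\epsilon_2$. A direct computation then shows $M=\sqrt{2}K_1\operatorname{diag}(a,b)$ in one case and $M=\sqrt{2}K_2\operatorname{diag}(a,b)$ in the other, which gives $M=K_1D$ or $M=K_2D$ with $D\in\GL$ diagonal.

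There is no genuinely hard step here; the only mildly delicate point is verifying that the ``bad'' case $a^2+c^2=0$ is exactly the one on which $K_1$ or $K_2$ acts correctly, and that the two bad cases for the upper- and lower-triangular reductions can occur simultaneously only in the very restricted form $M=K_jD$. Both are immediate once one writes the matrices out.
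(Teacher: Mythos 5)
The paper does not contain its own proof of this lemma; it is cited directly from the full version of \cite{backens_new_2017}, so there is no in-paper argument to compare against. Your proof is correct, elementary, and self-contained: parametrising the special orthogonal matrices as $\smm{p&-q\\q&p}$ with $p^2+q^2=1$, you rightly observe that the $(2,1)$-entry of $Q^{-1}M$ is $pc-qa$, which can be killed exactly when $a^2+c^2\neq 0$ (the algebraic circle has a point with the required slope), and that the degenerate case $a^2+c^2=0$ is handled by $K_1$ (if $c=ia$) or $K_2$ (if $c=-ia$). The lower-triangular bullet is symmetric, and the third bullet follows by combining the two obstructions, using invertibility to force the two signs to differ, whence $M=\sqrt{2}K_j\operatorname{diag}(a,b)$.

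One small detail worth fixing: in the lower-triangular case the pairing between the sign $\epsilon'$ and the matrix $K_j$ is the \emph{reverse} of the pairing in the upper-triangular case. Concretely, if $d=ib$ then the $(1,2)$-entry of $K_1^{-1}M$ is $\frac{1}{\sqrt 2}(b-i\cdot ib)=\sqrt 2\,b\neq 0$, so $K_1$ does \emph{not} work; it is $K_2^{-1}M$ that is lower triangular when $d=+ib$, and $K_1^{-1}M$ that is lower triangular when $d=-ib$. Your ``(resp.)'' phrasing reads as though $\epsilon'=+1$ pairs with $K_1$, which is the wrong way round. This does not affect the validity of the argument (either way, one of $K_1,K_2$ works, which is all the lemma needs), and the third bullet is unaffected because there the signs are constrained by the determinant, but the statement as written should be corrected.

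One further nitpick, purely logical: in the third bullet you deduce $a^2+c^2=0$ and $b^2+d^2=0$ from the hypothesis that \emph{no} orthogonal $Q$ triangularises $M$. Strictly, your computation only rules out $Q\in SO(2,\AA)$, but since $SO(2,\AA)\subseteq\cO$, ``no orthogonal $Q$ works'' indeed implies ``no special orthogonal $Q$ works'', so the deduction is valid; it may be worth making this one-line remark explicit so a reader does not worry that reflection matrices were forgotten.
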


\begin{lem}\label{lem:PLDU}
 Any matrix $M\in\GL$ can be written as $M=PLDU$, where $P\in\{I,X\}$ is a permutation matrix, $L=\smm{1&0\\l&1}$ is lower triangular, $D=\smm{d&0\\0&e}$ is diagonal, and $U=\smm{1&u\\0&1}$ is upper triangular, with $l,d,e,u\in\AA$ and $d,e\neq 0$.
\end{lem}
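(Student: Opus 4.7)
The plan is to perform an explicit $PLU$-style decomposition by case analysis on whether the $(1,1)$-entry of $M$ is zero, choosing $P$ so that after multiplying by $P^{-1}$ the top-left entry is nonzero, and then carrying out a direct $LDU$ factorisation.

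Write $M = \smm{a & b \\ c & d}$; invertibility of $M$ gives $ad - bc \neq 0$. In the case $a \neq 0$, I would set $P := I$, $L := \smm{1 & 0 \\ c/a & 1}$, $D := \smm{a & 0 \\ 0 & (ad-bc)/a}$, and $U := \smm{1 & b/a \\ 0 & 1}$. A direct multiplication shows $LDU = M$, and the diagonal entries of $D$ are nonzero because $a \neq 0$ and $\det M \neq 0$; so this has the required form with $l = c/a$, $u = b/a$, $d = a$, $e = (ad-bc)/a$.

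In the remaining case $a = 0$, the determinant condition forces $b \neq 0$ and $c \neq 0$. I would then set $P := X$, so that $P^{-1} M = X M = \smm{c & d \\ 0 & b}$, whose top-left entry is nonzero. The previous construction now applies to $X M$ in place of $M$, yielding $X M = L' D' U'$ with $L' = \smm{1 & 0 \\ 0 & 1}$, $D' = \smm{c & 0 \\ 0 & b}$, and $U' = \smm{1 & d/c \\ 0 & 1}$. Multiplying by $X$ on the left gives $M = X L' D' U' = P L' D' U'$, in the required form.

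The argument is an elementary constructive verification, essentially one $LDU$ factorisation of a $2 \times 2$ matrix plus a single row swap when needed, so I do not anticipate any serious obstacle; the only subtlety is ensuring the diagonal entries of $D$ are nonzero, which is guaranteed in each case by the invertibility of $M$.
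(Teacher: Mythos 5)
Your proof is correct and takes essentially the same route as the paper: a case split on whether the top-left entry of $M$ vanishes, an explicit $LDU$ factorisation when it does not, and a multiplication by $X$ to reduce to the first case when it does. The paper simply substitutes $\alpha=0$ into its second display rather than noting (as you do) that $L'$ then collapses to the identity, but the computed matrices are identical.
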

\begin{proof}
 Suppose $M=\smm{\alpha&\beta\\\gamma&\dl}\in\GL$, then $\alpha\delta-\beta\gamma\neq 0$.
 We distinguish cases according to whether $\alpha$ is zero.
 \begin{itemize}
  \item Suppose $\alpha\neq 0$.
   Then division by $\alpha$ is well-defined and we have
   \[
    \pmm{\alpha&\beta\\\gamma&\dl} = \pmm{1&0\\0&1} \pmm{1&0\\\gamma/\alpha&1} \pmm{\alpha&0\\0&(\alpha\delta-\beta\gamma)/\alpha} \pmm{1&\beta/\alpha\\0&1}.
   \]
   Here, $\alpha\neq 0$ by the assumption of the case and $(\alpha\delta-\beta\gamma)/\alpha\neq 0$ by invertibility of $M$.
  \item Suppose $\alpha=0$.
   Then $\gamma\neq 0$ because $M$ is assumed to be invertible.
   Therefore, division by $\gamma$ is well defined and we have
   \[
    \pmm{\alpha&\beta\\\gamma&\dl} = \pmm{0&1\\1&0} \pmm{1&0\\\alpha/\gamma&1} \pmm{\gamma&0\\0&(\beta\gamma-\alpha\delta)/\gamma} \pmm{1&\delta/\gamma\\0&1}.
   \]
   The assumption of the case implies that $\gamma\neq 0$, and $(\beta\gamma-\alpha\delta)/\gamma\neq 0$ by invertibility of $M$.
 \end{itemize}
 Thus, in each case, we have found a decomposition of $M$ which satisfies the desired properties.
\end{proof}

\subsection{The power of conservative holant clones}
\label{s:power}

In the previous section, we laid out the existing results that we will build on in proving which sets $\cF$ generate holant clones that are equal to $\allf$.
We now move on to proving new results and finally the theorem about universality in the conservative case.

\begin{obs}\label{obs:scaling_general}
 Suppose $\cF\sse\allf$.
 By Observation~\ref{obs:constants_in_U_clone} and Corollary~\ref{cor:holant_closed}, if $f\in\hc{\cU,\cF}$, then $\ld\cdot f\in\hc{\cU,\cF}$ for any $\ld\in\AA$.
\end{obs}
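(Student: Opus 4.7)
The plan is to derive the observation directly from the two results cited in its statement, with essentially no additional work, though the main subtlety is how to combine a nullary function (a scalar) with $f$ inside the \ppsh-formula formalism (since Definition~\ref{dfn:tensor_product} formally restricts tensor product to functions of positive arity).

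First I would note that $\ld \in \AA$ is identified with a nullary function, and Observation~\ref{obs:constants_in_U_clone} gives $\ld \in \allf_0 \sse \hc{\cU}$. Since $\cU \sse \cU \cup \cF$, it follows immediately from the definition of a holant clone that $\hc{\cU} \sse \hc{\cU, \cF}$, so $\ld \in \hc{\cU, \cF}$. By Corollary~\ref{cor:holant_closed}, adding $\ld$ to the generating set does not change the clone:
\[
 \hc{\cU, \cF, \ld} = \hc{\cU, \cF}.
\]

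Next I would show that $\ld \cdot f \in \hc{\cU, \cF, \ld}$. Since $f \in \hc{\cU, \cF}$, there is a \ppsh-formula $\psi_f = \sum_{v \in V' \setminus V} \prod_{j=1}^{s} \varphi_j$ over $\cU \cup \cF$ representing $f$. Form the new \ppsh-formula $\psi = \ld \cdot \psi_f$, where the factor $\ld$ is regarded as a nullary atomic formula (which contributes nothing to any variable multiplicity). Then $\psi$ is a valid \ppsh-formula over $\cU \cup \cF \cup \{\ld\}$, and it represents the function $\ld \cdot f$. Hence $\ld \cdot f \in \hc{\cU, \cF, \ld} = \hc{\cU, \cF}$, as required.

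The only subtlety, which I regard as the main obstacle, is justifying the "nullary atomic formula" step: if one prefers to avoid nullary atomic formulas altogether, one can instead use that $\ld$ is realised as the \ppsh-formula $\sum_w u_\ld(w)\, u_1(w)$, where $u_\ld = [\ld, 0] \in \cU$ and $u_1 = [1, 0] \in \cU$ (here $w$ appears with multiplicity $2$, as required), and then take the product of this \ppsh-formula with $\psi_f$ on disjoint variables; the resulting \ppsh-formula over $\cU \cup \cF$ represents $\ld \cdot f$ directly. Either way, $\ld \cdot f \in \hc{\cU, \cF}$.
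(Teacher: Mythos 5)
Your proposal is correct and takes essentially the same route the paper intends: $\lambda\in\allf_0\sse\hc{\cU}\sse\hc{\cU,\cF}$ by Observation~\ref{obs:constants_in_U_clone}, then Corollary~\ref{cor:holant_closed} absorbs $\lambda$ (or, in your second variant, the contraction $\sum_w u_\lambda(w)\,u_1(w)$ from Observation~\ref{obs:constants_in_U_clone} is spliced into the \ppsh-formula for $f$ on a fresh bound variable). You are right that the only technical wrinkle is that Definition~\ref{dfn:tensor_product} restricts tensor products to positive arity, and your second variant resolves it cleanly by staying entirely within \ppsh-formulas over $\cU\cup\cF$.
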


\begin{lem}\label{lem:non-deg_binary}
 Let $\cI=\{f\in\allf_2\mid f(0,1)=f(1,0)=0\neq f(0,0) f(1,1)\}$ and suppose $t = [0,1,\mu]$ for some $\mu\in\Anz$.
 Then $\hc{\cI,\NEQ,t}$ contains all non-degenerate binary functions.
\end{lem}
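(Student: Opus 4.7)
The plan is to identify binary functions with $2\times 2$ matrices and to use the fact that a single contraction between two binary functions $f$ and $g$ produces the binary function whose matrix is the product $M_f M_g$. Indeed, the $4$-ary tensor product $f(x_1,x_2)g(x_3,x_4)$ lies in $\hc{\cI,\NEQ,t}$ by closure under tensor product (Proposition~\ref{prop:holant-clone-closure}), and a suitable permutation followed by a contraction of the two middle arguments yields $\sum_{y} f(x_1,y)g(y,x_4)$, i.e.\ the binary function with matrix $M_f M_g$. Since a binary function is non-degenerate exactly when its matrix lies in $\GL$, it suffices to show that every matrix in $\GL$ factors as a product of matrices drawn from $\cI \cup \{X, T\}$, where $X$ and $T$ denote the matrices of $\NEQ$ and $t$ respectively.

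A direct calculation gives
\[
 T X = \pmm{1 & 0 \\ \mu & 1} \quad\text{and}\quad X T = \pmm{1 & \mu \\ 0 & 1},
\]
so these two unitriangular matrices are realised by binary functions in $\hc{\cI,\NEQ,t}$. To obtain an arbitrary lower unitriangular matrix $L = \smm{1 & 0 \\ \ell & 1}$, I would split on whether $\ell = 0$: if so, $L$ is the matrix of $\EQ_2 \in \cI$; otherwise, $L = D_1 (TX) D_2$ with $D_1 = \smm{1 & 0 \\ 0 & \ell/\mu}$ and $D_2 = \smm{1 & 0 \\ 0 & \mu/\ell}$, both of which lie in $\cI$ since $\mu,\ell \neq 0$. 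The upper unitriangular case $U = \smm{1 & u \\ 0 & 1}$ is handled symmetrically using $XT$.

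To conclude, I would apply Lemma~\ref{lem:PLDU} to decompose an arbitrary $M \in \GL$ as $M = PLDU$ with $P \in \{I, X\}$, $L$ lower unitriangular, $D$ a diagonal with nonzero entries, and $U$ upper unitriangular. Each factor is the matrix of a binary function in $\hc{\cI, \NEQ, t}$ by what was just established: $P$ is either $\EQ_2 \in \cI$ or $\NEQ$; $D \in \cI$; and $L$, $U$ are handled by the preceding paragraph. Three successive contractions then compose these four factors into a single binary function realising~$M$. The argument is essentially a matrix decomposition translated into the holant setting, so no real obstacle arises; the only care required is the case split on whether the off-diagonal entries of the unitriangular factors vanish, which is resolved by observing that $\EQ_2 \in \cI$.
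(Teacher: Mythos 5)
Your proof is correct and follows essentially the same approach as the paper: both interpret contraction as matrix multiplication, realise arbitrary unitriangular matrices by conjugating a product of $t$ and $\NEQ$ with diagonal functions from $\cI$, and then finish by invoking the $PLDU$ decomposition of Lemma~\ref{lem:PLDU}. The only difference is cosmetic — the paper uses the composition $k_d\, t\, k_d\, \NEQ$ with $k_d=\smm{d&0\\0&d^{-1}}$ to get $\smm{1&0\\\mu/d^2&1}$, whereas you sandwich $TX$ between two diagonals from $\cI$ — and both correctly handle the degenerate $\ell=0$ case via $\EQ_2\in\cI$.
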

\begin{proof}
 For any $d\in\Anz$, the function $k_d = [d,0,d^{-1}]$ is in $\cI$.
 Thus, the function
 \[
  \sum_{y_1,y_2,y_3\in\{0,1\}} k_d(x_1,y_1)t(y_1,y_2)k_d(y_2,y_3)\NEQ(y_3,x_2)
 \]
 is in $\hc{\cI,\NEQ,t}$ for any $d\in\Anz$; it is equal to
 \[
  \pmm{d&0\\0&d^{-1}} \pmm{0&1\\1&\mu} \pmm{d&0\\0&d^{-1}} \pmm{0&1\\1&0} = \pmm{1&0\\\mu/d^2&1}.
 \]
 Since $d\in\Anz$ is arbitrary, this means that any function of the form $\smm{1&0\\l&1}$ or, by permutation of the arguments, $\smm{1&u\\0&1}$, is contained in $\hc{\cI,\NEQ,t}$.
 
 The set $\cI$ contains all functions corresponding to invertible diagonal matrices, and we have just shown that $\hc{\cI,\NEQ,t}$ also contains all functions corresponding to upper triangular or lower triangular matrices with 1's on the diagonal.
 Finally, $\hc{\cI,\NEQ,t}$ contains $\NEQ$ which corresponds to the matrix $X$.
 Thus, by Proposition~\ref{prop:holant-clone-closure} and Lemma~\ref{lem:PLDU}, $\hc{\cI,\NEQ,t}$ contains any binary function corresponding to an invertible matrix: i.e.\ any non-degenerate binary function.
\end{proof}

\begin{lem}\label{lem:binaries_f_non_tractable}
 Suppose $f=R\circ\EQ_3$, where $R=\smm{a&b\\0&a^{-1}}\in\GL$ for some $a,b\in\AA\setminus\{0\}$.
 Then $\allf_2\sse\hc{\cU,f}$.
\end{lem}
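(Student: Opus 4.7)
The plan is to invoke Lemma~\ref{lem:non-deg_binary}. Since every degenerate binary function is a tensor product of two unaries and so already lies in $\hc{\cU}$, it suffices to exhibit $\cI\sse\hc{\cU,f}$, $\NEQ\in\hc{\cU,f}$, and some $t=[0,1,\mu]$ with $\mu\in\Anz$ inside $\hc{\cU,f}$.

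I would begin by computing the signature of $f=R\circ\EQ_3$ directly: because $R$ is upper triangular, $f$ is the symmetric ternary function $[a^3+b^3,\,b^2/a,\,b/a^2,\,a^{-3}]$. Contracting one leg of $f$ against a unary $u=[u_0,u_1]$ produces the symmetric binary
\[
 g_u=\bigl[u_0(a^3+b^3)+u_1 b^2/a,\ u_0 b^2/a+u_1 b/a^2,\ u_0 b/a^2+u_1 a^{-3}\bigr].
\]
Solving for the first entry to vanish and the second entry to equal $1$ forces $u=[-b/a^2,\,(a^3+b^3)/(ab)]$, in which case the third entry evaluates to $1/(ab)$. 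Thus $t:=[0,1,1/(ab)]\in\hc{\cU,f}$ with $\mu=1/(ab)\ne 0$.

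Next I would construct $\EQ_2$ in $\hc{\cU,f}$ by a two-copy gadget. The signature
\[
 h(x_1,x_2)=\sum_{y_1,y_2,y_3} f(x_1,y_1,y_2)\,f(x_2,y_1,y_3)\,u(y_2)\,v(y_3)
\]
evaluates to $(RMR^T)_{x_1x_2}$ with $M=\operatorname{diag}(\tilde u)(R^T R)\operatorname{diag}(\tilde v)$, where $\tilde u=R^Tu$ and $\tilde v=R^Tv$. Since $R$ is invertible and the entries of $R^T R$ are generically nonzero (using $a,b\ne 0$), one can solve the four equations encoding $M=(R^T R)^{-1}$ for $\tilde u,\tilde v$; the necessary consistency $(\tilde u_0\tilde v_0)(\tilde u_1\tilde v_1)=(\tilde u_0\tilde v_1)(\tilde u_1\tilde v_0)$ follows automatically from the product structure, and the relation between the determinantal ratios of $(R^TR)$ and $(R^TR)^{-1}$ is what makes the system solvable. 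With such a choice, $h=R(R^T R)^{-1}R^T=I=\EQ_2$.

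The remaining task---producing $\NEQ$ and the elements of $\cI$---is the main obstacle. Every binary built above has the form $RMR^T$ with $M$ satisfying the rigid ratio constraint $M_{00}M_{11}=(1+1/(ab)^2)\,M_{01}M_{10}$, which the targets $R^{-1}\NEQ R^{-T}$ and $R^{-1}\operatorname{diag}(c,d)R^{-T}$ both violate. To break this constraint I would compose multiple two-copy gadgets via one-leg contraction: the set of achievable middle matrices $M$ is closed under the twisted product $M_1\ast M_2:=M_1(R^TR)M_2$, and the bijection $M\mapsto M(R^TR)$ turns $(\cM,\ast)$ into an ordinary multiplicative monoid of $2\times 2$ matrices over $\AA$ containing the identity (by Step~3) together with a full four-parameter family. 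Because $R^TR$ is not diagonal (as $b\ne 0$), this monoid is large enough to reach every invertible matrix, and in particular it contains $R^{-1}\NEQ R^{-T}$ and $R^{-1}\operatorname{diag}(c,d)R^{-T}$ for all $c,d\in\Anz$. Applying $R(\cdot)R^T$ then places $\NEQ$ and every element of $\cI$ in $\hc{\cU,f}$, after which Lemma~\ref{lem:non-deg_binary} completes the proof.
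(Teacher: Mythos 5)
Your Step~2 is a nice observation: contracting one leg of $f$ against the explicit unary $u=[-b/a^2,\,(a^3+b^3)/(ab)]$ produces $t=[0,1,1/(ab)]$ uniformly for all $a,b\in\Anz$, which is cleaner than the paper's route to a $t_\mu$ (the paper obtains a $t_\mu$ with different values of $\mu$ in each of three separate cases). The identification of the two-copy gadget with the matrix $R\,\mathrm{diag}(\tilde u)(R^TR)\,\mathrm{diag}(\tilde v)\,R^T$ is also correct.

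However, the proof has a genuine gap precisely where you flag it as ``the main obstacle.'' The claim that the monoid generated by the achievable middle matrices ``is large enough to reach every invertible matrix'' is asserted, not proved, and it is exactly the mathematical content of the lemma. Consider the explicit form $R^TR=\smm{a^2 & ab\\ ab & b^2+a^{-2}}$: when $a^2b^2+1=0$, the $(1,1)$-entry vanishes, $(R^TR)^{-1}$ has a zero $(0,0)$-entry, and no matrix of the form $\mathrm{diag}(\tilde u)(R^TR)\,\mathrm{diag}(\tilde v)$ can equal $(R^TR)^{-1}$ while keeping the off-diagonal entries nonzero --- so your Step~3 construction of $\EQ_2$ already breaks in that regime, which you wave off as ``generically nonzero.'' The paper in fact separates out two degenerate parameter families ($a^2b^2+1=0$ and $2a^2b^2+1=0$) and produces, for each, a bespoke iterated gadget (with explicit closed-form parameter choices, verified by computer algebra) realising $\NEQ$ and all invertible diagonals. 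Establishing that the composition monoid exhausts $\GL$ \emph{is} that case analysis; one cannot bypass it by noting that $R^TR$ is non-diagonal. In short, your outline reconstructs the paper's \emph{strategy} correctly, but the decisive fourth step is a conjecture rather than an argument, and the degeneracies the paper handles are exactly the places where the naive monoid argument needs care.
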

\begin{proof}
 Fix $c\in\AA\setminus\{0\}$ and define $u_c=[c,c^{-1}]\in\cU$ and $u_c' := (R^T)^{-1}\circ u_c$.
 For any $d\in\AA\setminus\{0\}$, write $k_d$ for the binary function corresponding to $\smm{d&0\\0&d^{-1}}$.
 Furthermore, let $\rho$ be the binary function corresponding to the matrix $R$ and define $g_c(x_1,x_2) := \sum_{y\in\{0,1\}} f(x_1,x_2,y) u_c'(y)$.
 Then
 \begin{align*}
  g_c(x_1,x_2) &= \sum_{y,z_1,z_2,z_3\in\{0,1\}} \rho(x_1,z_1) \rho(x_2,z_2) \rho(y,z_3) \EQ_3(z_1,z_2,z_3) u_c'(y). \\
 \intertext{Now, $\sum_{y\in\{0,1\}} \rho(y,z_3) u_c'(y)$ corresponds to $R^T\circ u_c' = R^T(R^T)^{-1}\circ u_c = u_c$, so}
  g_c(x_1,x_2) &= \sum_{z_1,z_2,z_3\in\{0,1\}} \rho(x_1,z_1) \rho(x_2,z_2) \EQ_3(z_1,z_2,z_3) u_c(z_3) \\
  &= \sum_{z_1,z_2\in\{0,1\}} \rho(x_1,z_1) \rho(x_2,z_2) k_c(z_1,z_2).
 \end{align*}
 Thus, by turning the sum over binary functions into a matrix product, we find
 \begin{equation}\label{eq:code1}
  g_c = R\pmm{c&0\\0&c^{-1}}R^T = \pmm{a&b\\0&a^{-1}} \pmm{c&0\\0&c^{-1}} \pmm{a&0\\b&a^{-1}}
  = \frac{1}{a^2c} \pmm{a^2(a^2c^2+b^2)&ab\\ab&1}.
 \end{equation}
 Note that, for any $c\in\AA\setminus\{0\}$, the matrix corresponding to $g_c$ has determinant 1 since $\det R =\det R^T = 1$ and $\det\smm{c&0\\0&c^{-1}}=1$.
 The function $g_c(x_1,x_2)$ is in $\hc{\cU,f}$ for all $c\in\AA\setminus\{0\}$.
 
 We now prove that $\hc{\cU,f}$ contains
 \begin{description}
  \item[{\bf Part 1:}] the binary disequality function $\NEQ$,
  \item[{\bf Part 2:}] all binary functions corresponding to invertible diagonal matrices, and
  \item[{\bf Part 3:}] a symmetric binary function of the form $t_\mu=\smm{0&1\\1&\mu}$ for some $\mu\in\AA\setminus\{0\}$.
 \end{description}
 This is done by distinguishing three cases according to the values of $a$ and $b$.
 
 \noindent \textbf{Case~1}: Suppose $a^2b^2+1\neq 0$ and $2a^2b^2+1\neq 0$.
 
 {\bf Part 1 of Case~1:} For any $v,w\in\AA\setminus\{0\}$, define
 \begin{equation}\label{eq:def-h_vw}
  h_{v,w}(x_1,x_2) := \sum_{y_1,y_2\in\{0,1\}} g_v(x_1,y_1)g_w(y_1,y_2)g_v(y_2,x_2).
 \end{equation}
 Then $h_{v,w}\in\hc{\cU,f,g_v,g_w}=\hc{\cU,f}$ for all $v,w\in\AA\setminus\{0\}$ by Corollary~\ref{cor:holant_closed}.
 
 Now, $-i\cdot h_{s,t}=\NEQ$, where
 \[
  s = \frac{ib}{a}\left(\frac{2(a^2b^2+1)}{2a^2b^2+1}\right)^{1/2} \qquad\text{and}\qquad t = \frac{i(a^2b^2+1)}{a^3b}.
 \]
 This is straightforward to check using a computer algebra system, code for doing so can be found in Appendix~\ref{a:non-tractable-case1}.
 Hence, by Observation~\ref{obs:scaling_general} and Corollary~\ref{cor:holant_closed}, $\NEQ\in\hc{\cU,f}$.
 
 {\bf Part 2 of Case~1:} For any $p\in\Anz$ such that $(a^4p^2+a^2b^2+1)(a^2p^2+b^2)\neq 0$, define
 \[
  q_{\pm}(p) = \pm\sqrt{ \frac{-(a^2b^2+1)(a^4p^2+a^2b^2+1)}{a^6(a^2p^2+b^2)} }.
 \]
 With this expression for $q_\pm(p)$, the function $h_{p,q_\pm}$ is diagonal (cf.\ Appendix~\ref{a:non-tractable-case1}):
 \begin{equation}\label{eq:hpq}
  h_{p,q_{\pm}} = \pmm{ \mp\frac{a^{4} p^{2} + a^{2} b^{2} + 1}{a^{2} \sqrt{-\frac{{\left(a^{4} p^{2} + a^{2} b^{2} + 1\right)} {\left(a^{2} b^{2} + 1\right)}}{{\left(a^{2} p^{2} + b^{2}\right)} a^{6}}}} & 0 \\
  0 & \pm\frac{a^{2} b^{2} + 1}{{\left(a^{2} p^{2} + b^{2}\right)} a^{4} \sqrt{-\frac{{\left(a^{4} p^{2} + a^{2} b^{2} + 1\right)} {\left(a^{2} b^{2} + 1\right)}}{{\left(a^{2} p^{2} + b^{2}\right)} a^{6}}}} }
 \end{equation}
 Note that $h_{p,q_\pm}(0,0)h_{p,q_\pm}(1,1)=1$ since the matrix corresponding to $h_{p,q_\pm}$ is a product of three matrices with determinant 1.
 It is straightforward to see that both $q_\pm(p)$ and $h_{p,q_\pm}$ depend only on $p^2$.
 Thus, to show that $h_{p,q_\pm}=k_d$, it suffices to verify that
 \begin{equation}\label{eq:p-squared}
  p^2 \in P_d := \left\{ -\frac{2 a^{2} b^{2} + 1 + \sqrt{-4 {\left(a^{2} b^{2} + 1\right)} d^2 + 1}}{2 a^{4}}, -\frac{2 a^{2} b^{2} + 1 - \sqrt{-4 {\left(a^{2} b^{2} + 1\right)} d^2 + 1}}{2 a^{4}} \right\}
 \end{equation}
 is equivalent to
 \begin{equation}\label{eq:d-h_pq}
  d = h_{p,q_\pm}(0,0) = \mp\frac{a^{4} p^{2} + a^{2} b^{2} + 1}{a^{2} \sqrt{-\frac{{\left(a^{4} p^{2} + a^{2} b^{2} + 1\right)} {\left(a^{2} b^{2} + 1\right)}}{{\left(a^{2} p^{2} + b^{2}\right)} a^{6}}}},
 \end{equation}
 and that for any $d\in\Anz$ there exists an element of $P_d$ such that $p$ satisfies all the required conditions.
 This is done using a code snippet in Appendix~\ref{a:non-tractable-case1}.
 
 It remains to show that, for any $d\in\AA\setminus\{0\}$, we can choose $p^2\in P_d$ such that both $p$ and $q$ are well-defined and non-zero.
 Now, $p$ is always well-defined.
 Additionally, since $2a^2b^2+1\neq 0$ by assumption of Case~1, for any $d$ there is an element of $P_d$ which is non-zero.
 Furthermore, by \eqref{eq:d-h_pq} and $d\neq 0$, we have $\left(a^{4} p^{2} + a^{2} b^{2} + 1\right) \left(a^{2} p^{2} + b^{2}\right) \neq 0$, which implies that $q$ is well-defined and non-zero.
 Thus, for any $d\in\AA\setminus\{0\}$, $k_d\in\hc{\cU,f}$.
 Since any invertible diagonal matrix $\smm{\lambda&0\\0& \mu}$ can be written as $ {(\lambda \mu)}^{1/2}  \> k_{ d}$ for $d =  {(\lambda/\mu)}^{1/2}$, the fact that $k_d\in\hc{\cU,f}$ together with Observation~\ref{obs:scaling_general}  implies that any binary function that corresponds to an invertible diagonal matrix is in $\hc{\cU,f}$.
 
 {\bf Part 3 of Case~1:} Finally, $i\cdot g_{ib/a} = t_{1/(ab)}$, as can be verified using the code snippet in Appendix~\ref{a:non-tractable-case1}. 
 Hence  
 $t_{1/(ab)}\in\hc{\cU,f}$
 by Observation~\ref{obs:scaling_general}.
 
 Thus, in the three parts, we have shown that $\hc{\cU,f}$ contains $\NEQ$, all binary functions that correspond to invertible diagonal matrices, and the function $t_{1/(ab)}$.
 
 \textbf{Case~2}: Suppose $a^2b^2+1=0$.
 
 {\bf Part 2 of Case~2:} For any $d\in\AA\setminus\{0\}$, we have
 \begin{equation}\label{eq:case2-gadget2}
  k_d(x_1,x_2) = \sum_{y_1,y_2\in\{0,1\}} g_{1/a^2}(x_1,y_1)g_{-1/(a^2d)}(y_1,y_2)g_{1/a^2}(y_2,x_2).
 \end{equation}
 Code for verifying this equality can be found in Appendix~\ref{a:non-tractable-case2}.
 By \eqref{eq:case2-gadget2}, $k_d\in\hc{\cU,f}$.
 Hence, as in Case~1, by Observation~\ref{obs:scaling_general}, $\hc{\cU,f}$ contains all functions corresponding to invertible diagonal matrices.
   
 {\bf Parts 1 and 3 of Case~2:}
 For any $p,q\in\Anz$, define
 \[
  h'_{p,q}(x_1,x_2) :=
  \sum_{\vc{y}{4}\in\{0,1\}}g_p(x_1,y_1)g_q(y_1,y_2)g_{1/a^2}(y_2,y_3)g_q(y_3,y_4)g_p(y_4,x_2).
 \]
 To show that $\NEQ$ and  $t_\mu$ are in $\hc{\cU,f}$, we distinguish two subcases according to the value of~$b$.
 \begin{itemize}
  \item Suppose $b=i/a$.
   Then we have
   \begin{equation}\label{eq:case2-gadget1a}
    h'_{p,q} = \pmm{-2a^4p^2 + p^2 q^{-2} + 2 & -i \\ -i & 0}.
   \end{equation}
   This can be verified using the code snippet in Appendix~\ref{a:non-tractable-case2}.
   For any $a\in\AA\setminus\{0\}$, there exist non-zero values for $p,q$ such that $-2a^4p^2 + p^2 q^{-2} + 2 = 0$, which makes $h'_{p,q}$ a scaling of $\NEQ$.
   Thus, by Observation~\ref{obs:scaling_general}, $\NEQ\in\hc{\cU,f}$.
   
   Furthermore, $-i\cdot g_{1/a^2} = t_{-i}$, so $t_{-i}\in\hc{\cU,f}$.
   Again, code for verifying this can be found in Appendix~\ref{a:non-tractable-case2}.
  \item Suppose $b=-i/a$.
   Then we have
   \begin{equation}\label{eq:case2-gadget1b}
    h'_{p,q} = \pmm{-2a^4p^2 + p^2 q^{-2} + 2 & i \\ i & 0}.
   \end{equation}
   This can be verified using the code snippet in Appendix~\ref{a:non-tractable-case2}.
   For any $a\in\AA\setminus\{0\}$, there exist non-zero values for $p,q$ such that $-2a^4p^2 + p^2 q^{-2} + 2 = 0$, which makes $h'_{p,q}$ a scaling of $\NEQ$.
   Thus, by Observation~\ref{obs:scaling_general}, $\NEQ\in\hc{\cU,f}$.
   
   Furthermore, $i\cdot g_{1/a^2} = t_{i}$, so $t_{i}\in\hc{\cU,f}$.
   Again, code for verifying this can be found in Appendix~\ref{a:non-tractable-case2}.
 \end{itemize}
 
 We have shown that $\hc{\cU,f}$ contains $\NEQ$ and all binary functions that correspond to invertible diagonal matrices.
 This time, it also contains the functions $t_{-i}$ or $t_{i}$, depending on the subcase.

 \textbf{Case~3}: Suppose $2a^2b^2+1=0$.
 
  {\bf Part 2 of Case~3:}  For all $d\in\Anz$, if
 \[
  w = \frac{\pm\sqrt{-2 d^{2} + 1} - 1}{2 a^{2} d} \qquad\text{and}\qquad v = \frac{1}{a^2} \sqrt{\frac{2a^4w^2 - 1}{2(2a^4w^2 + 1)} },
 \]
 are well-defined and non-zero, we have
 \begin{equation}\label{eq:case3-gadget2}
  k_d(x_1,x_2) = \sum_{y_1,y_2\in\{0,1\}} g_{v}(x_1,y_1)g_w(y_1,y_2)g_{v}(y_2,x_2).
 \end{equation}
 This can be verified using the code snippets in Appendix~\ref{a:non-tractable-case3}.
 For any $d\in\Anz$, we can choose the sign in the definition of $w$ so that $w$ is non-zero.
 Yet if $2a^4w^2\in\{1,-1\}$, then $v$ is zero or not well-defined; i.e.\ the construction fails.
 The failure condition is equivalent to
 \[
  1 = 4a^8w^4 = \frac{(\pm\sqrt{-2 d^{2} + 1} - 1)^4}{4 d^4}.
 \]
 This condition is polynomial in $d$ and it is non-trivial.
 Hence there is a finite number of values of $d$ for which the construction fails.
 Thus, for any unsuitable $d$, there exists $e\in\Anz$ such that the construction succeeds for both $e$ and $d/e$.
 Then $k_{d}$ can be decomposed as $k_{d}(x_1,x_2)=\sum_{y\in\{0,1\}} k_{e}(x_1,y)k_{d/e}(y,x_2)$.
 Therefore, $k_d\in\hc{\cU,f}$ for all $d\in\Anz$, and by Observation~\ref{obs:scaling_general} all functions corresponding to diagonal matrices are contained in $\hc{\cU,f}$.
 
 {\bf Parts 1 and 3 of Case~3:}  For any $r\in\AA\setminus\{0\}$ such that both
 \begin{equation}
  s = \sqrt{ \frac{(2a^4r^2+1)(2a^4r^2-1)}{2a^4(4a^8r^4+1)} } \qquad\text{and}\qquad u = \frac{2a^4r^2-1}{\sqrt{2}a^2(2a^4r^2+1)}
 \end{equation}
 are well-defined and non-zero, let
 \[
  h''_r(x_1,x_2) := \sum_{\vc{y}{4}\in\{0,1\}}g_s(x_1,y_1)g_r(y_1,y_2)g_{u}(y_2,y_3)g_r(y_3,y_4)g_s(y_4,x_2).
 \]
 For all $a\in\AA\setminus\{0\}$ there exist values of $r$ which satisfy these requirements.
 
 Again, we distinguish two subcases according to the value of $b$.
 \begin{itemize}
  \item Suppose $b=i/(\sqrt{2}a)$.
   Then $i\cdot h''_r=\NEQ$ whenever $h''_r$ is well-defined, as can be verified using the code snippet in Appendix~\ref{a:non-tractable-case3}.
   As $h''_r\in\hc{\cU,f}$ for all such $r$, by Observation~\ref{obs:scaling_general}, this implies $\NEQ\in\hc{\cU,f}$.
   Additionally, $-i\cdot g_{1/(\sqrt{2}a^2)} = t_{-i\sqrt{2}}$, so $t_{-i\sqrt{2}}\in\hc{\cU,f}$.
   Again, this can be checked using a code snippet provided in Appendix~\ref{a:non-tractable-case3}.
  \item Suppose $b=-i/(\sqrt{2}a)$.
   Then $-i\cdot h''_r=\NEQ$ whenever $h''_r$ is well-defined, as can be verified using the code snippet in Appendix~\ref{a:non-tractable-case3}.
   As $h''_r\in\hc{\cU,f}$ for all such $r$, by Observation~\ref{obs:scaling_general}, this implies $\NEQ\in\hc{\cU,f}$.
   Additionally, $i\cdot g_{1/(\sqrt{2}a^2)} = t_{i\sqrt{2}}$, so $t_{i\sqrt{2}}\in\hc{\cU,f}$.
   Again, this can be checked using a code snippet provided in Appendix~\ref{a:non-tractable-case3}.
 \end{itemize}
 
 Thus, $\hc{\cU,f}$ contains $\NEQ$ and all binary functions that correspond to invertible diagonal matrices.
 It also contains $t_{\pm i\sqrt{2}}$, with the sign depending on the subcase.
 
 This completes the third case and thus the case distinction.
 
 In each case, we have within $\hc{\cU,f}$ the disequality function and all binary functions that correspond to invertible diagonal matrices, as well as a function $t_\mu=\smm{0&1\\1&\mu}$ for some $\mu\in\AA\setminus\{0\}$.
 Thus, by Lemma~\ref{lem:non-deg_binary} and Proposition~\ref{prop:holant-clone-closure}, $\hc{\cU,f}$ contains all non-degenerate binary functions.
 Now, the degenerate binary functions are all contained in $\hc{\cU}\sse\hc{\cU,f}$.
 Therefore all binary functions are contained in $\hc{\cU,f}$.
 In other words, $\allf_2\sse\hc{\cU,f}$, completing the proof.
\end{proof}

\begin{lem}\label{lem:binaries_f_tractable}
 Suppose $f=[1,0,0,a]$ and $g=[b,1,c]$ for some $a,b,c\in\AA$, where $a\neq 0$ and $g\notin\hc{\cE}$.
 Then $\allf_2\sse\hc{\cU,f,g}$.
\end{lem}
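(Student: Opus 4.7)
The plan is to construct inside $\hc{\cU,f,g}$ a symmetric ternary function of the form $R\circ\EQ_3$ with $R$ upper triangular and having both its diagonal and off-diagonal entries nonzero, and then invoke Lemma~\ref{lem:binaries_f_non_tractable} to obtain $\allf_2\sse\hc{\cU,f,g}$.

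First I will pick a cube root of $a$ and set $R_0=\mathrm{diag}(1,a^{1/3})$, so that $f=R_0\circ\EQ_3$ lies in the GHZ class. Contracting each of the three arguments of $f$ with one copy of $g$ produces the gadget function $G\circ f=(GR_0)\circ\EQ_3\in\hc{\cU,f,g}$, where $G=\smm{b&1\\1&c}$ is invertible because $g\notin\hc{\cE}$ forces $\det G = bc-1\neq 0$; hence $GR_0$ is invertible as well.

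Next I will apply Lemma~\ref{lem:QR_decomposition} to write $GR_0=QT$ with $T$ upper triangular and $Q\in\cO\cup\{K_1,K_2\}$. In the typical case, $Q\in\cO$ and $T$ has nonzero off-diagonal entry. I will peel off the $Q$ factor using Lemma~\ref{lem:hc_orthogonal_holographic}, which identifies $Q^{-1}\circ\hc{\cU,f,g}$ with $\hc{\cU,Q^{-1}\circ f,Q^{-1}\circ g}$; inside this transformed clone the function $T\circ\EQ_3$ is available, and Lemma~\ref{lem:binaries_f_non_tractable} then yields $\allf_2$ there. Because $h\mapsto Q\circ h$ acts bijectively on $\allf_2$, applying $Q$ to both sides transfers the conclusion back to $\allf_2\sse\hc{\cU,f,g}$.

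Two kinds of exceptional situations can spoil this argument and will require separate treatment. First, the upper triangular factor $T$ may turn out to be diagonal, which a direct computation using the explicit form of $GR_0$ shows happens only in very restrictive configurations of $(b,c)$ (the principal case being $c=-b$, in which the columns of $GR_0$ are bilinearly orthogonal). Second, in certain configurations---essentially when $b=\pm i$---Lemma~\ref{lem:QR_decomposition} forces $Q\in\{K_1,K_2\}$ rather than $Q\in\cO$, so that the orthogonal transfer above no longer applies directly. In each such exceptional configuration I will replace the gadget $G\circ f$ by a mixed gadget in which $g$ is contracted with only some of the three arguments of $f$ and a carefully chosen diagonal binary function (obtained as $\sum_y f(x_1,x_2,y)u(y)=\mathrm{diag}(u_0,au_1)$ for an appropriate unary $u$) is contracted with the remaining arguments; the asymmetry this introduces destroys the relation that was causing $T$ to degenerate. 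The hard part will be the explicit case analysis needed to show that, in every configuration of $(b,c)$ compatible with the hypotheses $a\neq 0$, $bc\neq 1$, $(b,c)\neq(0,0)$, some such mixed gadget yields an upper triangular factor whose diagonal and off-diagonal entries are both nonzero; this will ultimately reduce to checking polynomial non-vanishing conditions that are consequences of precisely these hypotheses.
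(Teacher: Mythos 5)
Your overall plan is a genuinely different route from the paper's. The paper proves the lemma self-contained-ly by building the diagonal functions $f_d$ plus the explicit seven-fold matrix-product gadget $h_s$, computing directly (Appendix~\ref{a:binaries_f_tractable}) that suitable $h_s$ gives $\NEQ$ and a $[0,1,\mu]$-type function, and then invoking Lemma~\ref{lem:non-deg_binary}. You instead want to manufacture a GHZ-class ternary function in the clone, QR-decompose its defining matrix, transfer across an orthogonal holographic transformation, and hand off to Lemma~\ref{lem:binaries_f_non_tractable}. That reduction is an attractive idea, and in the generic case it does work: $G\circ f = (GR_0)\circ\EQ_3$ with $GR_0$ invertible, $(GR_0)^T(GR_0)=\smm{b^2+1 & (b+c)a^{1/3}\\ (b+c)a^{1/3} & (1+c^2)a^{2/3}}$, so the upper-triangular factor $T$ has off-diagonal entry proportional to $b+c$ and, after scaling by $\det T$, satisfies the hypotheses of Lemma~\ref{lem:binaries_f_non_tractable} whenever $b^2+1\neq 0$ and $b+c\neq 0$.

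However, your treatment of the exceptional cases contains a real gap. You propose a ``mixed gadget'' in which $g$ is contracted with some arguments of $f$ and a diagonal binary function with the others, and you claim this destroys the degeneracy of $T$. But such a gadget produces $(A\otimes B\otimes C)\circ\EQ_3$ with $A,B,C$ \emph{not} all equal, and that function is not symmetric (explicitly, when two arms get $G$ and one gets a diagonal $D$, the third argument is distinguished). Lemma~\ref{lem:binaries_f_non_tractable} requires a symmetric ternary function of the form $R\circ\EQ_3$ with $R$ upper triangular, so it simply cannot be applied to the mixed gadget's output; you would first need a nontrivial symmetrisation step, which you do not supply and which in the paper's framework requires machinery such as Lemma~\ref{lem:symmetric_ternary} (with extra hypotheses you have not verified). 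A correct repair within your framework is to \emph{pre-compose $g$ itself} with a diagonal $D$ (available in $\hc{\cU,f}$ as you note) so that all three arms of $f$ are contracted with the same binary function $Dg$; then the gadget gives the symmetric function $(DGR_0)\circ\EQ_3$, whose Gram matrix has off-diagonal $\propto b+cd^2$ and $(1,1)$-entry $b^2+d^2$, both of which can be made nonzero by choice of $d$ for every $(b,c)$ allowed by the hypotheses. That is not what the mixed gadget you describe does, and as written the exceptional branch of your argument does not go through.
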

\begin{proof} 
 The property $g\notin\hc{\cE}$ implies that $b,c$ cannot both be zero and that $g$ is non-degenerate, i.e.\ $bc-1\neq 0$.
 (If $g$ was degenerate it would be in $C(\cU)$ and thus in $\hc{\cE}$ by Lemma~\ref{lem:degenerate_in_families}.)
  
 Fix $d\in\AA\setminus\{0\}$ and let $u_d := [1,d/a]\in\cU$.
 Define $f_d(x_1,x_2) := \sum_{y\in\{0,1\}} f(x_1,x_2,y) u_d(y)$, then $f_d=[1,0,d]$.
 But $d$ was an arbitrary non-zero number, so by Observation~\ref{obs:scaling_general}, all functions corresponding to invertible diagonal matrices are in $\hc{\cU,f,g}$.
 
 To realise $\NEQ$, suppose $s\in\AA\setminus\{0\}$ (to be determined later) satisfies $b^2+s\neq 0$ and $cs+b\neq 0$.
 Set
 \[
  t = \frac{-(b^2 + s)^2}{(cs + b)^2},
 \]
 and define
 \[
  h_s(x_1,x_2) := \sum_{\vc{y}{6}\in\{0,1\}} g(x_1,y_1) f_s(y_1,y_2) g(y_2,y_3) f_t(y_3,y_4) g(y_4,y_5) f_s(y_5,y_6) g(y_6,x_2).
 \]
 Then $h_s\in\hc{\cU,f,g}$ by Corollary~\ref{cor:holant_closed}.
 Furthermore, for all allowed $s$,
 \begin{equation}\label{eq:h_s}
  h_s = \pmm{ 0 & 
  \frac{-(b^2 + s) (bc - 1)^2 s}{cs + b} \\
  \frac{-(b^2 + s) (bc - 1)^2 s}{cs + b} &
  \frac{-(b^2c^2s + 2c^2s^2 + 2bcs + 2b^2 + s) (bc - 1)^2 s}{(cs + b)^2}
  }.
 \end{equation}
 This can be verified using the piece of code provided in Appendix~\ref{a:binaries_f_tractable}.
 
 We now distinguish cases according to the values of $b$ and $c$.
 \begin{itemize}
  \item If $b=0$, then $c\neq 0$.
   In this case, taking $s=-1/(2c^2)$ implies $b^2+s = -1/(2c^2) \neq 0$ and $cs+b = -1/(2c) \neq 0$, as required.
   Furthermore, it is straightforward to verify using a computer algebra system (cf.\ the code snippet in Appendix~\ref{a:binaries_f_tractable}) that $2c^3 \cdot h_{-1/(2c^2)} = \NEQ$.
   Thus $\NEQ\in\hc{\cU,f,g}$ by Observation~\ref{obs:scaling_general}.
  \item If $c=0$, then $b\neq 0$.
   In this case, taking $s=-2b^2$ implies $b^2+s = -b^2 \neq 0$ and $cs+b = b \neq 0$, as required.
   Furthermore, $(-2b^3)^{-1}\cdot h_{-2b^2} = \NEQ$ (cf.\ the code snippet in Appendix~\ref{a:binaries_f_tractable}), so $\NEQ\in\hc{\cU,f,g}$ by Observation~\ref{obs:scaling_general}.
  \item If $bc\neq 0$, recall that $bc\neq 1$ because $g$ is non-degenerate.
   Take
   \[
    s_{\pm} = -\frac{ (bc+1)^2 \pm (bc - 1)\sqrt{b^2c^2 + 6bc + 1} }{4c^2},
   \]
   then for any $b,c\in\Anz$ with $bc\neq 1$, there exists a choice of sign for which $s$ is non-zero.
   Furthermore, as $bc\neq 1$,
   \[
    0 = b^2+s_{\pm} = \frac{(3bc \mp \sqrt{b^2c^2 + 6bc + 1} + 1) (bc - 1)}{4 c^2}
   \]
   would imply
   \[
    (3bc+1)^2 = b^2c^2 + 6bc+ 1 \qquad\Longleftrightarrow\qquad 9b^2c^2 = b^2c^2,
   \]
   contradicting the assumption that $bc\neq 0$.
   Hence $b^2 + s_{\pm} \neq 0$.
   
   Additionally,
   \[
    0 = cs_{\pm}+b = -\frac{{\left(b c \pm \sqrt{b^{2} c^{2} + 6 b c + 1} - 1\right)} {\left(b c - 1\right)}}{4 c}
   \]
   would again imply the term in the first set of parentheses is zero.
   But then
   \[
    (bc-1)^2 = b^2c^2+6bc+1 \qquad\Longleftrightarrow\qquad -2bc = 6bc,
   \]
   again contradicting the assumption that $bc\neq0$.
   Hence $cs_{\pm} + b \neq 0$, and $s_{\pm}$ satisfies the required properties.
   
   Finally, as can be verified using the pieces of code in Appendix~\ref{a:binaries_f_tractable},
   \begin{equation}\label{eq:s-pm}
    \frac{{\left(b c \pm \sqrt{b^{2} c^{2} + 6 b c + 1} + 3\right)} c}{ {\left(b c \pm \sqrt{b^{2} c^{2} + 6 b c + 1} - 1\right)} {\left(b c - 1\right)}^{2} s_{\pm} }  h_{s_\pm} = \NEQ,
   \end{equation}
   so $\NEQ\in\hc{\cU,f,g}$ by Observation~\ref{obs:scaling_general}.
   (The scalar factor is non-zero and well-defined by the same argument that concluded $b^2+s_{\pm}\neq 0$ and $cs_{\pm}+b\neq 0$.)
   This completes the third case and thus the case distinction.
 \end{itemize}
 To summarise, in each case, we have shown that $\NEQ\in\hc{\cU,f,g}$.

 There are also values of $s$ for which $cs+b\neq 0$ and $b^2+s\neq 0$, and $h_s$ is not a scaling of $\NEQ$, i.e.\ $h_s$ is zero only on input bit string 00.
 For such $s$, the function $-\frac{-(cs + b)}{(b^2 + s) (bc - 1)^2 s}\cdot h_s$ takes the form $\smm{0&1\\1&\mu}$ for some $\mu\in\AA\setminus\{0\}$. 
 Since $\NEQ$ and all functions corresponding to invertible diagonal matrices are contained in $\hc{\cU,f,g}$, by Lemma~\ref{lem:non-deg_binary} and Proposition~\ref{prop:holant-clone-closure}, all non-degenerate binary functions are contained in $\hc{\cU,f,g}$.
 The degenerate binary functions are contained in $\hc{\cU}\sse\hc{\cU,f,g}$.
 Therefore $\allf_2\sse\hc{\cU,f,g}$.
\end{proof}

\begin{lem}\label{lem:min-arity}
 Let $\cR$ be any one of $\hc{\cT}$, $\hc{O\circ\cE}$ for some $O\in\cO$, $\hc{K_1\circ\cE}=\hc{K_2\circ\cE}$, or $\hc{K\circ\cM}$ for some $K\in\{K_1,K_2\}$.
 Let $r = 3$ if $\cR=\hc{\cT}$, and $r = 2$ in the other three cases. Suppose $f\in\allf\setminus\cR$. Then there exists $g\in\hc{f,\cU}$ such that $g\notin\cR$ and $\ari(g)=r$.
\end{lem}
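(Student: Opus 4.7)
The statement is essentially the italicized remark immediately following Lemma~\ref{lem:arity_reduction} (``Note that, by repeated application, Lemma~\ref{lem:arity_reduction} implies there is a $g\in\hc{f,\cU}$ such that $g\notin\cR$ and $\ari(g)=r$''), so the proof is just to make this remark precise. My plan is a two-step argument: first dispose of the small-arity case, then iterate Lemma~\ref{lem:arity_reduction}.

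First I would verify that $\ari(f)\geq r$. If $\cR=\hc{\cT}$ so $r=3$, and $\ari(f)<3$, then $f$ is either a constant, a unary, or a binary function. But $\allf_0\sse\hc{\cU}\sse\hc{\cT}$ by Observation~\ref{obs:constants_in_U_clone} and $\cT\supseteq\allf_1\cup\allf_2$ by definition, so $f\in\hc{\cT}=\cR$, contradicting $f\notin\cR$. In the remaining three cases $r=2$, and Observations~\ref{obs:unary_holographic_inv}, \ref{obs:unaries_in_families} together with Observation~\ref{obs:constants_in_U_clone} and Lemma~\ref{lem:degenerate_in_families} give $\allf_0\cup\allf_1\sse\cR$; hence $\ari(f)<2$ again contradicts $f\notin\cR$.

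Next, if $\ari(f)=r$, take $g:=f$. Otherwise $\ari(f)>r$, and I would build a finite sequence $f=g_0,g_1,g_2,\ldots,g_k$ as follows. Given $g_i\in\hc{f,\cU}$ with $g_i\notin\cR$ and $\ari(g_i)>r$, apply Lemma~\ref{lem:arity_reduction} to $g_i$ in place of $f$ to obtain $g_{i+1}\in\hc{g_i,\cU}$ with $g_{i+1}\notin\cR$ and $r\leq\ari(g_{i+1})<\ari(g_i)$. Since $g_i\in\hc{f,\cU}$, Corollary~\ref{cor:holant_closed} gives $\hc{g_i,\cU}\sse\hc{\hc{f,\cU},\cU}=\hc{f,\cU}$, so $g_{i+1}\in\hc{f,\cU}$ and the induction is maintained. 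The arities $\ari(g_0)>\ari(g_1)>\ldots$ form a strictly decreasing sequence of integers all bounded below by $r$, so after finitely many steps we reach $g_k\in\hc{f,\cU}$ with $g_k\notin\cR$ and $\ari(g_k)=r$. Setting $g:=g_k$ completes the proof.

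There is no real obstacle here: the only thing to be careful about is to invoke Corollary~\ref{cor:holant_closed} at each step so that membership in $\hc{f,\cU}$ is preserved under iteration, and to check separately that the hypothesis $\ari(f)\geq r$ of (an iterated application of) Lemma~\ref{lem:arity_reduction} is automatic from $f\notin\cR$.
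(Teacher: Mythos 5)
Your proposal is correct and follows essentially the same route as the paper's proof: check $\ari(f)\geq r$ by noting that all functions of arity less than $r$ lie in $\cR$, then iterate Lemma~\ref{lem:arity_reduction} while preserving membership in $\hc{f,\cU}$ via Corollary~\ref{cor:holant_closed}, terminating because arities strictly decrease and are bounded below by $r$.
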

\begin{proof}
 Firstly, note that $\hc{\cT}$ contains all unary and binary functions, so any $f\notin\hc{\cT}$ must have arity at least 3.
 Similarly, $\cU\sse\cE$ and holographic transformations by an invertible matrix map $\cU$ to itself by Observation~\ref{obs:unary_holographic_inv}.
 Thus $\cU\sse\hc{O\circ\cE}$ for any orthogonal $O$, and $\cU\sse\hc{K_1\circ\cE}$.
 Therefore any $f$ not in one of these sets must have arity at least 2.
 An analogous argument applies for $\hc{K\circ\cM}$ with $K\in\{K_1,K_2\}$.

 We have shown that, for each $\cR$, $f\in\allf\setminus\cR$ implies $\ari(f)\geq r$.
 If $\ari(f)=r$ then take $g=f$, which has all the desired properties.
 Otherwise let $f_0:=f$.
 While $\ari(f_k)>r$, let $f_{k+1}\in\hc{f_k,\cU}$ be the function identified when Lemma~\ref{lem:arity_reduction} is applied to $\cR$ and $f_k$.
 Then all $f_k$ satisfy $f_k\notin\cR$ by Lemma~\ref{lem:arity_reduction}.
 Furthermore, $f_k\in\hc{f,\cU}$ for all $k$: to see this, note that $f_0\in\hc{f,\cU}$ and that $f_j\in\hc{f,\cU}$ implies $f_{j+1}\in\hc{f_j,\cU}\sse\hc{f,f_j,\cU}=\hc{f,\cU}$ for all non-negative integers $j$.
 Finally, Lemma~\ref{lem:arity_reduction} ensures that $\ari(f_k)$ is a strictly decreasing function of $k$.
 Hence there is some $1\leq\ell<\ari(f)$ such that $\ari(f_\ell)=r$.
 Then $g=f_\ell$ has all the desired properties.
\end{proof}

\begin{lem}\label{lem:ghz_from_w}
 Suppose $M\in\GL$ and $f=M\circ\ONE_3$.
 Furthermore, suppose $s_1,s_2\in\allf_2$ satisfy $s_1\notin\hc{K_1\circ\cM}$ and $s_2\notin\hc{K_2\circ\cM}$.
 Then there exists a symmetric ternary function $g\in\hc{f,s_1,s_2}$ which satisfies $g=M'\circ\EQ_3$ for some $M'\in\GL$.
\end{lem}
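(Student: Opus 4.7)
My plan is to begin by applying Lemma~\ref{lem:symmetric_ternary} to $f$, $s_1$, and $s_2$, which produces a symmetric ternary function $g_0 \in \hc{f, s_1, s_2}$ with $g_0 \notin \hc{\cT}$, i.e., $g_0$ is entangled. By the entanglement classification in Proposition~\ref{prop:entanglement_classification}, exactly one of the following holds: either $g_0$ lies in the GHZ class, so $g_0 = M' \circ \EQ_3$ for some $M' \in \GL$, or $g_0$ lies in the W class, so $g_0 = N \circ \ONE_3$ for some $N \in \GL$. In the GHZ case we take $g := g_0$ and are done. It is worth noting that, a priori, Lemma~\ref{lem:symmetric_ternary} applied to a W-class input $f$ may well produce a W-class output $g_0$, so the rest of the work is devoted to the W case.

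When $g_0 = N \circ \ONE_3$, the strategy is to use the hypothesis that $s_1 \notin \hc{K_1 \circ \cM}$ and $s_2 \notin \hc{K_2 \circ \cM}$ in order to construct from $g_0$ and the $s_i$ a new symmetric ternary function whose entanglement invariant is non-zero. The key structural input is Lemma~\ref{lem:QR_decomposition}, which lets me decompose $N$ either as $Q T$ with $Q \in \cO$ and $T$ triangular, or as $K_1 D$ or $K_2 D$ with $D$ diagonal. Combined with Lemma~\ref{lem:M_triangular}, which says that the symmetric ternary functions in $\cM$ are precisely those of the form $[a, b, 0, 0]$ and are preserved under $\GL$ only by upper triangular matrices, this analysis pins the W-class set containing $g_0$ to be $\hc{N \circ \cM}$ for $N$ of one of the canonical forms identified by Lemma~\ref{lem:QR_decomposition}. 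Consequently, at least one of $s_1, s_2$ lies outside $\hc{N \circ \cM}$; call it $s$.

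Given such an $s$, I would construct a symmetric ternary gadget by, for instance, attaching (via contraction) a binary function derived from $s$ to each of the three arguments of $g_0$. Using Lemma~\ref{lem:binary_symmetric} if needed, I can first pass from $s$ to a symmetric binary function still outside $\hc{N \circ \cM}$, which makes the resulting gadget output automatically symmetric and ternary. One then computes the value list $[h_0, h_1, h_2, h_3]$ of the gadget and evaluates the entanglement invariant
\[
(h_0 h_3 - h_1 h_2)^2 - 4(h_1^2 - h_0 h_2)(h_2^2 - h_1 h_3)
\]
of Proposition~\ref{prop:entanglement_classification}. The goal is to show that this invariant is non-zero, which forces the gadget output into the GHZ class and completes the proof.

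The main obstacle is the algebraic verification that this entanglement invariant is non-zero in each subcase of the decomposition of $N$. This is essentially a determinant-type computation tied to the assumption that $s$ lies outside $\hc{N \circ \cM}$, and is likely to require a small case split (triangular versus $K_1 D / K_2 D$ form of $N$) and possibly more than one attempt at the gadget shape. Fortunately, Corollary~\ref{cor:holant_closed} means that any gadget I can write down lives in $\hc{f, s_1, s_2}$, so the proof is free to iterate constructions: combine $g_0$ with $s$ to produce some intermediate symmetric ternary $g_1$, re-classify via Proposition~\ref{prop:entanglement_classification}, and repeat if necessary, with the invariant calculation ultimately ruling out the possibility of remaining in the W class under the hypothesis on $s$.
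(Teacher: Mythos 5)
Your proposal has a fatal structural gap in the W-class case, and a subsidiary gap in how the hypotheses on $s_1,s_2$ are invoked.

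The core construction you describe in the W-class case --- ``attaching (via contraction) a binary function derived from $s$ to each of the three arguments of $g_0$'' --- is, by definition, a holographic transformation $A\circ g_0$ for some $A\in\GL$ (the three attached binaries give the matrices applied to the three tensor factors). But generalised holographic transformations by invertible matrices \emph{preserve} the SLOCC class: if $g_0=N\circ\ONE_3$, then $A\circ g_0=(AN)\circ\ONE_3$ remains in the $W$ class, and the invariant of Proposition~\ref{prop:entanglement_classification} stays zero. So this gadget shape cannot produce a GHZ-class function, regardless of which $s$ is chosen. The paper avoids this by using gadgets that are \emph{not} holographic transformations of $f$: a triangle built from three copies of $f$ in Case~1, and a larger gadget with $f$ at the degree-3 vertices and $s_j$ at the degree-2 vertices in Case~2 (Figure~\ref{fig:triangle_gadget}). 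Both glue multiple copies of $f$ together, which is what moves the output out of the $W$ orbit.

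Separately, the step ``Consequently, at least one of $s_1,s_2$ lies outside $\hc{N\circ\cM}$'' is not established. The hypotheses only say $s_1\notin\hc{K_1\circ\cM}$ and $s_2\notin\hc{K_2\circ\cM}$, and $\hc{N\circ\cM}$ equals $\hc{K_j\circ\cM}$ only when $K_j^{-1}N$ is upper triangular (Lemma~\ref{lem:M_triangular}); for generic $N$ it is neither. You also do not handle the case $g_0\notin K_1\circ\cM\cup K_2\circ\cM$; in the paper that case is precisely where the triangle of three copies of $f$ alone suffices, because $a=(M^TM)_{00}\neq 0$ would otherwise force $M_{10}=\pm iM_{00}$ and hence $f\in K_1\circ\cM\cup K_2\circ\cM$. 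Finally, applying Lemma~\ref{lem:symmetric_ternary} at the start is an unnecessary detour: $f=M\circ\ONE_3$ is already symmetric, and that lemma does not upgrade the entanglement class. To repair the argument you would need to adopt (and then verify with the invariant) a triangle-type gadget combining multiple copies of $f$, as the paper does.
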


This follows from the proof of Lemma~19 in the full version of \cite{backens_new_2017}, but the modifications are not obvious, so we nevertheless give a full proof here.

\begin{proof}
 We distinguish cases according to whether $f$ is in one of the tractable sets.
 
 \begin{figure}
  \centering
  \begin{tikzpicture}
   \path (3,0) node {(a)};
   \draw (5.25,1) -- ++(0,-0.5) -- ++(1,-1) -- ++(.4,-.4);
   \filldraw (5.25,0.5) circle (2pt) -- ++(-1,-1) -- ++(-.4,-.4);
   \filldraw (4.25,-.5) circle (2pt) -- ++(2,0) circle (2pt);
  \end{tikzpicture}
  \qquad\qquad\qquad\qquad
  \begin{tikzpicture}
   \path (3,0) node {(b)};
   \draw (5.25,1) -- ++(0,-0.5) -- ++(0.5,-0.5);
   \filldraw (5.75,0) circle (2pt) -- ++(0.5,-0.5) -- ++(.4,-.4);
   \filldraw (5.25,0.5) circle (2pt) -- ++(-0.5,-0.5) circle (2pt) -- ++(-0.5,-0.5) -- ++(-.4,-.4);
   \filldraw (4.25,-.5) circle (2pt) -- ++(1,0) circle (2pt) -- ++(1,0) circle (2pt);
  \end{tikzpicture}
  \caption{Two gadgets for realising symmetric ternary functions.}
  \label{fig:triangle_gadget}
 \end{figure}
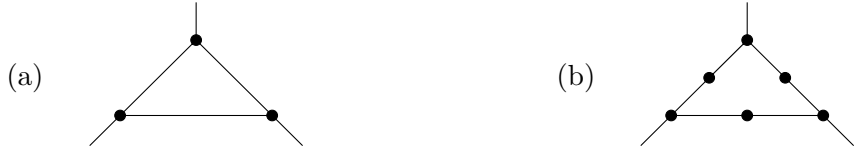
 
 \textbf{Case~1}: Suppose $f\notin K_1\circ\cM \cup K_2\circ\cM$.
 Consider the gadget in Figure~\ref{fig:triangle_gadget}a, where each vertex is assigned function $f$.
 This gadget realises a cyclically symmetric ternary function $g$, which -- since the inputs are bits -- is in fact fully symmetric.
 By definition,
 \[
  g(x_1,x_2,x_3) = \sum_{y_1,y_2,y_3\in\{0,1\}} f(x_1,y_2,y_3) f(x_2,y_3,y_1) f(x_3,y_1,y_2).
 \]
 But $f = M\circ\ONE_3$, which means $f(x,y,z)=\sum_{u,v,w\in\{0,1\}}M_{xu}M_{yv}M_{zw}\ONE_3(u,v,w)$.
 Plugging this into the equation for $g(x_1,x_2,x_3)$ yields
 \begin{multline*}
  g(x_1,x_2,x_3) = \sum M_{x_1a_1}M_{y_2a_2}M_{y_3a_3} \ONE_3(a_1,a_2,a_3) M_{x_2b_1}M_{y_3b_2}M_{y_1b_3} \ONE_3(b_1,b_2,b_3) \\ M_{x_3c_1}M_{y_1c_2}M_{y_2c_3} \ONE_3(c_1,c_2,c_3),
 \end{multline*}
 where the sum is over all $\by,\ba,\bb,\bc\in\{0,1\}^3$.
 Rearranging the terms, we find $g=M\circ g'$, where $g'(x_1,x_2,x_3)$ is equal to
  \begin{multline}  \label{eq:code2}
  \sum M_{y_1b_3} M_{y_1c_2} M_{y_2a_2} M_{y_2c_3} M_{y_3a_3} M_{y_3b_2} \ONE_3(x_1,a_2,a_3) \ONE_3(x_2,b_2,b_3) \ONE_3(x_3,c_2,c_3) \\  
  = \sum (M^TM)_{b_3c_2} (M^TM)_{c_3a_2} (M^TM)_{a_3b_2} \ONE_3(x_1,a_2,a_3) \ONE_3(x_2,b_2,b_3) \ONE_3(x_3,c_2,c_3),
   \end{multline}   
 with the sums being over all the variables that appear twice in the respective expressions.
 Now suppose $M^T M=\left(\begin{smallmatrix}a&b\\c&d\end{smallmatrix}\right)$, then a bit of algebra yields
 \begin{equation}\label{eq:symmetric_ternary}
  g' = [b^3+c^3+3abd+3acd, \; ab^2+abc+ac^2+a^2d, \; a^2b+a^2c, \; a^3].
 \end{equation}
 A code snippet for verifying this may be found in Appendix~\ref{a:ghz_from_w}.
 If there exists a matrix $M'\in\GL$ such that $g'=M'\circ\EQ_3$ then $g=(MM')\circ\EQ_3$.
 Thus it suffices to consider $g'$ going forward.
 By Proposition~\ref{prop:entanglement_classification} and some algebra, $g'$ is a holographically transformed equality function if and only if
 \begin{equation}\label{eq:GHZ-condition}
  (ad-bc)^3 a^6\neq 0.
 \end{equation}
 Again, Appendix~\ref{a:ghz_from_w} contains a piece of code for verifying this result.
 This inequality can fail to hold in two ways; we now show that each of these failures would contradict previous assumptions.
 \begin{itemize}
  \item The inequality \eqref{eq:GHZ-condition} is not satisfied if $ad-bc=0$.
   But $ad-bc=\det(M^TM)$, so this implies that $M^T M$ is not invertible, contradicting the assumption that $M$ is invertible.
   Therefore $ad-bc$ must be non-zero.
  \item The inequality \eqref{eq:GHZ-condition} is not satisfied if $a=0$.
   As $a=(M^TM)_{00}$, that property is equivalent to $M_{00}^2+M_{10}^2=0$, i.e.\ $M_{10}=\pm i M_{00}$.
   Now, if $M_{10}=i M_{00}$, then
   \[
    K_1^{-1} M = \frac{1}{\sqrt{2}} \begin{pmatrix}1&-i\\1&i\end{pmatrix} \begin{pmatrix}M_{00}&M_{01}\\i M_{00}&M_{11}\end{pmatrix} = \frac{1}{\sqrt{2}} \begin{pmatrix}2M_{00}&M_{01}-iM_{11}\\0&M_{01}+iM_{11}\end{pmatrix} =: U
   \]
   is upper triangular.
   Thus, by Lemma~\ref{lem:M_triangular}, $f=M\circ\ONE_3 = K_1\circ(U\circ\ONE_3) \in K_1\circ\cM$.
   By an analogous argument, if $M_{10}= -i M_{00}$, then $f\in K_2\circ\cM$.
   In each case, this contradicts the assumption that $f\notin K_1\circ\cM \cup K_2\circ\cM$.
   Therefore $a$ must be non-zero.
 \end{itemize}
 Thus, under the given assumptions, $(ad-bc)^3 a^6$ must be non-zero.
 Hence, by Proposition~\ref{prop:entanglement_classification}, there exists a matrix $M'\in\GL$ such that $g'=M'\circ\EQ_3$, which implies that $g=(MM')\circ\EQ_3$.

 \textbf{Case~2}: Suppose $f\in K_j\circ\cM$, where $j\in\{1,2\}$.
 Consider the gadget in Figure~\ref{fig:triangle_gadget}b where each degree-3 vertex is assigned the function $f$ and each degree-2 vertex is assigned the function $s_j$.
 This gadget realises a ternary symmetric function $g$ given by
 \[
  g(x_1,x_2,x_3) = \sum_{\by,\bz\in\{0,1\}^3} f(x_1,y_2,z_3) f(x_2,y_3,z_1) f(x_3,y_1,z_2) s_j(y_1,z_1) s_j(y_2,z_2) s_j(y_3,z_3).
 \]
 Let $S$ be the matrix associated with the binary function $s_j$.
 As in Case~1, we can substitute $f=M\circ\ONE_3$ into the expression for $g$ to find that $g=M\circ g'$, where $g'(x,y,z)$ is now equal to
 \[
  \sum (M^TSM)_{b_3c_2} (M^TSM)_{c_3a_2} (M^TSM)_{a_3b_2} \ONE_3(x,a_2,a_3) \ONE_3(y,b_2,b_3) \ONE_3(z,c_2,c_3).
 \]
 Suppose $M^T S M = \left(\begin{smallmatrix}a&b\\c&d\end{smallmatrix}\right)$, then $g'$ again takes the form \eqref{eq:symmetric_ternary}.
 Thus, as in Case~1, we need to show that $(ad-bc)^3 a^6\neq 0$.
 Again, there are two ways the inequality could fail to hold.
 \begin{itemize}
  \item The inequality is not satisfied if $ad-bc=0$.
   But $ad-bc=\det(M^TSM)$, so this implies that $M^T SM$ is not invertible.
   Since $s_j\notin\hc{K_j\circ\cM}$, $s_j$ is non-degenerate by Lemma~\ref{lem:degenerate_in_families}, and thus $S$ is invertible.
   The matrix $M$ is also invertible, a contradiction.
   Therefore $ad-bc$ must be non-zero.
  \item The inequality is not satisfied if $a=0$.
   Since $f=M\circ\ONE_3\in K_j\circ\cM$, we have $f':=(K_j^{-1}M)\circ\ONE_3\in\cM$.
   Now, both $f'$ and $\ONE_3$ are in $\cM$, hence by Lemma~\ref{lem:M_triangular}, $U:=K_j^{-1}M$ must be upper triangular.
   By Observation~\ref{obs:Z_properties}, $K_j^T = XK_j^{-1}$, so $M^TSM=U^TK_j^TSK_jU = U^TXS'XU$, where $S':=K_j^{-1}S(K_j^{-1})^{T}$ is the matrix corresponding to $K_j^{-1}\circ s_j$.
   Let $S''=XS'X$, then
   \[
    a = (U^TS''U)_{00} = \sum_{k,\ell\in\{0,1\}} U_{k0} S''_{k,\ell} U_{\ell 0} = U_{00}^2 S''_{00} + U_{00}U_{10} \left(S''_{01}+S''_{10}\right) + U_{10}^2 S''_{11} = U_{00}^2 S''_{00},
   \]
   since $U_{10}=0$ for the upper triangular matrix $U$.
   But $U_{00}\neq 0$ since $U$ is upper triangular and invertible.
   Furthermore, if $S'_{11}=(K_j^{-1}\circ s_j)(1,1)=0$ then $K_j^{-1}\circ s_j\in\cM$ and thus $s_j\in\hc{K_j\circ\cM}$.
   Therefore $s_j\notin\hc{K_j\circ\cM}$ implies that $S'_{11}\neq 0$ and thus that $S''_{00}\neq 0$.
   Hence $a$ must be non-zero.
 \end{itemize}
 Thus, under the given assumptions, $(ad-bc)^3 a^6$ must be non-zero.
 Hence, by Proposition~\ref{prop:entanglement_classification}, there exists a matrix $M'\in\GL$ such that $g'=M'\circ\EQ_3$, which implies that $g=(MM')\circ\EQ_3$.
 
 In each case, we have realised a function $g=(MM')\circ\EQ_3$, which we can bring into the form given in the lemma by re-defining $M'$ to absorb $M$.
 The function $g$ is realised by a gadget using $f$ and potentially $s_1$ or $s_2$, thus by Lemma~\ref{lem:holant_clone_gadget}, $g\in\hc{f,s_1,s_2}$.
\end{proof}

We are now ready to prove the theorem about conservative holant clones.
We say that a subset $\cF$ of $\allf$ is 
\emph{universal in the conservative case} if
$\hc{\cF \cup \cU} = \allf$. Our theorem is as follows.

\begin{thm}\label{thm:conservative_hc}
 Suppose $\cF$ is a subset of $\allf$. Then 
 $\cF$ is universal in the conservative case unless
 \begin{enumerate}
  \item $\cF\sse\hc{\cT}$, or
  \item there exists $O\in\cO$ such that $\cF\sse\hc{O\circ \cE}$, or
  \item $\cF\sse\hc{K_1\circ\cE}=\hc{K_2\circ\cE}$, or
  \item there exists a matrix $K\in\{K_1,K_2\}$ such that $\cF\sse\hc{K\circ\cM}$.
 \end{enumerate}
\end{thm}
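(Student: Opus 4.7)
My plan is to reduce universality to Lemma~\ref{lem:allf_from_universality}, which gives $\allf\sse\hc{\cU,\cS,\CNOT}$. Since $\cS\sse\allf_2$, it suffices to establish $\allf_2\sse\hc{\cF\cup\cU}$ and $\CNOT\in\hc{\cF\cup\cU}$. Once $\allf_2$ and $\EQ_3$ are in the clone, the Hadamard matrix $H\in\allf_2$ yields $H\circ\EQ_3=\tfrac{1}{\sqrt{2}}[y_1\oplus y_2\oplus y_3=0]$ via contraction, and
\[
 \CNOT(x_1,x_2,x_3,x_4)=\sqrt{2}\sum_{y\in\{0,1\}}\EQ_3(x_1,x_3,y)\,(H\circ\EQ_3)(y,x_2,x_4)
\]
places $\CNOT$ in the clone (the $\sqrt{2}$ scalar is absorbed by Observation~\ref{obs:scaling_general}).

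I first extract a GHZ-class symmetric ternary. Failure of condition~(1) with Lemma~\ref{lem:min-arity} produces a ternary $f_T'\in\hc{\cF\cup\cU}\setminus\hc{\cT}$; failure of condition~(4) similarly yields binary helpers $s_j\in\hc{\cF\cup\cU}\setminus\hc{K_j\circ\cM}$ for $j\in\{1,2\}$. Lemma~\ref{lem:symmetric_ternary} then produces a symmetric ternary $g_0\in\hc{\cF\cup\cU}\setminus\hc{\cT}$; Proposition~\ref{prop:entanglement_classification} classifies it as GHZ- or W-class, and Lemma~\ref{lem:ghz_from_w} (using $s_1,s_2$) converts the W-case to a GHZ-class function. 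So WLOG $g=M\circ\EQ_3\in\hc{\cF\cup\cU}$ for some $M\in\GL$.

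The main technical step is to show $\allf_2\sse\hc{\cF\cup\cU}$. I first observe that the four tractability conditions are preserved under orthogonal holographic transformation of $\cF$: condition~(1) since $O\circ\cT=\cT$, (2) since $\cO$ is a group, and (3) and (4) via the identity $OK_j=K_{j'}D$ for $O\in\cO$ (verifiable by direct computation) together with $D\circ\cE=\cE$ and $D\circ\cM=\cM$ for invertible diagonal~$D$. By Lemma~\ref{lem:QR_decomposition}, I may therefore assume $M$ is either upper triangular or of the form $K_jR$ with $R$ upper triangular. If $M=R$ is upper triangular with nonzero off-diagonal entry, scaling so that $\det M=1$ puts $g$ directly into the hypotheses of Lemma~\ref{lem:binaries_f_non_tractable}, yielding $\allf_2$. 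If $M$ is diagonal, then (up to scaling) $g=[1,0,0,c]\in\cE$; failure of condition~(2) with $O=I$ and Lemma~\ref{lem:min-arity} give a binary $f'\in\hc{\cF\cup\cU}\setminus\hc{\cE}$, Lemma~\ref{lem:binary_symmetric} upgrades it to a symmetric binary $h\notin\hc{\cE}$ whose middle entry must be nonzero (else $h\in\cE$), and after scaling, Lemma~\ref{lem:binaries_f_tractable} finishes the case. The remaining subcase $M=K_jR$ is the main obstacle: I would split further on whether $R$ is diagonal (so $g\in\hc{K_j\circ\cE}$, using failure of condition~(3) and Lemma~\ref{lem:binary_symmetric} to get a symmetric binary outside $\hc{K_j\circ\cE}$, then transporting the diagonal argument through $K_j$) or non-diagonal (so $g\notin\hc{K_j\circ\cE}$, applying a $K_j$-transported variant of Lemma~\ref{lem:binaries_f_non_tractable} with $s_1,s_2$ as auxiliaries). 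The transport through $K_j$ is handled via Lemmas~\ref{lem:hc_Z_holographic} and~\ref{lem:hc_Z_equal} plus the observation that $K_j^{-1}\circ\allf_2=\allf_2$; the subtlety, and the hardest part of the proof, is that $K_j$-holographic transformations only commute with the holant clone operation once $\EQ_2$ and $\NEQ$ are available, so these auxiliaries must be constructed before applying the transport (or the gadget analyses of Lemmas~\ref{lem:binaries_f_non_tractable} and~\ref{lem:binaries_f_tractable} must be redone directly on $K_j$-transformed inputs).

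Once $\allf_2\sse\hc{\cF\cup\cU}$ is established, the binary function $M^{-1}$ is in the clone, so $\EQ_3=M^{-1}\circ g\in\hc{\cF\cup\cU}$; the $\CNOT$ construction of the first paragraph then places $\CNOT$ in the clone, and Lemma~\ref{lem:allf_from_universality} concludes $\allf\sse\hc{\cU,\cS,\CNOT}\sse\hc{\cF\cup\cU}$, completing the proof.
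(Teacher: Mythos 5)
Your high-level strategy matches the paper's (extract a GHZ-class symmetric ternary, establish $\allf_2\sse\hc{\cF\cup\cU}$, derive $\EQ_3$, $\EVEN_3$, $\CNOT$, then invoke Lemma~\ref{lem:allf_from_universality}), and your $\CNOT$ construction is identical to the paper's after unwinding scalars. You also take a small shortcut the paper does not: using Lemma~\ref{lem:binary_symmetric} to produce the symmetric binary helper in the $\cE$-type cases, rather than building it explicitly as the paper's gadgets $g'_c$ and $h_c$ do. This is a legitimate simplification, provided you still observe (as you do) that a symmetric binary outside $\hc{\cE}$ must have nonzero middle value so that the scaling needed for Lemma~\ref{lem:binaries_f_tractable} is available.

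However, the case $M=K_jR$ is where the real technical work of the theorem lives, and you leave it essentially as a sketch. Two issues. First, your subcase ``$M=K_jR$ with $R$ non-diagonal'' is vacuous and need not be considered at all: $K_jR=K_{j'}D$ with $D$ diagonal forces $R$ diagonal (if $j'=j$) or $R=XD$ (if $j'\neq j$, impossible for $R$ triangular), so by the third bullet of Lemma~\ref{lem:QR_decomposition}, whenever $R$ is non-diagonal some \emph{orthogonal} $Q$ triangularises $M$ and you should have used that $Q$ instead. Second, and more seriously, in the genuine $M=K_jD$ subcase you identify the key difficulty — that Lemma~\ref{lem:hc_Z_equal} requires $\EQ_2,\NEQ\in\cF$ before the transport commutes — but you do not actually resolve it. The paper's Case~3 does the concrete work you defer: it shows that from $f'=\lambda\,K_1\circ[1,0,0,a]$ one can contract with a suitably chosen unary $u=\lambda^{-1}[a^{-1},1]$ to obtain $K_1\circ\EQ_2$ in the clone, then obtains $K_1\circ\NEQ$, and only then derives the identity $\hc{\cU,\cF}=K_1\circ\hc{\cU,\cF',\EQ_2,\NEQ}$ that lets Lemma~\ref{lem:binaries_f_tractable} be applied on the $K_1^{-1}$-transformed side. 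Without that construction your proof has a genuine hole exactly where the argument is hardest.
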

\begin{proof}
Suppose that $\cF$ is a subset of $\allf$ that does not satisfy any of the four conditions.
By the definition of a holant clone,  it is straightforward to see that $\hc{\cU,\cF}\sse\allf$.
To prove the theorem, we will show
  $\allf\sse\hc{\cU,\cF}$.

 By Lemma~\ref{lem:min-arity}, there exists a ternary function $f\in\hc{\cF,\cU}$ such that $f\notin\hc{\cT}$, and binary functions $s_1,s_2\in\hc{\cF,\cU}$ such that $s_1\notin\hc{K_1\circ\cM}$ and $s_2\notin\hc{K_2\circ\cM}$.
 Furthermore, by Lemma~\ref{lem:symmetric_ternary}, there exists a symmetric ternary function $f'\in\hc{\cF,\cU}$ such that $f'\notin\hc{\cT}$.
 
 By the result stated in Section~\ref{s:ternary}, any symmetric ternary function $f'\notin\hc{\cT}$ satisfies either $f'=M\circ\EQ_3$ or $f'=M\circ\ONE_3$ for some matrix $M\in\GL$.
 Suppose the second property holds, i.e.\ there exists a matrix $M\in\GL$ such that $f'=M\circ\ONE_3$.
 Then by Lemma~\ref{lem:ghz_from_w}, there also exists $M'\in\GL$ such that $f''=M'\circ\EQ_3\in\hc{f,s_1,s_2}\sse\hc{\cF,\cU}$.
 Thus we may assume without loss of generality that $f'=M\circ\EQ_3$ for some $M\in\GL$ (if necessary, replacing $f'$ with the function $f''$ constructed using Lemma~\ref{lem:ghz_from_w}).
 
 We now distinguish cases according to whether or not $f'\in\hc{A\circ\cE}$ for any $A\in\cO\cup\{K_1,K_2\}$.
 
 \textbf{Case~1}: Suppose $f'=M\circ\EQ_3\notin\hc{A\circ\cE}$ for any $A\in\cO\cup\{K_1,K_2\}$.

 By Lemma~\ref{lem:QR_decomposition}, there exists $Q\in\cO\cup\{K_1,K_2\}$ such that $Q^{-1}M$ is upper triangular and $Q'\in\cO\cup\{K_1,K_2\}$ such that $(Q')^{-1}M$ is lower triangular.
 We distinguish subcases.
 \begin{itemize}
  \item If $Q\in\cO$, let $\cF'=Q^{-1}\circ\cF$ and let $f'':=Q^{-1}\circ f' = R\circ\EQ_3$, where $R:=Q^{-1}M$ is upper triangular.
  Then $Q\circ\hc{\cU,\cF',f''} = \hc{\cU,\cF,f'} \sse \hc{\cU,\cF,f}$.
  \item If $Q\notin\cO$ and $Q'\in\cO$, let $\cF'=X(Q')^{-1}\circ\cF$ and let
   \[
    f'' := X(Q')^{-1}\circ f' = X(Q')^{-1}M\circ\EQ_3 = X(Q')^{-1}MX\circ\EQ_3 = R\circ\EQ_3,
   \]
   where $(Q')^{-1}M$ is lower triangular and thus $R:=X(Q')^{-1}MX$ is upper triangular.
   Here, the third equality uses the fact that $X\circ\EQ_3=\EQ_3$.
   Then $Q'X\circ\hc{\cU,\cF',f''} = \hc{\cU,\cF,f'} \sse \hc{\cU,\cF,f}$.
  \item If $Q,Q'\notin\cO$ then $Q,Q'\in\{K_1,K_2\}$.
   By Lemma~\ref{lem:QR_decomposition}, this implies $M=K_1D$ or $M=K_2D$ for some diagonal matrix $D\in\GL$, so $f'\in\hc{K_1\circ\cE}=\hc{K_2\circ\cE}$. That contradicts the assumption of this case, hence this subcase cannot occur.
 \end{itemize}
 We have shown that there exists $O\in\cO$ such that $O\circ\hc{\cU,\cF',f''} \sse \hc{\cU,\cF,f} = \hc{\cU,\cF}$, where $\cF'=O^{-1}\circ\cF$ and $f''=O^{-1}\circ f'=R\circ\EQ_3$ for some upper triangular matrix $R\in\GL$.
 Let $\ld=\det R$ and define $f''':=\ld^{-3}(R\circ\EQ_3)=(\ld^{-1}R)\circ\EQ_3$; then $\hc{\cU,\cF',f''}=\hc{\cU,\cF',f'''}$ by Observation~\ref{obs:scaling_general}.
 
 Now, $\ld^{-1} R$ can be written as $\smm{a&b\\0&a^{-1}}$ for some $a,b\in\AA\setminus\{0\}$.
 Hence, by applying Lemma~\ref{lem:binaries_f_non_tractable} to $f'''$, we find
 $\allf_2\sse\hc{\cU,f'''}\sse\hc{\cU,\cF',f'''}$, which is equivalent to
 \[
 \allf_2 = O\circ\allf_2 \sse O\circ\hc{\cU,\cF',f'''} \sse \hc{\cU,\cF}.
 \]

 \textbf{Case~2}: Suppose $f'\in\hc{O\circ\cE}$ for some $O\in\cO$.
 Recall $f'\notin\hc{\cT}$; this implies $f'$ is entangled.
 By Lemma~\ref{lem:closure_clone}, any entangled function in $\hc{O\circ\cE}$ must itself be in $O\circ\cE$.
 Thus, there exists $\ld,a\in\Anz$ and $f'' := \ld[1,0,0,a]$ such that $f'=O\circ f''$.
 Let $\cF':=O^{-1}\circ\cF$, then $\hc{\cU,\cF}=O\circ\hc{\cU,\cF'}$ by Lemma~\ref{lem:hc_orthogonal_holographic}.
 Thus, by Observation~\ref{obs:scaling_general}, $f'\in\hc{\cU,\cF}$ implies $f'''=[1,0,0,a]\in\hc{\cU,\cF'}$.
 
 We assumed $\cF\nsubseteq\hc{O\circ\cE}$, thus by Lemma~\ref{lem:hc_orthogonal_holographic}, $\cF'\nsubseteq\hc{\cE}$.
 Hence by Lemma~\ref{lem:min-arity}, there exists a binary function $g\in\hc{\cU,\cF'}$ such that $g\notin\hc{\cE}$.
 Let $u_c=[1,c/a]\in\cU$ and define
 \[
  g'_c(x_1,x_2) = \sum_{y_1,y_2,y_3\in\{0,1\}} g(x_1,y_1) f'''(y_1,y_2,y_3) u_c(y_2) g(x_2,y_3).
 \]
 Then, writing $g_{xy}$ for $g(x,y)$, we have
 \[
  g'_c = \pmm{g_{00}^2+cg_{01}^2 & g_{00}g_{10}+cg_{01}g_{11} \\ g_{00}g_{10}+cg_{01}g_{11} & g_{10}^2 + cg_{11}^2 }.
 \]
 Now, $g\notin\hc{\cE}$ implies that $g_{00}g_{11}-g_{01}g_{10}\neq 0$ and that at most one of its values is 0: with two or more zero values, $g$ would either be degenerate (and thus in $\hc{\cE}$ by Lemma~\ref{lem:degenerate_in_families}), or it would itself be in $\cE$.
 Thus, none of the values of $g'_c$ are zero for all choices of $c$.
 As each value of $g'_c$ is a linear function of $c$ and $g'_c$ is symmetric, there are at most three choices for $c$ that make one of the values of $g'_c$ zero.
 Hence, there exists some $d\in\AA\setminus\{0\}$ such that $g'_d$ is everywhere non-zero.
 Let $g''$ be the scaling of $g'_d$ that satisfies $g''(0,1)=g''(1,0)=1$.
 Then by Lemma~\ref{lem:binaries_f_tractable}, $\allf_2 \sse \hc{\cU,f'',g''} \sse \hc{\cU,\cF'}$.
 Therefore,
 \[
  \allf_2 = O\circ\allf_2 \sse O\circ\hc{\cU,f'',g''} \sse O\circ\hc{\cU,\cF'} = \hc{\cU,\cF}.
 \]

 \textbf{Case~3}: Suppose $f'\in\hc{K_1\circ\cE}=\hc{K_2\circ\cE}$, i.e.\ $f'=\ld K_1\circ[1,0,0,a]$ for some $\ld,a\in\AA\setminus\{0\}$.
 Define $\cF'=K_1^{-1}\circ\cF$, then by Lemma~\ref{lem:hc_Z_holographic},
 \begin{equation}\label{eq:hc_K}
  \hc{\cU,\cF} = \hc{K_1\circ(\cU\cup\cF')} = K_1\circ\bhc{(\cU\cup\cF')}{\{\NEQ\}}{L}.
 \end{equation}
 Since we assumed $\cF\nsubseteq\hc{K_1\circ\cE}$, by Lemma~\ref{lem:min-arity} there exists a binary function $g\in\hc{\cU,\cF}$ such that $g\notin\hc{K_1\circ\cE}$.
 Let $g'=K_1^{-1}\circ g$, then $g'\notin\hc{\cE}$ by Lemma~\ref{lem:hc_Z_equal}.
 Furthermore, let $f'':= K_1^{-1}\circ f' = \ld[1,0,0,a]$, then $f'\in\hc{\cU,\cF}$ implies $f''\in\bhc{(\cU\cup\cF')}{\{\NEQ\}}{L}$ by \eqref{eq:hc_K}.
 Finally, let $u_c\in\cU$ be the function that satisfies $u_c = \lambda^{-1}[ca^{-1},1]$.
 Define 
 \[
  h_c(x_1,x_2) := \sum g'(x_1,y_1) \NEQ(y_1,y_2) f''(y_2,y_3,y_4) \NEQ(y_3,y_5) u_c(y_5) \NEQ(y_4,y_6) g'(x_2,y_6),
 \]
 where the sum is over $\vc{y}{6}\in\{0,1\}$.
 Then $h_c\in\bhc{(\cU\cup\cF')}{\{\NEQ\}}{L}$ and
 \[
  h_c = \pmm{(g'_{00})^2+c(g'_{01})^2 & g'_{00}g'_{10}+cg'_{01}g'_{11} \\ g'_{00}g'_{10}+cg'_{01}g'_{11} & (g'_{10})^2 + c(g'_{11})^2 },
 \]
 so, as in Case~2, we can pick $c$ such that $h_c$ is everywhere non-zero, which implies $h_c\notin\hc{\cE}$.
 
 Now, let $u_a\in\cU$ be the function that satisfies $u = \lambda^{-1}[a^{-1},1]$, then
 \[
  \sum_{y_1,y_2\in\{0,1\}} f'(x_1,x_2,y_1)\NEQ(y_1,y_2)u(y_2) = \EQ_2(x_1,x_2),
 \]
 so $\EQ_2\in\bhc{(\cU\cup\cF')}{\{\NEQ\}}{L}$ and thus, by \eqref{eq:hc_K}, $K_1\circ\EQ_2\in\hc{\cU,\cF}$.
 But then
 \[
  \sum_{y_1,y_2\in\{0,1\}}\EQ_2(x_1,y_1)\NEQ(y_1,y_2)\EQ_2(y_2,x_2)=\NEQ(x_1,x_2),
 \]
 so $\NEQ\in\bhc{(\cU\cup\cF')}{\{\NEQ\}}{L}$ and thus, by \eqref{eq:hc_K}, $K_1\circ\NEQ\in\hc{\cU,\cF}$.
 Therefore,
 \begin{multline}\label{eq:pull_out_K}
  \hc{\cU,\cF} = \hc{\cU,\cF,K_1\circ\EQ_2,K_1\circ\NEQ} = \hc{K_1\circ(\cU\cup\cF'\cup\{\EQ_2,\NEQ\})} \\
  = K_1\circ\hc{\cU,\cF',\EQ_2,\NEQ},
 \end{multline}
 where the first equality is by Corollary~\ref{cor:holant_closed}, the second is by Observation~\ref{obs:unary_holographic_inv} and by the definition of $\cF'$, and the third equality is by Lemma~\ref{lem:hc_Z_equal}.
 
 By combining \eqref{eq:pull_out_K} with \eqref{eq:hc_K}, we find that $\hc{\cU,\cF',\EQ_2,\NEQ}=\bhc{(\cU\cup\cF')}{\{\NEQ\}}{L}$.
 Hence, $f''\in\bhc{(\cU\cup\cF')}{\{\NEQ\}}{L}$ implies $f''\in\hc{\cU,\cF',\EQ_2,\NEQ}$ and, by scaling, $f''':=[1,0,0,a]\in\hc{\cU,\cF',\EQ_2,\NEQ}$.
 Furthermore, we have shown that there exists a symmetric binary function $h_c\in\hc{\cU,\cF',\EQ_2,\NEQ}$ such that $h_c$ is everywhere non-zero and $h_c\notin\hc{\cE}$.
 Let $h_c'(x_1,x_2)=(h_c(0,1))^{-1}h_c(x_1,x_2)$, then $h_c'$ is a symmetric binary function that is everywhere non-zero and satisfies $h_c'\notin\hc{\cE}$ as well as $h_c'(0,1)=h_c'(1,0)=1$.
 Additionally, $h_c'\in\hc{\cU,\cF',\EQ_2,\NEQ}$ by Observation~\ref{obs:scaling_general}.
 Thus, we can apply Lemma~\ref{lem:binaries_f_tractable} to $f'''$ and $h_c'$ to find
 \[
  \allf_2 \sse \hc{\cU,f''',h_c'} \sse \hc{\cU,\cF',\EQ_2,\NEQ}.
 \]
 Therefore,
 \[
  \allf_2 = K_1\circ\allf_2 \sse K_1\circ\hc{\cU,\cF',\EQ_2,\NEQ} = \hc{\cU,\cF},
 \]
 where the last equality uses~\eqref{eq:pull_out_K}.
 The third case is done.
 
 This completes the case distinction. 
 In each case, we have shown that $\allf_2\sse\hc{\cU,\cF}$, i.e.\ the conservative holant clone generated by $\cF$ contains all binary functions.
 
 Recall that $f'=M\circ\EQ_3\in\hc{\cU,\cF}$, where $M\in\GL$.
 Let $m\in\allf_2$ be the function corresponding to the matrix $M^{-1}$, then
 \[
  \sum_{y_1,y_2,y_2\in\{0,1\}} m(x_1,y_1) m(x_2,y_2) m(x_3,y_3) f'(y_1,y_2,y_3) = \EQ_3(x_1,x_2,x_3).
 \]
 Hence $\EQ_3\in\hc{f',\allf_2}\sse\hc{\cU,\cF}$.
 Similarly, let $h=\smm{1&1\\1&-1}$ and write $\EVEN_3 := [1,0,1,0]$, then $\EVEN_3\in\hc{\cU,\cF}$ since
 \[
  \EVEN_3(x_1,x_2,x_3) = \frac{1}{2}\sum_{y_1,y_2,y_2\in\{0,1\}} h(x_1,y_1) h(x_2,y_2) h(x_3,y_3) \EQ_3(y_1,y_2,y_3).
 \]
 Finally, $\CNOT(x_1,x_2,x_3,x_4) = \sum_{y\in\{0,1\}} \EQ_3(x_1,x_3,y)\EVEN_3(y,x_2,x_4)$, which implies that $\CNOT\in\hc{\cU,\cF}$.
 But then, by Lemma~\ref{lem:allf_from_universality} and Proposition~\ref{prop:holant-clone-closure},
 \[
  \allf \sse \hc{\cU,\cS,\CNOT} \sse \hc{\cU,\allf_2,\CNOT}\sse \hc{\cU,\cF,\allf_2,\CNOT} = \hc{\cU,\cF},
 \]
 completing the proof.
\end{proof}

\subsection{Approximating conservative holant problems}\label{sec:doapprox}

\begin{thm}\label{thm:main}
 Suppose that $\cF$ is a finite subset of $\allf$.
 Then there exists a finite subset $S\sse\cU$ such that both $\HA(\cF,S;\pi/3)$ and $\HN(\cF,S;1.01)$ are \numP-hard, unless
 \begin{enumerate}
  \item $\cF\sse\hc{\cT}$, or
  \item there exists $O\in\cO$ such that $\cF\sse\hc{O\circ \cE}$, or
  \item $\cF\sse\hc{K_1\circ\cE}=\hc{K_2\circ\cE}$, or
  \item there exists a matrix $K\in\{K_1,K_2\}$ such that $\cF\sse\hc{K\circ\cM}$.
 \end{enumerate}
\end{thm}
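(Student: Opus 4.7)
I will prove the contrapositive: assuming $\cF$ avoids all four conditions of Theorem~\ref{thm:conservative_hc}, I will produce a finite $S\sse\cU$ making both $\HA(\cF,S;\pi/3)$ and $\HN(\cF,S;1.01)$ $\numP$-hard. The starting point is Theorem~\ref{thm:conservative_hc}, which in this situation immediately yields $\hc{\cF\cup\cU}=\allf$; so every Boolean signature is realisable by a \ppsh-formula over $\cF$ together with some collection of unary signatures.

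The transfer strategy is to first identify a finite $\cN\sse\allf$ for which $\HA(\cN;\pi/3)$ and $\HN(\cN;1.01)$ are already $\numP$-hard, and then move the hardness to $\cF\cup S$. Universality guarantees $\cN\sse\hc{\cF\cup\cU}$, and since each member of $\cN$ is realised by a finite \ppsh-formula, only finitely many unary signatures appear across these realisations; I take $S$ to be that union. By Lemma~\ref{lem:preorder_properties}\eqref{it:from_clone} this gives $\cN\leq \cF\cup S$, and Observation~\ref{obs:preorder_reductions} then supplies polynomial-time Turing reductions $\HA(\cN;\pi/3)\leq_{PT}\HA(\cF\cup S;\pi/3)$ and $\HN(\cN;1.01)\leq_{PT}\HN(\cF\cup S;1.01)$.

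For the base hardness, I will choose $\cN$ to contain $\EQ_3$ together with carefully chosen auxiliary signatures that inject a complex signal. By Corollary~\ref{cor:holant-functional} the presence of $\EQ_3$ collapses $\hc{\cN}$ to the functional clone $\ang{\cN}$, so \#CSP and hence \#SAT can be embedded exactly. I will then adjoin a phase unary $u_\theta=[1,e^{i\theta}]$ and an amplitude unary $[1,\rho]$ with parameters tuned so that, for any \#SAT instance $\varphi$ with $N$ satisfying assignments, the corresponding holant value equals $\rho^N e^{i\theta N}$ up to a known scalar. An oracle for $\HA(\cN;\pi/3)$ (respectively $\HN(\cN;1.01)$) is then queried polynomially many times---on modifications of $\varphi$ that probe different residues or powers of $N$, and using the powering construction alluded to in the remark after Theorem~\ref{thm2} to turn modest separations in norm into exponential ones---to reconstruct $N$ exactly. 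The resulting reduction from \#SAT gives $\numP$-hardness, which together with the transfer above establishes Theorem~\ref{thm:main}.

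The principal obstacle is the base-case construction of $\cN$ and of the accompanying query schedule: the phase and amplitude parameters must be chosen so that a single $\pi/3$-approximation in argument, or $1.01$-approximation in norm, yields enough information on each queried grid, and a polynomial family of queries must cover the full polynomial-bit range of $N$. Without this complex-signalling step, universality alone yields only the exact $\numP$-hardness already in Theorem~\ref{thm:holant-star-dichotomy}; the work of this section is precisely to upgrade that to approximation hardness. Once a suitable $\cN$ is in hand, the holant-clone machinery of Section~\ref{s:holant_gadgets_clones} together with Lemma~\ref{lem:preorder_properties}\eqref{it:from_clone} reduces the rest of the argument to extracting the finitely many unary functions witnessing $\cN\sse\hc{\cF\cup\cU}$ and invoking Observation~\ref{obs:preorder_reductions}.
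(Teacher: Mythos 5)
Your transfer mechanism is exactly the one the paper uses: invoke Theorem~\ref{thm:conservative_hc} to get $\hc{\cF\cup\cU}=\allf$, note that each member of a finite hard set $\cN$ is represented by a finite \ppsh-formula using finitely many unary functions, collect those into $S$, and then apply Lemma~\ref{lem:preorder_properties}\eqref{it:from_clone} followed by Observation~\ref{obs:preorder_reductions}. That portion of your argument is correct and matches the paper's proof.

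However, there is a genuine gap in your base case, and you essentially flag it yourself. You propose to manufacture $\numP$-hardness of approximation from scratch by embedding \#SAT via $\EQ_3$ and then adjoining a ``phase unary'' $[1,e^{i\theta}]$ and an ``amplitude unary'' $[1,\rho]$ so that the holant value is $\rho^N e^{i\theta N}$, recoverable from polynomially many $\pi/3$- or $1.01$-approximate queries. You then admit that ``the principal obstacle'' is precisely the construction of this $\cN$ and the query schedule. But that obstacle is not a routine detail: constructing such a reduction is itself a nontrivial theorem --- one that needs to control how much information a single crude approximation leaks and how a polynomial query family can recover an exponentially large count $N$. As written, your sketch would require you to reprove, in the \#CSP setting, the kind of inapproximability result that the paper instead imports. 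Absent that, you have only reduced the problem to a statement you have not established, so the proof is incomplete.

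The paper sidesteps this entirely. Rather than building a new approximation-hardness gadget, it reduces from the \emph{independent-set polynomial} $Z_G(\lambda)$ with a negative real activity $\lambda$ on graphs of maximum degree~$3$, for which both $\HN$-style ($1.01$-factor) and $\HA$-style ($\pi/3$-additive) $\numP$-hardness are already proved in Bez\'akov\'a, Galanis, Goldberg and \v{S}tefankovi\v{c} \cite{bezakova_inapproximability_2017}. It encodes $Z_G(\lambda)$ as a bipartite holant over $\{\EQ_1,\ldots,\EQ_4\}\mid\{u_\lambda,\NAND\}$ (subdivide edges of $G$ to carry $\NAND$, attach a pendant $u_\lambda$ to each vertex), drops the bipartition via Lemma~\ref{lem:preorder_properties}\eqref{it:2from_bipartite}, and takes $\cN=\{\EQ_1,\ldots,\EQ_4,u_\lambda,\NAND\}$ as the hard anchor set. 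The lesson is that the heavy lifting in the base case belongs to a separate paper; once you know which finite $\cN$ to target, the holant-clone transfer that you already wrote down does the rest. If you want your proof to stand on its own, you either need to carry out the complex-signalling reduction in full (effectively reproving a result like \cite{bezakova_inapproximability_2017}), or cite an existing approximation-hardness theorem whose signature set you can realise in $\hc{\cF\cup\cU}$.
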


\begin{rem}
 The four conditions   listed in Theorem~\ref{thm:main} 
 are exactly the conditions from Theorem~\ref{thm:conservative_hc} 
 that prevent $\cF$ from being universal in the conservative case.
 They are also exactly the same as the conditions  in Theorem~\ref{thm:holant-star-dichotomy} by Cai, Lu and Xia.
 So, suppose that  $\cF$ is a finite subset of $\allf$.
 \begin{itemize}
  \item If $\cF$ satisfies the conditions
  then,  by 
  the theorem of Cai et al.,
  for any finite subset $S\sse\cU$, the problem $\hol(\cF,S)$ is polynomial-time computable.  
  \item Otherwise, Theorem~\ref{thm:main} guarantees that  there exists some finite subset $S\sse\cU$ such that even approximating the argument or the norm of the holant is \numP-hard for $\cF\cup S$.
 \end{itemize}
\end{rem}

\begin{proof}[Proof of Theorem~\ref{thm:main}.]
 We will show the desired hardness results by reduction from the problem of approximating the independent set polynomial.
 An independent set $I$ of a graph $G=(V,E)$ is a subset of the vertices such that no two vertices in $I$ are adjacent.
 Let $\ld\in\AA$, then the \emph{independent set polynomial with activity $\ld$} for a graph $G$ is $Z_G(\ld):=\sum_I \ld^{\abs{I}}$, where the sum is over all independent sets of $G$ \cite{bezakova_inapproximability_2017}.

 Define $u_\ld := [1,\ld]$ and recall that $\NAND=[1,1,0]$.
 Suppose the graph $G$ has maximum degree $\Dl\geq 3$, then $Z_G(\ld)$ is equal to the holant for a bipartite signature grid constructed via the following polynomial-time algorithm.
 
 Given $G=(V,E)$, define the bipartite graph $G'=(V',W',E')$ as follows:
 \begin{itemize}
  \item $V':=V$,
  \item $W':=E\cup\{v'\mid v\in V\}$, and
  \item $E':=\{\{v,e\}\mid e\in E, v\in e\}\cup\{\{v,v'\}\mid v\in V\}$.
 \end{itemize}
 In other words, $G'$ arises from $G$ by first subdividing each edge and then adding an additional vertex for each of the original vertices, connected to its ``parent'' by an edge.

 Let $\Omega = (G', \{\EQ_k\mid k\in[\Dl+1]\}\mid\{u_\ld,\NAND\}, \sigma)$, where $\sigma$ assigns the function of appropriate arity to each vertex (with vertices from $V'$ being assigned equality functions and vertices from $W'$ being assigned $u_\ld$ or $\NAND$).
 Then $Z_G(\ld)=Z_\Omega$, as desired.
 To see this, let $\deg(v)$ denote the degree of~$v$ in~$G'$ and note that
 \[
  Z_\Omega = \sum_{\bx:E'\to\{0,1\}} \left(\prod_{v\in V} \EQ_{\deg(v)}(\bx|_{E'(v)}) \; u_\ld(\bx|_{E'(v')})\right) \prod_{w\in E} \NAND(\bx|_{E'(w)}).
 \]
 Now, any independent set $I\sse V$ of $G$ contributes a summand $\ld^{\abs{I}}$ to $Z_\Omega$ via the assignment $\bx_I:E'\to\{0,1\}$ which satisfies $\bx_I(e)=1$ if and only if there exists $v\in I$ such that $v\in e$.
 Furthermore, any assignment $\bx$ making a non-zero contribution to $Z_\Omega$ must correspond to an independent set in this way:
 \begin{itemize}
  \item The equality functions ensure that $\bx(e)=\bx(e')$ whenever there exists $v\in V$ such that $v\in e\cap e'$.
   Thus $\bx$ induces a well-defined function $\by:V\to\{0,1\}$, such that $\by(v)$ equals the value $\bx$ takes on edges incident on $v$.
   Hence we can define $S_\bx := \{v\in V\mid \by(v)=1\}$.
  \item The functions $\NAND$ ensure that if $v\in S_\bx$ and $\{v,w\}\in E$, then $w\notin S_\bx$.
   Therefore $S_\bx$ is an independent set of $G$.
  \item The functions $u_\ld$ yield a factor of $\ld$ for each vertex in the independent set $S_\bx$.
 \end{itemize}
 Thus, $\HN(\{\EQ_k\mid k\in[\Dl+1]\}\mid\{u_\ld,\NAND\}; 1.01)$ and $\HA(\{\EQ_k\mid k\in[\Dl+1]\}\mid\{u_\ld,\NAND\}; \pi/3)$ are \numP-hard if $\ld$ is an activity for which the norm and argument of the independent set polynomial on graphs of maximum degree $\Dl$ are \numP-hard to approximate to the given accuracies.
 Such values exist in $\AA$, in fact $\ld$ can be negative real \cite[Theorems 1 and 2]{bezakova_inapproximability_2017}.
 Furthermore, since $\cF\mid\cG\leq \cF\cup\cG$ by Lemma~\ref{lem:preorder_properties}\eqref{it:2from_bipartite}, $\HN(\{\EQ_k\mid k\in[\Dl+1]\}\cup\{u_\ld,\NAND\}; 1.01)$ and $\HA(\{\EQ_k\mid k\in[\Dl+1]\}\cup\{u_\ld,\NAND\}; \pi/3)$ are \numP-hard for such $\ld$.
 
 Let $\Dl=3$ (for simplicity) and fix an activity $\ld\in\AA$ for which the independent set polynomial is hard to approximate on graphs of maximum degree 3 by \cite[Theorems 1 and 2]{bezakova_inapproximability_2017}.
 Via the above algorithm, such a choice of $\ld$ implies that
 \begin{gather*}
  \HN(\{\vc{\EQ}{4},u_\ld,\NAND\}; 1.01) \qquad\text{and} \\
  \HA(\{\vc{\EQ}{4},u_\ld,\NAND\}; \pi/3)
 \end{gather*}
 are both \numP-hard.
 We now show that for any set $\cF$ which is not one of the tractable cases, there exists a finite set $S\sse\cU$ such that
 \begin{equation}\label{eq:IS_holant}
  \{\vc{\EQ}{4},u_\ld,\NAND\} \leq \cF\cup S.
 \end{equation}
 Indeed, suppose $\cF$ is not one of the sets listed in the theorem.
 Then $\hc{\cU,\cF}=\allf$ by Theorem~\ref{thm:conservative_hc}, so in particular $\{\vc{\EQ}{4},u_\ld,\NAND\}\sse\hc{\cU,\cF}$.
 By the definition of holant clones, each of the six functions on the LHS is represented by some \ppsh-formula over $\cU\cup\cF$.
 Each of those \ppsh-formulas has finite size, hence it uses finitely many unary functions.
 Thus there exists some finite set $S\sse\cU$ such that $\{\vc{\EQ}{4},u_\ld,\NAND\}\sse\hc{\cF,S}$.
 This implies \eqref{eq:IS_holant} by Lemma~\ref{lem:preorder_properties}\eqref{it:from_clone}.
 Now, by Observation~\ref{obs:preorder_reductions}, \eqref{eq:IS_holant} in turn implies
 \begin{align*}
  \HN(\{\vc{\EQ}{4},u_\ld,\NAND\};1.01) &\leq_{PT} \HN(\cF\cup S; 1.01) \\
  \HA(\{\vc{\EQ}{4},u_\ld,\NAND\}; \pi/3) &\leq_{PT} \HA(\cF\cup S; \pi/3).
 \end{align*}
 But $\ld$ was specifically chosen so the problems on the left-hand side are \numP-hard.
 Therefore, $\HN(\cF\cup S; 1.01)$ and $\HA(\cF\cup S; \pi/3)$ are \numP-hard, as desired.
\end{proof}

\appendix

\section{Sage code}
\label{a:sage_code}

The (linear) algebra in some of the more complicated lemmas was handled using the SageMath software system \cite{sagemath}.
In the following, we give code that can be used to verify our derivations.
Code snippets can be run online at \url{http://sagecell.sagemath.org/}.

\subsection{Code for Lemma~\ref{lem:M_triangular}}
\label{a:M_triangular}

The function $f$ (and hence $M\circ f$) is symmetric by assumption.
Thus the following piece of code verifies \eqref{eq:M-circ-f}, where \texttt{mcf} denotes the function $M\circ f$.
The code outputs \texttt{True}.
\begin{verbatim}
var("a, b, c, d, f0, f1")
import itertools
def f(x,y,z):
    if x+y+z == 0:
        return f0
    if x+y+z == 1:
        return f1
    return 0
def m(x,y):
    return matrix([[a, b], [c, d]])[x,y]
def mcf(x1,x2,x3):
    output = 0
    for z1, z2, z3 in itertools.product([0,1], repeat=3):
        output = output + m(x1,z1) * m(x2,z2) * m(x3,z3) * f(z1,z2,z3)
    return output
[mcf(0,0,0), mcf(0,0,1), mcf(0,1,1), mcf(1,1,1)] == [(a*f0 + 3*b*f1)*a^2, \
(a*c*f0 + 2*b*c*f1 + a*d*f1)*a, (a*c*f0 + b*c*f1 + 2*a*d*f1)*c, \
(c*f0 + 3*d*f1)*c^2]
\end{verbatim}

\subsection{Code for Lemma~\ref{lem:binaries_f_non_tractable}, Case 1}
\label{a:non-tractable-case1}

The definition of $h_{v,w}$ in \eqref{eq:def-h_vw} is just a multiplication of three $2\times 2$ matrices.
Therefore the value of $-i\cdot h_{s,t}$ can be verified by the following piece of code, which outputs \texttt{True}.
The definition of $g$ in this piece of code (and in subsequent ones) comes from~\eqref{eq:code1}.

\begin{verbatim}
var("a, b")
def g(c):
    return 1/(a^2*c) * matrix([[a^2*(a^2*c^2+b^2), a*b], [a*b, 1]])
s = (I*b/a) * ( 2*(a^2*b^2 + 1) / (2*a^2*b^2+1) )^(1/2)
t = I * (a^2*b^2 + 1) / (a^3*b)
gs = g(s)
gt = g(t)
-I * (gs*gt*gs).factor() == matrix([[0, 1], [1, 0]])
\end{verbatim}

For \eqref{eq:hpq}, there are two cases to consider, depending on the choice of sign in $q_\pm$.
 
1. The following piece of code verifies \eqref{eq:hpq} for the plus case; it outputs \texttt{True}.
\begin{verbatim}
var("a, b, p")
def g(c):
    return 1/(a^2*c) * matrix([[a^2*(a^2*c^2+b^2), a*b], [a*b, 1]])
q = sqrt( -(a^2*b^2+1)*(a^4*p^2 + a^2*b^2 +1) / (a^6*(a^2*p^2 + b^2)) )
gp = g(p)
gq = g(q)
d1 = - (a^4*p^2 + a^2*b^2 + 1) / ( a^2 * sqrt( - (a^4*p^2 + a^2*b^2 + 1) \
     * (a^2*b^2 + 1) / ( (a^2*p^2 + b^2) * a^6 ) ) )
d2 = (a^2*b^2 + 1) / ( (a^2*p^2 + b^2) * a^4 * \
     sqrt( - (a^4*p^2 + a^2*b^2 + 1) * (a^2*b^2 + 1) / \
     ( (a^2*p^2 + b^2) * a^6 ) ) )
(gp*gq*gp).factor() == matrix([[d1, 0], [0, d2]])
\end{verbatim}

2. The following piece of code verifies \eqref{eq:hpq} for the minus case; it outputs \texttt{True}.
\begin{verbatim}
var("a, b, p")
def g(c):
    return 1/(a^2*c) * matrix([[a^2*(a^2*c^2+b^2), a*b], [a*b, 1]])
q = - sqrt( -(a^2*b^2+1)*(a^4*p^2 + a^2*b^2 +1) / (a^6*(a^2*p^2 + b^2)) )
gp = g(p)
gq = g(q)
d1 = (a^4*p^2 + a^2*b^2 + 1) / ( a^2 * sqrt( - (a^4*p^2 + a^2*b^2 + 1) \
     * (a^2*b^2 + 1) / ( (a^2*p^2 + b^2) * a^6 ) ) )
d2 = - (a^2*b^2 + 1) / ( (a^2*p^2 + b^2) * a^4 * \
     sqrt( - (a^4*p^2 + a^2*b^2 + 1) * (a^2*b^2 + 1) / \
     ( (a^2*p^2 + b^2) * a^6 ) ) )
(gp*gq*gp).factor() == matrix([[d1, 0], [0, d2]])
\end{verbatim} 

To verify that \eqref{eq:p-squared} is equivalent to \eqref{eq:d-h_pq}, it suffices to consider the square of \eqref{eq:d-h_pq}.
This is because the sign of $h_{p,q_{\pm}}(0,0)$ is determined by the sign of $q_\pm$, which can be chosen freely, even after $p$ is fixed.
There are still two cases to consider, corresponding to the two elements of $P_d$.

 1.  The following piece of code verifies that $p^2 = -\frac{2 a^{2} b^{2} + 1 + \sqrt{-4 {\left(a^{2} b^{2} + 1\right)} d^2 + 1}}{2 a^{4}}$ implies the square of \eqref{eq:d-h_pq}. It outputs \texttt{True}.
\begin{verbatim}
var("a, b, d, dsq, psq")
psq = - ( 2*a^2*b^2 + 1 + sqrt( -4*(a^2*b^2+1)*d^2 + 1 ) ) / (2*a^4)
dsq = (a^4*psq + a^2*b^2 + 1)^2 / ( (a^2)^2*( -(a^4*psq + a^2*b^2 + 1) \
      * (a^2*b^2 + 1) / ( (a^2*psq + b^2) * a^6 ) ) )
bool( dsq.full_simplify() == d^2 )
\end{verbatim}

 2. The following piece of code verifies that $p^2 = -\frac{2 a^{2} b^{2} + 1 - \sqrt{-4 {\left(a^{2} b^{2} + 1\right)} d^2 + 1}}{2 a^{4}}$ implies the square of \eqref{eq:d-h_pq}. It outputs \texttt{True}.
\begin{verbatim}
var("a, b, d, dsq, psq")
psq = - ( 2*a^2*b^2 + 1 - sqrt( -4*(a^2*b^2+1)*d^2 + 1 ) ) / (2*a^4)
dsq = (a^4*psq + a^2*b^2 + 1)^2 / ( (a^2)^2*( -(a^4*psq + a^2*b^2 + 1) \
      * (a^2*b^2 + 1) / ( (a^2*psq + b^2) * a^6 ) ) )
bool( dsq.full_simplify() == d^2 )
\end{verbatim} 

The following piece of code verifies that $i\cdot g_{ib/a} = t_{1/(ab)}$.
It outputs \texttt{True}.
\begin{verbatim}
var("a, b")
def g(c):
    return 1/(a^2*c) * matrix([[a^2*(a^2*c^2+b^2), a*b], [a*b, 1]])
I * g(I*b/a) == matrix([[0, 1], [1, 1/(a*b)]])
\end{verbatim}

\subsection{Code for Lemma~\ref{lem:binaries_f_non_tractable}, Case 2}
\label{a:non-tractable-case2}

For \eqref{eq:case2-gadget2}, there are two cases to verify, corresponding to the different choices of solution for the equation $a^2b^2+1-0$.

1.  The following piece of code verifies \eqref{eq:case2-gadget2} if $b=i/a$; it outputs \texttt{True}.
\begin{verbatim}
var("a, d")
b = I/a
def g(c):
    return 1/(a^2*c) * matrix([[a^2*(a^2*c^2+b^2), a*b], [a*b, 1]])
gd = g(-1/(a^2*d))
ga = g(1/a^2)
(ga*gd*ga).factor() == matrix([[d, 0], [0, 1/d]])
\end{verbatim}

2.   The following piece of code verifies \eqref{eq:case2-gadget2} if $b=-i/a$; it outputs \texttt{True}.
\begin{verbatim}
var("a, d")
b = -I/a
def g(c):
    return 1/(a^2*c) * matrix([[a^2*(a^2*c^2+b^2), a*b], [a*b, 1]])
gd = g(-1/(a^2*d))
ga = g(1/a^2)
(ga*gd*ga).factor() == matrix([[d, 0], [0, 1/d]])
\end{verbatim} 

The following piece of code verifies \eqref{eq:case2-gadget1a} with $b=i/a$; it outputs \texttt{True}.
\begin{verbatim}
var("a, p, q")
b = I/a
def g(c):
    return 1/(a^2*c) * matrix([[a^2*(a^2*c^2+b^2), a*b], [a*b, 1]])
gp = g(p)
gq = g(q)
ga = g(1/a^2)
(gp*gq*ga*gq*gp).expand() == matrix([[-2*a^4*p^2 + p^2/q^2 + 2, -I], [-I, 0]])
\end{verbatim}

The following piece of code verifies that $-i\cdot g_{1/a^2} = t_{-i}$ with $b=i/a$; it outputs \texttt{True}.
\begin{verbatim}
var("a")
b = I/a
def g(c):
    return 1/(a^2*c) * matrix([[a^2*(a^2*c^2+b^2), a*b], [a*b, 1]])
-I*g(1/a^2) == matrix([[0, 1], [1, -I]])
\end{verbatim}

The following piece of code verifies \eqref{eq:case2-gadget1b} with $b=-i/a$; it outputs \texttt{True}.
\begin{verbatim}
var("a, p, q")
b = -I/a
def g(c):
    return 1/(a^2*c) * matrix([[a^2*(a^2*c^2+b^2), a*b], [a*b, 1]])
gp = g(p)
gq = g(q)
ga = g(1/a^2)
(gp*gq*ga*gq*gp).expand() == matrix([[-2*a^4*p^2 + p^2/q^2 + 2, I], [I, 0]])
\end{verbatim}

The following piece of code verifies that $i\cdot g_{1/a^2} = t_{i}$ with $b=-i/a$; it outputs \texttt{True}.
\begin{verbatim}
var("a")
b = -I/a
def g(c):
    return 1/(a^2*c) * matrix([[a^2*(a^2*c^2+b^2), a*b], [a*b, 1]])
I*g(1/a^2) == matrix([[0, 1], [1, I]])
\end{verbatim}

\subsection{Code for Lemma~\ref{lem:binaries_f_non_tractable}, Case 3}
\label{a:non-tractable-case3}

To verify \eqref{eq:case3-gadget2}, there are four cases to consider, corresponding to the choice of solution of $2a^2b^2+1=0$ and the choice of sign in the definition of $w$.

1. The following piece of code verifies \eqref{eq:case3-gadget2} for the case where $b=i/(\sqrt{2}a)$ and $w$ has a plus sign; it outputs \texttt{True}.   
\begin{verbatim}
var("a, d")
b = I/(sqrt(2)*a)
def g(c):
    return 1/(a^2*c) * matrix([[a^2*(a^2*c^2+b^2), a*b], [a*b, 1]])
def simp(M):
    # this function simplifies every element of the matrix M
    return matrix([[x.full_simplify() for x in r] for r in M.rows()])
w = (sqrt(-2*d^2 + 1) - 1)/(2*a^2*d)
v = (1/a^2) * sqrt( (2*a^4*w^2 - 1) / ( 2*(2*a^4*w^2 + 1) ) )
gv = g(v)
gw = g(w)
simp(gv*gw*gv) == matrix([[d, 0], [0, 1/d]])
\end{verbatim}

2. The following piece of code verifies \eqref{eq:case3-gadget2} for the case where $b=i/(\sqrt{2}a)$ and $w$ has a minus sign; it outputs \texttt{True}.
\begin{verbatim}
var("a, d")
b = I/(sqrt(2)*a)
def g(c):
    return 1/(a^2*c) * matrix([[a^2*(a^2*c^2+b^2), a*b], [a*b, 1]])
def simp(M):
    return matrix([[x.full_simplify() for x in r] for r in M.rows()])
w = (-sqrt(-2*d^2 + 1) - 1)/(2*a^2*d)
v = (1/a^2) * sqrt( (2*a^4*w^2 - 1) / ( 2*(2*a^4*w^2 + 1) ) )
gv = g(v)
gw = g(w)
simp(gv*gw*gv) == matrix([[d, 0], [0, 1/d]])
\end{verbatim}

3. The following piece of code verifies \eqref{eq:case3-gadget2} for the case where $b=-i/(\sqrt{2}a)$ and $w$ has a plus sign; it outputs \texttt{True}.
\begin{verbatim}
var("a, d")
b = -I/(sqrt(2)*a)
def g(c):
    return 1/(a^2*c) * matrix([[a^2*(a^2*c^2+b^2), a*b], [a*b, 1]])
def simp(M):
    return matrix([[x.full_simplify() for x in r] for r in M.rows()])
w = (sqrt(-2*d^2 + 1) - 1)/(2*a^2*d)
v = (1/a^2) * sqrt( (2*a^4*w^2 - 1) / ( 2*(2*a^4*w^2 + 1) ) )
gv = g(v)
gw = g(w)
simp(gv*gw*gv) == matrix([[d, 0], [0, 1/d]])
\end{verbatim}

4.  The following piece of code verifies \eqref{eq:case3-gadget2} for the case where $b=-i/(\sqrt{2}a)$ and $w$ has a minus sign; it outputs \texttt{True}.
\begin{verbatim}
var("a, d")
b = -I/(sqrt(2)*a)
def g(c):
    return 1/(a^2*c) * matrix([[a^2*(a^2*c^2+b^2), a*b], [a*b, 1]])
def simp(M):
    return matrix([[x.full_simplify() for x in r] for r in M.rows()])
w = (-sqrt(-2*d^2 + 1) - 1)/(2*a^2*d)
v = (1/a^2) * sqrt( (2*a^4*w^2 - 1) / ( 2*(2*a^4*w^2 + 1) ) )
gv = g(v)
gw = g(w)
simp(gv*gw*gv) == matrix([[d, 0], [0, 1/d]])
\end{verbatim} 

The following piece of code verifies that $i\cdot h''_r=\NEQ$ if $b=i/(\sqrt{2}a)$; it outputs \texttt{True}.
\begin{verbatim}
var("a, r")
b = I/(sqrt(2)*a)
def g(c):
    return 1/(a^2*c) * matrix([[a^2*(a^2*c^2+b^2), a*b], [a*b, 1]])
s = sqrt( (2*a^4*r^2 + 1)*(2*a^4*r^2 - 1) / ( 2*a^4 * (4*a^8*r^4 + 1) ) )
u = (2*a^4*r^2-1) / ( sqrt(2)*a^2 * (2*a^4*r^2+1) )
gs = g(s)
gr = g(r)
gu = g(u)
I*(gs*gr*gu*gr*gs).factor() == matrix([[0, 1], [1, 0]])
\end{verbatim}

The following piece of code verifies that $-i\cdot g_{1/(\sqrt{2}a^2)} = t_{-i\sqrt{2}}$ if $b=i/(\sqrt{2}a)$; it outputs \texttt{True}.
\begin{verbatim}
var("a")
b = I/(sqrt(2)*a)
def g(c):
    return 1/(a^2*c) * matrix([[a^2*(a^2*c^2+b^2), a*b], [a*b, 1]])
-I*g(1/(sqrt(2)*a^2)) == matrix([[0, 1], [1, -I*sqrt(2)]])
\end{verbatim}

The following piece of code verifies that $-i\cdot h''_r=\NEQ$ if $b=-i/(\sqrt{2}a)$; it outputs \texttt{True}.
\begin{verbatim}
var("a, r")
b = -I/(sqrt(2)*a)
def g(c):
    return 1/(a^2*c) * matrix([[a^2*(a^2*c^2+b^2), a*b], [a*b, 1]])
s = sqrt( (2*a^4*r^2 + 1)*(2*a^4*r^2 - 1) / ( 2*a^4 * (4*a^8*r^4 + 1) ) )
u = (2*a^4*r^2-1) / ( sqrt(2)*a^2 * (2*a^4*r^2+1) )
gs = g(s)
gr = g(r)
gu = g(u)
-I*(gs*gr*gu*gr*gs).factor() == matrix([[0, 1], [1, 0]])
\end{verbatim}

The following piece of code verifies that $i\cdot g_{1/(\sqrt{2}a^2)} = t_{i\sqrt{2}}$ if $b=-i/(\sqrt{2}a)$; it outputs \texttt{True}.
\begin{verbatim}
var("a")
b = -I/(sqrt(2)*a)
def g(c):
    return 1/(a^2*c) * matrix([[a^2*(a^2*c^2+b^2), a*b], [a*b, 1]])
I*g(1/(sqrt(2)*a^2)) == matrix([[0, 1], [1, I*sqrt(2)]])
\end{verbatim}

\subsection{Code for Lemma~\ref{lem:binaries_f_tractable}}
\label{a:binaries_f_tractable}

The following code snippet verifies \eqref{eq:h_s}; it outputs \texttt{True}.
\begin{verbatim}
var("a, b, c, s")
t = -(b^2 + s)^2/(c*s + b)^2
g = matrix([[b, 1], [1, c]])
fs = matrix([[1, 0], [0, s]])
ft = matrix([[1, 0], [0, t]])
x = -(b^2 + s)*(b*c - 1)^2*s/(c*s + b)
y = -(b^2*c^2*s + 2*c^2*s^2 + 2*b*c*s + 2*b^2 + s)*(b*c - 1)^2*s/(c*s + b)^2
(g*fs*g*ft*g*fs*g).factor() == matrix([[0, x], [x, y]])
\end{verbatim}

The following code snippet verifies that $2c^3 \cdot h_{-1/(2c^2)} = \NEQ$ if $b=0\neq c$; it outputs \texttt{True}.
\begin{verbatim}
var("a, c")
b = 0
s = -1 / (2*c^2)
t = -(b^2 + s)^2/(c*s + b)^2
g = matrix([[b, 1], [1, c]])
fs = matrix([[1, 0], [0, s]])
ft = matrix([[1, 0], [0, t]])
2*c^3 * (g*fs*g*ft*g*fs*g).factor() == matrix([[0, 1], [1, 0]])
\end{verbatim}

The following code snippet verifies that $(-2b^3)^{-1}\cdot h_{-2b^2} = \NEQ$ if $b\neq 0 = c$; it outputs \texttt{True}.
\begin{verbatim}
var("a, b")
c = 0
s = -2*b^2
t = -(b^2 + s)^2/(c*s + b)^2
g = matrix([[b, 1], [1, c]])
fs = matrix([[1, 0], [0, s]])
ft = matrix([[1, 0], [0, t]])
(-2*b^3)^(-1) * (g*fs*g*ft*g*fs*g).factor() == matrix([[0, 1], [1, 0]])
\end{verbatim}

To verify \eqref{eq:s-pm}, there are two cases depending on the choice of sign.
 
 1. The following piece of code verifies \eqref{eq:s-pm} for the plus case; it outputs \texttt{True}.
\begin{verbatim}
var("a, b, c")
def simp(M):
    return matrix([[x.full_simplify() for x in r] for r in M.rows()])
s = -1/4*(b^2*c^2 + 2*b*c + sqrt(b^2*c^2 + 6*b*c + 1)*(b*c - 1) + 1)/c^2
t = -(b^2 + s)^2/(c*s + b)^2
X = (b*c + sqrt( b^2*c^2 + 6*b*c + 1 ) + 3) * c / \
    ( (b*c + sqrt( b^2*c^2 + 6*b*c + 1 ) - 1) * (b*c - 1)^2 * s )
g = matrix([[b, 1], [1, c]])
fs = matrix([[1, 0], [0, s]])
ft = matrix([[1, 0], [0, t]])
simp(X * (g*fs*g*ft*g*fs*g)) == matrix([[0, 1], [1, 0]])
\end{verbatim}

2.   The following piece of code verifies \eqref{eq:s-pm} for the minus case; it outputs \texttt{True}.
\begin{verbatim}
var("a, b, c")
def simp(M):
    return matrix([[x.full_simplify() for x in r] for r in M.rows()])
s = -1/4*(b^2*c^2 + 2*b*c - sqrt(b^2*c^2 + 6*b*c + 1)*(b*c - 1) + 1)/c^2
t = -(b^2 + s)^2/(c*s + b)^2
X = (b*c - sqrt( b^2*c^2 + 6*b*c + 1 ) + 3) * c / \
    ( (b*c - sqrt( b^2*c^2 + 6*b*c + 1 ) - 1) * (b*c - 1)^2 * s )
g = matrix([[b, 1], [1, c]])
fs = matrix([[1, 0], [0, s]])
ft = matrix([[1, 0], [0, t]])
simp(X * (g*fs*g*ft*g*fs*g)) == matrix([[0, 1], [1, 0]])
\end{verbatim}

\subsection{Code for Lemma~\ref{lem:ghz_from_w}}
\label{a:ghz_from_w}

As $g$ (and thus $g'$) is symmetric by construction, the following code snippet verifies \eqref{eq:symmetric_ternary}.
It outputs \texttt{True}.
The definition of \texttt{gp} (which is~$g'$) is taken from~\eqref{eq:code2}.
The matrix \text{m} is  $M^T M$ from the paper.
\begin{verbatim}
var("a, b, c, d")
import itertools
def ONE(x,y,z):
    if x+y+z == 1:
        return 1
    return 0
def m(x,y):
    return matrix([[a, b], [c, d]])[x,y]
def gp(x1,x2,x3):
    output = 0
    for a2, a3, b2, b3, c2, c3 in itertools.product([0,1], repeat=6):
        output = output + m(b3,c2) * m(c3,a2) * m(a3,b2) * ONE(x1,a2,a3) \
                                   * ONE(x2,b2,b3) * ONE(x3,c2,c3)
    return output
[gp(0,0,0), gp(0,0,1), gp(0,1,1), gp(1,1,1)] == \
[b^3 + c^3 + 3*a*b*d + 3*a*c*d, a*b^2 + a*b*c + a*c^2 + a^2*d, \
 a^2*b + a^2*c, a^3]
\end{verbatim}

The following code snippet verifies \eqref{eq:GHZ-condition}, it outputs \texttt{True}.
\begin{verbatim}
var("a, b, c, d")
import itertools
def ONE(x,y,z):
    if x+y+z == 1:
        return 1
    return 0
def m(x,y):
    return matrix([[a, b], [c, d]])[x,y]
def gp(x1,x2,x3):
    output = 0
    for a2, a3, b2, b3, c2, c3 in itertools.product([0,1], repeat=6):
        output = output + m(b3,c2) * m(c3,a2) * m(a3,b2) * ONE(x1,a2,a3) \
                                   * ONE(x2,b2,b3) * ONE(x3,c2,c3)
    return output
g0 = gp(0,0,0)
g1 = gp(0,0,1)
g2 = gp(0,1,1)
g3 = gp(1,1,1)
bool( ((g0*g3 - g1*g2)^2 - 4*(g1^2 - g0*g2)*(g2^2 - g1*g3)).factor() \
      == -4 * (b*c - a*d)^3 * a^6 )
\end{verbatim}

\bibliographystyle{plain}
\bibliography{\jobname}

\end{document}